\newcommand{\hide}[1]{}
\newcommand{\treeroot}{\operatorname{root}}
\newcommand{\na}{\operatorname{na}}
\newcommand{\rh}{\operatorname{rh}}
\newcommand{\rl}{\operatorname{rl}}
\newcommand{\im}{\operatorname{im}}
\newcommand{\wna}{w_{\na}}
\newcommand{\wrh}{w_{\rh}}
\newcommand{\wrl}{w_{\rl}}
\newcommand{\wim}{w_{\im}}
\newcommand{\rem}{\scriptstyle\operatorname{rem}}
\newcommand{\sh}{\scriptscriptstyle\operatorname{H}}
\newcommand{\sd}{\scriptscriptstyle\operatorname{D}}
\newcommand{\sbm}{\scriptstyle\operatorname{bm}}
\newcommand{\sbf}{\scriptstyle\operatorname{bf}}
\newcommand{\ssm}{\scriptstyle\operatorname{sm}}
\newcommand{\ssf}{\scriptstyle\operatorname{sf}}
\newcommand{\event}{\mathcal{B}}
\newcommand{\prob}{\mathbb{P}}
\DeclareMathOperator{\E}{\mathbb{E}}
\newcommand{\taubar}{\overline{\tau}}
\newcommand{\psibar}{{\psi}}
\newcommand{\sw}[2]{s_{w_{#1}}(m_{#2})}
\newcommand{\sm}[2]{s_{m_{#1}}(w_{#2})}
\newcommand{\ratem}[1]{r_{m_{#1}}}
\newcommand{\ratew}[1]{r_{w_{#1}}}
\newcommand{\ptilde}{\tilde{p}}
\newcommand{\tildep}{\tilde{p}}
\newcommand{\qtilde}{\tilde{q}}
\newcommand{\eps}{\epsilon}
\newcommand{\cd}{{\mathcal C}_{\scriptscriptstyle D}}
\newcommand{\ch}{{\mathcal C}_{\scriptscriptstyle H}}
\newcommand{\Bhbm}{{\mathcal B}^{\sh}_{\sbm}}
\newcommand{\Bhbf}{{\mathcal B}^{\sh}_{\sbf}}
\newcommand{\Bhsm}{{\mathcal B}^{\sh}_{\ssm}}
\newcommand{\Bhsf}{{\mathcal B}^{\sh}_{\ssf}}
\newcommand{\Bdbm}{{\mathcal B}^{\sd}_{\sbm}}
\newcommand{\Bdbf}{{\mathcal B}^{\sd}_{\sbf}}
\newcommand{\Bdsm}{{\mathcal B}^{\sd}_{\ssm}}
\newcommand{\Bdsf}{{\mathcal B}^{\sd}_{\ssf}}
\newcommand{\Bcol}{{\mathcal B}_{\operatorname{col}}}
\newcommand{\Btau}{{\mathcal B}_{\tau}}
\newcommand{\Bdrem}{{\mathcal B}^{\sd}_{\rem}}
\newcommand{\notename}[2]{{\textcolor{red}{\footnotesize{\bf (#1:} {#2}{\bf ) }}}}
\newcommand{\rnote}[1]{{\notename{Richard Cole}{#1}}}
\newcommand{\pj}[1]{{\notename{Pranav}{#1}}}
\newtheorem{obs}{Observation} %[section]
\title %[Interview Selection in Large Random Markets]
{Distributed Interview Selection for Stable Matching in Large Random Markets}
\date{}
\newtheorem{observation}{Observation}[section]
\newtheorem{theorem}{Theorem}[section]
\newtheorem{definition}{Definition}[section]
\newtheorem{corollary}{Corollary}[theorem]
\newtheorem{lemma}[theorem]{Lemma}
\newtheorem{claim}[theorem]{Claim}
\author{Richard Cole}
\affiliation{
  \institution{New York University}
  \streetaddress{251 Mercer St.}
  \city{New York}
  \state{New York}
  \postcode{10012}
  \country{USA}
}
\email{cole@cs.nyu.edu}
\author{Pranav Jangir}
\affiliation{
  \institution{New York University}
  \streetaddress{251 Mercer St.}
  \city{New York}
  \state{New York}
  \postcode{10012}
  \country{USA}
}
\email{pj2251@nyu.edu}
\author{Richard Cole\\
        New York University\\
        cole@cs.nyu.edu
        \and
        Pranav Jangir\\
        New York University\\
        pj2251@nyu.edu}
\begin{document}

\maketitle

\begin{abstract}
In real-world settings of the Deferred Acceptance stable matching algorithm, such as the American medical residency match (NRMP), school choice programs, and various national university entrance systems, candidates need to decide which programs to list. In many of these settings there is an initial phase of interviews or information gathering which affect the preferences on one or both sides.
\emph{We ask: which interviews should candidates seek?}
We study this question in a model, introduced by Lee (2016) and modified by Allman and Ashlagi (2023),
with preferences based on correlated cardinal utilities: Lee's utilities are based on common public ratings of each agent together with individual private adjustments; Allman and Ashlagi added additional adjustments that reflect the effect of the interviews.

We analyze two settings. The first is a residency setting with interviews.
The second is a school-choice setting; what changes here is that the school decisions are based solely on public information such as exam scores or lottery draws.
We describe a distributed, low-communication strategy for the doctors and students, which lead to non-match rates of $e^{(-\widetilde{O}(\sqrt{k}))}$ in the residency setting
and $e^{(-\widetilde{O}(k))}$ in the school-choice setting, where $k$ is the number of interviews per doctor in the first setting, and the number of proposals per student in the second setting; these bounds do not apply to the agents with the
lowest public ratings, the bottommost agents, who may not fare as well. We also obtain bounds on the expected utilities each non-bottommost agent obtains. 
These results are parameterized by the capacity of the hospital programs and schools. Larger capacities improve the outcome for the hospitals and schools, but don't significantly affect the outcomes of the doctors or students.
Finally, in the school choice setting we obtain an $\eps$-Nash type equilibrium for the students apart from the bottommost ones; importantly, the equilibrium holds regardless of the actions of the bottommost students. We also discuss to what extent this result extends to the residency setting.
Our results arise by substantially extending an analysis technique introduced in Agarwal and Cole (2023).
We believe the technique is applicable to a variety of stable matching scenarios beyond those analyzed in this paper. 
We complement our theoretical results with an experimental study that shows the asymptotic results hold for real-world values of $n$. 
\end{abstract}

% Optional
% Can remove later.
%\vspace{1cm}
%\setcounter{tocdepth}{2} % adjust to 2 if desired
%\tableofcontents

%\end{titlepage}

\hide{
Structure : \\
6. Women in cone matching rate. \\
6.5. Men in cone matching rate. \\
6.5 Mention that the women/men hypercone matching is dependent only on the private value of in hypercone men/women. \\
6.5 Corollary --- Women matching rate in men proposing DA. \\
7. Remove all the edge sets, still the same matching rate.. for both men and women. \\
7. Nash Equilibria. \\
8. Matching negative result when loss is bounded. \\ 
9. Appendix --- Capacities on one side. \\
}

\newpage

\section{Introduction}\label{sec::intro}

Consider a doctor applying for residency positions.
Which programs should she apply to? The very top ones?
Or those where she has a reasonable chance of admission (if these are not the same)?
And if the latter, how does she identify them?
We study this issue in settings where Gale and Shapley's deferred acceptance (DA) algorithm~\cite{GS62} is being used to match applicants to positions.
It is well-known that in DA the optimal strategy for the proposing side is to list their choices in order of preference. However, this does not indicate which choices one should list.

Consider the US residency match, which matches prospective residents to positions in hospital programs, which we refer to as hospitals henceforth. The most well-known part of this process is the National Residency Matching Program (NRMP), which runs the doctor-proposing DA to compute the matching (modulo a few details, such as allowing couples to submit joint preferences). Before this happens, both sides determine their preferences as follows. First, doctors apply to hospitals for interviews. Hospitals grant some of the requested interviews. Following the interviews, hospitals and doctors create their preference lists. It is well understood that if there is no interview a hospital will not list the doctor, so the only hospitals worth listing are those at which an interview occurs.

As time is limited, doctors and hospitals can afford only a relatively small number of interviews compared to the size of the pool, as least for larger fields such as pediatrics. Were doctors to all seek interviews at their favorite hospitals, given that preferences are likely to be correlated, this would result in many programs having too few applications, leading to a poor match rate. Clearly, doctors should apply to places where they have some reasonable chance of receiving a position. Indeed, the advice from the NRMP is that doctors should be ``realistic''. But can we make this more precise?

Let's also consider school choice programs in large school districts such as the one in New York City, where the matching is computed using the DA algorithm. This setting differs in that the schools preferences are largely public or random: they depend on a variety of factors, depending on the school, such as gpa, living in a designated zone, and random selection if needed. Similarly, university entrance in many countries is based solely on students performance in a national exam, e.g.\ the gaokao in China or the Joint Entrance Examination in India. In all these settings, students need to decide which schools to list.
% We can think of this as being informed by an information gathering phase, even if in reality this is largely based on word-of-mouth. 
Here too, students are well-advised to be realistic.

To model these settings, we use a model proposed by Allman and Ashlagi~\cite{AA23}, which extends a model introduced by Lee~\cite{Lee16}.

To keep the analysis tractable, we simplify the residency match setting and suppose that each doctor requests exactly $k$ interviews, and the hospitals grant all the requested interviews. Later in the paper, we comment on why we believe it may be feasible to extend our analysis to the more general setting.
Similarly, for the school choice setting, we suppose that each student gathers information to inform their choices (which we will call interviewing) at exactly $k$ schools.

In the residency setting, both sides create preference lists for the agents they interviewed and no others. In the school choice setting, students only list schools at which they interviewed; schools list all students. The doctor-proposing or student-proposing DA algorithm is then run using these preference lists as its input.
Another salient aspect of both these settings is that the hospitals and schools each have multiple positions, which for simplicity we assume is the same size $\kappa$ for all programs.

We obtain the following results.
In our utility-based model of preferences, 
\begin{itemize}
\item  
The probability of failing to match is $e^{-\widetilde{\Theta}(\sqrt{k})}$  in the residency setting and $e^{-\widetilde{\Theta}(k)}$ in the school-choice setting,\footnote{The $\widetilde{O}$ notation suppresses logarithmic terms.} apart possibly for a small bottommost tier of agents. 
\item We obtain bounds on the expected utilities the agents achieve.
\item 
In the school choice setting, this yields an $\epsilon$-Nash style equilibrium: for all but the bottommost $O(\eps)$ students: if the students follow the recommended strategy, deviation yields at most an expected $\epsilon=O(\ln k/k)$ improvement in utility, and this bound is unaffected by the actions of the bottommost students. Note that, by definition, in this setting the schools are non-strategic.
\item 
In the residency setting, if we assume the hospitals are non-strategic, again there is an $\eps$-Nash style equilibrium; here, a bound of $\eps=O(\sqrt{\ln k/k})$ applies to the non-bottommost doctors, and this is unaffected by the actions of the bottommost doctors.
\end{itemize}
Loosely speaking, these results are achieved if each applicant (student or doctor) identifies a personal range from her safe to her stretch choices, and then selects $k$ well-distributed favorite choices from this range; then, following interviews at these $k$ choices, the applicant lists these choices in her post-interview order of preference.

\subsection{Related Work}

Knuth \cite{Kn76,Kn96} initiated the analysis of randomized preference lists, considering lists that were uniform random permutations. Pittel provided definitive analyses of this setting~\cite{Pit89,Pit92}. Immorlica and Madhian~\cite{IM15} introduced a generalization of this model, which they called the \emph{popularity model};  it enables certain correlations among different agents' preferences.
In their model, the first side, men, can have arbitrary preferences; on the second side, women, preferences are selected by weighted random choices, the weights representing the ``popularity'' of the different choices. These results were further extended by Kojima and Parthak
in~\cite{KP09}.

The popularity model does not capture behavior in settings 
where bounds on the number of proposals lead to proposals
being made to plausible partners, i.e.\ partners with
whom one has a realistic chance of matching.
One way to capture such settings is by way of tiers~\cite{ABKS19},
also known as block correlation~\cite{CKN13}:
agents on each side are partitioned into tiers, with all agents in a higher tier preferred to agents in a lower tier, and with uniformly random preferences within a tier. Tiers on the two sides may have different sizes.
If we assign tiers successive intervals of ranks equal to their size,
then, in any stable matching, the only matches will be between agents in tiers whose rank intervals overlap.

A more nuanced way of achieving these types of preferences 
bases agent preferences on cardinal utilities; for each side, these utilities are functions of an underlying common assessment of the other side, together with idiosyncratic individual adjustments for the agents on the other side.
These include the separable utilities defined by Ashlagi, Braverman, Kanoria and Shi~\cite{ABKS19},
and another class of utilities
introduced by Lee~\cite{Lee16}. This last model will be the focus of our study. 

% In Lee's model, utilities are a function of a common or public assessment and idiosynchratic individual assessments. 
To make this more concrete, we review a simple special case of Lee's model, the \emph{linear
separable model}. Suppose that there are $n$ doctors and $n$ hospitals seeking to match with each other.
Each doctor $d$ has a public rating $r(d)$, a uniform random draw from $[0,1)$ \footnote{A natural way of obtaining such a number is by repeated sampling of a 0/1 choice, stopping when the resulting number, viewed as being of the form $0.***\cdots$, is sufficiently precise; this will always be in the range $[0,1)$.}. These ratings can be viewed
as the hospitals' joint common assessment of the doctors. In addition, each doctor $d$ has an individual adjustment $v(d,h)$  for each hospital $h$, which we call a private value;
again, this is a uniform random draw from $[0,1)$. All the draws are independent. Doctor $d$'s utility for hospital $h$ is given
by $r(h) + v(d,h)$; her full preference list has the hospitals in decreasing utility order. The hospital utilities are defined similarly: $h$ has utility $r(d)+v(h,d)$ for doctor $d$, where $v(h,d)$ is her private value for $d$, also a uniform random draw from $[0,1)$.

Lee showed that for $\eps>0$, for large enough $n$, with failure probability at most $\exp(-1/\eps^2)$, all but an $\eps$ fraction of the pairings in a matching would provide the doctor $d$ utility of at least $r(d)+1-\eps$ and hospital $h$ utility of at least $r(h)+1-\eps$. We call $r(d)+1$ doctor $d$'s benchmark and $r(h)+1$ hospital $h$'s benchmark. Informally speaking, most agents achieve utility which is at least their benchmark minus $\eps$. Lee showed $\eps$ could be as small as $O(n^{-1/4})$. The main technical tool in his work was results about the maximum size of bi-cliques in random bipartite graphs.

Later, Agarwal and Cole~\cite{AC23} reanalyzed this setting using a different methodology which analyzed the DA algorithm itself. They showed that, for any constant $c>0$, with failure probability at most $n^{-c}$, the only agents that failed to achieve their benchmark minus $\eps=\widetilde{O}(n^{-1/3})$ were those with public rating at most $\widetilde{O}(n^{-1/3})$. \footnote{The tight log term they obtained was $\ln^{2/3}n$.}
Their main technical tool was to use truncated runs of DA, which they called the double-cut DA, and a specialized stochastic dominance argument. In this work, we generalize these tools, which enables us to analyze the interview setting.

The first analysis of the interview process in Lee's model was by Echenique et al.~\cite{EGWY22}.
They supposed the interview pairs were determined by an initial computation that determined a $k$--$k'$ stable match (meaning that doctors obtained up to $k$ matches and hospitals up to $k'$ matches; setting $k'=\kappa k$ corresponds to each hospital having capacity $\kappa$). They then ran the doctor-proposing DA on these edges using the same utilities; i.e.\ there was no post-interview alteration of the utilities. The goal of the work was to see if one could explain why doctors mostly seem to obtain their declared first or second preference matches in the NRMP matching. By means of extensive simulations, they showed that with $k=k'=5$, match rates and doctor outcomes similar to the NRMP happened. 
% They also proved a theoretical bound.
%, but its conclusions were not as clear-cut as the simulations.

Allman and Ashlagi~\cite{AA23} extended Lee's model by adding a third component to the utilities, the interview value, again a random draw from $[0,1)$. In this model, interviews were chosen based on the utility given by the first two components. Then, following the interview, the interview values were added to the utilities, and these new utilities were used to provide preferences over the interviewed partners. We adopt this model in our paper.
Unfortunately, a key proof in~\cite{AA23} turned out to be incorrect. Their analysis was based on the random graph approach used by Lee. In contrast, our results extend the techniques introduced in \cite{AC23}.

In other recent work, Ashlagi et al.\ \cite{ACRS25} sought to guarantee matches in the presence of interviews;
in tiered markets a bound of $O(\log^3 n)$ interviews is achieved, in contrast to the $O(1)$ interviews that is our focus.
Another recent line of work seeks to evaluate the use of signaling in stable matching settings \cite{ABKS19,AASY25}.
The impact of short lists on the quality of matches has also been studied previously \cite{Arnosti15,KMQ21}.
Finally, the engineering of admissions systems for universities and schools has received considerable attention \cite{APR05,APR09,BCCKP19,CEEetal19,RLPC21}, which partly motivates our analysis of this case.

\paragraph{Roadmap}
In \Cref{sec::model} we define our model and then in \Cref{sec::results} we state our main results.
In \Cref{sec::double-cutDA} we review the double-cut DA, a class of truncated runs of the DA algorithm, which was introduced in \cite{AC23} and is generalized in the current work.
In \Cref{sec::proof-idio-thm} we outline the analysis of the residency setting, and then 
in \Cref{sec::student-school} we briefly explain how the analysis changes for the school choice setting.
In \Cref{sec::extensions}, we explain how our results extend to unequal sized sides (e.g.\ more doctors than hospitals) and discuss how to support generalizations of the utility functions; 
% then, in \Cref{sec::open-prob} 
in addition, we indicate why the analysis might be extended to the full interview setting as mentioned above and mention an unresolved issue. Next, in \Cref{experiments}, we complement our theoretical results with some experimental results, which indicate that the conclusions from our asymptotic bounds hold for real-world values of $n$, the number of agents. Finally, in \Cref{sec::conc}, we offer some concluding remarks. All omitted proofs can be found in the appendix.

\section{Model}\label{sec::model}

\subsection{The Residency Setting}

We suppose there are $n$ doctors applying to $n/\kappa$ hospitals,
each of which has a size $\kappa$ capacity, which is the number of doctors a single hospital can match with.
Each doctor will select $k$ hospitals to apply to. The matches will be made by running the
doctor-proposing DA. The doctor and hospitals will order their preference lists based on utilities, which we specify next.

Each hospital $h$ and doctor $d$ has a public rating, denoted by $r(h)$ and $r(d)$, respectively,
and both are uniform random draws from $[0,1)$. 
In addition, each doctor $d$ has a private (idiosyncratic) value for each hospital $h$, denoted by $v(d,h)$, again a uniform random draw from $[0,1)$.
Each doctor will select $k$ hospitals based on the hospital's public rating and her private values; we specify exactly how a little later. The doctor will then interview at these $k$ hospitals, and obtain a further interview value, denoted $\iota(d,h)$, also a uniform random draw from $[0,1)$.
A doctor $d$'s utility for a particular hospital $h$ is the sum of these three values: $r(h)+v(d,h)+\iota(d,h)$. She then lists the $k$ chosen hospitals in decreasing utility order.

A doctor $d$ selects hospitals according to the following rule.  First, $d$ only considers hospitals with public ratings in the range $[r(d)-a\alpha,r(d)+a\alpha)$, where $a$ and $\alpha$ are parameters we specify later.
We call this range $d$'s \emph{cone}, and denote it by $C(d)$.
Among these hospitals, she selects those with the $k$ highest private values.
Note that because the private values are uniform random draws, we can view this selection as a uniform random selection of $k$ hospitals from her cone.

The hospitals also have interview values for the doctors they interview, denoted by $\iota(h,d)$, again random draws from $[0,1)$. 
A hospital $h$'s utility for a doctor $d$ is the sum of two values: $r(d)+\iota(h,d)$, $d$'s public rating plus $h$'s interview value for $d$. Again, a hospital orders its preference list according to these utility values.
We do not provide the hospitals private values, for 
in computing their overall utility for a doctor $d$, if they had a private value, they would be adding both the private value and the interview value to $r(d)$, which we might as well treat as the interview value, though adding two linear terms yields a different distribution that a single linear term.
\footnote{If instead we had the two linear terms, the analysis becomes a bit more involved, but the bounds remain essentially unchanged.}

We discuss some generalizations of this model later.

\hide{
One simple generalization of this model is to provide weights for the three terms in the doctor's utility and the two terms in the hospital's utility:
$\lambda_d \cdot r(h) + \mu_d \cdot v(d,h) +\nu_d \cdot \iota(d,h)$, and $\lambda_h\cdot r(d)+\nu_h \cdot \iota(h,d)$, respectively,
where $\lambda_d, \mu_d, \nu_d, \lambda_h, \nu_h> 0$. 
Sometimes it is convenient to normalize and set $\lambda_d+\mu_d+\nu_d=1$ and $\lambda_h+\nu_h=1$,
though we have not done so in the basic setting, where all these parameters are equal to 1.
We call this generalization the \emph{full linear model}.

A further generalization is to have the utilities be increasing, bounded, and continuous functions of the
parameters $r,v,\iota$: For the doctors we define the pre-interview utility to be a function
$U_{d,b}(r(h),v(d,h))$ and the post-interview utility to be a function $U_{d,a}(r(h),v(d,h),\iota(d,h))$
($b$ for before, $a$ for after), and for the hospitals we define the post-interview utility to be a function $U_h(r(d),v(h,d))$.
One possible interview selection rule for the doctors is to continue to choose the in-cone hospitals with the $k$ largest $v(d,h)$ values. 
An alternative rule would be to choose the $k$ in-cone hospitals minimizing the terms 
$U_{d,b}(r(h),1)-U_{d,b}(r(h),v(d,h))$ (i.e.\ minimizing the ``loss'' caused by having a value $v(d,h)$ less than 1).
We believe our results would continue to hold, modulo constant factors, in this general setting so long as the utility functions have bounded derivatives, but we have not carried out the analysis.
% This would cover the setting in which hospitals have both private and interview values, for example.
We note that an analogous analysis for these more general utilities was performed in \cite{AC23}, and there was no significant change from the analysis for the linear setting in that paper.
}

\paragraph{The School Choice Setting}
To avoid an excess of notation, we will use the doctor-hospital terminology for this setting too.
Students correspond to doctors and schools to hospitals.
The one change is that every hospital (school) has the same preference list because they all have the same utility $r(d)$ for doctor (student) $d$, for every $d$.
The doctor utilities are unchanged.

\subsection{The Deferred Acceptance Algorithm for Many-to-One Matchings}

Let $D$ be a set of $n$ doctors and $H$ a set of $n/\kappa$ hospitals. Each doctor $d$ has an ordered list of hospitals that represents her preferences, i.e.\ if a hospital $h$ comes before a hospital $h'$ in $d$'s list, then $d$ would prefer matching with $h$ rather than $h'$. 
The position of a hospital $h$ in this list is called $d$'s ranking of $h$. Similarly each hospital $h$ has a ranking of the doctors. These lists need not be complete. \footnote{In general, hospitals could have preferences for sets of doctors rather than for individual doctors, but in the context of the NRPM where hospitals list individual doctors, it is not clear how to represent set-based preferences.} The stable matching task is to match up to $\kappa$ doctors with each hospital in such a way that no two agents prefer each other to their assigned partners. More formally:

\begin{definition}[Matching]
A matching is a mapping $\mu$ of the doctors in $D$ to the hospitals in $H$ or to the ``no-match'' option denoted by $\phi$,
such that at most $\kappa$ doctors are mapped to each hospital: $|\mu^{-1}(h)|\le \kappa$ for every $h\in H$.
\end{definition}

\begin{definition}[Blocking pair]
A matching $\mu$ has a blocking pair $(d,h)$ if and only if: %the following conditions are all met:
\begin{enumerate}
    \item $d$ and $h$ are not matched: $\mu(d)\neq h$.
    \item $d$ prefers $h$ to her current match $\mu(d)$.
    \item $h$ prefers $d$ to one of the doctors to whom she is currently matched, $\mu^{-1}(h)$. 
\end{enumerate}
\end{definition}

\begin{definition}[Stable matching]
A matching $\mu$ is stable if it has no blocking pair.
\end{definition}
\begin{algorithm}[t]
\SetAlgoNoLine
Initially, all the doctors and hospitals are unmatched.\\
\While{some doctor $d$ with a non-empty preference list is unmatched}{ 
   let $h$ be the first hospital on her preference list\;
   \If{$h$ has fewer than $\kappa$ matches currently (``is not fully matched'')}
      {tentatively match $d$ to $h$.}
      {\eIf{$h$ is currently fully matched, and $d'$ is her least favorite current match, and $h$ prefers $d$ to $d'$}
      {make $d'$ unmatched, remove $h$ from $d'$'s preference list, and tentatively match $d$ to $h$.}
      {remove $h$ from $d$'s preference list.}
    }
 }
\caption{\textsf{Doctor Proposing Many-to-One Deferred Acceptance (DA) Algorithm}}
\label{alg:DA}
\end{algorithm}

Gale and Shapley \cite{GS62} proposed the seminal deferred acceptance (DA) algorithm for the stable matching problem. We present the doctor-proposing DA algorithm (Algorithm $1$); the hospital-proposing DA is largely symmetric and is given in the appendix for completeness.
The following facts about the DA algorithm are well known and apply to the many-to-one setting we are considering. We state them here without proof and we shall use them freely in our analysis. 

\begin{obs}
\hspace{0.0001in}
\begin{enumerate}
    \item \label{obs::DA_terminates_in stable_match}
DA terminates and outputs a stable matching.
\item\label{obs::DA_invariant_of_proposal_order}
The stable matching generated by DA is independent of the order in which the unmatched agents on the proposing side are processed.
\item\label{obs::DA_optimality}
Doctor-proposing DA is doctor-optimal, i.e.\ each doctor is matched with the best partner she could be matched with in any stable matching.
\item \label{obs::DA_pessimality}
Doctor-proposing DA is hospital-pessimal, i.e.\ each hospital is matched with the worst set of partners it could be matched with in any stable matching.
\end{enumerate}
\end{obs}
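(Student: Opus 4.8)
The plan is to derive all four items from the classical deferred-acceptance arguments, adapted to the capacity-$\kappa$ many-to-one setting. Throughout I assume the preference orders are strict (which holds almost surely for the continuous utilities of our model, and is in any case built into the ``prefers'' tests of \Cref{alg:DA}). Two workhorses carry everything: a \emph{monotonicity invariant} for the hospitals' tentative matches, and a \emph{``no doctor is rejected by an achievable hospital''} lemma, where I call a hospital $h$ \emph{achievable} for a doctor $d$ if some stable matching pairs $d$ with $h$.

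For item~1, termination follows from a counting/potential argument: every iteration of the while loop either permanently deletes one hospital from some doctor's preference list, or leaves all lists unchanged while tentatively matching a previously unmatched doctor; since each doctor proposes to a given hospital at most once, there are at most $n\cdot(n/\kappa)$ proposals in total, so the loop halts. For stability of the output $\mu$, I first record the monotonicity invariant: once a hospital becomes fully matched it stays fully matched, and over the run the sorted list of its tentative matches improves coordinatewise (a new proposal is either rejected outright or evicts the current least-preferred match, who is strictly worse than the newcomer). Now if $(d,h)$ were a blocking pair of $\mu$, then $d$ prefers $h$ to $\mu(d)$ (or $\mu(d)=\phi$), so since $d$ proposes in preference order she proposed to $h$ before reaching $\mu(d)$ and was ultimately rejected by $h$, possibly after a temporary match. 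Rejection from $h$ occurs only when $h$ is full and prefers all its current matches to $d$; by the invariant $h$ is still full at termination and still prefers all of $\mu^{-1}(h)$ to $d$, contradicting that $(d,h)$ blocks.

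For items~2 and~3, I would prove the key lemma: in \emph{any} execution of \Cref{alg:DA}, whenever a hospital $h$ rejects a doctor $d$, the hospital $h$ is not achievable for $d$. The proof is by induction on the sequence of rejections. At the moment $h$ rejects $d$ there are $\kappa$ doctors $d_1,\dots,d_\kappa$ tentatively matched to $h$, each preferred by $h$ to $d$, and each currently preferring $h$ to every hospital that has not yet rejected her. Suppose, for contradiction, that some stable $\mu'$ has $\mu'(d)=h$; since $h$ has only $\kappa$ slots and one is used by $d$ in $\mu'$, some $d_j$ has $\mu'(d_j)\neq h$. Every hospital that $d_j$ prefers to $h$ has already rejected her, hence is not achievable for her by the induction hypothesis, so $\mu'(d_j)$ is a hospital she likes strictly less than $h$ (or $\phi$); thus $d_j$ prefers $h$ to $\mu'(d_j)$, while $h$ prefers $d_j$ to $d$, one of its $\mu'$-partners, so $(d_j,h)$ blocks $\mu'$ --- contradiction. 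Given the lemma, each doctor terminates matched to her most-preferred achievable hospital (everything she prefers to it having rejected her, hence unachievable), or to $\phi$ if nothing on her list is achievable; this description depends only on the set of stable matchings, not on the processing order, which gives item~2, and since the DA output is itself stable (item~1) it realizes this best-achievable assignment for every doctor at once, which is item~3.

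Finally, item~4 follows from item~3. Write $\mu_D$ for the doctor-optimal matching and let $\mu$ be any stable matching; hospital preferences are responsive (induced by the linear utility order on individual doctors), so ``worst set'' means the sorted list of $\mu_D^{-1}(h)$ is coordinatewise no better than that of $\mu^{-1}(h)$. If this failed for some $h$, there would be a doctor $d\in\mu_D^{-1}(h)\setminus\mu^{-1}(h)$ whom $h$ prefers to its marginal $\mu$-partner (its least-preferred member of $\mu^{-1}(h)$, or $\phi$ if $h$ is under-filled in $\mu$); then $\mu(d)\neq h$, and since $h$ is $d$'s optimal achievable hospital by item~3, $d$ prefers $h$ to $\mu(d)$, so $(d,h)$ blocks $\mu$ --- contradiction. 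The main obstacle throughout is the capacity-$\kappa$ bookkeeping: getting the pigeonhole in the inductive step exactly right (identifying, among the $\kappa$ rivals of $d$ at $h$, one that is both displaced in $\mu'$ and forms a blocking pair) and, for item~4, phrasing ``hospital-pessimal'' precisely via responsive set preferences so that a single-doctor blocking argument suffices; the one-to-one versions of all of this are textbook, so the work is entirely in the adaptation.
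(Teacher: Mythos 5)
The paper offers no proof of this observation at all: it explicitly states these four facts ``without proof'' as well-known properties of DA, so there is nothing internal to compare against. Your argument is the standard one, correctly adapted to the capacity-$\kappa$ many-to-one setting of \Cref{alg:DA} --- termination plus the monotonicity invariant for item~1, the ``a doctor is never rejected by an achievable hospital'' induction for items~2 and~3, and the responsive-preference blocking argument for item~4 --- and I see no gap in it. The only places that deserve to be written out rather than gestured at are (i) the little pigeonhole at the end of item~4, i.e.\ the verification that if the coordinatewise comparison between $\mu_D^{-1}(h)$ and $\mu^{-1}(h)$ fails then some $d\in\mu_D^{-1}(h)\setminus\mu^{-1}(h)$ is preferred by $h$ to its marginal $\mu$-partner (or $h$ is under-filled in $\mu$), and (ii) the remark that all preference lists here are strict and possibly incomplete, so ``achievable'' and ``blocking'' are taken with respect to the truncated lists --- both of which are consistent with how the paper uses the observation.
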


\hide{
\subsection{Strategy}

In this section, we formally describe the constant proposal strategy for the women.

\begin{definition}
    Let $m_i$ be the man alligned with $w_i$ and let $\tau = \tfrac{k}{4a\alpha n}$ be a threshold. Let $\widetilde{M}_i$ denote the set of men $m_j$ with public rating in the range $[\ratem{i} - a\alpha, \ratem{i} + a\alpha]$ and private score for $w_i$, $\sw{i}{j} \geq 1 - \tau$.
\end{definition}

Note that the since the expected number of men in $w_i$'s cone is equal to $2a\alpha n$, the expected number of men in $\widetilde{M}_i$ is $\tfrac{k}{2}$.

\begin{definition}[Constant proposal strategy] \label{def::cone-random-strategy}
    If $\abs{\widetilde{M}_i} \leq k$, then women propose to men in $\widetilde{M}_i$. If $\abs{\widetilde{M}_i} > k$, then women propose to the top $k$ men from $\widetilde{M}_i$ based on their public and private score.
\end{definition}

We say that women with $\abs{\widetilde{M}_i} > k$ \emph{trim} their edges and select the top $k$ edges from $\widetilde{M}_i$. Following lemma upper bounds the number of women who \emph{trim} their edges.

\begin{lemma}
    Let $\event_{\tau}$ be the event that the number of women with $\widetilde{M}_i > k$ is more than $2ne^{-K/3}$. $\event_{\tau}$ occurs with probability at most $\exp(-\tfrac{n}{3}e^{-K/3})$. 
\end{lemma}

We will show that following this strategy leads to an $\epsilon$-Bayes Nash equilibrium.
}

\section{Results}\label{sec::results}

In the school-choice setting, the outcome, when running the student-proposing DA, is an $\eps$-Nash equilibrium for all but the bottommost students (bottommmost w.r.t.\ their public ratings), with this property unaffected by whether or not the bottommost students follow the recommended strategy.

\begin{theorem}\label{thm::eps-nash-no-idio-large-cap}
    Suppose that there are $n$ students and $n/\kappa$ schools each with capacity $\kappa$, for $n$ sufficiently large. Suppose each student uses a cone with public utility range $[r(d)-a\alpha,r(d)+a\alpha)\cap[0,1)$, and lists $k$ schools selected as described in \Cref{sec::model}.
    Suppose further that $a = 5$, $\Theta(1)\le k\le \Theta(\ln n)$, and
    $\alpha =\frac{2(4a+1)\ln k}{k}$.
    Then, the outcome when following the specified application process is an $\eps$-Nash equilibrium for all students with public rating at least $a\alpha$, regardless of the strategy that the remaining students use, where $\eps=O(\ln k/k)$. In addition, for each student and school with public rating at least $a\alpha$, the non-match probabilities are respectively at most $\exp\big(-4k/(4a+1)\big)$ and $\exp\big(-3\kappa \ln k/2\big)$.
\end{theorem}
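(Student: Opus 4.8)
The plan is to analyze the student-proposing DA on the randomly chosen cones via the double-cut DA machinery of \Cref{sec::double-cutDA}. The first step is to recast the school-choice setting as a particularly clean special case: since every school has the identical preference list (ordered by public rating $r(d)$ alone), the stable matching is essentially determined by a serial-dictatorship-like threshold structure. Concretely, I would show that a student with public rating $r(d)$ matches iff at least one of her $k$ listed schools has enough ``room'' after all higher-rated students are accommodated; because schools have capacity $\kappa$ and cones have width $2a\alpha$, the relevant population in any student's cone that competes with her is concentrated around its expectation. So the first technical step is a Chernoff bound: with $\alpha = \frac{2(4a+1)\ln k}{k}$, the expected number of students in a cone is $2a\alpha n$ and the expected number of competing higher-rated students mapped to any fixed school in the cone is controlled; a union bound over all $n/\kappa$ schools and $n$ students shows that, except on a low-probability failure event, every cone's statistics are within constant factors of their means.

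Second, I would establish the non-match probability bound for students. Fix a non-bottommost student $d$ (so $r(d) \ge a\alpha$, guaranteeing her cone is a full-width interval not truncated at $0$). Conditioned on the good event above, the number of schools in her cone is $\Theta(\alpha n / \kappa)$ and each still has $\Omega(\kappa)$ free slots after higher-rated students are placed — here is where the capacity $\kappa$ helps the schools but the per-student analysis only needs that a constant fraction of in-cone schools have a vacancy. The student lists $k$ schools chosen uniformly at random from her cone; she fails to match only if all $k$ of these independent uniform draws land among the (small) set of already-saturated schools. Since the fraction of saturated in-cone schools is bounded away from $1$ by a constant depending on $a$, the failure probability is at most $(1-\Omega(1))^k$; pushing the constants through the choice $\alpha = \frac{2(4a+1)\ln k}{k}$, $a = 5$ gives the stated $\exp(-4k/(4a+1))$. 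The school-side bound $\exp(-3\kappa \ln k/2)$ comes from the dual count: a fixed non-bottommost school $h$ is unmatched (has fewer than... actually is entirely empty, or we bound the probability it fails to fill, depending on the precise statement) only if none of the $\Theta(\alpha n)$ students whose cone contains $h$ listed $h$; each such student lists $h$ with probability $k/(2a\alpha n) = \Theta(1/\ln k)$ roughly, these are negatively correlated, and the product over $\Theta(\alpha n)$ students, after simplification using $\kappa$, yields $\exp(-\Theta(\kappa \ln k))$ with the constant $3/2$ falling out of the arithmetic with $a=5$.

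Third, and this is the heart of the equilibrium claim, I must show that no non-bottommost student can gain more than $\eps = O(\ln k/k)$ in expected utility by deviating, \emph{regardless of what the bottommost students do}. Here the key structural point is that the bottommost students (rating $< a\alpha$) only ever list schools of low public rating, so their choices cannot interfere with the cones of students whose rating exceeds, say, $2a\alpha$; for students with rating in $[a\alpha, 2a\alpha)$ one needs the additional observation that even adversarial bottommost behavior only saturates $O(\kappa)$ extra slots at any school, which is absorbed by the slack already built into the Chernoff estimates. Given that robustness, I would bound the deviation gain: a student following the strategy gets expected utility at least $r(h^*) + 2 - O(\ln k/k)$ for an appropriate in-cone benchmark school $h^*$ (the $+2$ being the sum of the private and interview value, each near $1$, by the same extreme-value concentration used for the match rate — the best of $k$ uniform draws is $1 - O(\ln k / k)$ in expectation). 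Any deviation keeps her within her realistic matching range up to the cone width $2a\alpha = O(\ln k/k)$, and the non-match probability from any alternative list of size $\le k$ cannot be driven below what the recommended list achieves by more than lower-order terms; combining the $O(\alpha) = O(\ln k/k)$ benchmark slack, the $O(\ln k/k)$ extreme-value slack, and the $\exp(-\Omega(k))$ non-match contribution gives the claimed $\eps = O(\ln k/k)$.

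The main obstacle I anticipate is the third step's robustness-to-bottommost-students requirement together with the deviation analysis: I need a deviation-proof upper bound on the utility \emph{any} alternative strategy can extract, which means understanding the reachable set of stable-match outcomes for a single student when the rest of the market is fixed — essentially an envelope/lattice argument showing the doctor-optimal stable matching already gives her nearly the best in-cone school she could plausibly reach, and that reaching outside the cone is impossible with only $k$ listings because correlated preferences make far-above-rating schools full and far-below-rating schools strictly worse. Controlling this uniformly over the adversarial bottommost behavior, while only losing $O(\ln k/k)$, is where the delicate constant-tracking (and the specific value $a=5$) will be needed; the match-rate bounds by contrast are comparatively routine Chernoff-plus-union-bound arguments once the double-cut DA framework is in place.
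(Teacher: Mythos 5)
Your skeleton (serial-dictatorship/unique-stable-matching structure, a surplus-style count of unsaturated in-cone schools, extreme-value concentration of the chosen private/interview values, and a deviation decomposition into cone width plus exponentially small out-of-cone gains) does track the paper's proof at a high level, but the proposal skips the step that is the actual technical content. You repeatedly assert that the saturation statistics concentrate by ``Chernoff plus union bound'' so that a constant fraction of in-cone schools have vacancies, and that the school-side bound follows from a product over negatively correlated listing events. The difficulty the paper confronts is precisely that these events are \emph{not} independent (nor obviously negatively correlated): whether a given school above $d$'s cone is fully matched depends on the entire run of DA, so one cannot Chernoff-bound the number of unsaturated schools above the cone directly. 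The surplus bound you need (``only a bounded-away-from-one fraction of in-cone schools are saturated'') requires subtracting the not-fully-matched schools \emph{above} the cone, and the paper obtains that via the interval double-cut DA: removing hospitals/doctors with multiple edges into the interval to make proposals near-uniform (whence negative correlation), a tree stochastic dominance argument ($\qtilde(v)\ge q(v)$), and a top-down mutual induction between the student-interval and school-interval lemmas (\Cref{lem::doctors-interval-non-match} and \Cref{lem::hospitals-interval-non-match}); in the school setting the student-side single-agent lemma simplifies to a direct argument, but the interval lemmas and the school-side lemma still need the full machinery. Relatedly, your school-side computation conflates ``listed $h$'' with ``proposed to $h$'' and hedges between ``empty'' and ``not fully matched''; the stated bound is for failing to fill all $\kappa$ seats, and proving it needs the surplus of unmatched in-cone students plus the dominance argument, not a product over all students whose cone contains $h$.

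The robustness-to-bottommost-students step is also argued incorrectly. You justify it by saying bottommost students ``only ever list schools of low public rating,'' but the theorem must hold regardless of their strategy, so you cannot assume they follow the recommended cones; an adversarial bottommost student may list top schools. The correct (and cleaner) reason, which the paper's framework exploits, is that schools rank solely by public rating, so a proposal from a bottommost student can never displace, nor block, any non-bottommost student: the double-cut analyses only involve agents above the relevant rating cutoffs and are therefore unaffected by bottommost behavior. Your fallback claim that adversarial bottommost students ``only saturate $O(\kappa)$ extra slots'' has no justification and is not needed once this observation is made. The $\eps$-Nash decomposition itself is close to the paper's (\Cref{lem::typical-student-eps-nash-noid}: gains split into private-value slack, bad events, an $O(1/k^2)$ non-match improvement via $\psibar_d=\alpha$, the $2a\alpha$ in-cone rating gain, and above-cone bids killed by setting $\psibar$ equal to the rating increment in the match-probability lemma), so with the dependence machinery and the corrected robustness argument your outline could be completed along the paper's lines.
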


In the residency setting, the outcome, when running the doctor-proposing DA, is an $\eps$-Nash equilibrium for all but the bottommost doctors (bottommmost w.r.t.\ their public ratings), with this property unaffected by whether or not the bottommost doctors follow the recommended strategy, assuming the hospitals list all the doctors they interview in decreasing utility order.

\begin{theorem}\label{thm::eps-nash-idio-large-cap}
    Suppose that there are $n$ doctors and $n/\kappa$ hospitals each with capacity $\kappa$, for $n$ sufficiently large. Suppose each doctor uses a cone with public utility range $[r(d)-a\alpha,r(d)+a\alpha)\cap[0,1)$, and interviews with $k$ hospitals selected as described in \Cref{sec::model}.
    Suppose further that $a = 5$, $\Theta(1)\le k\le \Theta(\ln n)$, and
    $\alpha =\big(\frac{2(4a+1)\ln k}{k}\big)^{1/2}$.
    Then, the outcome when following the specified application process is an $\eps$-Nash equilibrium for all doctors with public rating at least $a\alpha$, regardless of the strategy that the remaining doctors use, where $\eps=O((\ln k/k)^{1/2})$. In addition, for each doctor and hospital with public rating at least $a\alpha$, the non-match probabilities are respectively at most $\exp\big(-4\big[k\ln k/(4a+1)\big]^{1/2}\big)$ and $\exp\big(-\frac 38\kappa \big[k\ln k/(4a+1)\big]^{1/2}\big)$.
    \end{theorem}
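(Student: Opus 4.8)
The plan is to prove three statements and combine them. \textbf{(a)} With every doctor following the recommended strategy, with high probability every doctor $d$ with $r(d)\ge a\alpha$ matches to a hospital giving her utility at least $r(d)+2-O(\alpha)$; this also yields the doctor non-match bound. \textbf{(b)} Fixing such a $d$ and letting every other doctor follow the strategy (the bottommost ones arbitrarily), no deviation by $d$ raises her expected utility above $r(d)+2+O(\alpha)$. \textbf{(c)} Every hospital $h$ with $r(h)\ge a\alpha$ fills all $\kappa$ seats except with the stated probability. Since $\alpha=\Theta(\sqrt{\ln k/k})$, combining (a) and (b) gives the $\eps$-Nash property with $\eps=O(\sqrt{\ln k/k})$; the only point in passing from ``w.h.p.'' to expectations is that the two bad events --- $d$ failing to match (probability $\exp(-\Theta(\sqrt{k\ln k}))$) and $d$ matching below her benchmark (probability $k^{-\Theta(1)}$) --- are each $O(\alpha)$, hence perturb $d$'s expected utility by only $O(\alpha)=O(\eps)$.

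Part (a) is the core. First I would record what the strategy buys probabilistically: doctor $d$'s $k$ interview choices are a uniform random $k$-subset of the $\Theta(a\alpha n/\kappa)$ hospitals in her cone, so Chernoff bounds give, w.h.p., that all $k$ chosen hospitals have private value $1-o(\alpha)$ (effectively $1$), that $\Theta(k)$ of them are rated a suitable constant times $\alpha$ below $r(d)$, and that $\Theta(\alpha k)=\Theta(\sqrt{k\ln k})$ of the latter additionally have $\iota(d,\cdot)\ge 1-\Theta(\alpha)$ --- call these $d$'s \emph{targets}, each of which, if matched, gives $d$ utility $\ge r(d)+2-O(\alpha)$. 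It remains to show that w.h.p.\ some target would accept $d$ in the doctor-optimal matching. For this I would invoke the double-cut DA of \Cref{sec::double-cutDA}, running a pessimistic truncated variant of doctor-proposing DA --- for instance, one in which every hospital refuses to tentatively hold any doctor rated more than a fixed constant times $\alpha$ below itself, the doctors' lists being otherwise unchanged --- and use that restricting the instance only worsens the doctor-optimal matching for every doctor, so a lower bound on $d$'s utility here transfers to the true run. Processing the doctors in decreasing rating order and tracking, for each hospital $h$, its acceptance threshold $t_h$ (the $r(d)+\iota(h,d)$-score of its worst held doctor once $h$ is full), the heart of the argument is a stochastic-dominance claim generalizing \cite{AC23}: w.h.p.\ every non-bottommost $h$ ends with $t_h\le r(h)+1+O(\alpha)$, with implied constant strictly below $a$, because $h$'s pool of \emph{relevant} applicants --- the doctors who list $h$ and would prefer $h$ to their own benchmark --- has size $\Theta(\alpha k\kappa)$ and is entirely rated within $O(\alpha)$ of $r(h)$, so even its $\kappa$-th best $r+\iota$-score exceeds $r(h)+1$ by at most $O(\alpha)$. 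Given this, a target $h$ of $d$ accepts $d$ with probability $\prob[\iota(h,d)\ge t_h-r(d)]$, which is of order $(r(d)-r(h))-O(\alpha)>0$; summing the acceptance probabilities over all $\Theta(k)$ of $d$'s picks lying below $r(d)$ gives a total of order $\alpha k$, and near-independence of the $\iota(h,\cdot)$-values then bounds the probability that no pick accepts $d$ by $\exp(-\Theta(\alpha k))=\exp(-\Theta(\sqrt{k\ln k}))$ --- the careful version producing the constant $4/\sqrt{4a+1}$ of the statement. Restricting the same sum to the $\Theta(\alpha k)$ targets shows $d$ matches a target (hence achieves $r(d)+2-O(\alpha)$) except with probability $\exp(-\Theta(\alpha^2 k))=k^{-\Theta(1)}$.

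Part (b) is comparatively short: whatever $\le k$ hospitals $d$ interviews with and however she ranks them, her match is her favorite hospital among those that accept her, and --- since one doctor's deviation changes any hospital's relevant pool by at most one out of $\Theta(\alpha k\kappa)$ entries, and since a bottommost doctor can be accepted only at hospitals of rating $\le a\alpha+o(\alpha)$ whatever she does --- the acceptance thresholds are w.h.p.\ within $o(\alpha)$ of their part-(a) values, so no hospital of rating exceeding $r(d)+o(\alpha)$ can accept $d$, and every accepting hospital gives her utility at most $r(d)+2+o(\alpha)$; the borderline doctors of rating just above $a\alpha$ need the extra remark that the bottommost tier has total mass $O(\alpha)$ and so shifts any fixed doctor's expected utility by only $O(\alpha)$. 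For part (c), a non-bottommost hospital $h$ has $\Theta(\alpha k\kappa)$ relevant applicants, essentially all of whom end matched to $h$ up to capacity (a relevant doctor's only strictly-preferred picks are her benchmark-achieving hospitals, which reject her w.h.p.), so $h$ fails to fill its $\kappa$ seats only if fewer than $\kappa$ of these materialize --- an event a Chernoff bound places at $\exp(-\Theta(\alpha k\kappa))=\exp(-\Theta(\kappa\sqrt{k\ln k}))$, matching the statement's $\tfrac38\kappa[k\ln k/(4a+1)]^{1/2}$ up to the constant.

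The step I expect to be the main obstacle is the threshold-control claim inside part (a): one has to choose the ``critical threshold'' invariant, the cut parameter, the self-consistent benchmark slack, and the doctor-processing order so that an induction over the doctors (in decreasing rating order) maintains the invariant with precisely the claimed failure probability --- all in the simultaneous presence of three features absent from \cite{AC23}: (i) three-term doctor utilities, whose maximum over $k$ independent draws produces the $\ln k$ factors and forces the cone width $\alpha=\Theta(\sqrt{\ln k/k})$ rather than the $\Theta(\ln k/k)$ of the noiseless school-choice setting of \Cref{thm::eps-nash-no-idio-large-cap}; (ii) two-term hospital utilities, whose interview noise is exactly what removes any deterministically ``safe'' pick and degrades the non-match rate from $e^{-\Theta(k)}$ to $e^{-\Theta(\sqrt{k\ln k})}$; and (iii) capacity $\kappa$, which replaces the running maximum tracked in \cite{AC23} by a $\kappa$-th order statistic throughout. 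Making these mesh so that the induction closes with the constants in the statement is where essentially all the work lies.
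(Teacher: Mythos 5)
Your high-level accounting — match-quality bounds for doctors and hospitals plus an $\eps$-Nash bookkeeping step in which each source of gain from deviation is charged $O(\alpha)$ — does mirror the paper's structure (\Cref{lem::doctors-high-value-match-prob}, \Cref{lem::hosp-high-value-match-prob}, and the deviation analysis of \Cref{lem::typical-student-eps-nash}), and your monotonicity step (hospitals restricting acceptability only hurts doctors in the doctor-optimal matching) is a legitimate analogue of \Cref{clm::double-cutDA-provides-bounds}. But the engine you propose for part (a) — an induction over doctors in decreasing rating order maintaining, w.h.p., the acceptance-threshold invariant $t_h\le r(h)+1+O(\alpha)$ for \emph{every} non-bottommost hospital — has a genuine gap, and the justification you give for it is quantitatively wrong. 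You argue the invariant holds because $h$'s pool of relevant applicants has size $\Theta(\alpha k\kappa)$ and is rated within $O(\alpha)$ of $r(h)$. But with uniform interview values, the expected number of pool members whose score $r(d')+\iota(h,d')$ exceeds $r(h)+1+c\alpha$ is of order $(a-c)^2\alpha\cdot\alpha k\kappa=\Theta(\kappa\ln k)\gg\kappa$, so the pool's $\kappa$-th best score sits near $r(h)+1+a\alpha$, not $r(h)+1+c\alpha$ with $c$ strictly below $a$. The only reason thresholds stay low is dynamic — most of those high-rated listers match higher up and never propose to $h$ — and establishing that is exactly the hard content, not a consequence of pool statistics. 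Worse, the invariant as stated cannot hold: for $\kappa=O(1)$ a $\Theta(\alpha)$ fraction of hospitals does receive and retain an above-threshold proposal, so only a ``for all but an $O(\alpha)$ fraction'' version is available, and then the correlations between the thresholds of $d$'s own $k$ picks (which are influenced by $d$'s earlier proposals and their rejection cascades) make your ``near-independence of the $\iota(h,\cdot)$-values'' step precisely the unproved crux rather than a routine remark.

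The paper avoids controlling thresholds altogether. For a fixed doctor $d$ it runs a \emph{hospital-proposing} double-cut DA for $d$, shows there is a surplus of $\ge\alpha n/\kappa$ not-fully-matched hospitals inside $d$'s cone, and then uses the tree stochastic dominance argument (\Cref{clm::fixed-k-stoc-dom-idio}) to dominate the dependent process by one with independent proposals, which yields the $\exp\big(-k\alpha\psibar_d/(4a+1)\big)$ bound directly. The surplus computation itself is not free: it needs high-probability bounds on the number of unmatched doctors and not-fully-matched hospitals \emph{above} the cone, which is what the interval lemmas \Cref{lem::doctors-interval-non-match} and \Cref{lem::hospitals-interval-non-match} provide via a separate mutual induction, a pruning of hospitals with multiple edges into the interval, and a negative-correlation Chernoff argument. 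Your proposal contains no counterpart to these unmatched-count bounds, yet your parts (b) and (c) also silently rely on them (``essentially all relevant applicants end matched to $h$'', ``the thresholds are within $o(\alpha)$ of their part-(a) values''). So while you correctly identify where the difficulty lies, the route you sketch for crossing it would not go through as described, and the missing machinery is essentially the paper's surplus plus interval-induction plus tree-stochastic-dominance apparatus.
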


\hide{
We state our main results in this section. 

\hide{We begin by formally describing the constant proposals strategy that we recommend to the women. In the following theorem we prove that this strategy yields an $\epsilon$-Bayes Nash equilibrium.
\begin{definition} [Constant proposals strategy] \label{def::cone-random-strategy} Assuming that each woman is allowed to propose to \emph{at most} $4K$ men, the strategy suggests that 
$w_i$ proposes to men in the $\alpha$-cone of $w_i$ whose private score for $w_i$ is greater than or equal to $\tau = 1 - \tfrac{K}{6n\alpha}$. If the number of men in $w_i$'s $\alpha$-cone with private score greater than $\tau$ is more than $4K$ then restrict the proposals to the top $4K$ men. 
\end{definition}}

\begin{theorem} \label{thm::eps-strat}
In the linearly separable model with $\lambda = 0$ (men have no idiosyncracy) if all women apply the constant proposals strategy (\cref{def::cone-random-strategy}) then the following holds for $n$ large enough: \pj{update exact bounds, they are only asymptotically correct}
\begin{enumerate} 
    \item The match rate is at least $1 - \tfrac{k}{2}e^{-k/4}$. 
    \item For women $w_i$ with $r_{w_i} \geq 3e^{-k/8}$, the expected match loss is at most  $\tfrac{k}{2}e^{-k/4} + 2e^{-k/8}$.
    \item Following the strategy is an $\epsilon$-Bayes Nash equilibrium for $\epsilon \geq \tfrac{k^2}{2}e^{-k/4}$.
\end{enumerate}
\end{theorem}

\begin{theorem} \label{thm::eps-strat-idio}
In the linearly separable model with $\lambda = \tfrac{1}{2}$ if all women apply the constant proposals strategy (\cref{def::cone-random-strategy}) then the following holds for $n$ large enough: \pj{update exact constants} 
\begin{enumerate}
    \item Match rate is at least $1 - \tfrac{2}{\sqrt{k}}$.
    \item For women $w_i$ with $r_{w_i} \geq \tfrac{6}{\sqrt{k}}$, the expected match loss is at most $\tfrac{\sqrt{k}}{2}e^{-\sqrt{k}/4} + \tfrac{6}{\sqrt{k}}$.
    \item Following the strategy is an $\epsilon$-Bayes Nash equilibrium for $\epsilon \geq \tfrac{k\sqrt{k}}{2}e^{-\sqrt{k}/4}$.
\end{enumerate}
\end{theorem}

To prove theorems \ref{thm::eps-strat} and \ref{thm::eps-strat-idio}  we begin by stating some supporting lemmas. We prove these lemmas for general values of $\lambda \in [0, \tfrac 12]$. For convenience of notation we define $\beta(\lambda, \alpha)$ as:

\begin{definition}\label{def::beta}
    In the $\lambda$-linearly separable model, let $\beta(\lambda, \alpha) = \min(\frac{1-\lambda}{\lambda}\alpha, 1)$ denote the probability that a man $m_i$'s private score for a woman's proposal is greater than $1 - \frac{1-\lambda}{\lambda}\alpha$. 
    % Then,
    % \begin{equation*}
    %     \beta(\lambda, \alpha) = \begin{cases}
    %     a ~~~~ \text{if ...}\\ 
    %     1 ~~~~ \text{if ...}\\
    %     \end{cases}
    % \end{equation*}
\end{definition}

\paragraph{Proof idea:} We analyse a scenario where the women apply the threshold (as suggested in \cref{def::cone-random-strategy}) but are not restricted to proposing in their $\alpha$-cone. They have a higher utility for proposals in the cone. Likewise, the men have a higher utility for proposals from women who are in their cone. This ensures that even when the women are allowed to propose to all men, they strictly prefer making proposals in the $\alpha$-cone first and the men strictly prefer a proposal from a woman in their $\alpha$-cone over proposals from women out of cone. We refer to this setting as the \emph{threshold-based all proposals setting}. 

Next, in the threshold-based all proposals setting, we prove that all but the bottommost men and women incur bounded loss, and subsequently obtain a lower bound on the in-cone matching probability for these men and women.

Finally, we show that when women are limited to proposing within their $\alpha$-cone, the probability of men and women matching within their cone is equal to the probability of matching within the cone in corresponding instances of the threshold-based all proposals setting.

The $\epsilon$-Bayes Nash equilibrium follows from the lower bound on the matching probabilities.

\hide{\begin{observation} \label{obs::women-different-thresholds}
Women's strategy (\cref{def::cone-random-strategy}) can be interpreted as the women having different thresholds $\tau_i \geq \tau = 1 - \tfrac{K}{6n\alpha}$, where $\tau_i > \tau$ iff $w_i$ has more than $4K$ men in its $\alpha$-cone with private score greater than or equal to $\tau$. 
\end{observation}

\begin{lemma} \label{lem::whp-thresholds}
Let $\event_1$ be the event that the number of women with $\tau_i > \tau = 1 - \tfrac{K}{6n\alpha}$ is more than $2ne^{-K/3}$. $\event_1$ occurs with probability at most $\exp(-\tfrac{n}{3}e^{-K/3})$. \pj{fix as discussed}
\end{lemma}

\begin{proof} (Sketch)
    The expected number of men in a woman's $\alpha$-cone is $6n\alpha$. Therefore, with high probability, the number of men in a woman's $\alpha$-cone lies in $[3n\alpha, 12n\alpha]$. Since private scores for men are independent, the expected value of men in any woman's $\alpha$-cone with a private score higher than $\tau$ is less than $2K$.

    By applying the Chernoff bound, we can show that woman $w_i$ has private score greater than or equal to $\tau$ for more than $4K$ men with probability $ < \exp(-\tfrac{K}{3})$. That is $\tau_i > \tau$ with probability $ \leq \exp(-\tfrac{K}{3})$. 

    Let $Y_i$ be a random variable such that $Y_i = 1$ if $\tau_i > \tau$ and $Y_i = 0$ otherwise. Clearly, $\E\big[ \sum_{i \in [n]} Y_i \big] \leq n\exp(-\tfrac{K}{3})$. Since $Y_i$'s are independent, applying Chernoff bound again on $\sum_{i \in [n]} Y_i$ yields the result.

\end{proof}}

\paragraph{Threshold-based all proposals setting:} We consider a model in which there are no restrictions on the number of proposals per woman and in-cone matches between men and women have a higher utility for both men and women. This is ensured by increasing the utility of in-cone matches by $+2$ \footnote{The exact increment doesn't matter as long as it ensures that in-cone matches are preferred over out-of-cone matches.}, therefore ensuring that all women strictly prefer proposals in-cone over out-of-cone proposals and all men strictly prefer proposals in-cone over out-of-cone proposals. 

Formally, 

\begin{equation*}
    U_i(m_j) = \begin{cases}
    \ratem{j} + \sw{i}{j} + 2 ~~~~ \text{if } m_j \in M[\ratew{i} - a\alpha, \ratew{i} + a\alpha] \\
    \ratem{j} + \sw{i}{j} ~~~~~~~~~ \text{otherwise} \\
    \end{cases}
\end{equation*}

and

\begin{equation*}
    V_i(w_j) = \begin{cases}
    \ratew{j} + \sm{i}{j} + 2 ~~~~ \text{if } w_j \in W[\ratem{i} - a\alpha, \ratem{i} + a\alpha] \\
    \ratew{j} + \sm{i}{j} ~~~~~~~~~ \text{otherwise} \\
    \end{cases}
\end{equation*}

In this model, we derive lower bounds on the matching probability for all but the bottommost men and women when women follow the strategy of proposing to men that have private score greater than or equal to the threshold $\tau$ (as described in \cref{def::cone-random-strategy}). Note that even though $\tau$ is calculated using the $\alpha$-cone of $w_i$, women are allowed to propose to men outside the cone as well---albeit with a lower utility. We later show that the matching probability lower bound still holds when women are restricted to applying in their $\alpha$-cones.

\begin{lemma} \label{lem::matching-lb}
In the threshold-based all proposals setting, let $w_i$  be a woman with public rating $\ratew{i} \geq a\alpha$. The probability of $w_i$ matching with a man having public rating at least $\ratew{i} - a\alpha$ is $\geq ...$ \pj{add probability}.
\end{lemma}

We will prove \cref{lem::matching-lb} for the women and the analogous statement for men is similar modulo the effects of \emph{trimming} done by the women. In order to prove the results for the women, we consider a \emph{partial} run of men proposing DA where the edges are determined by the private score of women for men based on $\tau$---an edge between woman $w_i$ and man $m_j$ exists iff $\sw{i}{j} \geq \tau$. In the following definitions, we formally define this partial run of man proposing DA.

\begin{definition}
    An edge between $w_i$ and $m_j$ is woman-acceptable (to $w_i$) if $\sw{i}{j} \geq 1 - \tau$, where $\tau$ is as defined in \cref{def::cone-random-strategy}.
\end{definition}

\begin{definition}
    An edge between $w_i$ and $m_j$ is man-acceptable (to $m_j$) if $\sm{j}{i} \geq 1 - \beta(\lambda, \alpha)$, where $\beta$ is as defined in \cref{def::beta}.
\end{definition}

\begin{definition}[Partial DA / Double-Cut DA] \label{def::double-cut-da}
    In the woman proposing DA, let $m$ be a man with public score $v=v(m)$. In the double-cut DA for $m$ only the women with public utility greater than $v-a \alpha$ make proposals, furthermore they are limited to proposals which provide them utility
    greater than $c(m) \triangleq \tfrac{(v - \alpha)}{2} + \tfrac 12$.

    Likewise, in the man proposing DA, let $w$ be a woman with public score $u = u(w)$. In the double-cut DA for $w$, only the men with public utility greater than $u - a\alpha$ make proposals and are limited to proposals which provide them utility greater than $c(w) \triangleq (1 - \lambda) \cdot (v - \alpha) + \lambda$.

    Edges are said to \emph{survive} in the double-cut DA if their utility for the proposing side is at least the corresponding threshold ($c(m)$ or $c(w)$).
\end{definition}

\begin{observation}
    In the double-cut DA for man $m$, all woman-acceptable edges to $m$ \emph{survive} the cutoff. In the double-cut DA for woman $w$, all man-acceptable edges to $w$ survive the cutoff.
\end{observation}

In order to prove the matching rate for woman $w_i$, we consider men in range $\hat{M} = M[\ratew{i} - a\alpha, 1]$ and women in range $\hat{W} = W[\ratew{i} - \alpha, 1]$. We focus on $\hat{M}$'s \emph{top proposals} to $\hat{W}$. Specifically, we consider the proposals that give these men utility at least $(1-\lambda)(\ratew{i} - \alpha) + \lambda$. This corresponds to the partial DA as defined in \cref{def::double-cut-da}. Let $l_i = \big|W[\ratew{i} - \alpha, 1]\big|$ denote the number of women with public rating in the range $[\ratew{i} - \alpha, 1]$ and let $h_i = \big|M[\ratew{i} - a\alpha, 1]\big|$ denote the number of men with public rating in the range $[\ratew{i} - a\alpha, 1]$. We will show that with high probability, the number of men in $\hat{M}$ is substantially greater than the number of women in $\hat{W}$. In this high probability event, the difference is equal to $h_i - l_i$ and therefore, at least $h_i - l_i$ men did not match with any woman in $\hat{W}$. 
The main difficulty is in showing that $w_i$ receives a \emph{high quality} proposal with high probability.
As we will see, via a stochastic dominance argument, at least $n \alpha \beta(\lambda, \alpha)$ men will propose to $w_i$ with high probability. The probability $\tau$ coupled with these $n \alpha \beta(\lambda, \alpha)$ proposals yields the result.

\begin{lemma}
    Let $\event_2$ be the event that for some $i$, $l_i \geq \tfrac 32 n \alpha$. $\event_2$ happens with a probability at most $n \cdot \exp(-n \alpha / 12)$. 
\end{lemma}

\begin{proof} (Sketch)
    Since $\E[l_i] = n\alpha$ and since the public ratings of women are independent, applying Chernoff bound yields the result.
\end{proof}

\begin{lemma}
    Let $\event_3$ be the event that for some $i$, $h_i \leq \tfrac 52 n \alpha$. $\event_3$ happens with a probability at most $n \cdot \exp(-n \alpha / 24)$.
\end{lemma}

\begin{proof} (Sketch)
    Since $\E[h_i] = 3n\alpha$ and since the public ratings of men are independent, applying Chernoff bound yields the result.
\end{proof}

\begin{observation}
    In a men proposing DA run, all men in $\hat{M}$ will prefer proposing to women in $\hat{W}$ that have a private score $\geq 1 - \frac{1-\lambda}{\lambda}\alpha$ before proposing to women outside $\hat{W}$.
\end{observation}

\hide{ %%%% Not deleting since this can be used for the woman proposing argument.
\begin{lemma} 
    Let $\event_4$ be the event that woman $w_i$ receives no proposals from men in $\hat{M}$. $\event_4$ happens with a probability at most $\exp((-\tfrac{K}{3} + \tfrac{Ke^{-2K/3}}{3\alpha})\cdot \beta(\lambda, \alpha))$ given that $\event_1$, $\event_2$ and $\event_3$ don't happen.
\end{lemma}

\begin{proof}
    Let $\Tilde{M}$ denote the men that did not match with women in $\hat{W}$.
    The probability that an edge exists between woman $w_i$ and man $m_j \in \hat{M}$ is $\tau_i$. Given that every man in $\hat{M}$ has at least $\beta(\lambda, \alpha)$ independent probability of proposing to $w_i$, the probability that a woman receives no proposals is,

    \begin{equation*}
        \prob[\text{$w_i$ receives no proposals}] \leq \prod_{\Tilde{M}} (1 - \beta(\lambda, \alpha)\cdot\tau_i)
    \end{equation*}
\end{proof}
}

\hide{\begin{lemma} \label{lem::men-propose-stoc-dom}
    Let $\Tilde{M}$ denote the set of men that did not match with women in $\hat{W} \setminus {w_i}$. Each $m_j$ in $\Tilde{M}$ has a probability at least $\beta(\lambda, \alpha)$ of proposing to $w_i$ if $\sw{i}{j} \geq \tau_i$.
\end{lemma}
\begin{proof}
    \pj{Apply stochastic dominance}
\end{proof}}

\begin{lemma} 
    Let $\event_4$ be the event that woman $w_i$ receives no proposals from men in $\hat{M}$. $\event_4$ happens with a probability at most $\exp(-\tfrac{K}{6}\cdot \beta(\lambda, \alpha))$ given that $\event_2$ and $\event_3$ don't occur.
\end{lemma}

\begin{proof}
    Let $\Tilde{M}$ denote the set of men that did not match with women in $\hat{W} \setminus {w_i}$. Since $\event_2$ and $\event_3$ don't occur, $|\Tilde{M}| \geq h_i - l_i + 1 \geq n\alpha$.

    By \cref{lem::men-propose-stoc-dom}, these $n\alpha$ men will propose to $w_i$ with a probability at least $\beta(\lambda, \alpha)$. Therefore, $w_i$ fails to recieve a proposal with a probability at most $\big(1 - \tau_i\big)^{n\alpha \cdot \beta(\lambda, \alpha)}$. 

    By \cref{lem::whp-thresholds}, $\tau_i > \tau = 1 - \tfrac{K}{6n\alpha}$ with a probability at most $e^{-K/3}$. Therefore, the probability of not receiving any proposals,
    \begin{align*}
        \prob[w_i \text{ receives no proposals}] &\leq \prob[\tau_i > \tau] + \prob[w_i \text{ receives no proposals}~|~ \tau_i = \tau] \\
        & \leq e^{-K/3} + \big(1 - \tfrac{K}{6n\alpha}\big)^{n\alpha\cdot\beta(\lambda, \alpha)} \\
        &\leq e^{-K/3} + e^{-\tfrac{K\beta(\lambda, \alpha)}{6}},
    \end{align*}
    as required.
\end{proof}

In order to prove an analogous bound on men's matching rate, we consider men in range $\hat{M} = M[m_i - \alpha, 1]$ and women in range $\hat{W} = W[w_i - 3\alpha, 1]$. We focus on $\hat{W}$'s proposals to men $m_j$ in $\hat{M}$ who have a private score $\sw{i}{j} \geq \tau_i$. Let $l_i = |M[m_i - \alpha, 1]|$ and let $h_i = |W[w_i - 3\alpha, 1]|$.

\begin{lemma} 
    Let $\event_5$ be the event that man $m_i$ receives no proposals from women in $\hat{W}$. $\event_5$ happens with a probability at most $\exp((-\tfrac{K}{6} + \tfrac{Ke^{-K/3}}{3\alpha})\cdot \beta(\lambda, \alpha))$ given that $\event_1$, $\event_2$ and $\event_3$ don't happen.
\end{lemma}

\begin{proof}
    Let $\Tilde{W}$ denote the women that did not match with women in $\hat{M} \setminus m_i$. Let $g$ denoted the number of women in $\Tilde{W}$ with $\tau_i = \tau = 1 - \tfrac{K}{6n\alpha}$. By \cref{lem::whp-thresholds}, we know that $g \geq n\alpha - 2ne^{-K/3}$.
    The probability that an edge exists between woman $w_j$ in $\Tilde{W}$ and man $m_i$ is $\tau_i$. Every woman in $\Tilde{W}$ has at least $\tau_i$ independent probability of proposing to $m_i$. Every man has an independent probability at least $\beta(\lambda, \alpha)$ of the edge being man-acceptable. Thus, the probability that a man receives no proposals is,

    \begin{align*}
        \prob[\text{$m_i$ receives no proposals}] &\leq \prod_{\Tilde{M}} (1 - \beta(\lambda, \alpha)\cdot\tau_i) \\
        &\leq (1 - \beta(\lambda, \alpha)\cdot\tau)^{g} \\
        &\leq \exp{-\tfrac{K}{6n\alpha} \cdot (n\alpha - 2ne^{-K/3}) \cdot \beta(\lambda, \alpha)} \\
        &\leq \exp{\big(-\tfrac{K}{6} + \tfrac{Ke^{-K/3}}{3\alpha}\big) \cdot \beta(\lambda, \alpha)},
    \end{align*}
as required.
\end{proof}

}

%\newpage
% \input{stoc-dom}

%\input{double-cutDA}
%\input{double-cutDA-ver2}
\section{The Analyses}

\subsection{Overview of the Double Cut DA}\label{sec::double-cutDA}

This methodology was introduced in \cite{AC23} to bound the number of proposals needed to obtain a stable matching with high probability, meaning probability $1-n^{-c}$, for any constant $c$ (the bound on the number of proposals increases as $O(\sqrt{c})$).

\begin{figure}[thb]
\begin{center}
\begin{tikzpicture}
\begin{scope}[>=latex]
\draw (-0.2,3) to (-0.2,1.55);
\node at (-0.2,1.485) {$\circ$};
\node at (-2.8,1.485) {hospital $h$};
\draw[dashed] (0.0,3.0) to (2.9,3.0);
\node at (1.5,2.8) {rating $1$};
\draw[->] (-1.9,1.485) to (-0.45,1.485);
% \node at (3,1.485) {$\circ$};
\node at (1.5,1.285) {rating $r(h)$};
\draw[dashed] (0.0,1.485) to (2.9,1.485);
\draw (-0.2,1.46) to (-0.2,0.5);
%\draw[<->] (-0.0,3) to (-0.0,0.5);
%\node at (1.1,2.0) {$[r(h)-\alpha,1)$};
\draw[<->] (-0.4,1.475) to (-0.4,0.5);
\node at (-1.7,1.0) {rating range $\alpha$};
\draw (3,3) to (3,1.55);
\draw (3,1.55) to (3,-1.0);
%\draw (3,-1) to (3,3);
\draw[<->] (3.2,1.475) to (3.2,-1);
% \node at (4.1,0.25) {$\ell$ doctors};
\node at (0, 3.3) {hospitals};
\node at (3, 3.3) {doctors};
\node at (-3.2,0.5) {cutoff $r(h) - \alpha$};
\draw[->] (-1.9,0.5) to (-0.45,0.5);
%\draw[<->] (5.0,3) to (5.0,-1);
%\node at (6.4,1.0) {$=WD[r_{\overline{w}_i},1]$};
%\node at (3,-1) {$\circ$};
%\node at (3.2,-1.3) {doctor $d=d_{i+\ell_i}$};
\node at (4.5,0.2) {rating range $a\alpha$};
\node at (4.5,-1.0) {cutoff $r(h)-a\alpha$};
\end{scope}
\end{tikzpicture}
\end{center}
\caption{The double-cut DA for hospital $h$; doctors with public ratings in $[r(h)-a\alpha,1)$ propose to hospitals with public ratings in $[r(h)-\alpha,1)$. The proposals made are all better for the proposing doctors than any possible proposals to hospitals in $[0,r(h)-\alpha)$.}
\label{fig::key_bounded}
\end{figure}
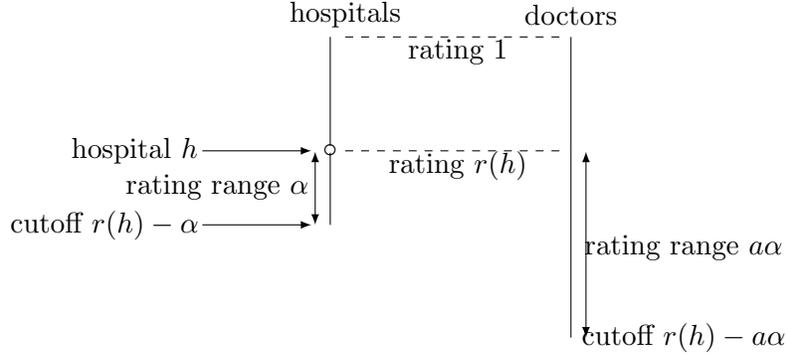

The double-cut DA comprises an initial portion of a run of the DA algorithm.
We will consider both runs in which the doctors propose and runs in which the hospitals propose. There will be a distinct double-cut DA for each hospital $h$ and each doctor $d$, other than those with the smallest public ratings.

Let's focus on the double-cut DA for hospital $h$. It comprises an initial portion of a run of the doctor-proposing DA. This run will have somewhat more doctors than hospitals participating, with the consequence that the hospitals are on the short side, which provides the potential for showing good outcomes for the hospitals, and for $h$ in particular. A key observation is that if we were to complete the run of the full doctor-proposing DA, this would only improve the outcome for $h$. Also, if we were to consider $h$'s outcome in the hospital proposing DA, it could only be better. In conclusion, whatever lower bound on $h$'s match holds in the run of the double-cut DA, also holds in the runs of the doctor and hospital proposing-DAs. These observations also apply to the symmetric double-cut DA's for doctors $d$.

% There are two sets of agents, doctors and hospitals.
%Suppose we want to show that a hospital $h$ receives a high-quality proposal with high probability.
% To keep the overview simple, let's suppose there are $n$ doctors and $n$ hospitals.
To illustrate this technique, we give an overview of the result in
% Our overview uses the setting from 
\cite{AC23}, which analyses the quality of matches in a setting with no interviews. There the doctors and hospitals obtain their utilities for potential partners by adding a private value for that partner to the partner's public rating, where values and ratings are uniform random draws from $[0,1)$. 
% \footnote{A natural way of obtaining such a number is by repeated sampling of a 0/1 choice, stopping when the number is unique; this will always be in the range $[0,1)$.}

To demonstrate the quality of $h$'s match, if any, we analyze a partial run of the doctor-proposing DA,
called the double-cut DA for $h$.
The participants are the hospitals with public rating in the range $[r(h)-\alpha,1)$ and the doctors with public rating in the range $[r(h)-a\alpha,1)$, where $a>1$ and $\alpha$ are suitable parameters (see \Cref{fig::key_bounded}).
Our choice of $\alpha$ will ensure that with high probability there are at least $\alpha n$ more doctors than hospitals participating in the partial run.

The run of the doctor-proposing DA is further restricted as follows. Each doctor in $h$'s cone, i.e.\ with public rating in $[r(h)-a\alpha,r(h)+a\alpha)$, stops making proposals as soon as she reaches a proposal providing her utility less than $r(h)+1-\alpha$ (she does not make this proposal). This means that she will not make any proposals to hospitals with public rating less than $r(h)-\alpha$.
She also stops if she is about to make a proposal to $h$; this will then be her final proposal, so long as it provides utility at least $r(h)+1-\alpha$. The doctors above $h$'s cone are unconstrained. 

In order to obtain useful bounds on the quality of $h$'s match one needs to show the following three bounds on the performance of the double-cut DA.
\begin{enumerate}
    \item There is a sufficiently large surplus $s$ of doctors who could either propose to $h$ or remain unmatched (depending on their random choices of private value). 
    \item If each of the $s$ surplus doctors $d$ were to propose independently to $h$ with probability $\alpha$ (because $d$ has a private value for $h$ providing her utility at least $r(h)+1-\alpha$), then $h$ would receive a high-quality match with the desired probability. %(Note that this is not what actually happens.)
    \item Inconveniently, the actions of the surplus doctors need not be independent. The final step is to show that the case of independent proposals yields a failure probability that is at least as large as the failure probability for the actual run of the double-cut DA. This argument was carried out by a methodology we name \emph{tree stochastic dominance}.\footnote{This name is being coined in this paper.}
\end{enumerate}

% An important property of the double-cut DA for $h$ is that the outcome for $h$ in the full run of the doctor-proposing DA is at least as good as its outcome in the double-cut DA (see \Cref{clm::double-cutDA-provides-bounds} below).
The full analysis uses analogous runs of a double-cut DA for doctors.
An immediate implication of this analysis is that with high probability, for each doctor $d$, her possible matches have public ratings in the cone $C(d)=[r(d)-a\alpha,r(d)+a\alpha)$, and similarly for the hospitals.

In the current work, we need to substantially modify Steps 1 and 3 above to achieve our new results.

In \cite{AC23}, with high probability---meaning polynomially small failure probability---all the doctors more than $a\alpha$ above $h$ in public rating will be matched in the current double-cut DA (this stems from a top-down induction we will explain later) and these matches are all with hospitals above $h$. This leaves roughly $a\alpha n$ hospitals and $2a\alpha n$ doctors unmatched, and furthermore all these doctors are in $h$'s cone. Consequently there is a surplus of $\alpha n$ doctors who either propose to $h$ or remain unmatched, and with high probability, among these doctors some will have high private values for $h$, and $h$ will have high private values for them, resulting in a high private value match on both sides.

\hide{
Step 1 depended on every agent ending up matched with high probability (except for some of the bottommost agents, meaning those with public rating less than $a\alpha$).
Then the surplus is simply the number of participating doctors minus the number of participating hospitals, and with high probability these numbers are close to their expected value, as shown by Chernoff bounds. On setting $a=3$, they obtain a surplus of size at least $\alpha n$.
}

In the current work, as we are limiting each doctor to making $k$ proposals, we no longer have a polynomially small probability of failing to match. In fact, we expect at least an $e^{-\Theta(k)}$ fraction of the doctors to fail to match. In computing the surplus, we will need to subtract the number of unmatched doctors whose public rating lies above $h$'s cone (because they fail to match hospitals above $h$, which therefore could be matched with some of doctors in the previous surplus, thereby reducing the surplus). As it turns out, we will need a high-probability bound on this number, and on the analogous numbers for the not-fully matched hospitals.

The difficulty is that while we can obtain an upper bound on the probability that a single hospital (or doctor) is unmatched, and hence an upper bound on the expected number of unmatched hospitals, we cannot apply a Chernoff bound, as the probabilities for the various hospitals need not be independent nor negatively correlated.
Instead, we modify the double-cut DA to lower bound the number of unmatched hospitals in a suitably sized interval $I$ of hospitals (the interval is given by public ratings). In this modification, we no longer use the full set of doctors in a range $[r(h)-a\alpha,1)$ as before, but a subset of these doctors chosen to ensure that the possible runs of the double-cut DA are close to uniform w.r.t.\ the hospitals in $I$. For then it turns out that the match/non-match probabilities for the hospitals in $I$ become negatively correlated, allowing us to apply a Chernoff bound.

We will also need to analyze an analogous double-cut DA for intervals of doctors. This is not completely symmetric for the hospitals can only make proposals on the edges the doctors selected, and different hospitals in general have differing numbers of available proposals.

The runs of the double-cut DA described in the previous two paragraphs are used to bound the number of unmatched hospitals and doctors. These bounds are used inductively. They are also used in the analysis of runs of the original double-cut DAs; these runs are used to lower bound the quality of the matches achieved by individual doctors and hospitals. The quality bounds in turn are used in the $\eps$-Nash analysis.

A further generalization provides each hospital a capacity $\kappa>1$. For simplicity, we suppose all hospitals have the same capacity, so there are now $n$ doctors and $n/\kappa$ hospitals. As $\kappa$ increases, the probability that a hospital is fully matched increases. Indeed, when $\kappa=\Omega(\ln n)$, all but the bottommost hospitals will be fully matched with failure probability at most $n^{-c}$.

In general, we can describe the double-cut DA as follows. Some collection $\cd$ of proposers (WLOG doctors) all use an initial portion of their preference lists in a run of doctor-proposing DA to propose to a collection $\ch$ of hospitals (these are the destinations of the edges in the above truncated preference lists). Suppose it produces an outcome $\mathcal O$ for the hospitals in $\ch$.

Now suppose we use a larger set of edges in a new run of doctor-proposing DA, which includes all the edges in the previous run. The new edges need not be consecutive in their preference lists and can also come from doctors outside $\cd$. Let ${\mathcal O}'$ be the new outcome.
The following claim is immediate.

\begin{claim}\label{clm::double-cutDA-provides-bounds}
    For every hospital in $\ch$, the outcome in ${\mathcal O}'$ is at least as good as the outcome in ${\mathcal O}$. An analogous claim applies to runs of the hospital-proposing DA. 
\end{claim}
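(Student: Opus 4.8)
The plan is to derive the claim from two elementary features of the many-to-one DA: the order-independence of the matching it produces, and the fact that within a single run each hospital's set of tentatively-held doctors only gets better over time. Write $E$ for the edge set used in the first run (the initial-segment preference lists of the doctors in $\cd$, all pointing into $\ch$) and $E'\supseteq E$ for the edge set of the second run; the doctors' preference orders are the same in both runs, and in $E'$ each doctor's acceptable set is an arbitrary superset of her $E$-edges (empty in the first run for doctors outside $\cd$).

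First I would run the doctor-proposing DA on $E'$ in a deliberately chosen order: process only those proposals that travel along $E$-edges, continuing until no such proposal is pending. Doctors outside $\cd$ have no $E$-edges and so stay idle; a doctor $d\in\cd$ just proposes down her $E$-prefix exactly as she did in the first run. Hence this phase is a legitimate execution of DA on the smaller edge set $E$, and since the $E$-edges of each $d\in\cd$ form a prefix of her list (so no proposal is ever taken ``out of order''), order-independence of DA implies that the tentative matching reached at the end of this phase is exactly $\mathcal{O}$.

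Next I would let the DA on $E'$ run to completion from that configuration: doctors of $\cd$ who have exhausted their $E$-edges now continue along their $E'\setminus E$ edges, and doctors outside $\cd$ begin proposing from the top of their lists. Every step from here on is merely an extra proposal arriving at some hospital, and in DA a hospital never gives up a held doctor except to replace its current least-preferred held doctor by a strictly more preferred proposer. Consequently, for each hospital the sorted list of its held doctors, padded with the no-match symbol $\phi$ up to length $\kappa$, is coordinate-wise non-decreasing over the course of the run. Since the configuration at the start of this second phase is $\mathcal{O}$ and the final configuration is $\mathcal{O}'$, every hospital in $\ch$ (indeed every hospital) has an $\mathcal{O}'$-outcome at least as good as its $\mathcal{O}$-outcome. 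The hospital-proposing version follows by the identical argument with the two sides interchanged; the fact that hospitals may propose only along doctor-selected edges plays no role, since we are only comparing two fixed nested edge sets with fixed proposer preference orders.

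I expect the only point that needs any care to be the first step: one must invoke order-independence in the precise form ``a configuration of DA that is quiescent with respect to the edge set $E$ is the DA outcome for $E$'', and confirm that restricting the $E'$-run to its $E$-edges is genuinely a run of DA on $E$ (which is where the prefix property of the $\cd$-doctors' $E$-edges is used). Everything else is the routine monotonicity of held sets in many-to-one DA, which is why the claim is essentially immediate.
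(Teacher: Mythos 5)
Your proposal is correct and follows essentially the same route as the paper's proof: exploit order-independence to schedule the larger run so that the $E$-edges (which form prefixes of the $\cd$-doctors' lists) are exhausted first, reaching $\mathcal{O}$, and then invoke the monotone improvement of each hospital's held set as further proposals arrive to conclude $\mathcal{O}'$ is at least as good. Your write-up simply makes explicit the quiescence/order-independence step that the paper states more briefly.
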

\begin{proof}
    In the doctor-proposing DA, as more proposals are made, the outcome for the receiving hospitals only improves. And because the first run uses initial portions of the preference lists for the doctors in $\cd$, in the second run, these edges, for each preference list, are all proposed before any new edges. Therefore, outcome ${\mathcal O}$ can be achieved before any of the new edges are used, and hence using these new edges can only improve the outcome beyond ${\mathcal O}$.
\end{proof}

This immediately implies that the outcome for the hospitals in the full run of the doctor-proposing DA is at least as good as the outcome in the double-cut run, which is why the bounds in the double-cut run are useful.

\subsection{The Doctor-Hospital Setting: Sketch of the Analysis}
\label{sec::proof-idio-thm}

To carry out the analysis we discretize the setting, and then to obtain a result for the continuous setting
we take the limit as the discretization tends to 0, similarly to \cite{AC23}.

\begin{claim}
\label{clm::discrete-dist-idio}
For any $\delta>0$, there is a discrete utility space in which
the probability of having any differences
in the preference lists in the original continuous setting and in the discrete setting is at most $\delta$.
In addition, the public values of the doctors will all be distinct, as will the public values of the hospitals, and for each doctor, its private values will all be distinct, and this is achieved within the same overall $\delta$ bound on the failure probability.
Furthermore, the probability of each combination of utility choices will be the same.
\end{claim}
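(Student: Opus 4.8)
The plan is a routine rounding-and-coupling argument. Fix a large integer $N$ (to be chosen at the end as a function of $\delta$, $n$, $\kappa$) and let $G_N=\{0,1/N,\dots,(N-1)/N\}$. Couple the two models as follows: for every underlying uniform-$[0,1)$ draw $U$ of the continuous model (a public rating $r(d)$ or $r(h)$, a private value $v(d,h)$, or an interview value $\iota(d,h)$ or $\iota(h,d)$), the discrete model uses $\widehat U:=\lfloor NU\rfloor/N$, and it computes cones, interview-selection sets, and all preference orders by exactly the formulas of \Cref{sec::model} (breaking ties, should any arise, arbitrarily). Since the $\widehat U$ are i.i.d.\ uniform on $G_N$, the discrete model is a well-defined model on its own, and every joint assignment of its utility values has probability exactly $N^{-T}$, where $T$ is the total number of draws; this gives the last sentence of the claim (and, by exchangeability of a doctor's private values, it also gives that, conditioned on a doctor's cone, every $k$-subset is selected with equal probability, just as in the continuous model).

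Next I would isolate the finitely many comparisons whose rounding can cause a discrepancy. The discrete preference profile differs from the continuous one, or a required distinctness property fails, only if one of the following comparisons has a strictly different outcome after rounding, or becomes a tie: (i) a cone-membership test $r(h)\lessgtr r(d)\pm a\alpha$, over all $d$ and $h$; (ii) a private-value comparison $v(d,h_1)\lessgtr v(d,h_2)$, over all $d$ and all $h_1\neq h_2$; (iii) a doctor's post-interview comparison $r(h_1)+v(d,h_1)+\iota(d,h_1)\lessgtr r(h_2)+v(d,h_2)+\iota(d,h_2)$, over all $d$ and all $h_1\neq h_2$; (iv) a hospital's comparison $r(d_1)+\iota(h,d_1)\lessgtr r(d_2)+\iota(h,d_2)$, over all $h$ and all $d_1\neq d_2$; and (v) the distinctness tests $r(d_1)\lessgtr r(d_2)$, $r(h_1)\lessgtr r(h_2)$, and $v(d,h_1)\lessgtr v(d,h_2)$. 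With $n$ and $\kappa$ fixed there are $M=O(n^3)$ such comparisons, and each is between two quantities of the form $X+c$ and $Y+c'$ with $c,c'$ constants and $X,Y$ sums of between one and three \emph{distinct} uniform draws, the draws on the two sides being disjoint. Consequently $X-Y$ has a probability density bounded by $1$ everywhere (a sum of at least one i.i.d.\ uniform has density at most $1$, and this is preserved under convolution with the law of $-Y$). There is no circularity here: the ``good event'' below is the intersection of the no-flip events over \emph{all} of (i)--(v), and on it the cone and interview-selection sets in the two models coincide, so the type-(iii) comparisons that actually matter are between identical quantities in both models.

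Finally I would union-bound. Under the coupling each side of any of these comparisons moves by strictly less than $3/N$, so the gap $X-Y$ moves by less than $6/N$; hence such a comparison can flip or become a tie only if the continuous value of $X-Y$ falls in $[-6/N,6/N]$, an event of probability at most $12/N$ by the density bound. Summing over the $M=O(n^3)$ comparisons, the probability that any one of them flips is at most $12M/N$, so taking $N=\lceil 12M/\delta\rceil$ bounds it by $\delta$. Off this event the discrete and continuous preference lists are identical and all the stated distinctness properties hold, which is the claim. The only real work is the enumeration in the previous paragraph: verifying that every possible discrepancy is one of the listed comparisons, and that in each the two sides involve disjoint underlying draws so the density bound applies; the truncation of cones to $[0,1)$ contributes only the non-random endpoints $0$ and $1$ and so adds no further comparisons. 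I expect this bookkeeping, not any probabilistic subtlety, to be the one point needing care.
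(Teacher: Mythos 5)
Your proposal is correct and follows essentially the same route as the paper: couple the continuous and discrete settings by rounding each utility component down to a grid of spacing $\Theta(\delta/n^3)$, observe that the rounded draws are uniform on the grid (giving the equal-probability statement), and union-bound over the $O(n^3)$ pairwise events that could change a preference list or violate distinctness. The one substantive difference is in the bookkeeping of bad events: the paper only counts \emph{exact ties} among the rounded utilities (and uses a mod-$1$ device for the uniformity statement), whereas you bound the probability that any relevant \emph{continuous} comparison is within $O(1/N)$ of a tie; your accounting is the more careful one, since it also covers order reversals caused by unequal rounding errors across summed components, as well as cone-membership and top-$k$ selection boundary changes, which the paper's proof does not explicitly address. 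Either way the loss is only in constants, and your choice of $N=\lceil 12M/\delta\rceil$ with $M=O(n^3)$ matches the paper's choice of $z=\Theta(n^3/\delta\kappa)$.
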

\hide{
\begin{proof}
$z\ge n^3/\delta\kappa$ will be an integer in our construction.

We partition the utility range $[0,3]$ into $3z$ intervals each of length $1/z$.
To do this, we round down each utility to the nearest integer multiple of $1/z$.
Note that for each doctor $d$, the probability that it has the same utility for doctors
$d$ and $d'$ is at most $1/z$: to see this, consider selecting the random utility components
one by one, with the interview value for $d'$ being last; this last choice will create the
utility equality with probability at most $1/z$.
Therefore the probability that a doctor has the same utility for two hospitals is at most
$\tfrac 12 \frac{n}{\kappa}\cdot\big(\frac{n}{\kappa}-1\big)/z$, and so the probability that any doctor has the same utility for two hospitals is at most
$n\cdot \frac{n}{\kappa}\big(\frac{n}{\kappa}-1\big)/2z\le n^3/2z\kappa^2$.

Similarly, the probability that any hospital has the same utility for two doctors is at most
$n^3/2z\kappa$.

Therefore, the settings in which a doctor has the same utility for any pair of hospitals or a hospital has the same utility for any pair of doctors
occur with probability at most $n^3/z\kappa\le \delta$.
\end{proof}
}

We will be repeatedly referring to agents, be it a doctor $d$ or a hospital $h$, with public ratings in an interval $I$. For brevity, we will often write $d\in I$ (or $h\in I$) rather than $r(d)\in I$ (or $r(h)\in I$).

The proof of the theorem will use four different double-cut DAs;
their outcomes are specified in the next four lemmas.
In these lemmas we suppose that $\alpha$ and $g-f$ are chosen so that $\alpha$ is an integer multiple of $g-f$. Also, as the proof proceeds, we will identify some low probability events, \emph{bad events}, for short, collectively denoted by ${\mathcal B}$; ${\mathcal B}$ will occur with probability at most
$1/n^c$, for any arbitrary constant $c>0$ (the bounds are a function of $c$ of course).

\begin{lemma}\label{lem::doctors-interval-non-match}
Let $\delta>0$ be an arbitrarily small constant.
Let $I=[f,g)$ be a public rating interval with $g\ge a\alpha$.
Suppose we run the (to be specified) hospital-proposing double-cut DA for interval $I$ using the discretized utilities
from \Cref{clm::discrete-dist-idio}.
Suppose the fraction of unmatched hospitals with public rating in $[g+(a+1)\alpha,1)$ is at most 
$3 \cdot\exp\big(-\alpha k/2(4a+1)\big)$.
Also suppose that event ${\mathcal B}$ does not occur.
Then, with failure probability at most $n^{-(c+2)}$,
the fraction of the doctors in $I$ who fail to match is less than $3\cdot \exp\big(-\frac{\alpha k}{2(4a+1)}\big)$,
if $n\ge2$, $a\ge 5$, $\alpha\cdot\exp\big(\frac{\alpha  k}{2(4a+1)}\big)\ge 3$, and the bounds on $g-f$ from \Cref{clm::hosp-prop-interval-constraints} hold.
\end{lemma}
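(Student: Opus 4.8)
The plan is to analyze the hospital-proposing double-cut DA restricted to the interval $I=[f,g)$, and to show that the match-failure probabilities for the doctors in $I$ become negatively correlated, so that a Chernoff bound applies. First I would set up the double-cut DA for $I$: the participating doctors are those with public rating in $[f-a\alpha,1)$, restricted to the edges they selected (recall each doctor only has $k$ edges, into her cone), and the participating hospitals are those in $[f-\alpha,1)$ that propose only along edges giving them utility above the appropriate cutoff $c(\cdot)$; the hospitals in $I$ propose down to (but not past) the doctors in $I$. By \Cref{clm::double-cutDA-provides-bounds}, any lower bound on the quality of the doctors' matches in this truncated run carries over to the full doctor-proposing DA. The key structural point, inherited from the overview in \Cref{sec::double-cutDA} but now applied with the roles of the two sides swapped, is to choose the subset of hospitals participating in the run so that the induced distribution over runs is close to uniform with respect to the doctors in $I$; I would need to invoke the constraints on $g-f$ from \Cref{clm::hosp-prop-interval-constraints} precisely here, since those constraints are what make the interval small enough for the near-uniformity to hold while still large enough for concentration.

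Next I would establish the surplus bound: with the hypothesised bound of $3\exp(-\alpha k/2(4a+1))$ on the fraction of unmatched hospitals above $[g+(a+1)\alpha,1)$, and using a Chernoff bound on the number of hospitals and doctors in the relevant ranges (valid since public ratings are independent uniform draws), together with the assumption that $\mathcal B$ does not occur, I can conclude that there is a surplus of at least roughly $\alpha n$ hospitals in $[f-\alpha,1)$ relative to the doctors in $[f-a\alpha,1)$ — i.e., at least that many hospitals that either propose to some doctor in $I$ or end unmatched. The arithmetic constraint $a\ge5$ is exactly what guarantees the raw count surplus ($\approx(2a-2)\alpha n$ vs.\ $\approx 2a\alpha n$ doctors\dots) survives the subtraction of the unmatched-hospitals-above term and the Chernoff slack; I would track this carefully. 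Each such surplus hospital $h'$, conditioned on reaching a given doctor $d\in I$ in the run, proposes to $d$ with probability at least $\alpha$ (it has interview value for $d$ making the edge survive) — but only if $d$ is among the $k$ edges $h'$'s matched-to-doctor\dots no: here it is the doctor who chose the edge, so what matters is that $d$ chose $h'$, which happens with probability $\approx k/(2a\alpha n)$ per hospital in her cone. Combining the $\approx \alpha n$ surplus hospitals, the $k/(2a\alpha n)$ edge-selection probability, and the $\alpha$ survival probability gives an expected $\Theta(\alpha k/a)$ "independent-model" proposals to each doctor $d\in I$, whence the per-doctor non-match probability in the independent model is at most $\exp(-\Theta(\alpha k/a))$; matching the stated constant $2(4a+1)$ in the exponent is a matter of bookkeeping the losses.

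The main obstacle — and the step that requires the most care — is converting the per-doctor bound into a statement about the \emph{fraction} of unmatched doctors in $I$ holding with failure probability $n^{-(c+2)}$. Two things have to be combined. First, the actual proposals the surplus hospitals make are not independent (a hospital proposing to one doctor cannot simultaneously propose to another), so I would replay the tree-stochastic-dominance argument of \cite{AC23}, generalized as the paper promises, to show the independent model stochastically dominates (in the sense of making non-match more likely than) the true double-cut DA run; this is where restricting to a near-uniform subset of hospitals pays off, because it is what makes the dominance argument go through cleanly for every doctor in $I$ simultaneously. Second, having a per-doctor non-match bound of $p\le\exp(-\alpha k/2(4a+1))$, I need the \emph{number} of unmatched doctors in $I$ to concentrate around its mean $p|I|$; this is exactly why the construction forces near-uniformity — it yields negative correlation (or negative association) among the non-match indicators for doctors in $I$, so a Chernoff bound gives that the fraction exceeds $3p$ with probability at most $\exp(-\Theta(p|I|))$, and the conditions $|I|=g-f$ not too small (from \Cref{clm::hosp-prop-interval-constraints}) and $\alpha\exp(\alpha k/2(4a+1))\ge3$ are precisely what make this at most $n^{-(c+2)}$. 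I would fold the (exponentially small) events where the Chernoff counting bounds on hospital/doctor populations fail, and the discretization-error event from \Cref{clm::discrete-dist-idio}, into $\mathcal B$.
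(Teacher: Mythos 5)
Your overall skeleton is the paper's: a hospital-proposing double-cut run anchored at the interval, a surplus bound fed by the hypothesis on unmatched hospitals above $[g+(a+1)\alpha,1)$, an independent-proposals model giving a per-doctor non-match probability of roughly $\exp(-\Theta(\alpha k/a))$, tree stochastic dominance to transfer the bound, and negative correlation plus a Chernoff bound to control the number of unmatched doctors in $I$, with the $g-f$ constraints supplying concentration. However, two concrete points are off. First, your participant sets are mis-specified: you put the doctors in $[f-a\alpha,1)$ and the hospitals in $[f-\alpha,1)$, i.e.\ you give the wide cut to the receiving side and the narrow cut to the proposers (and anchor at $f$ rather than $g$). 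With those ranges the raw counts give a surplus of \emph{doctors} over hospital slots, so the assertion of ``roughly $\alpha n$ surplus hospitals'' does not follow; the paper takes the participating doctors to be those with rating at least $g-\alpha$ and the proposing hospitals (a subset of) those with rating at least $g-a\alpha$, and the surplus is $\alpha n/\kappa$ hospitals (\Cref{clm::surplus-interval-hosp-prop}), which is what forces spare hospitals to propose down into $I$.

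Second, and more importantly, the near-uniformity you lean on is asserted rather than constructed, yet constructing it is exactly the ``to be specified'' content of the lemma. The paper's mechanism is: exclude all hospitals in $[f+a\alpha,g+a\alpha)$ (in-cone for some but not all of $I$); remove every hospital of $L=[g-a\alpha,f+a\alpha)$ with two or more edges into $I$, together with the doctors incident on those edges and their neighboring hospitals, yielding $I'$ and $L'$ with $|I'|\ge|I|\bigl[1-\exp\bigl(-\alpha k/2(4a+1)\bigr)\bigr]$ w.h.p.\ (\Cref{clm::hosp-removal-lots-edges}); and stop any hospital just before a proposal whose utility falls in the asymmetric boundary ranges, so every counted proposal lies in the common window $[g+1-\alpha,f+1)$ and is uniform over $I'$. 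Only after these steps can $I'$ be treated like a single doctor with $k|I'|$ distinct edges, can the dominance argument be run for the single tail event ``at most $r$ distinct doctors of $I'$ receive proposals'' (with the bookkeeping of discarding the $k-1$ sibling edges when a doctor is hit, \Cref{clm::qtilde-bound-for-hosp-doc-int}), and does negative correlation hold in the $\qtilde$ model (\Cref{qtilde-bound-at-root-hosp-to-doc-int}); note also that negative correlation is available in the independent model, not claimed for the actual DA indicators, so the Chernoff step must be done there. Finally, the constant $3$ is not pure Chernoff slack as your ``$3p$'' suggests: it is $2|I'|\exp(\cdot)$ from the Chernoff bound plus $|I|\exp(\cdot)$ from the removed doctors, which is why the removal bound has to be proved with the same exponent.
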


A similar, but not identical, lemma (\Cref{lem::hospitals-interval-non-match}) for the doctor-proposing double cut DA for intervals $I$ of hospitals is deferred to the appendix.
Its proof is different in some non-trivial ways.
\hide{
\begin{lemma}\label{lem::hospitals-interval-non-match}
Let $\delta>0$ be an arbitrarily small constant.
Let $I=[f,g)$ be a public rating interval with $g\ge a\alpha$.
Suppose we run the (to be specified) doctor-proposing double-cut DA for interval $I$ using the discretized utilities from in \Cref{clm::discrete-dist-idio}.
Suppose the fraction of unmatched doctors with public rating in $[g+(a+1)\alpha,1)$ is at most 
$3\cdot\exp\big(-\alpha k/2(4a+1)\big)$.
Also suppose that event ${\mathcal B}$ does not occur.
Then, with failure probability at most $n^{-(c+2)}$,
the fraction of the hospitals in $I$ that fail to match
is less than $3\cdot \exp\big(-\frac{\alpha\kappa k}{2(4a+1)}\big)$,
if $n\ge2$, $a\ge 5$, $\alpha\cdot\exp\big(\frac{\alpha k}{2(4a+1)}\big)\ge 3$, and the bounds on $g-f$ from \Cref{clm::doc-prop-interval-constraints} hold.
\end{lemma}
}

%The proof of \Cref{lem::hospitals-interval-non-match} is quite similar to the proof of \Cref{lem::doctors-interval-non-match}, with the roles of the doctors and hospitals being switched. It will be deferred to the appendix.

These two lemmas  are used in a mutual induction with the same value of $\delta$ to justify their assumptions on the numbers of unmatched hospitals and doctors, respectively. They also justify the assumption on the numbers of unmatched hospitals and doctors in the next two lemmas. 
% In order to use these bounds we need to know that the double-cut DA for an interval $I$ of doctors, is over doctors and hospitals with public rating in $[g-\alpha,1)$ and a subset of $[g-a\alpha,1]$, respectively, and similarly for the double-cut DA for an interval $I$ of hospitals.

\begin{lemma}\label{lem::doctors-high-value-match-prob}
Let $\delta>0$ be an arbitrarily small constant.
    Let $d$ be a doctor with public rating at least $a\alpha$.
    Suppose we run the hospital-proposing double-cut DA for doctor $d$ using the discretized utilities
from \Cref{clm::discrete-dist-idio}.
    Suppose the fraction of unmatched hospitals with public rating in $[r(d)+(a+1)\alpha,1)$ is at most $3\exp\big(-\alpha k/2(4a+1)\big)$.
    Also, suppose that event ${\mathcal B}$ does not occur.
    Finally, suppose we run the hospital-proposing double-cut DA for $d$.
     Then the probability that $d$ fails to receive a match for which it has interview value $1-\psibar_d$ or larger is at most $\exp\big(-\frac{k\alpha\psibar_d}{4a+1}\big)$, if both $a\ge 5$ and $\alpha\cdot\exp\big(\alpha k/2(4a+1)\big)\ge 3$.
\end{lemma}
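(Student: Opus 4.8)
The plan is to analyze the hospital-proposing double-cut DA for doctor $d$ and bound the probability that $d$ receives no ``good'' proposal, i.e.\ no proposal from a hospital for which $d$'s interview value is at least $1-\psibar_d$. The double-cut DA for $d$ has more hospitals participating than doctors (by our choice of parameters), so there is a surplus of hospitals that are either matched into $d$'s interval or remain not-fully-matched. The overall structure mirrors the three-step template sketched in \Cref{sec::double-cutDA}: (1) establish a surplus $s$ of hospitals that could propose to $d$; (2) pretend each of these $s$ hospitals proposes to $d$ independently with probability roughly $\alpha$ (the probability that an in-cone hospital has an edge to $d$ and that its interview value for $d$ is high enough that the edge ``survives'' the double-cut), multiplied by the probability $\psibar_d$ that, conditioned on that edge existing, $d$'s interview value for the hospital is at least $1-\psibar_d$, and conclude $d$ gets a good proposal with probability $\ge 1-(1-\alpha\psibar_d)^s$; (3) replace this independence heuristic by the rigorous tree stochastic dominance argument so that the failure probability of the real run is no larger than $(1-\alpha\psibar_d)^s$.

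First I would set up the double-cut DA for $d$: the participating hospitals are those with public rating in $[r(d)-\alpha,1)$ and the participating doctors are those with public rating in $[r(d)-a\alpha,1)$, with the doctors in $d$'s cone restricted so they never propose below $r(d)-\alpha$ in utility, and restricted further so that a doctor stops upon being about to propose to $d$. Using \Cref{clm::double-cutDA-provides-bounds}, any lower bound on the quality of $d$'s match here carries over to the full hospital-proposing DA and to the doctor-proposing DA. Next I would compute the surplus. By Chernoff bounds (absorbed into ${\mathcal B}$), the number of participating hospitals exceeds the number of participating doctors by at least $\approx \alpha n$ (this is where $a\ge 5$ enters); but I must subtract the not-fully-matched hospitals whose public rating lies above $d$'s cone, because those could ``steal'' doctors from the surplus. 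Here the hypothesis that the fraction of unmatched hospitals in $[r(d)+(a+1)\alpha,1)$ is at most $3\exp(-\alpha k/2(4a+1))$ is used, together with the condition $\alpha\exp(\alpha k/2(4a+1))\ge 3$, to show this correction is lower-order, leaving a surplus $s=\Theta(\alpha n)$ — concretely $s\ge \alpha n/(4a+1)$ or similar after accounting for interval widths and discretization.

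The main obstacle, as in \cite{AC23}, is Step 3: the events ``hospital $h$ in the surplus proposes a good edge to $d$'' are not independent, because whether $h$ ends up in the surplus depends on the random private/interview values, which also govern its edge to $d$. I would handle this with the tree stochastic dominance methodology: set up the branching process of rejections in the double-cut DA rooted at $d$, and argue that the true run's failure event (no surviving good edge reaches $d$) is stochastically dominated by the idealized process in which each of the $s$ surplus hospitals independently sends $d$ a good proposal with probability $\alpha\psibar_d$. Crucially, I want to verify that conditioning on a hospital being in the surplus does not decrease the probability that its edge to $d$ survives with $d$'s interview value $\ge 1-\psibar_d$ — intuitively, a hospital that failed to match up-cone is, if anything, more likely to have its top proposals land low, which only helps. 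Combining, $\Pr[d \text{ gets no good match}]\le (1-\alpha\psibar_d)^s \le \exp(-\alpha\psibar_d s)\le \exp(-k\alpha\psibar_d/(4a+1))$, using $s\ge k/(4a+1)$-scaled bound coming from the fact that the expected number of in-cone hospitals with an edge to $d$ and surviving interview value is $\Theta(k)$ (since $d$ lists $k$ hospitals, and symmetrically $d$ receives $\Theta(k)$ candidate proposals). The stated bound $\exp(-k\alpha\psibar_d/(4a+1))$ then follows; the conditions $a\ge 5$ and $\alpha\exp(\alpha k/2(4a+1))\ge 3$ are exactly what make the surplus computation and the error-term bookkeeping go through.
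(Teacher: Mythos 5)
Your high-level template (surplus, then an independent-proposals calculation, then tree stochastic dominance) is the same as the paper's, but as written the plan has two genuine problems. First, your setup of the run is the wrong one: for the \emph{hospital-proposing} double-cut DA for the doctor $d$, the proposing side is the hospitals, so the wide cut goes on the hospitals and the narrow cut on the doctors — the participating doctors are those with public rating at least $r(d)-\alpha$, the participating hospitals those in $[r(d)-a\alpha,r(d)+a\alpha)\cup[r(d)+(a+1)\alpha,1)$, and it is the \emph{hospitals} in $d$'s cone that are stopped once their utility falls below $r(d)+1-\alpha$ or once they propose to $d$ (cf.\ \Cref{lem::surplus-hosp-prop-one-doc}). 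You instead copied the geometry of the double-cut DA for a hospital (hospitals in $[r(d)-\alpha,1)$, doctors in $[r(d)-a\alpha,1)$, ``a doctor stops upon being about to propose to $d$''), which is incoherent here since $d$ is a doctor and no doctor proposes to her; the surplus you need is of not-fully-matched \emph{in-cone hospitals} forced to propose down to $d$, and the subtraction of not-fully-matched hospitals above the cone is there because they leave doctors unconsumed who can then absorb in-cone hospitals, not because they ``steal doctors from the surplus.''

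Second, the heart of the lemma — the dominance step — is both under-specified and mis-specified. The idealized process cannot be ``each of the $s$ surplus hospitals independently sends $d$ a good proposal with probability $\alpha\psibar_d$'': that ignores that $d$ interviews at only $k$ of the $|C(d)|$ in-cone hospitals, so the per-hospital probability must carry the factor $k/|C(d)|$ (in the paper, $\qtilde$ is defined by uniformly selecting $k-x$ of the $|C(d)|-x-y$ hospitals that have not yet proposed to $d$, and the per-surplus-hospital success probability is $\frac{k}{|C(d)|}\cdot\alpha\cdot\psibar_d$; see \Cref{lem::fail-prob-hospital-prop}). Your version would give $\exp(-\Theta(\alpha^2\psibar_d n))$, which cannot be right, and indeed your final chain silently changes the meaning of $s$ from $\Theta(\alpha n)$ to ``$k/(4a+1)$'' to land on the stated bound. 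Moreover, the assertion you plan to ``verify'' — that conditioning on a hospital being in the surplus only helps — is exactly the delicate point; it is not a one-line monotonicity fact, and the paper makes it rigorous via the tree $T$ with high-real/low-real/imaginary/no-action children, choosing interview values before deciding whether an interview exists, ranking hospitals by residual utility, and a leaf-to-root induction with the two cases on whether $u(h,v)$ is among the $s-x-y$ smallest residual utilities (\Cref{clm::fixed-k-stoc-dom-idio}). Without a concrete formulation of the dominated process and that induction, the plan does not yet constitute a proof of the stated bound $\exp\big(-\frac{k\alpha\psibar_d}{4a+1}\big)$.
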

We will use two values of $\psibar_d$: we set it to 1 to obtain the non-match probability, and to
$\alpha$ to obtain a bound on the expected utility which we use in the $\eps$-Nash analysis.

A similar lemma (\Cref{lem::hosp-high-value-match-prob}) for the doctor-proposing double cut DA for hospital $h$ is deferred to the appendix. Again, its proof needs non-trivial changes.

\hide{
\begin{lemma}\label{lem::hosp-high-value-match-prob}
Let $\delta>0$ be an arbitrarily small constant.
    Let $h$ be a hospital with public rating at least $a\alpha$.
    Suppose we run the doctor-proposing double-cut DA for hospital $h$ using the discretized utilities as specified in \Cref{clm::discrete-dist-idio}.
    Suppose the fraction of unmatched doctors with public rating in $[r(d)+(a+1)\alpha,1)$ is at most $3\exp\big(-\alpha k/2(4a+1)\big)$.
    Also, suppose that the event ${\mathcal B}$ does not occur.
    Finally, suppose we run the doctor-proposing double-cut DA for $h$.
     Then the probability that $h$ fails to $\kappa$ matches for which it has interview value $1-\psibar_h$ or larger is at most $\exp\big(-\frac{3k\alpha\psibar_h}{}\big)$, if $a\ge 5$, $k\ge 4a+1$, and $\alpha\cdot\exp\big(\alpha k/2(4a+1)\big)\ge 3$.
\end{lemma}
}

% The proof of \Cref{lem::hosp-high-value-match-prob} is similar to the proof of \Cref{lem::doctors-high-value-match-prob}. We defer it to the appendix.

We finish up the argument by showing the bounds in these four lemmas apply
to the runs of the double-cut DA with the actual utilities drawn from a continuous range.

\begin{claim}\label{lem::cont-range-bound}
    The bounds of Lemmas \ref{lem::doctors-interval-non-match} and \ref{lem::doctors-high-value-match-prob} and the corresponding lemmas for the hospitals hold in the continuous setting too.
\end{claim}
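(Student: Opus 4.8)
The plan is to use the discretization bridge from \Cref{clm::discrete-dist-idio} as a coupling and take the limit $\delta \to 0$. First I would fix an arbitrary constant $c>0$ and an arbitrarily small $\delta>0$, and invoke \Cref{clm::discrete-dist-idio} to obtain a discrete utility space (with parameter $z$ depending on $\delta$) in which: (i) the probability that the discrete preference lists differ from the continuous ones is at most $\delta$; (ii) the relevant ties (equal public ratings on each side, equal private values for a fixed doctor) are all broken, again within the same $\delta$ failure bound; and (iii) the probability of each combination of utility-ordering outcomes is unchanged. The key point of (iii) is that any event defined purely in terms of the preference orderings — which includes all the events appearing in Lemmas \ref{lem::doctors-interval-non-match} and \ref{lem::doctors-high-value-match-prob} and their hospital-side analogues, since the DA outcome, the match/non-match indicators, and the interview-value thresholds are all functions of the orderings — has exactly the same probability in the discrete setting as the probability, in the continuous setting, of the corresponding ``ordering event.''

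Next I would observe that each of the four lemmas, as proved, establishes its bound in the discrete setting conditioned on the good event (the complement of $\mathcal{B}$ together with the stated hypotheses on the fractions of unmatched agents). Let $E$ denote a generic ``failure'' event from one of these lemmas (e.g.\ that more than a $3\exp(-\alpha k/2(4a+1))$ fraction of doctors in $I$ fail to match, or that $d$ fails to get a match with interview value $\ge 1-\psibar_d$), and let $E_{\mathrm{disc}}$, $E_{\mathrm{cont}}$ be its versions in the two settings. Since $E$ is an ordering event, property (iii) gives $\prob[E_{\mathrm{disc}}] = \prob[E_{\mathrm{cont}}]$ on the nose, and likewise for the conditioning events. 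Hence the probability bound proved in the discrete setting transfers verbatim to the continuous setting: $\prob[E_{\mathrm{cont}} \mid \text{good event}] \le n^{-(c+2)}$ (resp.\ $\le \exp(-k\alpha\psibar_d/(4a+1))$), with no dependence on $\delta$ at all. Because the bound is independent of $\delta$, and $\delta$ was arbitrary, there is in fact nothing left to take a limit of — the identity of probabilities is exact — so the continuous bounds hold as stated. The same argument applies simultaneously to the mutual induction: the inductive hypotheses are ordering events, so the discrete induction is, step for step, an induction in the continuous setting.

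The only subtlety — and the place I would be most careful — is making sure every quantity referenced in the four lemmas really is a function of the preference orderings alone, so that property (iii) applies cleanly. Match/non-match and ``$\kappa$ matches or more'' are clearly ordering events. The interview-value conditions ``$d$ is matched to some $h$ with $\iota(d,h)\ge 1-\psibar_d$'' need slightly more care: in the discrete space this should be read as the event that $d$'s discrete utility for its match is at least the corresponding discretized threshold, and \Cref{clm::discrete-dist-idio} is exactly what guarantees the discrete and continuous versions of such threshold comparisons agree except on the $\delta$-probability ``disagreement'' event. Since that $\delta$-event is already absorbed into the $\delta$ slack and the final bounds carry no $\delta$, this causes no loss. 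I would therefore phrase the proof as: (1) couple via \Cref{clm::discrete-dist-idio}; (2) note all lemma statements are ordering events and conditioning events; (3) apply the probability-preservation property to transfer each bound with no change; (4) conclude, noting the bounds never depended on $\delta$ so the continuous statements follow immediately.
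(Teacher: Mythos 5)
There is a genuine gap in your main step. You read property (iii) of \Cref{clm::discrete-dist-idio} (``the probability of each combination of utility choices will be the same'') as saying that any event defined by the preference orderings has \emph{exactly} the same probability in the discrete and continuous settings, and you then conclude that the bounds transfer ``on the nose'' with ``nothing left to take a limit of.'' That reading is incorrect. Property (iii) only says that within the discrete model every grid combination is equally likely (i.e.\ the discretized components are uniform on the grid); it does not identify the discrete ordering distribution with the continuous one. In fact they differ: for two independent uniform draws $u,v$ on $[0,1)$, the continuous setting has $\prob[u>v]=\tfrac12$, while after rounding down to multiples of $1/z$ one gets $\prob[\lfloor uz\rfloor>\lfloor vz\rfloor]=\tfrac{z-1}{2z}$, the discrepancy being exactly the tie probability. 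So ordering events agree only up to the $\delta$-probability disagreement event of \Cref{clm::discrete-dist-idio}, not exactly. The same caveat applies, a fortiori, to the threshold and interval events in the four lemmas (``interview value at least $1-\psibar_d$,'' ``public rating in $[f,g)$''), which are not pure ordering events to begin with.

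The fix is exactly the step you declared unnecessary, and it is how the paper argues: couple the two settings via \Cref{clm::discrete-dist-idio}; the runs of the (double-cut) DA and all the events in question coincide except on the event that some preference list differs, which has probability at most $\delta$. Hence, e.g., the continuous probability that $d$ fails to receive a proposal with interview value at least $1-\psibar_d$ is at most $\delta+\exp\big(-k\alpha\psibar_d/(4a+1)\big)$, and since the discrete bound carries no $\delta$-dependence and $\delta>0$ is arbitrary, letting $\delta\to 0$ gives the continuous bound; the other three lemmas are handled identically. Your ``subtlety'' paragraph already contains the ingredients for this (absorbing threshold disagreements into the $\delta$ event), so the repair is minor, but as written the central equality $\prob[E_{\mathrm{disc}}]=\prob[E_{\mathrm{cont}}]$ is false and the limiting argument cannot be dispensed with.
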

\hide{
\begin{proof}
 For any given $\delta>0$, by \Cref{clm::discrete-dist-idio}, the actions in the continuous setting and the corresponding discrete setting can differ only when there is a difference in the preferences lists, which occurs with probability at most $\delta$.
 Therefore, by \Cref{lem::doctors-high-value-match-prob}, the probability that a doctor $d$ does not receive a proposal with interview value at least $1-\psibar_d$ is at most $\delta+ \exp\big(-k(\alpha-\delta)\psibar_d /2(4a+1)\big)$.
 As this holds for any $\delta>0$, the actual probability is at most $\exp\big(-\alpha k\psibar_d /2(4a+1)\big)$.
 Similar arguments apply to the bounds in the remaining three lemmas.
\end{proof}
}

\paragraph{Bad Events ${\mathcal B}$}
These bad events concern the number of agents with ratings in specified public rating intervals (for the sake of brevity, we will say agents in an interval).
We want to limit the analysis to the case where the number of such agents is reasonably close to its expectation. Our analysis will need to assume this for $O(n)$ intervals and unions of intervals.
We defer the details to the appendix.

\hide{
Let $I$ be an interval of public ratings or a union of such intervals of total length $l$.
We will identify two different pairs of bad events to handle the cases $l\ge \alpha$ and $l<\alpha$, respectively.
For the case $l\ge \alpha$, let $\Bhbf(I)$ and $\Bhbm(I)$ be the events that the number of hospitals in $I$ is smaller than $\big(\ell-\frac12 \alpha\big)\cdot \frac{n}{\kappa}$ and more than $\big(\ell+\frac12 \alpha\big)\cdot \frac{n}{\kappa}$, respectively.
For the case $l< \alpha$, let $\Bhsf(I)$ and $\Bhsm(I)$ be the events that the number of hospitals in $I$ is smaller than $\frac 12\ell\cdot \frac{n}{\kappa}$ and more than $2\ell \cdot \frac{n}{\kappa}$, respectively.
(H is for hospital, b for big and s for small $l$, f for too few and m for too many agents).
We define $\Bdbf(I)$, $\Bdbm(I)$, $\Bdsf(I)$ and $\Bdsm(I)$ analogously for doctors (D for doctor), with $n$ replacing $n/\kappa$ in the previous bounds.

We also want to define analogous events for the number of hospitals in $I$ other than a hospital $h$ known to be in $I$. For brevity, we write $\Bhbf(I\setminus h)$ instead of $\Bhbf(I\setminus\{r(h)\})$.
So, for example, if $h\in I$, we define $\Bhbf(I\setminus h)$ to be the event that the number of hospitals in $I$ is less than $\big(\ell-\frac12 \alpha\big)\cdot \big(\frac{n}{\kappa}-1\big)$, and similarly for $\Bhbm(I\setminus h)$, $\Bhsf(I\setminus h)$, $\Bhsm(I\setminus h)$, and likewise for the analogous events concerning doctors.

\begin{claim}\label{lem::number-agents-in-intervals}
    Suppose $\frac {n}{\ln n}\ge \frac{12(c+2)\kappa \ell}{\alpha^2}$. If $l\ge \alpha$, then  $\Bhbm(I)$ and $\Bhbf(I)$ each occur with probability at most $n^{-(c+2)}$,
    while if $\ell\ge 8(c+2)\kappa\ln n/n$, then $\Bhsm(I)$ and $\Bhsm(I)$ each occur with probability at most $n^{-(c+2)}$.
    
    Similarly, if $\frac {n}{\ln n}\ge \frac{12(c+2)\ell}{\alpha^2}$, and $l\ge \alpha$ then $\Bdbm(I)$ and $\Bdbf(I)$ each occur with probability at most $n^{-(c+2)}$;
    and if $\ell\ge 8(c+2)\ln n/n$, then $\Bdsm(I)$ and $\Bdsf(I)$ each occur with probability at most $n^{-(c+2)}$.

    Analogous bounds hold with $I$ replaced by $I\setminus h$; the one change is that the $n/\kappa$ or $n$ in the relevant bound is replaced by $(n/\kappa)-1$ or $n-1$, respectively.
\end{claim}
\begin{proof}
    The expected number of hospitals in $I$ is $\ell\cdot \frac{n}{\kappa}$.
    By a Chernoff bound, there are more than $\big(\ell+\frac12 \alpha\big)\cdot \frac{n}{\kappa}$ hospitals in $I$ with probability at most $\exp\big(-\frac13\cdot \frac{\alpha^2}{4\ell}\cdot \frac{n}{\kappa}\big)\le n^{-(c+2)}$, if $\frac {n}{\ln n}\ge \frac{12(c+2)\kappa\ell}{\alpha^2}$.
     And by another Chernoff bound,
    there are fewer than $\big[\ell-\frac12\alpha\big]\cdot \frac{n}{\kappa}$ hospitals in $I$ with probability at most $\exp\big(-\frac12\big(1 - \frac {\alpha}{2\ell}\big)^2\ell\cdot \frac{n}{\kappa}\big)\le n^{-(c+2)}$, if $\frac {n}{\ln n}\ge 8(c+2)\kappa l$ (using $\ell\ge \alpha$).
    Similarly, there are more than $2 \ell \cdot\frac n{\kappa}$ hospitals in $I$ with probability at most $\exp\big(-\frac13\ell\cdot \frac n\kappa \big)\le n^{-(c+2)}$, and fewer than $\frac12\ell \cdot\frac n{\kappa}$ hospitals in $I$ with probability at most $\exp\big(-\frac18\ell\cdot\frac n{\kappa}\big)\le n^{-(c+2)}$, if $l\ge 8(c+2)\kappa \ln n/n$.
    \end{proof}

The next corollary shows that, for each doctor and hospital, its cone has a population close to its expectation with high probability.
To make this precise, we introduce the following notation.
Recall that $C(d)$ denotes the cone for doctor $d$; analogously, we let $C(h)$ be the cone for hospital $h$, the
set of doctors with public rating in the range $[r(h)-a\alpha,r(h)+a\alpha)$.
\begin{corollary}\label{cor::cone-population-correct}
    For each doctor $d$ and hospital $h$, the events
    $\Bdbm(C(h))$, $\Bdbf(C(h))$, $\Bhbm(C(d))$, $\Bhbf(C(d))$ each occur with probability at most $n^{-(c+2)}$.
    (Note that the cone for $h$ is an interval of doctors and vice-versa.)
\end{corollary}
\begin{proof}
    It suffices to note that the length of these intervals is typically $2a\alpha$, and at least $a\alpha$ (it can be shorter for the very topmost and bottommost agents), and then apply \Cref{lem::number-agents-in-intervals}.
\end{proof}

In the proofs of Lemmas~\ref{lem::doctors-interval-non-match}--\ref{lem::doctors-high-value-match-prob} we assume that none of these events occur, i.e.\ that all the cones have size fairly close to their expectation.
}

\subsubsection{Proof of \Cref{lem::doctors-high-value-match-prob}}\label{sec::proof-doc-match-prob}

% We fix the utilities for all edges other than those incident on $d$.

We  use \Cref{lem::hospitals-interval-non-match} along with \Cref{clm::double-cutDA-provides-bounds} to bound the surplus, the number of not-fully matched hospitals in $[r(d)+\alpha,1)$, in the double-cut DA for $d$.
\begin{claim}\label{clm::matching-bounds-extend}
Let $I=[f,g)\subset [r(d)+(a+1)\alpha,1)$ be an interval of hospitals. Then every hospital that is fully matched in the doctor-proposing double-cut DA for $I$ will be fully matched in the hospital-proposing double-cut DA for $d$. 
\end{claim}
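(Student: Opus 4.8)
The plan is to establish the claim through a three-step domination chain, pushing the property of being fully matched from the doctor-proposing double-cut DA for $I$ up to the full doctor-proposing DA, across to the full hospital-proposing DA, and then back down to the hospital-proposing double-cut DA for $d$. For the first step, note that the doctor-proposing double-cut DA for $I$ uses only initial segments of the participating doctors' preference lists (constrained doctors are cut off at their cutoff utility, unconstrained ones use their full lists, and some doctors may be omitted entirely), so \Cref{clm::double-cutDA-provides-bounds}, applied with the full doctor-proposing DA on the whole instance as the larger run, shows that every hospital in the receiving set of this double-cut DA — in particular every $h\in I$ — does at least as well in the full doctor-proposing DA; hence a hospital fully matched in the double-cut DA for $I$ is fully matched in the full doctor-proposing DA. For the second step, the full doctor-proposing and full hospital-proposing DAs are stable matchings of one and the same instance, so by the Rural Hospitals Theorem (each hospital is matched to the same number of doctors in every stable matching) such an $h$ is also fully matched in the full hospital-proposing DA.

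It remains to show that $h$'s outcome only improves in passing from the full hospital-proposing DA to the hospital-proposing double-cut DA for $d$. Because $r(h)\ge r(d)+(a+1)\alpha$, hospital $h$ lies strictly above $d$'s cone, so in the double-cut DA for $d$ it is unconstrained and proposes down its whole list, and every doctor on that list has public rating exceeding $r(h)-a\alpha\ge r(d)+\alpha$, so all of them participate. Thus the double-cut DA for $d$ differs from the full hospital-proposing DA only by (i) deleting the hospitals of public rating below $r(d)-a\alpha$, (ii) truncating, at their cutoff, the lists of the hospitals lying in $d$'s cone, and (iii) deleting the doctors of public rating below $r(d)-\alpha$. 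The crucial observation is that (iii) is vacuous once (i) and (ii) are imposed: a present hospital that could propose to a doctor of public rating below $r(d)-\alpha$ would itself have public rating below $r(d)+(a-1)\alpha$, hence lie in $d$'s cone, and the cutoff imposed on such a hospital — which, in terms of its utility $r(d')+\iota(h',d')$ for a doctor $d'$, equals $r(d)+1-\alpha$ — already forbids it from proposing to any doctor with public rating below $r(d)-\alpha$ (as $\iota<1$), so those doctors receive no proposals and deleting them changes nothing. Hence the double-cut DA for $d$ is exactly the full hospital-proposing DA with some proposers deleted and some proposers' lists truncated, and by the standard comparative statics of deferred acceptance — deleting a proposer, or truncating a proposer's list, weakly improves the outcome of every remaining proposer — the hospital $h$, being neither deleted nor truncated, is weakly better off. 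Since it is fully matched in the full hospital-proposing DA and hospital preferences are responsive, it stays fully matched, which is the claim.

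I expect the last step, and in particular point (iii), to be the main obstacle: one has to ensure that discarding the low-rated doctors costs $h$ nothing even indirectly, through the hospitals that compete with $h$, and the resolution is precisely the observation that those doctors are inert once the in-cone hospitals have been truncated — which is where the exact value of the in-cone cutoff enters. The first two steps are immediate from \Cref{clm::double-cutDA-provides-bounds}, the Rural Hospitals Theorem, and hospital-pessimality of the doctor-proposing DA.
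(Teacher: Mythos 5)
Your proof is correct, but it takes a genuinely different route from the paper's. The paper never passes through the full instance: it notes that the complete set $F$ of selected edges incident on hospitals with rating in $[g-\alpha,1)$ (which contains every edge used in the double-cut DA for $I$) goes only to doctors in $[g-(a+1)\alpha,1)$, so, since $r(d)<g-(a+1)\alpha$, these hospitals are unconstrained and these doctors all participate in the double-cut DA for $d$; hence $F$ is contained in the edge set $F'$ of the double-cut DA for $d$. Applying \Cref{clm::double-cutDA-provides-bounds} to the doctor-proposing DA on $F'$, and then the rural-hospitals fact that the doctor- and hospital-proposing DAs on the same edge set $F'$ fully match the same hospitals, finishes the argument entirely within the single reduced instance $F'$, using only receiver-side monotonicity. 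You instead go up to the full instance (via \Cref{clm::double-cutDA-provides-bounds} and rural hospitals there) and come back down, which forces a comparison of the full hospital-proposing DA with the reduced instance. Your observation that deleting the doctors below $r(d)-\alpha$ is vacuous once the in-cone hospitals are cut at $r(d)+1-\alpha$ is correct and is what makes the descent work; but the last step rests on a proposer-side comparative static (deleting or truncating other proposers weakly helps a remaining proposer), which is true but is not among the facts the paper records, so it really needs its own short justification --- e.g.\ by order-independence: run the full hospital-proposing DA deferring all deleted and below-cutoff proposals to the end; the intermediate state is exactly the reduced-instance outcome, and in the continuation $h$, which at that point has either filled its $\kappa$ slots or exhausted its untruncated list, makes no new proposals and can only lose tentative matches, so full-matchedness in the full run implies full-matchedness in the reduced run. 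With that supplied your argument is sound; what the paper's version buys is brevity, since the containment $F\subseteq F'$ lets it avoid any proposer-side monotonicity, while your version makes explicit why the low-rated doctors and hospitals excluded by the double-cut DA for $d$ are harmless to $h$.
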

\begin{proof}
    The edges that are used in the double-cut DA for $I$ are to hospitals in the range $[g-\alpha,1)$ (as we will see later).
    The full set $F$ of edges from these hospitals all go to doctors in the range $[g-(a+1)\alpha,1)$.
    Therefore so long as $r(d)$ is in the range $[0,g-(a+1)\alpha)$, the double-cut DA for $d$ includes all of $F$, and additional edges forming a set $F'$.
    Now consider running the hospital-proposing DA with edge set $F'$. By \Cref{clm::double-cutDA-provides-bounds}, the outcome for the hospitals in $I$ on the run with $F'$ is at least as good as the run of the double-cut DA for $I$. But the doctor-proposing DA and the hospital-proposing DA with edge set $F'$ yield the same fully matched hospitals (and doctors). Therefore all the hospitals that are fully matched in the double-cut DA for $I$ are also fully matched in the double-cut DA for $d$.
\end{proof}

\begin{claim}\label{lem::surplus-hosp-prop-one-doc}
    Suppose the following good events  occur: there are at most $[1-r(d)+\frac 32\alpha]n$ doctors with public rating at least $r(d)-\alpha$ (the expected number is $[1-r(d)+\alpha ]n$ doctors), and there at least $[1-r(d)+(a-\frac32)\alpha] n/\kappa$ hospitals with public rating in the range $[r(d)-a\alpha,r(d)+a\alpha)\cup[r(d)+(a+1)\alpha,1))$ (the expected number is $[1-r(d)+(a-1)\alpha] n/\kappa$ hospitals).
    % $\Bdbm([r(d)-\alpha,1))$ and $\Bhbf([r(d)-a\alpha,r(d)+a\alpha)\cup[r(d)+(a+1)\alpha,1))$, and
    Also suppose that both $a\ge 5$ and $\alpha\cdot\exp\big(\alpha k/2(4a+1)\big)\ge 3$.
    Then the surplus $s\ge\alpha n/\kappa$.
\end{claim}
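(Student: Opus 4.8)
The plan is to count hospitals and doctors that participate in the hospital-proposing double-cut DA for $d$ and argue that, under the stated good events, enough doctors are ``left over'' to guarantee a surplus of at least $\alpha n/\kappa$. First I would recall the structure of the hospital-proposing double-cut DA for $d$: the participating hospitals are those with public rating in $[r(d)-\alpha,1)$, and they propose only to doctors with public rating in $[r(d)-a\alpha,1)$, with proposals truncated so that hospitals in $d$'s relevant range make no proposal worse than the cutoff $c(d)$ (this is the hospital analogue of \Cref{def::double-cut-da}). The surplus $s$ we want to lower bound is the number of not-fully-matched hospitals with public rating in $[r(d)+\alpha,1)$ --- equivalently, by \Cref{clm::matching-bounds-extend} together with the interval non-match lemma, the hospitals in the top intervals are (all but a tiny fraction) fully matched, so essentially all the ``slack'' concentrates just above $d$.

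The core counting goes as follows. Let $D^{+}$ be the set of doctors with public rating at least $r(d)-\alpha$; by the first good event, $|D^{+}|\le [1-r(d)+\tfrac32\alpha]\,n$. Let $H^{+}$ be the hospitals with public rating in $[r(d)-a\alpha,r(d)+a\alpha)\cup[r(d)+(a+1)\alpha,1)$, so by the second good event $|H^{+}|\ge [1-r(d)+(a-\tfrac32)\alpha]\,n/\kappa$, and these hospitals supply $\kappa\cdot|H^{+}| \ge [1-r(d)+(a-\tfrac32)\alpha]\,n$ seats. Now I would invoke \Cref{lem::hospitals-interval-non-match} (via \Cref{clm::matching-bounds-extend}) to say that all but a $3\exp(-\alpha k/2(4a+1))$ fraction of the hospitals strictly above $d$'s cone are fully matched, and moreover their matches are with doctors above those hospitals, hence with doctors in $D^{+}$; a symmetric statement for the assumed bound on unmatched hospitals in $[r(d)+(a+1)\alpha,1)$ controls the few exceptions. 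Subtracting the doctors consumed by the fully-matched high hospitals from $|D^{+}|$ leaves a pool of doctors who either propose into $d$'s cone hospitals (including $d$ itself) or go unmatched; comparing this pool size against the seats available at the cone hospitals $[r(d)-\alpha,r(d)+a\alpha)$ shows the number of not-fully-matched cone-hospital seats is at least roughly $(a-\tfrac32)\alpha\, n/\kappa - \tfrac32\alpha\, n/\kappa$-ish after bookkeeping, and then the condition $a\ge 5$ makes the net at least $\alpha n/\kappa$. The hypothesis $\alpha\cdot\exp(\alpha k/2(4a+1))\ge 3$ is exactly what makes the $3\exp(-\alpha k/2(4a+1))\cdot n$ correction term for the exceptional unmatched high hospitals at most $\alpha n$, so it can be absorbed.

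Concretely, the steps in order: (1) write down the participating hospital and doctor sets for the double-cut DA for $d$ and identify $s$ precisely; (2) use \Cref{clm::matching-bounds-extend} plus \Cref{lem::hospitals-interval-non-match} and the assumed bound on unmatched hospitals above the cone to conclude that essentially all hospitals with rating in $[r(d)+(a+1)\alpha,1)$ are fully matched, to doctors in $[r(d)-\alpha,1)$; (3) do the seat/doctor accounting: (doctors in $D^{+}$) minus (seats filled at fully-matched high hospitals) minus (tiny correction from exceptional hospitals) gives the number of ``free'' doctors, and matching this against the $\kappa\cdot(\text{cone hospital count})$ seats gives the surplus; (4) plug in $a\ge 5$ and the $\alpha$-exponential hypothesis to clean up and get $s\ge\alpha n/\kappa$. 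I expect step (3), the accounting, to be the main obstacle: one has to be careful that the ``few'' unmatched high hospitals are correctly charged, that no doctor is double-counted between the high hospitals and the cone hospitals, and that the interval $[r(d)+a\alpha,r(d)+(a+1)\alpha)$ excluded from $H^{+}$ (a buffer zone) does not secretly swallow doctors we need --- the buffer is there precisely so that the fully-matched high hospitals' matches stay within $D^{+}$, and making that airtight is the delicate part.
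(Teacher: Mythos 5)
Your proposal is essentially the paper's own proof: start from the good-event counts, note that at most a $3\exp(-\alpha k/2(4a+1))$ fraction of the hospitals in $[r(d)+(a+1)\alpha,1)$ are not fully matched (transferred into the double-cut DA for $d$ via \Cref{clm::matching-bounds-extend}, and absorbed as at most $\alpha n/\kappa$ hospitals using $\alpha\cdot\exp(\alpha k/2(4a+1))\ge 3$), bound the fully matched hospitals by $\tfrac1\kappa[1-r(d)+\tfrac32\alpha]n$, and conclude surplus $\ge (a-4)\alpha n/\kappa\ge \alpha n/\kappa$ for $a\ge 5$. A few slips (the cut ranges in the double-cut DA for $d$ are reversed --- hospitals in $[r(d)-a\alpha,1)$ propose to doctors in $[r(d)-\alpha,1)$; the correction term should be $3\exp(-\alpha k/2(4a+1))\cdot n/\kappa\le \alpha n/\kappa$, not $\cdot n$; and the matches landing in $D^{+}$ is automatic since only doctors with rating at least $r(d)-\alpha$ participate, so the ``delicate'' buffer issue you worry about does not arise) do not affect the accounting, which matches the paper's.
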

\begin{proof}
To bound the surplus, we start with the assumed lower bound on the number of available hospitals:
$[1-r(d)+(a-\frac32)\alpha] \frac{n}{\kappa}$.
We then subtract an upper bound given by \Cref{lem::doctors-high-value-match-prob} on the number of hospitals that did not fully match in the range $[r(d)+(a+1)\alpha,1)$, for by \Cref{clm::matching-bounds-extend},
all hospitals in this range that matched for intervals $I$ in this range, also match in the double-cut DA for $d$. We take a disjoint collection of intervals $I$ that cover this range; they contain at most $n/\kappa$ hospitals. Therefore, the number that are not fully matched is at most $\frac{n}{\kappa} \cdot 3\exp\big(-\alpha k/2(4a+1)\big) \le  \alpha \cdot \frac n{\kappa}$, if $\exp\big(- \alpha k/2(4a+1)\big)\le \frac 13\alpha$.
We then determine how many hospitals could be fully matched by the doctors at hand, which is at most $1/\kappa$ times the assumed upper bound on the number of doctors: $\frac1{\kappa}[1-r(d)+\frac 32\alpha ]n$. This yields a surplus of at least
 \hide{
    % As $\Bdbm([r(d)-\alpha,1))$ does not occur, 
    By assumption, the number of doctors with public rating at least $r(d)-\alpha$ is at most $[1-r(d)+\frac 32\alpha ]n$. 
    By the assumption in \Cref{lem::doctors-high-value-match-prob} and by \Cref{clm::matching-bounds-extend}, the number of not-fully matched hospitals in $[r(d)+(a+1)\alpha,1)$ is at most
    $3\frac n{\kappa}\cdot\exp\big(-\alpha\kappa k/2(4a+1)\big)\le  \alpha \cdot \frac n{\kappa}$, if $\exp\big(- \alpha\kappa k/2(4a+1)\big)\le \frac 13\alpha$.
    In addition, some or all of the hospitals in $[r(d)+a\alpha,r(d)+(a+1)\alpha)$ may not be fully matched.
    % As $\Bhbf([r(d)-a\alpha,r(d)+a\alpha)\cup[r(d)+(a+1)\alpha,1))$ does not occur,
    By assumption, the number of hospitals with public rating in $[r(d)-a\alpha,r(d)+a\alpha)\cup[r(d)+(a+1)\alpha,1))$ is at least $[1-r(d)+(a-\frac32)\alpha] n/\kappa$.
    Thus the surplus, the number of not-fully matched available hospitals in $d$'s cone, i.e., hospitals that could propose to $d$, is at least
    }
    \begin{align*}
        \big[1-r(d)+(a-\tfrac32)\alpha\big] \cdot\frac{n}{\kappa} - \big[1-r(d)+\tfrac 32\alpha\big]\cdot\frac{n}{\kappa} -  \alpha \cdot\frac{n}{\kappa}
        \ge \big(a - 4\big)\alpha\cdot \frac{n}{\kappa}\le \alpha \cdot \frac{n}{\kappa},
        ~~~~\text{if $a\ge 5$}.
    \end{align*}
\end{proof}

We put one more (harmless) stop on the hospital proposing double-cut DA in the statement of the lemma: if $d$ receives a proposal that provides
$d$ with interview value $1-\psibar_d$ or larger, we stop the run as the desired outcome has been achieved.
We fix the public ratings of the doctors and hospitals. We also fix all private values and interview values of the doctors other than $d$ and the interview values of all hospitals outside $d$'s cone.
    
We model the possible runs of the double-cut DA using a tree $T$.
As we descend the tree, the maximum utility achievable by each hospital will be decreasing.
At each node $v$, each hospital $h$ has a current maximum possible utility $u_c(h,v)$, either due to its being tentatively matched with this utility, or having the possibility of achieving this utility, but no larger, due to the remaining random choices; this is called $h$'s \emph{current utility} for short.
In making these random choices, we will allow the hospitals to partially or fully determine some interview values before we determine whether there is an interview.

The nodes and the edges to $v$'s children represent the possible actions by one currently unmatched hospital (if several hospitals could make a proposal, to avoid ambiguity, a fixed tie-breaking ordering is used; in a subsequent lemma we will impose some mild, but feasible, requirements on the tie-breaking). 
$T$ has two types of nodes. A node is of the first type if it corresponds to a proposal by some hospital $h$ outside $d$'s cone, where this proposal provides $h$ utility equal to its current possible utility $u_c(h,v)$.
Due to the tie-breaking, and the fixing of utilities for $h$,
there can be only one such proposal at node $v$. 
Reflecting this, there is a single edge descending from $v$ corresponding to making this proposal.
A node $v$ is of the second type if it corresponds to a possible proposal by a hospital $h$ in $d$'s cone providing $h$ utility equal to its current utility $u_c(h,v)$; below, we write $u=u_c(h,v)$ for brevity. The following are the possible actions at such a node $v$:
    \begin{itemize}
        \item The choice that $h$ have interview value $u-r(d')$ for $d'$ and there is a proposal to $d'$ with doctor interview value at least $1-\psibar_{d'}$. We call this a \emph{high-real proposal}. If $d'=d$, we terminate the DA run at this point, as this achieves the desired outcome. 
        \item The choice that $h$ have interview value $u-r(d')$ and there is a proposal to $d'$ with doctor interview value less than $1-\psibar_{d'}$. We call this a \emph{low-real proposal}. 
        If $d'=d$, we don't allow $h$ to make any further proposals.
        \item The choice that $h$ have interview value $u-r(d')$ but there is no proposal to $d'$, because there was no interview (this is a ``no action'' step, but we need to handle it separately). We call this an \emph{imaginary proposal}. Again, if $d'=d$, we don't allow $h$ to make any further proposals.
        (For the purposes of the analysis, we want to choose the interview value before we determine whether there was an interview.)
        \item ``No action'' (due to the determination that $h$'s remaining interview values are smaller than needed to make a proposal currently; the next node corresponds to $h$ checking if it has a proposal at its next utility value in decreasing order of utilities).
    \end{itemize}
    Let $q(v)$ be the probability that at node $v$ the continuation of the run of the double-cut DA results in zero high-real proposals to doctor $d$. 
    We wish to analyze a corresponding setting with independent proposals, which will have probability
    $\qtilde(v)\ge q(v)$ of zero real proposals at node $v$; $\qtilde$ is designed so that it will be straightforward to upper bound $\qtilde(\treeroot(T))$.
    
    To define $\qtilde(v)$ we consider the following setup.
    Let $h$'s \emph{residual utility} at node $v$, $u(h,v)$, equal the current utility $u$ minus the minimum utility allowed in the run of the double-cut DA, namely $r(d)+1-\alpha$.
    % , minus the length of the excluded utilities from fixing the utilities w.r.t.\ doctors $d'\ne d$, which is at most $n/z\le \delta$.
    Let $s$ be the minimum surplus over all the possible paths in the computation, assuming no bad events happen.
    Suppose that above node $v$ there have been $x$ low-real proposals and $y$ imaginary proposals to $d$.
    Let $S(v)$ be the set of the $s-x-y$ hospitals with the smallest residual utilities
    among those that have not made a real or imaginary proposal to $d$ already.
    Suppose that among the $|C(d)|-x-y$ hospitals that have not made a real or imaginary proposal to $d$, $k-x$ are chosen uniformly at random. Suppose we then have each of the selected hospitals in $S(v)$ choose its interview value and propose to $d$ if the interview value is at least $1-\alpha$.
Then $\qtilde(v)$ is defined to be the probability that this process results in no high-real proposal to $d$.
In \Cref{clm::fixed-k-stoc-dom-idio},
we will show that $\qtilde(v)\ge q(v)$.
Before proving this claim, we deduce a bound in the $\qtilde$ setting, and therefore in the $q$ setting also, on the probability that $d$ receives no proposal for which $d$ has interview value at least $1-\psibar_d$.

\begin{claim}\label{lem::fail-prob-hospital-prop}
    Let $d$ be a doctor with public rating at least $a\alpha$.
    Suppose $s\ge \frac{\alpha n}{\kappa}$ and that
    there are at most $\frac12(4a+1)\frac{\alpha n}{\kappa}$ hospitals in $C(d)$ (the expected number is $\frac{2a\alpha n}{\kappa}$, except at the ends of the public rating range where it can be as low as $\frac{a\alpha n}{\kappa}$). Then, in the $\qtilde$ setting, the probability that $d$ receives no proposal for which $d$ has interview value at least $1-\psibar_d$ is at most $\exp\big(-2k\alpha\psibar_d/(4a+1)\big)$.
\end{claim}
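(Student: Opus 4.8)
This claim only asks for a bound on the idealized, fully independent process $\qtilde$ run from $\treeroot(T)$, so no stochastic-dominance reasoning enters here; that part is handled separately in \Cref{clm::fixed-k-stoc-dom-idio}, which gives $q(\treeroot(T))\le \qtilde(\treeroot(T))$ and hence transfers this bound to the true run of the double-cut DA (and then, via \Cref{lem::cont-range-bound}, to the continuous model). At $\treeroot(T)$ we have $x=y=0$, so $S\triangleq S(\treeroot(T))$ is a fixed set of $s$ hospitals, and the only randomness relevant to a high-real proposal to $d$ is the uniform choice of $d$'s $k$ interview partners from the $|C(d)|$ hospitals of $C(d)$, together with the values $\iota(h,d)$ and $\iota(d,h)$ for the relevant $h$, all i.i.d.\ uniform on $[0,1)$.

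The plan is as follows. First I would record that, by the definition of the $\qtilde$ process, a hospital $h$ produces a high-real proposal to $d$ exactly when $h$ is an interview partner of $d$, $h\in S$, $\iota(h,d)\ge 1-\alpha$ (so $h$ proposes under the double cut) and $\iota(d,h)\ge 1-\psibar_d$ (so the proposal is ``high''); the last two being independent uniform events, a selected member of $S$ contributes such a proposal with probability $\alpha\psibar_d$, independently over the (distinct) selected members of $S$. Next, letting $Z$ be the number of $d$'s $k$ interview partners that fall in $S$, $Z$ is hypergeometric with population $|C(d)|$, $s$ successes and $k$ draws, independent of all the interview values, and conditioning on $Z=z$ gives probability exactly $(1-\alpha\psibar_d)^z$ of no high-real proposal; hence $\qtilde(\treeroot(T))=\E[(1-\alpha\psibar_d)^{Z}]$.

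Finally I would bound this expectation using that $t\mapsto (1-\alpha\psibar_d)^t$ is convex (we may assume $\alpha\psibar_d<1$, since otherwise $(1-\alpha\psibar_d)^{Z}$ vanishes whenever $Z\ge 1$ and the bound is trivial) together with Hoeffding's classical comparison of sampling without replacement to sampling with replacement: for convex $f$, $\E[f(Z)]\le \E[f(B)]$ with $B\sim\mathrm{Bin}(k,p)$, $p=s/|C(d)|$. The hypotheses $s\ge \alpha n/\kappa$ and $|C(d)|\le\tfrac12(4a+1)\alpha n/\kappa$ give $p\ge \tfrac2{4a+1}$, so
\begin{equation*}
\qtilde(\treeroot(T))\;\le\;\E\big[(1-\alpha\psibar_d)^{B}\big]\;=\;(1-p\,\alpha\psibar_d)^{k}\;\le\;\exp(-p\,k\,\alpha\psibar_d)\;\le\;\exp\!\Big(-\tfrac{2k\alpha\psibar_d}{4a+1}\Big).
\end{equation*}

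The computation above is routine; the one place I would take care is to apply Hoeffding's comparison to the convex function $(1-\alpha\psibar_d)^t$ \emph{directly}, rather than first weakening $(1-\alpha\psibar_d)^z\le e^{-\alpha\psibar_d z}$ and then bounding a hypergeometric moment generating function, since the latter route loses a factor of $2$ and yields only $\exp(-k\alpha\psibar_d/(4a+1))$. (One can also dispense with Hoeffding: revealing $d$'s interview partners one at a time shows each lands in $S$ with conditional probability at least $(s-k+1)/(|C(d)|-k+1)=p(1-o(1))$, using $k=O(\ln n)$ and $s,|C(d)|=\Theta(n)$, which gives the same bound with a negligibly worse constant.) The genuine difficulty underlying the whole argument --- that this independent model is an \emph{upper} bound for the dependent true process --- lies outside this claim, in \Cref{clm::fixed-k-stoc-dom-idio}.
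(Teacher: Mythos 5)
Your proposal is correct, and it reaches exactly the paper's bound, but the bookkeeping is organized dually to the paper's. The paper works hospital-by-hospital over the surplus: each of the $s$ hospitals in $S(\treeroot(T))$ yields a qualifying proposal with probability at least $\frac{k}{|C(d)|}\cdot\alpha\psibar_d\ge\frac{2k\kappa\psibar_d}{(4a+1)n}$, and the no-proposal probability is bounded by $\big(1-\frac{2k\kappa\psibar_d}{(4a+1)n}\big)^{s}\le\exp\big(-\frac{2k\alpha\psibar_d}{4a+1}\big)$ using $s\ge\alpha n/\kappa$; you instead work draw-by-draw over $d$'s $k$ selections, writing $\qtilde(\treeroot(T))=\E[(1-\alpha\psibar_d)^{Z}]$ for the hypergeometric $Z$ and reducing to $\mathrm{Bin}(k,p)$ with $p=s/|C(d)|\ge 2/(4a+1)$ via Hoeffding's convex-ordering comparison, which gives $(1-p\alpha\psibar_d)^k\le\exp(-2k\alpha\psibar_d/(4a+1))$. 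The two routes are equivalent in strength (both ultimately use $1-x\le e^{-x}$ on the product form $sk\alpha\psibar_d/|C(d)|$), and your instinct to avoid the lossier Chernoff-on-the-count route is exactly right — that weaker route gives only $\exp(-k\alpha\psibar_d/(4a+1))$. What your version buys is an explicit treatment of the dependence created by sampling the $k$ interview partners without replacement: the paper's step ``therefore the probability that $d$ receives no such edge is at most $(1-p)^s$'' implicitly relies on the selection indicators being negatively associated (or being treated as independent), whereas your conditioning on $Z$ plus Hoeffding makes this airtight without any appeal to correlation structure. You also correctly scope the claim to the $\qtilde$ setting and delegate the transfer to the true run to \Cref{clm::fixed-k-stoc-dom-idio} and the continuum limit to \Cref{lem::cont-range-bound}, matching the paper's architecture. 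The only nitpicks are cosmetic: the parenthetical alternative with the $(s-k+1)/(|C(d)|-k+1)$ sequential bound is fine but its $\Theta(n)$ assertion for $s$ is loose (here $s=\Theta(\alpha n/\kappa)$, which still dwarfs $k$), and the ``trivial'' case $\alpha\psibar_d=1$ still needs the one-line estimate $\prob(Z=0)\le(1-p)^k$, which you could state rather than wave at.
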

\begin{proof}
    For each hospital $h$, its residual utility at the root of $T$ is at least $\alpha$.
    Consider a hospital $h$ in the surplus.
    The probability that $h$ is selected by $d$ is at least $\frac{k}{|C(d)|}\ge \frac{2k\kappa}{(4a+1)\alpha n}$. The probability that $h$ has interview value at least $1-\alpha$, as required by the double-cut DA for $h$ to propose this edge, is $\alpha$,
    and the probability that $d$ has interview value at least $1-\psibar_d$ for $h$ is $\psibar_d$.
    So the probability that there is an edge from $h$ to $d$ that is chosen by the double cut DA and
    that provides $d$ interview value at least $1-\psibar_d$ is at least $\frac{2k\kappa\psibar_d}{(4a+1)n}$.
    Therefore the probability that $d$ receives no such edge is at most
    $\big(1- \frac{2k\kappa\psibar_d}{(4a+1)n}\big)^s\le \exp\big(-\frac{2k\alpha\psibar_d}{4a+1}\big)$,
    as $s\ge \frac{\alpha n}{\kappa}$.
    \hide{
    The expected number of hospitals in $S(\treeroot(T))$ that are selected by $d$ is
    $ks/|C(d)|\ge 2k\frac{\alpha  n}{\kappa}/(4a+1)\frac{\alpha n}{\kappa}= 2k/(4a+1)$.
    If in addition, the hospital has interview value at least $1-\alpha$, as required by the double-cut DA, and $d$ has interview value at least $1-\psibar_d$, the expected number becomes
    $2k\alpha\psibar_d/(4a+1)$.
    By a Chernoff bound, the probability that there are no such hospitals is at most
    $\exp\big(-k\alpha\psibar_d/(4a+1)\big)$. 
    }
\end{proof}

\begin{claim}\label{clm::fixed-k-stoc-dom-idio}
    For every node $v$ in $T$, $\qtilde(v)\ge q(v)$.
\end{claim}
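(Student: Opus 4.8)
The plan is to prove $\qtilde(v)\ge q(v)$ by induction on the tree $T$, working from the leaves up to $\treeroot(T)$. The first thing to notice is that $\qtilde(v)$ depends on $v$ only through the three integers $s-x-y$, $|C(d)|-x-y$, and $k-x$ (abbreviating $x=x(v)$, $y=y(v)$): by the symmetry of the uniform draw and of the threat probabilities, $\qtilde(v)=\E\big[(1-\mu)^{J}\big]$ where $\mu=\alpha\psibar_d$ and $J$ is hypergeometric --- the number of the $s-x-y$ surplus hospitals that land among $k-x$ draws from the pool of $|C(d)|-x-y$ hospitals not yet removed with respect to $d$. In particular $\qtilde(v)=1$ whenever $x+y\ge s$, which settles the base case, since every leaf of $T$ has $x+y\ge s$: by the time the double-cut DA halts, each of the $s$ surplus hospitals has had its file closed with respect to $d$ --- by a real or imaginary proposal to $d$, or, once its interview value for $d$ is revealed to be below $1-\alpha$, by a bookkeeping step recorded in $y$.

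For the inductive step, fix a non-leaf node $v$. When $v$ is of the first type, or of the second type with an active target doctor $d'\ne d$, or corresponds to a "no action" step, none of the three governing integers changes between $v$ and any child, so $\qtilde$ is constant along those edges; moreover $q(v)=\sum_{v'}\Pr[v\to v']\,q(v')$, and applying the inductive hypothesis gives $\qtilde(v)=\sum_{v'}\Pr[v\to v']\,\qtilde(v')\ge\sum_{v'}\Pr[v\to v']\,q(v')=q(v)$ immediately.

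The substance is the case where $v$ is a second-type node whose active hospital $h$ is considering $d$. Conditioning on the history of $v$, the actions "high-real / low-real / imaginary proposal to $d$" and "no action" occur with probabilities in ratio $\theta\rho\psibar_d:\theta\rho(1-\psibar_d):\theta(1-\rho):(1-\theta)$, where $\theta$ is the (history-dependent) probability that $h$'s interview value for $d$ is exactly at the current utility level, and $\rho=(k-x)/(|C(d)|-x-y)$ is the conditional probability that $d$ selected $h$ --- here I use that $h$'s interview value for $d$, $d$'s choice of whether to interview $h$, and $d$'s interview value for $h$ are mutually independent. Since $q(v)$ charges $0$ to the high-real child and since the "no action" child has the same $\qtilde$ as $v$, the required one-step inequality, after cancelling the common $(1-\theta)\qtilde(v)$ term and dividing by $\theta$, becomes
\[
  \E\big[(1-\mu)^{J}\big]\ \ge\ \rho(1-\psibar_d)\,\E\big[(1-\mu)^{J_1}\big]+(1-\rho)\,\E\big[(1-\mu)^{J_0}\big],
\]
where $J_1$ (for the low-real child) and $J_0$ (for the imaginary child) are the hypergeometric variables obtained by decreasing the pool, the surplus count, and --- in the low-real case only --- the number of draws, each by one. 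The key step is to expand the left-hand side by conditioning $J$ on whether one fixed surplus hospital lies among the $k-x$ draws, which produces the identity $\E[(1-\mu)^{J}]=\rho(1-\mu)\,\E[(1-\mu)^{J_1}]+(1-\rho)\,\E[(1-\mu)^{J_0}]$ with the very same $J_1,J_0$ (and the very same $\rho$, since the $\qtilde$-draw was built to mirror $d$'s conditional remaining selections). Substituting, the inequality collapses to $\rho\,(\psibar_d-\mu)\,\E[(1-\mu)^{J_1}]\ge0$, i.e.\ to $\mu=\alpha\psibar_d\le\psibar_d$, i.e.\ to $\alpha\le1$, which holds.

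The step I expect to be the main obstacle is making the bookkeeping airtight: verifying that the three governing integers transform exactly as claimed across each kind of child (so that the $J_1,J_0$ produced by the conditioning identity really are those governing the low-real and imaginary children), that the decomposition of $q(v)$ over children is correct, and that no branch is left uncovered --- in particular handling cleanly the surplus hospitals whose interview value for $d$ turns out too small ever to yield a proposal, and the "no action" steps, so that indeed every leaf is reached with $x+y\ge s$. Once the counts are pinned down, the probabilistic core is the one-line reduction to $\alpha\le1$ via the hypergeometric conditioning identity; this is also what makes the argument no harder with $d$ making exactly $k$ choices than with the independent-proposal model of \cite{AC23}.
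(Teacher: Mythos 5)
There is a genuine gap, and it comes from a misreading of what $\qtilde(v)$ is. You treat $\qtilde(v)$ as a function of the three counts $s-x-y$, $|C(d)|-x-y$, $k-x$ alone, with every selected surplus hospital proposing to $d$ independently with the flat probability $\alpha$ (hence your formula $\qtilde(v)=\E[(1-\mu)^J]$ with $\mu=\alpha\psibar_d$). But the paper's $\qtilde(v)$ is defined through the set $S(v)$ of the $s-x-y$ hospitals with the \emph{smallest residual utilities}, and a hospital's chance of producing a proposal to $d$ in this process depends on its residual utility $u(h,v)$ (it reflects the conditioning that $h$ has not proposed to $d$ at any higher utility level). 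This dependence is not a technicality: it is exactly what the two cases of the paper's induction handle --- Case 1 (the acting hospital's residual utility is among the $s-x-y$ smallest) is resolved by instantiating the randomness that selects the child, giving equality in $\qtilde(v)=\sum_w p_w\,\qtilde(w)$, and Case 2 (it is not among the smallest) needs a separate monotonicity argument in which $u_{s-x-y}$ is raised until Case 1 applies, using that raising a residual utility only lowers $\qtilde$. Your claim that $\qtilde$ is ``constant along those edges'' for no-action steps and proposals to $d'\ne d$ is false under the paper's definition, because the acting hospital's residual utility drops along those edges and may belong to $S$.

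The same misreading breaks your base case. It is not true that every leaf has $x+y\ge s$: a surplus hospital whose interview value for $d$ lies below the cutoff $1-\alpha$ never generates a real \emph{or} imaginary proposal to $d$ (imaginary proposals are only created when the interview value equals the current, above-cut utility but $d$ did not select $h$), so at a leaf many surplus hospitals are closed out without being counted in $x$ or $y$. The paper's base case instead works because at a leaf those hospitals have nonpositive residual utility, so the $\qtilde$-process gives them proposal probability $0$ and $\qtilde(\text{leaf})=q(\text{leaf})$; with your counts-only $\qtilde$, a leaf at which $d$ received no high-real proposal would have $q=1$ but $\E[(1-\mu)^J]<1$, i.e.\ the inequality you are inducting on would already fail there. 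Your hypergeometric conditioning identity is a nice way to handle the dependence created by $d$ making exactly $k$ choices (the paper's Case 1 instantiation plays that role), but as written the argument does not prove the claim for the $\qtilde$ actually defined in the paper; any repair has to carry the residual utilities through the induction, which is where the real work of the paper's proof lies.
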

\begin{proof}
We prove the result by an induction from the leaves to the root of tree $T$.

Recall that $u(h,v)$ denotes the residual utility of $h$ at node $v$,
and let $u_1\le u_2\le \ldots\le u_{s-x-y}$ be the $s-x-y$ smallest residual utilities being used to calculate $\tilde{q}(v)$; the hospitals that have already proposed to $h$ are excluded.

\smallskip\noindent
\emph{Base case: a leaf node}\\
At a leaf node $v$ all the random decisions have been made, so either $d$ has received the desired proposal or not. Therefore $q(v)=\qtilde(v)$.

\smallskip\noindent
\emph{Inductive step at an internal node $v$ with one child $w$}\\
Here $q(v)=q(w)$ and $\qtilde(v)=\qtilde(w)$.
So the inductive claim holds trivially.

\smallskip\noindent
\emph{Inductive step at an internal node $v$ with multiple children}\\
Let $h$ be the hospital making a random choice at node $v$.
Let its children be named $\wrh(d')$, corresponding to a real high proposal to doctor $d'$, $\wrl(d')$, corresponding to a real low proposal to doctor $d'$, 
$\wim(d')$, corresponding to an imaginary proposal to $d'$,
and $\wna$, corresponding to ``no action''.
Recall that $d$ has received $x$ low-real and $y$ imaginary proposals prior to reaching node $v$.
Then $\qtilde(v)$ is calculated using $s-x-y$ residual utilities,
$\qtilde(\wna)$ is calculated using $s-x-y$ residual utilities, 
as are $\qtilde(\wrl(d')),\qtilde(\wrh(d')),\qtilde(\wim(d'))$ for $d'\ne d$,
while $\wrl(d)$ and $\wim(d)$ are calculated using $s-x-y-1$ residual utilities,
as in these cases, $h$ makes no further proposals in the run of the double-cut DA;
$\wrh(d)$ is a terminal node as it achieves the desired outcome of a proposal to $d$ with a ``high'' interview value (at least $1-\psibar_h$).
Note that $q(\wrh(d))=\qtilde(\wrh(d))=0$.
Finally, in the calculation of $\qtilde(w)$ for $w\ne \wrh(d),\wrl(d),\wim(d)$, all the residual utilities except at most one are unchanged; the only one that may change is due to the fact that $u(h,w)<u(h,v)$. 

Let $p_w$ be the probability that the edge from $v$ to its child $w$ is chosen when at node $v$.
Then $q(v) = \sum_{\text{$w$ a child of $v$}} p_w \cdot q(w)$.
To prove our inductive bound it will suffice to show that
\begin{align}\label{eqn::qtilde-bound-idio}
    \qtilde(v) \ge \sum_{\text{$w$ a child of $v$}} p_w \cdot \qtilde(w).
\end{align}

\smallskip\noindent
\emph{Case 1.} $u(h,v)$ is among the smallest $s-x-y$ residual utilities at node $v$\\
This immediately implies that $u(h,w)$ is among the smallest $s-x-y$ residual utilities at $w$ for $w\ne \wrh(d),\wrl(d),\wim(d)$, as $u(h,w)<u(h,v)$. The residual utilities for the other hospitals are unchanged at every child of $v$.

In this case, we show that \eqref{eqn::qtilde-bound-idio} holds with equality.
To do this we consider the random choices that are made in evaluating $\qtilde(v)$,
by instantiating exactly the randomness needed to select a child $w$ of $v$.

One way to make these random choices is to determine if $h$'s utility for
$d$ equals the utility at node $v$; if so, a further random choice determines if the
proposal is real or imaginary.
If it is real, yet another random choice determines if it is high or low.
At each child, we then make the remaining random choices to determine the probability
of no high-real proposal to $d$. But at each node $w$, this probability
is exactly $\qtilde(w)$, for we are using the same random variables apart from $h$'s
utility which is either the value $u(h,v)$ or a value determined at $w$, for $w\ne \wrh(d),\wrl(d),\wim(d)$; furthermore, this utility is not used at $v$'s remaining children.
Thus $\qtilde(v)=\sum_{\text{$w$ a child of $v$}} p_w\cdot \qtilde(w)$ in this case.

\smallskip\noindent
\emph{Case 2}.  $u(h,v)$ is not among the smallest $s-x-y$ residual utilities at node $v$.\\
Because of the rounding of the utilities, this implies that $u(h,w)\ge u_{s-x-y}$, for $w\ne \wrh(d),\wrl(d),\wim(d)$. All the other
residual utilities are unchanged, and therefore $\qtilde(v)=\qtilde(w)$ for these $w$.
Consider the effect of increasing $u_{s-x-y}$ to $u(h,w)$ (note that the latter value is the same for all these $w$. At each child $w'= \wrh(d),\wrl(d),\wim(d)$ of $v$, $\qtilde(w')$ is unchanged, and at each child $w\ne \wrh(d),\wrl(d),\wim(d)$, $\qtilde(v)$ and $\qtilde(w)$ are reduced equally,
as the increase to $u_{s-x-y}$ increases the probability that a proposal to $d$ occurs.
Therefore if \eqref{eqn::qtilde-bound-idio} held after this increase, it would have held before the increase.
Now increase $u_{s-x-y}$ and any other intermediate $u$ values further to $u(h,v)$. $\qtilde(v)$ continues decreasing, but all
the $\qtilde(w)$ are unchanged, for $w$ a child of $v$.
When $u_{s-x-y}=u(h,v)$, this becomes an instance of Case 1, at which point \eqref{eqn::qtilde-bound-idio} holds.
We conclude that \eqref{eqn::qtilde-bound-idio} held prior to the increase of $u_{s-x-y}$.
\end{proof}

\hide{
\begin{claim}\label{clm::conditions-hosp-prop-one-doctor}
 The conditions in Claims~\ref{lem::surplus-hosp-prop-one-doc} and~\ref{lem::fail-prob-hospital-prop} amount to the following constraints: $a\ge 5$,  and $\alpha\cdot\exp\big(\alpha\kappa k/2(4a+1)\big)\ge 3$.
 They also exclude events with total probability at most $3n^{-(c+1)}$.
 \end{claim}
}

\hide{
\subsubsection{Proof of \Cref{lem::hosp-high-value-match-prob}}\label{sec::proof-hosp-match-prob}
 
 This is quite similar to the proof of \Cref{sec::proof-doc-match-prob}, except that the roles of the doctors and hospitals are switched.

We will use the following high-probability lower bound $1-\taubar$ on the private values of the edges selected by each doctor $d$.

\begin{claim}\label{clm::value-of-taubar}
    Let $b>2$ and $c>0$ be constants. Suppose that $\Bhbf(C(d))$ does not occur for any doctor $d$.
    Suppose $\taubar\ge \frac{b(c+2)\ln n} {(a-\frac 12)\frac{\alpha n}{\kappa}}$. Then, for each doctor $d$, with failure probability at most $n^{-(c+2)}$, if $k \le \big(1-\big(\frac 2b\big)^{1/2}\big) \cdot b(c+2)\ln n$,
    $d$ has at least $k$ in-cone edges with private value $1-\taubar$ or larger.
\end{claim}
\begin{proof}
As $\Bhbf(C(d))$ does not occur, the number of hospitals in $d$'s cone is at least $(a-\frac 12)\cdot\frac{\alpha n}{\kappa}$.
    The expected number of $d$'s in-cone edges with private value $1-\taubar$ or larger
    is at least $\taubar$ times the number of hospitals in her cone, i.e.\ $\taubar (a-\tfrac12)\cdot \frac{\alpha n}{\kappa}\ge b(c+2)\ln n$,
    as $\taubar \ge \frac{b\kappa(c+2)\ln n} {(a-\frac12)\alpha n}$.
    Then, by a Chernoff bound, the probability that she
    has fewer than $k$ such edges is at most $\exp\big[-\tfrac b2 (c+2)\ln n \big(1 - \frac k{b(c+2)\ln n}\big)^2\big]
    \le n^{-(c+2)}$, if $k \le \big(1-\big(\frac 2b\big)^{1/2}\big) \cdot b(c+2)\ln n$.
\end{proof}

Going forward, we suppose the event $\Btau$, that some doctor fails to have $k$ edges with private value $1-\taubar$ or larger does not occur.

\begin{claim}\label{lem::surplus-hosp-prop-one-hosp}
    Suppose the following bad events do not occur:
    $\Bhbm([r(h)-\alpha,1))$ and $\Bdbf([r(d)-a\alpha,r(d)+a\alpha)\cup[r(d)+(a+1)\alpha,1))$, and also suppose that both $a\ge 5$ and $\alpha\cdot\exp\big(\alpha k/\big)\ge 3$.
    Then the surplus $s\ge\alpha n$.
\end{claim}
\begin{proof}
    As $\Bhbm([r(h)-\alpha,1))$ does not occur, the number of hospitals with public rating at least $r(h)-\alpha$ is at most $[1-r(h)+\frac 32\alpha ]\frac{n}{\kappa}$. As $\Bdbf([r(d)-a\alpha,r(d)+a\alpha)\cup[r(d)+(a+1)\alpha,1))$ does not occur,
    the number of doctors with public rating in this range is at least $[1-r(h)+(a-\frac32)\alpha] n$.
    By the assumption in \Cref{lem::doctors-interval-non-match}, the number of unmatched doctors in $[r(h)+(a+1)\alpha,1)$ is at most
    $3n\cdot\exp\big(-\alpha k/\big)\le  \alpha n$, if $\exp\big(- \alpha k/\big)\le \frac 13\alpha$.
    Thus the surplus, the number of unmatched available doctors in $h$'s cone, i.e., doctors that could propose to $h$, is at least
    \begin{align*}
        \big[1-r(h)+(a-\tfrac32)\alpha\big] n - \big[1-r(h)+\tfrac 32\alpha\big]n -  \alpha n
        \le \big(a - 4\big)\alpha n\le \alpha n,
    \end{align*}
    if $a\ge 5$.
\end{proof}

As in \Cref{sec::proof-doc-match-prob}, the tree $T$ of possible runs of the double-cut DA has
both real and imaginary proposals. For each edge $(d',h')$ for any doctor $d'$ outside $C(h)$ we fix the utilities for both $d'$ and $h'$.
Also, for each doctor $d$ in $C(h)$, we fix its top $k$ private values, but without specifying which edges these values are attached to.
Now, we are seeking the probability that $h$ receives fewer than $\kappa$ proposals from the doctors in $C(h)$. Again, we obtain the bound by showing that $\qtilde(v) \ge q(v)$ for all nodes $v$ in $T$, and for $\treeroot(T)$ in particular, but the meanings of $q$ and $\qtilde$ are slightly different here.

A node corresponding to an action by a doctor outside $C(h)$ has just one outgoing edge corresponding to its single possible action (using the tie-breaking if needed). For a doctor $d$ in $C(h)$, the set-up is more elaborate. Let $1-\tau_1\ge 1-\tau_2\ge \ldots\ge 1-\tau_k$ be the private values for its $k$ edges; let's name the corresponding proposals $e_1,e_2,\ldots,e_k$. $T$ will have a series of nodes corresponding to deciding in turn, whether each $e_i$ provides $d$ utility equal to its current utility (this is the constraint on the tie-breaking mentioned earlier; we use the private values to order edges that produce equal utility; if there is also a tie in the private values, we use some further arbitrary but fixed rule). For each potential proposal, the decision making proceeds in two steps:
first, whether the proposal has a utility equal to $d$'s current utility value in tree $T$;
note that we have not yet chosen the hospital that receives this proposal. This is a decision based on making a random choice for its interview value. 
For each hospital $h$ for which the current utility is possible for $d$, which we call $d$'s current cone, the probability that $h$ receives this proposal is the same.
In the $q$ setting, we then choose the edge uniformly from the hospitals in $d$'s current cone that she has not already proposed to; in this setting, there are no imaginary proposals.
A doctor stops making proposals either when she has gone through all $k$ of her proposals or she makes a proposal to $h$.
Let $x$ be the number of proposals $h$ has received prior to reaching node $v$.
Then $q(v)$ is the probability that $h$ receives at most $\kappa -x -1$ further proposals in the
computation starting at $v$.

In the $\qtilde$ setting, we keep track of how many proposals a doctor has made.
Again, as soon as a doctor makes a proposal, real or imaginary, to $h$, she stops.
Now, we let $x$ be the number of real proposals to $h$ and $y$ the number of imaginary proposals to $h$.
Let $k(d,v)$ be the number of remaining
proposals for doctor $d$ at node $v$. Also, recall that $u(d,v)$ is her residual probability at node $v$.
To evaluate $\qtilde(v)$, the doctors with the $s-x-y$ smallest products $u(d,v)\cdot k(d,v)$ will attempt to propose; each real proposal will occur with the same probability, which will be no larger than the probability of a proposal in the $q$ setting; the difference in probability will be handled by making proposals imaginary with probability equal to the difference. Now, we make this precise.

We use the same value for the probability that $d$ has a proposal with the current utility $u$ as in the $q$ setting. 
If $d$ is found to have such a proposal, uniformly at random, we choose one of the hospitals in $d$'s current cone that has not already received a proposal from $d$ to receive the proposal; then we choose whether it is real with probability $\tau=1 -\big(1 - \frac {1}{(2a+\frac12)\alpha n}\big)^{k(d,v)}$.
As in \Cref{sec::doc-int-idio-no-match}, $\tildep(\wrl(h))\le p(\wrl(h))$; again, we choose
$\tildep(\wim(h))=p(\wim(h))+[\wrl(h))-\tildep(\wrl(h)]$.

\begin{claim}\label{clm::qtilde-larger-for-intervals-in-doc-prop}
    $\qtilde(v)\ge q(v)$ for all $v$ in $T$.
\end{claim}
\begin{proof}
    The base case argument is unchanged from \Cref{clm::fixed-k-stoc-dom-idio}.

    \smallskip\noindent
    \emph{Case A}. The proposal is to some $h'\ne h$:
    
    Then $v$ has two children $\wrl(h')$, $\wim(h')$, $\wrl$ and $\wim$ for short.
    Here $k(d,w)=k(d,v)-1$ and $u(d,w)<u(d,v)$, for $w=\wrl,\wim$; the variables associated with the other doctors are unchanged. At both children, we seek the probability that $h$ receives at most $\kappa-x-y-1$ proposals. 
    
Also,  $\ptilde(\wrl) \le p(\wrl)$: for the edge is real in the $\qtilde$ setting with probability
$\tau$; this value is computed assuming the cone sizes for all $d\in C(h)$ are as large as possible;
this only reduces the probability of the edge being real.
Then $\ptilde(\wim)=p(\wrl)-\ptilde(\wrl)$.
    
In addition, $\qtilde(\wrl)=\qtilde(\wim)$, for in both settings the doctors have the same $k$ and $u$ values. (One may wonder at this seemingly unnecessarily complex setup; but it is used later in the proof of \Cref{lem::fail-prob-doctor-prop}).

Now, we deduce our result with the help of a little algebra.
\begin{align*}
    \ptilde(\wim)\cdot \qtilde(\wim) + \ptilde(\wrl)\cdot \qtilde(\wrl) 
    & =   [\ptilde(\wim) + \ptilde(\wrl)]\cdot \qtilde(\wrl)\\
    &=p(\wrl)\cdot \qtilde(\wrl)\\
    &\ge p(\wrl)\cdot q(\wrl).
\end{align*}

    \smallskip\noindent
    \emph{Case B}. The proposal is to $h$:
    
    Then $v$ has three children: $\wrl(h)$ and $\wim(h)$ (again $\wrl$ and $\wim$ for short), plus the child $\wna$, corresponding to real and imaginary proposals, and to ``no action''. Note that $\wrl$ is a terminal node, because $h$ has receives a proposal on reaching this node.
    At both $\wim(h)$ and $\wna$ we are seeking the probability of at most $\kappa-x-1$ real proposals to $h$, from respectively the doctors with the smallest $s-x-y-1$ and $s-x-y$ $u$--$k$ products.

     \smallskip\noindent
     \emph{Case 1}. $u(d,v)\cdot k(d,v)$, the proposal probability for $h$ at node $v$, is among the $s-x-y$ smallest proposal probabilities, $u_1k_1\le u_2k_2\le\ldots\le u_{s-x-y}k_{s-x-y}$, at node $v$. \\
     Since only $u(d,w)$ and $k(d,w)$ change (by becoming smaller) at any of $v$'s children $w$, the proposal probability for $d$ continues to be among the $s-x-y$ smallest at the children where it is still in use. 
     $\ptilde(\wna)=p(\wna)$, and as in case A, $\ptilde(\wrl) \le p(\wrl)$ and $\ptilde(\wim)=p(\wrl)-\ptilde(\wrl)$.
     Therefore instantiating the randomness needed to choose an edge to a child of $v$ ensures that $\qtilde(v)=\ptilde(\wim)\cdot \qtilde(\wim) + \ptilde(\wna)\cdot \qtilde(\wna)\ge p(\wim)\cdot \qtilde(\wim) + p(\wna)\cdot \qtilde(\wna) = p(\wna)\cdot\qtilde(\wna)\ge p(\wna)\cdot q(\wna)=q(v)$.

     \smallskip\noindent
     \emph{Case 2}. $u(d,v)\cdot k(d,v)$, the proposal probability for $h$ at node $v$, is not among the $s-x-y$ smallest proposal probabilities at node $v$. \\
     The argument is identical to the Case 2 argument in \Cref{clm::fixed-k-stoc-dom-idio}.
\end{proof}

\begin{claim}\label{lem::fail-prob-doctor-prop}
    Let $h$ be a hospital with public rating at least $a\alpha$.
    Suppose $\Bdbm(C(h))$ does not occur. Then the probability that $h$ receives fewer than $\kappa$ proposals for which $h$ has interview value at least $1-\psibar_h$ is at most $\exp\big(-3\kappa k\alpha\psibar_h/8(4a+1)\big)$, if $\taubar \le \frac14\alpha$ and $k\ge 4a+1$.
\end{claim}
\begin{proof}
    For each doctor $d$, her residual utility at the root of $T$ is at least $\alpha$.
    The expected number of doctors in $S(\treeroot(T))$ that selected $h$ is
    $ks/|C(h)|\ge k\alpha n/(2a+\frac12)\alpha (n/\kappa)= \kappa k/(2a+\frac12)$, as $\Bdbm(C(h))$ does not occur.
    If in addition, the doctor has interview value at least $1-\alpha+\taubar$ and $h$ has interview value at least $1-\psibar_h$, the expected number becomes
    $2\kappa k(\alpha-\taubar)\psibar_h/(4a+1)$.
    By a Chernoff bound, the probability that there are fewer than $\kappa$ such doctors is at most
    $\exp\big(-\frac 12\cdot 2\kappa k(\alpha-\taubar)\psibar_h/(4a+1)\cdot\big[1 -\frac{4a+1}{2k}\big]\big)\le \exp\big(-\kappa k(\alpha-\taubar)\psibar_h/2(4a+1)$, if $k\ge 4a+1$. 
\end{proof}

\begin{claim}\label{clm::conditions-doctor-prop-one-hosp}
 The conditions in Claims~\ref{clm::value-of-taubar}---\ref{lem::fail-prob-doctor-prop} amount to the following constraints: $a\ge 5$, $k\ge 4a+1$, and $\alpha\cdot\exp\big(\alpha k/\big)\ge 3$.
 They also exclude events with total probability at most $\big(2n+\frac{3n}{\kappa} \big)\cdot n^{-(c+2)}$.
 \end{claim}
}

\subsubsection{Overview of the Proof of \Cref{lem::doctors-interval-non-match}}
\label{sec::doc-int-idio-no-match}

The participating doctors are those with public rating at least $g-\alpha$.
The participating hospitals are a subset of those with public rating at least $g-a\alpha$, which we specify more fully below. Our goal is to handle the set of doctors in $I$ in the same way as we handled the single doctor $d$ in \Cref{lem::doctors-high-value-match-prob}. The challenge we face is that the stochastic tree dominance argument is going to show, for a suitable bound $r$, the probability that at most $r$ doctors in $I$ receive proposals is upper bounded by the probability of this event for $s$ independent proposals to $I$. However, we need to ensure that each of these proposals is roughly uniform over $I$,
and this need not be true for all the proposals we consider.

% This requires the removal of some hospitals with rating at least $g-a\alpha$.

To achieve this near uniformity, we start by excluding all hospitals with public rating in the range $[f+a\alpha,g+a\alpha)$, for these are the hospitals that are in cone for some but not all the doctors in $I$. We note that
the hospitals in the range $L\triangleq [g-a\alpha,f+a\alpha)$ are in-cone for all doctors in $I$.

Next, we remove any hospital in $L$ that has two or more edges to doctors in $I$ (typically, most hospitals in this range have no such edges), for the first proposal to $I$ from such hospitals would not be uniform over $I$.
In addition, the doctors $d$ incident on these edges are removed, as are the
hospitals which are neighbors of removed doctors.
Let $I'$ and $L'$ be the resulting sets of doctors and hospitals, respectively.
We note that the second set of removed hospitals each had only one edge to doctors in $I$,
and therefore the doctors in $I'$ each have all their $k$ edges incident on hospitals in $L'$.
In \Cref{clm::hosp-removal-lots-edges} we show that $|I'|\ge |I|\big[1-\exp\big(-\alpha k/2(4a+1)\big)\big]$ with high probability.
% Let $\Bcol$ be the event that more than $|I|\exp\big(-\alpha k/\big)$ edges collide. Going forward, we assume that $\Bcol$ does not occur.

\hide{
\begin{claim}\label{clm::hosp-removal-lots-edges}
Suppose that the following events do not occur: $\Bdsf(I)$ and $\Bhbf(C(d))$ for all $d\in I$.
Then at most $|I|\exp\big(-\alpha k/\big)$ hospitals are removed because they have 2 or more edges to doctors in $I$, 
at most $|I|\exp\big(-\alpha k/\big)$ doctors are incident on the removed edges, and at most $|I|k\exp\big(-\alpha k/\big)$ hospitals and edges incident on $I$ are removed, with failure probability at most $n^{-(c+2)}$, if\\ $2(c+2)\exp\big(\alpha k/\big)\cdot \frac{\ln n}{n} \le g-f \le 2(a-\frac12)\alpha\exp\big(-\alpha k/\big)/(e^2k^2\kappa)$.
\end{claim}
\begin{proof}
  We begin by giving a bound the number of edges from $I$ that collide, i.e., that have the same hospital as an endpoint.
  As $\Bhbf(C(d))$ does not occur for any $d\in I$, $|C(d)| \ge (a-\frac12)\alpha n/\kappa$ for each
  $d\in I$. For brevity we let $m$ denote $(a-\frac12)\alpha n/\kappa$.
  The probability that an edge from $I$ collides with any of the other edges out of $I$ is at most
  $\frac{(|I|-1)k}{m}\le \frac{|I|k}{m}=\frac{|I|k\kappa}{(a-\frac12)\alpha n}$. To give a high probability bound on the number of colliding edges
  we use stochastic dominance: we consider $|I|k$ 0/1 random variables which are 1 with probability
  $\frac{|I|k\kappa}{(a-\frac12)\alpha n}$. Their sum has expectation bounded by $b=\frac{|I|^2k^2\kappa}{(a-\frac12)\alpha n}$. Hence, by a Chernoff bound, the probability that the sum is greater than $s=|I|\exp\big(-\alpha k/\big)$ is at most
  \begin{align*}
      \Big(\frac {e^{(s/b) -1}}{(s/b)^{s/b}}\Big)^{b}
      \le \Big(\frac{eb}{s}\Big)^{s} 
      = \Big(\frac{e|I|k^2\kappa\exp\big(\alpha k/\big)}{(a-\frac 12) \alpha n}\Big)^{|I|\exp(-\alpha k/)}.
  \end{align*}
  This probability is maximized when $|I|$ is as small as possible, and as $\Bdsf(I)$ does not occur,
  this is when $|I|=\frac 12(g-f)n$.
  Then the probability is bounded by $n^{-(c+2)}$
  if $g-f\ge 2(c+2)\exp(\alpha k/)\cdot \frac{\ln n}{n}$ (which causes the exponent in the above expression to be at least $(c+2)\ln n)$, and $g-f \le 2(a-\frac12)\alpha\exp\big(-\alpha k/\big)/(e^2k^2\kappa)$ (which causes the base term to be at most $1/e$).
  
  By stochastic dominance, $|I|\exp\big(-\alpha k/\big)$ is also a bound on the number of edges that collide, and hence the
  number of hospitals with two or more edges into $I$ and the number of doctors incident on colliding edges. Therefore, in total, at most $|I|k\exp\big(-\alpha k/\big)$ edges and hospitals are removed.
\end{proof}
}

We partially exclude some additional hospitals so as to ensure that they have a roughly equal probability of proposing to any doctor in $I'$ over the run of the double-cut DA.
In particular, for each doctor $d'\in I'$, if the hospital obtains a utility in the range
$[r(d')+1-\alpha,r(d')+1)$, we want the utility to be in the smaller range
$\cap_{d\in I'}[r(d)+1-\alpha,r(d)+1)\supseteq [g+1-\alpha,f+1)$.
To achieve this, we terminate the run of any of these hospitals which obtains a utility in the range
$\big([f+1-\alpha,g+1-\alpha)\cup [f+1,g+1)\big)\cap([r(d)+1-\alpha,r(d)+1)$ for any doctor $d$ in $I'$ (note that $f\le r(d)<g$), right before they make such a proposal.

The rest of the argument will be similar to the one for \Cref{lem::doctors-high-value-match-prob}.
       For this result we are only seeking to ensure matches, and not that the doctors achieve high interview values. 
       %Still, we want the hospitals to make lots of ``real'' proposals to the doctors in $I'$ so as to ensure most of these doctors are matched; this makes the structure of the analysis somewhat similar to that in the prior lemma.  %We let $r+1$ be the desired number of real proposals; we specify $r$ later.
%
To this end, we fix $I'$ and the utilities for all doctors outside of $I'$.
We will be seeking to analyze the double-cut DA as if $I'$ was a single doctor that makes $k|I'|$ proposals.
The reason this does not cause difficulties is that because of the above exclusions 
% assuming $\Bcol$ does not occur, then 
every edge from a doctor in $I'$ to a hospital in $\cap_{d\in I'} C(d)$ is to a distinct hospital.

\hide{
The nodes and edges in the tree $T$ of possible computations in the run of the double cut DA need to be redefined. Now we partition proposals according to whether they go to a doctor outside $I'$ or to a doctor in $I'$. As before, a node corresponding to a proposal to $d\notin I'$ will have a single child.
Proposals to doctors in $I'$ can be real or imaginary, but we no longer distinguish high-real and low-real proposals. Again, the need for imaginary proposals is due to selecting $h$'s utility before determining whether there is an edge from $I'$ to $h$.

Following this, we redefine $q(v)$ and $\qtilde(v)$, as follows.
We let $x$ denote the number of doctors in $I'$ who receive actual or real proposals before the computation reaches node $v$.
Each time a proposal is made to a doctor $d$ in $I'$, she might as well remove the remaining $k-1$ edges to $d$ as they will not cause any new doctors to receive a proposal.
This results in the removal of $k$ hospitals: the one making the real proposal, and the $k-1$ others with edges to $d$.
Next, we let $y$ denote the number of imaginary proposals to doctors in $I'$ who have not received a real proposal when node $v$ is reached. Again, we will stop the computation of the proposing hospital following any imaginary proposal to a doctor in $I'$. So at node $v$, $kx+y$ hospitals will have had their runs stopped. (The reason for the slightly convoluted definition of $y$ is that the $kx$ term may include some imaginary proposals.)
        
        Then $q(v)$ is defined to be the probability that at node $v$ the continuation of the run of the double-cut DA results in at most $r-x$ additional real proposals to doctors in $I$. As before, let $h$'s \emph{residual utility} at node $v$, $u(h,v)$, equal the current utility $u$ minus the minimum allowed utility in this run of the double cut utility, namely $u-(g+1-\alpha)$, minus the maximum utility range lost due to the elimination of equal utilities, namely at most $n/z<\delta$. Let $s$ be the minimum surplus over all the possible paths in the computation assuming no bad events occur.
    Let $S(v)$ be the set of the $s-x-y$ hospitals with the smallest residual utilities
    among those which have not made a real or imaginary proposal to a doctor in $I$ before reaching $v$.
    
    To define $\qtilde(v)$ we consider the following setup. 
    Suppose that the $s-kx-y$ hospitals with the smallest residual utilities independently attempt to make a proposal, real or imaginary, with probability corresponding to their residual utility. Suppose that, for each proposal, we then select whether it is real with probability $\tau =1 -\big(1-\frac 1{(2a+\frac12)\alpha n/\kappa}\big)^{k(|I'|-x)}$, where $|I'|$ is the number of doctors in $I'$ and $(2a+\frac 12)\alpha n/\kappa$ is an upper bound on the number of hospitals in $\cap_{d\in I} C(d)$, the intersection of the cones for $d\in I$, assuming $\Bhbm(C(d))$ does not occur for any $d\in I$. 
Then $\qtilde(v)$ is defined to be the probability that this process results in proposals to at most $r-x$ more doctors in $I'$.  
}

\hide{
\begin{claim}\label{clm::qtilde-bound-for-hosp-doc-int}
    $\qtilde(v)\ge q(v)$ for all $v$ in $T$.
\end{claim}
\begin{proof}
The proof that $\qtilde(v)\ge q(v)$ for all nodes $v$ is similar to the proof in \Cref{lem::doctors-high-value-match-prob}. There are three significant changes.
We no longer distinguish low-real and high-real proposals; rather than determining the probability of zero high-real proposals, we want to determine the probability of at most $r$ real proposals to distinct doctors in $I'$.
The second change is related: rather than a node $v$ having the four children $\wrh$,
$\wrl$, $\wim$, and $\wna$, corresponding to high-real, low-real, and imaginary proposals to $d$, and to ``no action'',
it will have children $\wrl(d)$, $\wim(d)$ for each $d\in I'$, corresponding to real and imaginary proposals to $d\in I'$, and it will also have a child  $\wna$ corresponding to ``no action''. At these children, the goal will be to bound the probability that the number of doctors in $I'$ receiving real proposals is bounded by, respectively, $r-1$, $r$, and $r$. 
The third change is that the probabilities of the paths in $T$ with $q$ and $\qtilde$ will differ.
But at the root, the probabilities sum to 1 in both cases, so having $\qtilde(\treeroot)\ge q(\treeroot)$ allows us to proceed as before.

The base case, the case of a node with a single child, and case 2 are unchanged.
We give a new proof for the remaining case, namely case 1.

\smallskip\noindent
\emph{Case 1.} $u(h,v)$ is among the smallest $s-kx-y$ residual utilities at node $v$\\
This immediately implies that $u(h,\wna)$ is among the smallest $s-kx-y$ residual utilities at $\wna$, as $u(h,\wna)<u(h,v)$, for the residual utilities for the other hospitals are unchanged at $\wna$; we also note that they are unchanged at the other children.

Again, we consider the random choices that are made in evaluating $\qtilde(v)$,
by instantiating exactly the randomness needed to select a child $w$ of $v$.
As before, $\qtilde(v)=\sum_{\text{$w$ a child of $v$}} \ptilde(w)\cdot \qtilde(w)$,
where $\ptilde(w)$ is the probability that the edge to $w$ is taken.

Now, $\ptilde(\wna)=p(\wna)$ as this probability depends only on the decrease in $h$'s current utility.
Also, for all $d\in I'$, $\ptilde(\wrl(d)) \le p(\wrl(d))$: for the edge is real in the $\qtilde$ setting with probability
$\tau$; this value is computed assuming the cone sizes for all $d\in I'$ are as large as possible;
this only reduces the probability of the edge being real.
Therefore, we can define $\ptilde(\wim(d))=p(\wim(d))+p(\wrl(d))-\ptilde(\wrl(d))$ (for on all the paths to nodes $\wim(d)$, $h$'s computation is terminated, which implies that the subtrees at each node
$\wim(d)$ are isomorphic,  meaning that all that matters is the total probability weight assigned to these trees, and not how the probability weight is distributed among them, or equivalently, among
the edges to the nodes $\wim(d)$).

In addition, $\qtilde(\wrl(d))\le \qtilde(\wim(d))$. To see this, note that the bad event captured by
$\qtilde(\wrl(d))$ is having proposals to $r-(x+1)$ or fewer doctors in $I'$ and the bad event captured by
$\qtilde(\wim(d))$ is having proposals to $r-x$ or fewer doctors in $I'$. But $\wim(d)$ has all the proposers
present in $\wrl(d)$ plus up to another $k$ in addition, who have edges to the same one doctor $d'$ in $I'$. So whenever the random choices result in at most $r-(x+1)$ doctors receiving proposals at $\wrl(d)$, the same random choices at $\wim(d)$ result in at most $r-(x+1)$ doctors receiving proposals plus, in addition, $d$' possibly receiving  one or more proposals, for a total of at most $r-x$ doctors
receiving proposals.

Now, we deduce our result with the help of a little algebra.
\begin{align*}
    &\sum_{d\in I'}\big[\ptilde(\wim(d))\cdot \qtilde(\wim(d)) + \ptilde(\wrl(d))\cdot \qtilde(\wrl(d)) \big]\\
    &\hspace*{0.3in} =   \sum_{d\in I'}\big[p(\wim(d))\qtilde(\wim(d)) + p(\wrl(d))\qtilde(\wrl(d))\big] + \big[\ptilde(\wim(d))-p(\wim(d))\big]\cdot \big[\qtilde(\wim(d))-\qtilde(\wrl(d))\big]\\
    &\hspace*{0.3in} \ge \sum_{d\in I'}\big[p(\wrl(d))q(\wrl(d)) + p(\wim(d))q(\wim(d))\big].
\end{align*}
\end{proof}
}

We defer to the appendix the claims that the surplus $s\ge \alpha n/\kappa$, the probability that the number of unmatched doctors is more than
    $3|I|\cdot \exp\big(-\frac{\alpha k}{2(4a+1)}\big)$ is at most $n^{-(c+2)}$,
    and that a suitably defined $\qtilde$ and $T$ satisfy $\qtilde(v)\ge q(v)$ for all
    nodes $v\in T$.

\hide{
\begin{claim}\label{qtilde-bound-at-root-hosp-to-doc-int}
Let $I=[f,g)$, and suppose that $s\ge \alpha n/\kappa$, 
$6\exp\big(\frac{\alpha k}{}\big) \cdot (c+2)\frac{\ln n}{n} \le g-f \le \min\big\{ \frac {(2a+\frac12)\alpha}{4\kappa k},~ \frac {\alpha}{2}\big\}$.
Also, suppose that the following events do not occur: $\Bcol$, $\Bdsf(I)$, and $\Bdsm(I)$.
Then the probability that the number of unmatched doctors is more than
    $3|I|\cdot \exp\big(-\frac{\alpha k}{}\big)$ is at most $n^{-(c+2)}$.
\end{claim}
\begin{proof}
We seek to identify as small an $r$ as possible such that $\qtilde(\treeroot(T))\le n^{-(c+2)}$,
for we can then deduce that $q(\treeroot(T))\le n^{-(c+2)}$ for this value of $r$.
The remainder of the analysis concerns $\qtilde(\treeroot(T))$.

In the $\qtilde(\treeroot(T))$ setting, $s\ge \alpha n/\kappa$ hospitals each seek to propose to some unspecified doctor in $I'$ with probability at least
\begin{align*}
[\alpha-(g-f)]\cdot \Big[1-\big(1-\frac 1{(2a+\frac12)\alpha n/\kappa}\big)^{|I'|\cdot k}\Big]
& \ge \frac{\alpha}{2}\cdot \frac {|I'| k}{(2a+\frac12)\alpha n/\kappa}
     - \frac{\alpha}{2}\cdot \Big(\frac {|I'| k}{2(2a+\frac12)\alpha n/\kappa}\Big)^2\\
& \ge \frac{\alpha}{4}\cdot \frac {|I'| k}{(2a+\frac12)\alpha n/\kappa},
\end{align*}
using $g-f\le \frac {\alpha}{2}$ and $\frac {|I'| k}{(2a+\frac12)\alpha n/\kappa}\le \frac 12$;
as $\Bdsm(I)$ does not occur, $|I'|\le|I|\le 2(g-f)n$, and then the latter condition holds if $\frac{2(g-f)\kappa k}{(2a+\frac12)\alpha}\le \frac 12$.

Thus the expected number of proposals to a single doctor $d$ in $I'$ is at least
$\frac{\alpha k}{2(4a+1)}$.
By a Chernoff bound, $d$ receives no proposals with probability at most
$\exp\big(-\frac12 \frac{\alpha k}{2(4a+1)} \big)$.
Let $X_d$ be the 0/1 random variable that equals 1 if $d$ receives no proposals.
The $X_h$ are negatively correlated and therefore we can apply Chernoff bounds to their sum.
In expectation, the sum of the $X_h$ is at most
$|I'|\cdot \exp\big(-\frac{\alpha k}{}\big)$, and by a Chernoff bound, it is more than
$2|I'|\cdot \exp\big(-\frac{\alpha k}{}\big)$
with probability at most $\exp\big(-\frac 23 |I'|\cdot \exp\big(-\frac{\alpha k}{}\big)\big)\le n^{-(c+2)}$, if $|I'| \ge \frac 32 \exp\big(\frac{\alpha k}{}\big) \cdot (c+2)\ln n$.
If $\Bcol$ and $\Bdsf(I)$ do not occur, and if $\exp\big(-\alpha k/\big)\le \frac12$, then $|I'|\ge |I|\big[1-\exp\big(-\alpha k/\big)\big]\ge \frac 12|I|\ge \frac 14 (g-f)\frac{n}{\kappa}$;
then $g-f \ge 6\exp\big(\frac{\alpha k}{}\big) \cdot (c+2)\frac{\ln n}{n}$ suffices.  

On setting $r=2|I'|\cdot \exp\big(-\frac{\alpha k}{}\big)$, we deduce that at most
$2|I'|\exp\big(-\frac{\alpha k}{}\big)$ doctors in $|I'|$ are unmatched, with failure probability
at most $n^{-(c+2)}$. As $\Bcol$ does not occur, we know that $|I|-|I'| \le |I|\exp\big(-\frac{\alpha k}{}\big)$, from which our result follows.
\end{proof}

For the next claim, the following additional notation will be helpful.
$J=[g+(a+1)\alpha,1)\cup [g-a\alpha,f+a\alpha)$ is the set of participating hospitals excluding those in $[g+a\alpha,g+(a+1)\alpha)$,
and $K=[g-\alpha,1)\setminus I$ is the set of participating doctors excluding those in $I$.

\begin{claim}\label{clm::surplus-interval-hosp-prop}
    Suppose that none of the following events occur: 
    $\Bdsm(I)$, $\Bhbf(J)$, $\Bhbm(K)$ (defined in the appendix).
    Then the surplus $s\ge \alpha n/\kappa$, if $\alpha\cdot\exp\big(\alpha k/\big)\ge 3$,
    $g-f\le \frac32 a\alpha/\kappa k$, 
    and $a\ge 5$.
\end{claim}
\begin{proof}
We need to consider two cases, depending on whether the span of the cone for some doctor in $I$  overlaps the range $[1-\alpha,1)$, or in other words whether or not $g\le 1-(a+1)\alpha$.
\\
    \emph{Case 1}. $g\le 1-(a+1)\alpha$.\\
        As $\Bhbf(J)$ does not occur, $J$ contains at least $\big[1-g-(a+1)\alpha)+(2a\alpha- [g-f]) -\frac 12\alpha\big]\cdot \frac{n}{\kappa}= \big[(1-g) +(a-\frac32)\alpha -(g-f)\big]\cdot \frac{n}{\kappa}$ hospitals.
        By assumption, at most a fraction $3\cdot \exp\big(-\frac{\alpha k}{}\big)$ of the hospitals in $[g+(a+1)\alpha,1)$ are not fully matched.
        % (strictly, the fraction in a minimal set of intervals covering $[g+a\alpha,1)$).
        By \Cref{clm::hosp-removal-lots-edges}, the number of hospitals that are eliminated is at most $|I|k\exp\big(-\alpha k/\big)\le 2(g-f)kn\exp\big(-\alpha k/\big)$, as $\Bdsm(I)$ does not occur; we want this to be bounded by $3a\alpha \frac{n}{\kappa}\exp\big(-\alpha k/\big)$, which holds
        if $g-f\le \frac32 a\alpha/\kappa k$.
        Then, the number of eliminated hospitals in the intervals
        $[g-a\alpha,f+a\alpha)\cup[g+(a+1)\alpha,1)$ is at most $[1-(g-a\alpha)]\frac{n}{\kappa}\cdot 3\exp\big(-\alpha k/\big)\le \alpha \frac{n}{\kappa}$, if $3\le \alpha \exp\big(\alpha k/\big)$.
        
        As $\Bdbm(K)$ does not occur, there are at most $[1-g+\frac32\alpha-(g-f)]n$ participating doctors outside of $I$.
        Therefore, the number of fully matched hospitals in $[g-a\alpha,f+a\alpha)$ is at least 
    $\Big(\big[1-g)+(a-\frac 32)\alpha- (g-f)\big]- \alpha -\big[1-g+\frac32\alpha-(g-f)\big]\Big)\cdot \frac{n}{\kappa}\ge (a-4)\frac{\alpha n}{\kappa}\ge \frac{\alpha n}{\kappa}$.
            \\
    \emph{Case 2}. $g> 1-(a+1)\alpha$.\\
    As in Case 1, $K$ contains at most $\big[1-g+\frac32\alpha-(g-f)\big]n\le \big[(a+\frac52)\alpha -(g-f)\big]n$ doctors.
     Now, $J=[g-a\alpha,f+a\alpha)$.
    Similarly to Case 1, the number of hospitals in $[g-a\alpha,f+a\alpha)$ that are available for matching
    is at least $\big[(2a-\frac32)\alpha-(g-f)\big]\cdot\frac {n}{\kappa}$.
    So the surplus is at least
    $\big(\big[(2a-\frac52)\alpha-(g-f)\big] - \big[(a+\frac32)\alpha -(g-f)\big] \big)\cdot\frac {n}{\kappa}
    \ge (a-4)\frac {\alpha n}{\kappa}\ge \frac {\alpha n}{\kappa}$.
    \end{proof}
    }

\hide{\begin{claim}\label{clm::hosp-prop-interval-constraints}
    The conditions in Claims~\ref{clm::hosp-removal-lots-edges}--\Cref{clm::surplus-interval-hosp-prop} amount to the following constraints:
    \begin{align*}
        \alpha\cdot \exp\Big(\frac {\alpha k}{}\Big)&\ge 3~~~~(\Cref{clm::surplus-interval-hosp-prop})\\
        a&\ge 5~~~~(\Cref{clm::surplus-interval-hosp-prop})
   %     n&\ge \frac {e(|I|k\kappa)^2}{2\tildec[(a-\frac12)\alpha]^2}~~~~(\Cref{clm::hosp-removal-lots-edges})
    \end{align*}
    and several constraints on $g-f$, namely:
    \begin{align*}
       & 
        6(c+2)\exp\Big(\frac{\alpha k}{}\Big)\cdot\frac{\ln n}{n}
        \le g-f
        \le \min\Big\{\frac{(2a-1)\alpha\cdot\exp\big(-\frac{\alpha k}{}\big)}{(ek)^2\kappa},~\frac{\alpha}{2},~\frac{(4a+1)\alpha}{8\kappa k},~\frac{3a\alpha}{2\kappa k}\Big\}\\
       & \hspace*{2in}(\text{resp.\ Claims \ref{clm::hosp-removal-lots-edges} and \ref{qtilde-bound-at-root-hosp-to-doc-int}, \ref{clm::hosp-removal-lots-edges}, \ref{qtilde-bound-at-root-hosp-to-doc-int}, \ref{qtilde-bound-at-root-hosp-to-doc-int}, \ref{clm::surplus-interval-hosp-prop}}).
    \end{align*}
    In addition, over the whole induction, they exclude events with total probability at most $\big( \frac{6}{g-f} + \frac{n}{\kappa}\big)\cdot n^{-(c+2)}$.
\end{claim}
\begin{proof}
By inspection, 
the excluded events in \Cref{clm::hosp-removal-lots-edges} have probability at most $\big(\frac{n}{g-f}+\frac{n}{\kappa}\big)\cdot n^{-(c+2)}$ and it has failure probability at most $\frac{1}{g-f}\cdot n^{-(c+2)}$;
the additional excluded events in \Cref{qtilde-bound-at-root-hosp-to-doc-int} have probability at most
$\frac{1}{g-f}\cdot n^{-(c+2)}$ and this claim has failure probability at most $\frac {1}{g-f}\cdot n^{-(c+2)}$;  
the additional excluded events in \Cref{clm::surplus-interval-hosp-prop} have probability at most $\frac{2}{g-f} \cdot n^{-(c+2)}$. The bound on the probability of excluded events follows readily.
\end{proof}
}

\hide{
\subsubsection{Proof of \Cref{lem::hospitals-interval-non-match}}

$I=[f,g)$ is an interval of hospitals here. We define $L=[g-a\alpha,f+a\alpha)$ to be the interval of doctors that are in-cone for all hospitals in $I$. We want to ensure that there are a sufficient number
of potential proposals from $L$ that are uniform over $I$ w.r.t.\ the interview values. Suppose that we guarantee, with failure probability at most $n^{-(c+2)}$, that the doctors' proposals all have private value at least $1-\taubar$ (which we bound in \Cref{clm::value-of-taubar}, below) and utility at least $f+2-\alpha$. This yields a set $P$ of $[\alpha-\taubar-(g-f)]\cdot|L|$ such proposals in expectation, and at least
$\frac12 [\alpha-\taubar-(g-f)]\cdot|L|$ with high probability. Then we will seek to count how many hospitals in $I$ receive at least $\kappa$ proposals from the proposals in $P$ and show that this count is sufficiently large with
high probability. Similarly to the hospital removal in \Cref{sec::doc-int-idio-no-match}, we will remove all doctors that have two or more edges to hospitals in $I$.
The next lemma bounds the number of removed doctors.

\begin{claim}\label{clm::doc-removal-lots-edges}
Suppose that the following events do not occur: $\Bhsf(I)$, $\Bhsm(I)$, $\Bdbm(L)$, $\Bdbf(L)$, and $\Bhbf(C(d))$ for all $d$ in $L$.
Then at most $|L|\exp\big(-\alpha k/\big)$ doctors are removed because they have two or more edges to hospitals in $I$, 
with failure probability at most $n^{-(c+2)}$, if 
$2(c+2)\exp\big(\alpha k/\big)\cdot \frac{\ln n}{n}\le (g-f)\le \frac{4(a-\frac12)^2\alpha }{(2a+\frac12)\cdot(e\kappa k)^2}\cdot\exp\big(-\alpha k/\big)$.
% and $(a-\frac12)\alpha \exp\big(-\alpha k/\big)\ge (c+2)\cdot \frac{\ln n}{n}$.
\end{claim}
\begin{proof}
Let $d$ be a doctor in $L$, the intersection of the cones for hospitals in $I$.
The probability that an edge from $d$ is to a hospital in $I$ is at most $|I|/[(a-\frac12)\cdot(\alpha n/\kappa)]$, as $\Bhbf(C(d))$ does not occur for any doctor in $I$'s cone. Therefore the probability that $d$ has two or more edges to $I$ is at most
$\frac12 k^2 \cdot \Big[\frac{|I|}{(a-\frac12)\cdot(\alpha n/\kappa)}\Big]^2$.
% $\le 2\Big[\frac{k(g-f)\kappa}{(a-\frac12)\alpha}\Big]^2$, as $\Bdsm(I)$ does not occur.

Therefore the expected number of doctors with two or more edges to $I$ is at most 
$\frac12|L| \cdot \Big[\frac{k|I|\kappa}{(a-\frac12)\alpha n}\Big]^2$.
Note that $|L|\le (2a+\frac12)\alpha n$, as $\Bdbm(L)$ does not occur.
By a Chernoff bound, there are more than $|I|\exp\big(-\alpha k/\big)$ such doctors with probability at most 
\begin{align*}
    \Big[\frac{(2a+\frac12)e|I|(\kappa k)^2\exp\big(\alpha k/\big)}{2(a-\frac12)^2\alpha n}\Big]^{|I|\exp\big(-\alpha k/\big)}.
    \end{align*}
    This is maximized when $|I|$ is as small as possible, i.e.\ $\frac12 (g-f)\frac{n}{\kappa}$, as $\Bhsf(I)$ does not occur; so it is at most
\begin{align*}
    \Big[\frac{(2a+\frac12)e[(g-f)(\kappa k)^2\exp\big(\alpha k/\big)}{4(a-\frac12)^2\alpha}\Big]^{\frac12 (g-f) n\exp\big(-\alpha k/\big)}.
\end{align*}
This is bounded by $n^{-(c+2)}$ if
$(g-f)\le \frac{4(a-\frac12)^2\alpha }{(2a+\frac12)\cdot(e\kappa k)^2}\cdot\exp\big(-\alpha k/\big)$ and $g-f\ge 2(c+2)\exp\big(\alpha k/\big)\cdot \frac{\ln n}{n}$.

Finally, we need $|I|\exp\big(-\alpha k/\big)\le |L|\exp\big(-\alpha k/\big)$.
Assuming $\Bhsm(I)$ and $\Bdbf(L)$ do not occur, it suffices that
$2(g-f)n\ge (a-\frac12)\alpha n$, i.e.\ $g-f\le (2a-1)\alpha$, and this is subsumed by the first lower bound on $g-f$.
\end{proof}
Let $L'$ be the set of doctors remaining after the removals from $L$, and left
$\Bdrem$ be the event that more than $|L|\exp\big(-\alpha k/\big)/(e^2 \kappa)$ doctors are removed for any of the intervals $L$ that occur in the analysis.
Going forward, we suppose that $\Bdrem$ does not occur.

We suppose the event $\Btau$, that some doctor fails to have $k$ edges with private value $1-\taubar$ or larger does not occur, as specified in \Cref{clm::value-of-taubar}.

The proof that $\qtilde(v)\ge q(v)$ needs some modifications.
Again the possible runs of the double-cut DA will use some additional stops.
Doctors with two or more edges to $I$ are eliminated.
Of the remaining doctors with a single edge to $I$, if it provides utility in the range
$[f+2-\alpha,g+2-\alpha)\cup [f+2-\taubar,g+2)$ we do not include the proposal to $I$ in our count of proposals. This ensures that for the remaining doctors with a proposal to $I$, the receiving
hospital is uniformly distributed over $I$.

As in \Cref{sec::doc-int-idio-no-match}, the tree $T$ of possible runs of the double-cut DA has
both real and imaginary proposals. For each doctor $d$ outside $L'$ we fix the utilities.
Also, for each doctor $d$ in $L'$, we fix its top $k$ value utilities, but without specifying which edges these values are attached to.
Now, we are seeking the probability that the doctors in $L'$ make at most $r$ proposals to hospitals in $I$ for an $r$ to be specified later; in addition, we want to show that $\qtilde(v) \ge q(v)$ for all nodes $v$ in $T$, and for $\treeroot(T)$ in particular.

A node corresponding to an action by a doctor outside $L'$ has just one outgoing edge corresponding to its single possible action. For a doctor $d$ in $L'$, we proceed as in \Cref{sec::proof-hosp-match-prob}.
We consider the edges out of $d$ one at a time.
the decision making corresponding to one edge proceeds in two steps:
first, whether it has a utility equal to $d$'s current utility value in tree $T$. This is a decision based on making a random choice for its interview value. For each possible destination of this edge, this occurs with the same probability. 
In the $q$ setting, we then choose the edge uniformly from the hospitals in $d$'s current cone that she has not already proposed to; in this setting, there are no imaginary proposals.
A doctor stops making proposals either when she has gone through all $k$ of her proposals or she makes a proposal to a hospital in $I$.

In the $\qtilde$ setting, we keep track of how many proposals a doctor has made.
Again, as soon as a doctor makes a proposal, real or imaginary, to a hospital in $I$, she stops.
We keep track of how many proposals a doctor has made; let $k(d,v)$ be the number of remaining
proposals for doctor $d$ at node $v$. Also, recall that $u(d,v)$ is her residual probability at node $v$.
the doctors with the smallest products $u(d,v)\cdot k(d,v)$ will attempt to propose; each real proposal will occur with the same probability, which will be no larger than the probability of a proposal in the $q$ setting; the difference in probability will be handled by making proposals imaginary with probability equal to the difference. Now, we make this precise.

We use the same value for the probability that $d$ has a proposal with the current utility $u$ as in the $q$ setting. 
If $d$ is determined to have such a proposal, uniformly at random, we choose one of the hospitals in $d$'s current cone that has not already received a proposal from $d$ to receive the proposal; then we choose whether it is real with probability $\tau=1 -\big(1 - \frac {1}{(2a+\frac12)\alpha n}\big)^{k(d,v)}$.
As in \Cref{sec::doc-int-idio-no-match}, $\tildep(\wrl(h))\le p(\wrl(h)$; again, we choose
$\tildep(\wim(h))=p(\wim(h))+[\wrl(h))-\tildep(\wrl(h)]$.

Again, let $x$ be the number of hospitals in $I'$ that have received real proposals.
At node $v$, we will be seeking the probability that at most an additional $r-x$ real proposals are made to hospitals in $I'$.

\begin{claim}\label{clm::qtilde-larger-for-intervals-in-hosp-prop}
    $\qtilde(v)\ge q(v)$ for all $v$ in $T$.
\end{claim}
\begin{proof}
    The base case argument and case 2 are unchanged from \Cref{clm::qtilde-bound-for-hosp-doc-int}.

    \smallskip\noindent
    The proposal is to some $h\notin I'$\\
    Then $v$ has two children $\wrl$, $\wim$.
    Here $k(d,w)=k(d,v)-1$ and $u(d,w)<u(d,v)$, for $w=\wrl,\wim$; the variables associated with the other doctors are unchanged. At both children, we still seek the probability that at most $r-x$ more real proposals are made to hospitals in $I$. As in \Cref{sec::doc-int-idio-no-match}, 
    $\ptilde(\wrl)\le p(\wrl)$ and $\qtilde(\wrl)\ge \qtilde(\wim)$. It follows that $ p(\wrl)\cdot \qtilde(\wrl) + p(\wim)\cdot \qtilde(\wim)\ge \ptilde(\wrl)\cdot \qtilde(\wrl) + \ptilde(\wim)\cdot \qtilde(\wim)$, proving the inductive hypothesis in this case.

    \smallskip\noindent
    The proposal is to some $h\in I'$\\
    Then $v$ has two children $\wrl(h)$ and $\wim(h)$, for each $h\in I'$, plus the child $\wna$.

     \smallskip\noindent
     \emph{Case 1}. $u(d,v)\cdot k(d,v)$, the proposal probability for $h\in I'$ at node $v$, is among the $s-x-y$ smallest proposal probabilities, $u_1k_1\le uk_2\le\ldots\le u_{s-x-y}k_{s-x-y}$, at node $v$. \\
     Since only $u(d,w)$ and $k(d,w)$ change (by becoming smaller) at any of $v$'s children $w$, the proposal probability for $d$ continues to be among the $s-x-y$ smallest at the children where it is still in use. Therefore instantiating the randomness needed to choose to edge to a child of $v$ ensures that $\qtilde(v)= \sum_{\text{$w$ a child of $v$}} \ptilde(w)\cdot \qtilde(w)\ge \sum_{\text{$w$ a child of $v$}} p(w)\cdot \qtilde(w)$, as for each pair of nodes $\wrl(h)$ and $\wim(h)$, we have $ p(\wrl)\cdot \qtilde(\wrl) + p(\wim)\cdot \qtilde(\wim)\ge +\ptilde(\wrl)\cdot \qtilde(\wrl) + \ptilde(\wim)\cdot \qtilde(\wim)$, and $\ptilde(\wna)=p(\wna)$.
\end{proof}

For the next claim, which bounds the surplus, the following additional notation will be helpful.
$J=[g+(a+1)\alpha,1)\cup [g-a\alpha,f+a\alpha)$ is the set of participating doctors minus those in $[g+a\alpha,g+(a+1)\alpha)$,
and $K=[g-\alpha,1)\setminus I$ is the set of participating hospitals excluding those in $I$.
   
\begin{claim}\label{clm::surplus-interval-doc-prop}
Suppose that neither of the following events occurs: $\Bdbf(J)$, $\Bhbm(K)$.
    Then the surplus $s\ge \alpha n$, if $\alpha\cdot\exp\big(\alpha k/\big)\ge 3$, and $a\ge 5$.
\end{claim}
\begin{proof}
We need to consider two cases, depending on whether the span of the cone for some doctor in $I$ has  overlaps the range $[1-\alpha,1)$.
\\
    \emph{Case 1}. $g\le 1-(a+1)\alpha$.\\
        As $\Bdbf(J)$ does not occur, $J$ contains at least $\big[1-g-(a+1)\alpha)+(2a\alpha- [g-f]) -\frac 12\alpha\big]n= \big[(1-g) +(a-\frac32)\alpha -(g-f)\big]n$ doctors.
        By assumption, at most a fraction $3\cdot \exp\big(-\frac{\alpha k}{}\big)$ of the doctors in $[g+(a+1)\alpha,1)$ are not fully matched.
        % (strictly, the fraction in a minimal set of intervals covering $[g+a\alpha,1)$).
        By \Cref{clm::doc-removal-lots-edges}, the number of doctors in $L$ that are eliminated is at most $|L|\exp\big(-\alpha k/\big)$.

        Thus the number of doctors remaining is at least $\big[(1-g) +(a-\frac32)\alpha -(g-f)\big]n\cdot \big[1 -3\cdot \exp\big(-\frac{\alpha k}{}\big)\big] \ge \big[(1-g) +(a-\frac52)\alpha -(g-f)\big]n$, as $\exp\big(-\alpha k/\big)\le \frac 13 \alpha$.
              
        As $\Bhbm(K)$ does not occur, there are at most $[1-g+\frac52\alpha-(g-f)]\frac{n}{\kappa}$ participating hospitals outside of $I$.
        Therefore, the surplus, the number of doctors in $[g-a\alpha,f+a\alpha)$ not matched to $K$,
        is at least 
    $\big(\big[1-g)+(a-\frac32)\alpha-(g-f)\big] -\big[1-g+\frac52\alpha-(g-f)\big]\big)n\ge (a-4)\alpha n\ge \alpha n$.
            \\
    \emph{Case 2}. $g> 1-a\alpha$.\\
    As in Case 1, $K$ contains at most $\big[1-g+\frac52\alpha-(g-f)\big]n\le \big[(a+\frac52)\alpha -(g-f)\big]\cdot \frac{n}{\kappa}$ hospitals.
     Now, $J=[g-a\alpha,f+a\alpha)$.
    Similarly to Case 1, the number of doctors in $[g-a\alpha,f+a\alpha)$ that are available for matching
    is at least $\big[(2a-\frac32)\alpha-(g-f)\big]n$.
    So the surplus is at least
    $\big(\big[(2a-\frac52)\alpha-(g-f)\big] - \big[(a+\frac32)\alpha -(g-f)\big] \big)\cdot\frac {n}{\kappa}
    \ge (a-4)\frac \alpha n\ge \alpha n$.
\end{proof}

\begin{claim}\label{qtilde-bound-at-root-doc-to-hosp-int}
Let $I=[f,g)$, and suppose that $s\ge \alpha n$, 
$g-f \ge 3\exp\big(\frac{\alpha\kappa k}{}\big) \cdot (c+2)\frac{\ln n}{n}$,
$g-f+\taubar\le \frac {\alpha}{2}$, $\alpha k\ge 8(4a+1)$, and $\frac {\kappa k}{(2a+\frac12)\alpha n}\le 1$.
Also, suppose that $\Bhsf(I)$ not occur.
Then the probability that the number of not fully matched hospitals is more than
    $3|I|\cdot \exp\big(-\frac{\alpha\kappa k}{}\big)$ is at most $2n^{-(c+2)}$.
\end{claim}
\begin{proof}
We seek to identify as small an $r$ as possible such that $\qtilde(\treeroot(T))\le n^{-(c+2)}$,
for we can then deduce that $q(\treeroot(T))\le n^{-(c+2)}$ for this value of $r$.

In the $\qtilde(\treeroot(T))$ setting, $s\ge \alpha n$ doctors each seek to propose to an arbitrary  hospital $h$ in $I$ with probability at least
\begin{align*}
[\alpha-(g-f)-\taubar]\cdot \Big[1-\big(1-\frac {1}{(2a+\frac12)\alpha n/\kappa}\big)^{k}\Big]
& \ge \frac{\alpha}{2}\cdot \frac {\kappa k}{(2a+\frac12)\alpha n}
     - \frac{\alpha}{4}\cdot \Big(\frac {\kappa k}{(2a+\frac12)\alpha n}\Big)^2\\
& \ge \frac{\alpha}{2}\cdot \frac {\kappa k}{(4a+1)\alpha n},
\end{align*}
using $g-f+\taubar\le \frac {\alpha}{2}$ and $\frac {\kappa k}{(2a+\frac12)\alpha n}\le 1$.

In expectation, the hospitals in $I$ receives at least $|I|\cdot\frac{\alpha\kappa k}{2(4a+1)}$ proposals. 
By a Chernoff bound, $I$ receives fewer than $\frac 12 |I|\cdot\frac{\alpha\kappa k}{2(4a+1)}$ proposals with probability at most
$\exp\big(- \frac{|I|\alpha\kappa k}{16(4a+1)} \big)\le n^{-(c+2)}$,
if $|I|\alpha\kappa k \ge 16(4a+1)\cdot(c+2)\ln n$. 
As $\Bhsf(I)$ does not occur, $|I|\ge \frac 12 (g-f)\frac{n}{\kappa}$,
and so $g-f\ge \frac{32(4a+1)}{\alpha k}\cdot(c+2)\frac {\ln n}{n}$ suffices.
We set $r+1=  |I|\cdot\frac{\alpha\kappa k}{}$.
So we know there are this many proposals to $I$ in run of the double-cut DA with failure probability at most $n^{-(c+2)}$,
and furthermore each proposal is uniform over $I$.
We now analyze how many hospitals in $I$ receive fewer than $\kappa$ proposals.

Let $X_h$ be the 0/1 random variable that equals 1 if $h$ receives fewer than $\kappa$ proposals.
The $X_h$ are negatively correlated and therefore we can apply Chernoff bounds to their sum.
In expectation, the sum of the $X_h$ is at most
$|I|\cdot \exp\big(-\frac{\alpha\kappa k}{}\big)$, and by a Chernoff bound, it is more than
$2|I|\cdot \exp\big(-\frac{\alpha\kappa k}{}\big)$
with probability at most $\exp\big(-\frac 23 |I|\cdot \exp\big(-\frac{\alpha\kappa k}{}\big)\le n^{-(c+2)}$, if $|I| \ge \frac 32 \exp\big(\frac{\alpha\kappa k}{}\big) \cdot (c+2)\ln n$.
As $\Bhsf(I)$ does not occur, $|I|\ge \frac 12 (g-f)\frac{n}{\kappa}$,
so $g-f \ge 3\exp\big(\frac{\alpha\kappa k}{}\big) \cdot (c+2)\frac{\kappa\ln n}{n}$ suffices.

Finally, looking at the two lower bounds on $g-f$, the second one dominates so long as $\alpha k\ge 8(4a+1)$.
\end{proof}

\begin{claim}\label{clm::doc-prop-interval-constraints}
    The conditions in Claims~\ref{clm::doc-removal-lots-edges}--\Cref{qtilde-bound-at-root-doc-to-hosp-int} amount to the following constraints:
    \begin{align*}
        \alpha\cdot \exp\Big(\frac {\alpha\kappa k}{}\Big)&\ge 3~~~~(\Cref{clm::surplus-interval-doc-prop})\\
        a&\ge 5~~~~(\Cref{clm::surplus-interval-doc-prop})\\
        \taubar &\ge \frac{b\kappa}{(a-\frac12)\alpha}\cdot(c+2)\cdot\frac {\ln n}{n}~~~~(\Cref{clm::value-of-taubar})\\
        b& >2~~~~(\Cref{clm::value-of-taubar})\\
        k &\le \Big[1 -\Big(\frac 2b\Big)^{1/2}\Big]\cdot b(c+2)\ln n~~~~(\Cref{clm::value-of-taubar})\\
        \alpha n &\ge \frac{\kappa k}{2a+\frac12}~~~~(\Cref{qtilde-bound-at-root-doc-to-hosp-int})\\
        \alpha k &\ge 8(4a+1)~~~~(\Cref{qtilde-bound-at-root-doc-to-hosp-int})
    \end{align*}
    and several constraints on $g-f$, namely:
    \begin{align*}
       & 
        3\exp\Big(\frac{\alpha\kappa k}{}\Big)\cdot(c+2)\cdot\frac{\ln n}{n}
        \le g-f ~~~~(\text{Claims \ref{clm::doc-removal-lots-edges} and \ref{qtilde-bound-at-root-doc-to-hosp-int}})\\
        &\hspace*{1in}\le \min\Big\{\frac{\alpha}{2}-\taubar,
        ~\frac{(2a-1)^2\alpha\cdot\exp\big(-\frac{(\alpha k}{}\big)}{(2a+\frac12)\cdot(e\kappa k)^2} \Big\}\\
       & \hspace*{2in}(\text{resp.\ Claims \ref{qtilde-bound-at-root-doc-to-hosp-int}, \ref{clm::doc-removal-lots-edges}}).
    \end{align*}
    In addition, over the whole induction, they exclude events with total probability at most $\big(2n+ \frac{10}{g-f}\big)\cdot n^{-(c+2)}$.
\end{claim}
\begin{proof}
    By inspection, 
the excluded events in \Cref{clm::doc-removal-lots-edges} have probability at most $\big(n+\frac{4}{g-f}\big)\cdot n^{-(c+2)}$ and it has failure probability at most $\frac{1}{g-f}\cdot n^{-(c+2)}$;
there are no additional excluded events in \Cref{clm::value-of-taubar} and it has failure probability at most $n\cdot n^{-(c+2)}$;
the additional excluded events in \Cref{clm::surplus-interval-doc-prop} have probability at most
$\frac{2}{g-f}\cdot n^{-(c+2)}$ and this claim has failure probability at most $\frac {1}{g-f}\cdot n^{-(c+2)}$; 
there are no additional excluded events in \Cref{qtilde-bound-at-root-doc-to-hosp-int} and this claim has failure probability at most $\frac {2}{g-f}\cdot n^{-(c+2)}$;  
 The bound on the probability of excluded events follows readily.
\end{proof}
}

\hide{
\subsubsection{The Outcome for the bottommost Agents}\label{sec:bottom-agent-outcomes-idio-case}

Consider running the doctor-proposing DA, for now excluding the bottommost doctors, i.e.\ those with
public rating less than $a\alpha$. Furthermore, let's start by cutting at hospital public rating
$(2a-1)\alpha$. From the prior analysis, we know that all but a small fraction of the hospitals
with public rating $2a\alpha$ or larger are fully matched (see \Cref{lem::hospitals-interval-non-match}).
Now continue the run of DA, cutting at hospital public rating $a\alpha$. This ensures that all but a small fraction of the doctors with public rating $2a\alpha$ or larger are matched, as we justify below.

The behavior of the currently unmatched doctors with public rating less than $2a\alpha$ somewhat resembles what happens in the classic random permutation setting, but the match rate is not completely uniform; the higher rated doctors will be more successful on average.

\begin{claim}\label{clm::match-rate-near-bottom-doctors}
    In the partial run of the doctor-proposing DA described above, with failure probability at most $n^{-(c+2)}$, at most ... fraction of the doctors with public rating at least $2a\alpha$ remain unmatched.
\end{claim}
\begin{proof}
    Consider the runs of the double-cut hospital proposing DA for the interval including public rating $2a\alpha$ and all higher intervals. By \Cref{lem::doctors-interval-non-match}, at most ... of these doctors remain unmatched. If we expand this run of the hospital proposing DA to include all doctors with public rating at least $a\alpha$, meaning that the cut value is reduced to $a\alpha$, this maintains a match for all doctors that were previously matched, although who they were matched to could change. The set of edges in this partial run of DA is the same set of edges as is being used in the run described in the statement of the claim. As the hospital-proposing DA yields the worst outcome for the doctors among all stable matchings, it implies that the run of the doctor-proposing DA on this set of edges has at least as many doctors matched (actually, it is the same number).
\end{proof}

Approximately, after the partial run of the doctor-proposing DA, the number of unmatched doctors in the range $[a\alpha,2a\alpha)$ equals the number of available hospital places in the same range (up to a factor of $\frac 32a\alpha n$ due to the unmatched doctors above this range, and the possible disparities between the actual and expected number of hospitals and doctors in this range).
}

\subsubsection{The $\eps$-Nash Equilibrium for all but the Bottommost Doctors}
\label{sec::idio-eps-Nash}

We begin by identifying sufficient conditions to satisfy the constraints in Claims~\ref{clm::conditions-hosp-prop-one-doctor}, \ref{clm::conditions-doctor-prop-one-hosp},
\ref{clm::hosp-prop-interval-constraints}, and \ref{clm::doc-prop-interval-constraints},
as subsumed in \Cref{clm::final-cstrnts-idio}.

\begin{claim}\label{clm::final-cstrnts-idio}
    With $\delta=0$ and $\alpha=\big(\frac{2(4a+1)\ln k}{k}\big)^{1/2}$, $a=5$, and $b=3$ the following constraints suffice: 
\begin{align*}
                \Omega(1)&\le k \le \frac 12(c+2)\ln n \\
         \frac{n}{\ln n} &=\Omega\Big(\frac{(c+2)k^{2.5}\kappa}{(\ln k)^{1/2}}\exp\big(\Omega\big( (k\ln k)^{1/2}\big)\big)\Big).
    \end{align*}
\end{claim}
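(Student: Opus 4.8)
The plan is to substitute the chosen values $\delta=0$, $a=5$, $b=3$, and $\alpha=\bigl(\tfrac{2(4a+1)\ln k}{k}\bigr)^{1/2}$ into every constraint gathered in Claims~\ref{clm::conditions-hosp-prop-one-doctor}, \ref{clm::conditions-doctor-prop-one-hosp}, \ref{clm::hosp-prop-interval-constraints}, and \ref{clm::doc-prop-interval-constraints} (which in turn subsume the hypotheses of Lemmas~\ref{lem::doctors-interval-non-match} and \ref{lem::doctors-high-value-match-prob}, their hospital analogues \Cref{lem::hospitals-interval-non-match} and \Cref{lem::hosp-high-value-match-prob}, and the supporting claims in their proofs), and to verify each constraint in turn; this is essentially bookkeeping. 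The identities that drive the arithmetic are $\alpha^{2}=\tfrac{2(4a+1)\ln k}{k}=\Theta\bigl(\tfrac{\ln k}{k}\bigr)$, $\tfrac{\alpha k}{2(4a+1)}=\bigl(\tfrac{k\ln k}{2(4a+1)}\bigr)^{1/2}=\Theta\bigl((k\ln k)^{1/2}\bigr)$, and $\tfrac1\alpha=\bigl(\tfrac{k}{2(4a+1)\ln k}\bigr)^{1/2}=\Theta\bigl((k/\ln k)^{1/2}\bigr)$; in particular $\exp\bigl(\pm\tfrac{\alpha k}{2(4a+1)}\bigr)=\exp\bigl(\pm\Theta((k\ln k)^{1/2})\bigr)$. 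These are exactly the values chosen in \Cref{thm::eps-nash-idio-large-cap}.

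First I would dispatch the constraints that involve only $a$, $b$ and $k$. The requirements $a\ge 5$ and $b>2$ hold by the choices $a=5$, $b=3$. The requirements $k\ge 4a+1$, $\alpha k\ge 8(4a+1)$ (equivalently $k\ln k\ge 32(4a+1)$), and $\alpha\cdot\exp\bigl(\tfrac{\alpha k}{2(4a+1)}\bigr)\ge 3$ --- whose left side equals $\Theta\bigl((\ln k/k)^{1/2}\bigr)\exp\bigl(\Theta((k\ln k)^{1/2})\bigr)$ and hence grows without bound --- all hold once $k$ is at least a suitable absolute constant, which is what the hypothesis $\Omega(1)\le k$ guarantees. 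Finally, $k\le\bigl[1-(2/b)^{1/2}\bigr]b(c+2)\ln n$ becomes $k\le 3\bigl[1-(2/3)^{1/2}\bigr](c+2)\ln n$, and since $3\bigl[1-(2/3)^{1/2}\bigr]>\tfrac12$ this is implied by the stated bound $k\le\tfrac12(c+2)\ln n$ (this bound also delivers $k=O(\ln n)$, as required).

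Next I would treat the constraints that bring in $n$ but not the interval width $g-f$. I would set $\taubar$ to its smallest admissible value $\tfrac{b\kappa(c+2)\ln n}{(a-\frac12)\alpha n}$; the competing upper requirements are $\taubar\le\tfrac14\alpha$ and $g-f+\taubar\le\tfrac\alpha2$, and $\taubar\le\tfrac14\alpha$ reduces to $\tfrac{n}{\ln n}=\Omega\bigl(\tfrac{(c+2)\kappa k}{\ln k}\bigr)$. The requirement $\alpha n\ge\tfrac{\kappa k}{2a+\frac12}$ becomes $n=\Omega\bigl(\tfrac{\kappa k^{3/2}}{(\ln k)^{1/2}}\bigr)$. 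The bad-event conditions on the number of agents in an interval or union of intervals (lengths $\ell=O(1)$ for cones and the full range, $\ell\ge g-f$ for the short intervals) reduce to $\tfrac{n}{\ln n}=\Omega\bigl(\tfrac{(c+2)\kappa\ell}{\alpha^{2}}\bigr)=\Omega\bigl(\tfrac{(c+2)\kappa k}{\ln k}\bigr)$ together with a lower bound on $g-f$ dominated by the one in the next paragraph. Every one of these is absorbed by the displayed lower bound on $n/\ln n$.

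Finally --- and this is the crux --- I would verify that the window for $g-f$ carved out by Claims~\ref{clm::hosp-prop-interval-constraints} and \ref{clm::doc-prop-interval-constraints} is non-empty and, since $\alpha$ must be an integer multiple of $g-f$, wide enough (by a polynomial factor) to contain a divisor of $\alpha$. The binding pair consists of the largest lower bound on $g-f$, of the form $\Theta\bigl((c+2)\exp(\Theta((k\ln k)^{1/2}))\tfrac{\kappa\ln n}{n}\bigr)$ (from the Chernoff steps bounding the number of not-fully-matched hospitals in the doctor-proposing interval analysis, together with $\kappa$ factors from the edge-removal counts), and the smallest upper bound, of the form $\Theta\bigl(\tfrac{\alpha}{(\kappa k)^{2}}\exp(-\Theta((k\ln k)^{1/2}))\bigr)$ (from the edge-removal claims, which cap how many hospitals or doctors have two or more edges into the interval); the remaining upper bounds $\tfrac\alpha2$, $\tfrac\alpha2-\taubar$, $\tfrac{(4a+1)\alpha}{8\kappa k}$ and $\tfrac{3a\alpha}{2\kappa k}$ are all asymptotically larger and impose nothing new. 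Requiring the lower bound below the upper bound with polynomial slack, and substituting $\alpha=\bigl(\tfrac{2(4a+1)\ln k}{k}\bigr)^{1/2}$, collapses to the displayed lower bound on $n/\ln n$, namely $\tfrac{(c+2)\,k^{5/2}\,\kappa}{(\ln k)^{1/2}}\exp(\Omega((k\ln k)^{1/2}))$. One then picks $g-f$ to be a divisor of $\alpha$ lying near, say, the geometric mean of its two bounds, at which point every condition in all four claims holds; and since $\delta=0$, the conclusions pass to the continuous setting via \Cref{lem::cont-range-bound}. The step I expect to be the real obstacle is exactly this $g-f$ bookkeeping: confirming that, among the several upper bounds and the two lower bounds scattered across the two asymmetric interval lemmas, the edge-removal bound is indeed the tightest upper bound; tracking the exact powers of $k$ and $\kappa$ (including those appearing inside exponents) through the substitution for $\alpha$; and checking that the resulting inequality is precisely the stated lower bound on $n/\ln n$.
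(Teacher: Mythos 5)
Your overall route is the same as the paper's: the proof is pure constraint bookkeeping --- fix $\taubar$ at its minimal admissible value and fold $\taubar\le\frac14\alpha$ into a mild bound on $n/\ln n$, dispatch the conditions involving only $a,b,k$ (including $k\le[1-(2/b)^{1/2}]b(c+2)\ln n$ via $b=3$), and observe that the binding requirement is non-emptiness of the $g-f$ window coming from \Cref{clm::hosp-prop-interval-constraints} and \Cref{clm::doc-prop-interval-constraints}, which is then translated into the displayed lower bound on $n/\ln n$. Your identification of which bounds bind (the Chernoff-driven lower bounds of order $(c+2)\kappa\exp\big(\Theta((k\ln k)^{1/2})\big)\frac{\ln n}{n}$, the edge-removal upper bounds of order $\alpha\exp\big(-\Theta((k\ln k)^{1/2})\big)$ divided by powers of $\kappa k$) agrees with the paper's proof.

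The gap is exactly at the step you yourself flagged as the crux: the $\kappa$-power bookkeeping. Combining the two binding bounds you display --- lower bound $\Theta\big((c+2)\exp(\Theta((k\ln k)^{1/2}))\tfrac{\kappa\ln n}{n}\big)$ and upper bound $\Theta\big(\tfrac{\alpha}{(\kappa k)^{2}}\exp(-\Theta((k\ln k)^{1/2}))\big)$ --- gives $\tfrac{n}{\ln n}=\Omega\big(\tfrac{(c+2)\kappa^{3}k^{5/2}}{(\ln k)^{1/2}}\exp(\Omega((k\ln k)^{1/2}))\big)$, i.e.\ a $\kappa^{3}$ dependence, not the $\kappa^{1}$ you assert the computation ``collapses to''; nothing in your argument removes the extra powers of $\kappa$. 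For context, the paper itself is not internally consistent here: the appendix version of \Cref{clm::final-cstrnts-idio} states and derives a $\kappa^{2}$ dependence (its proof effectively uses the $\kappa^{1}$ upper bound that the body of \Cref{clm::doc-removal-lots-edges} actually supports, rather than the $\kappa^{2}$ form quoted in \Cref{clm::doc-prop-interval-constraints}), while the main-text statement you were given carries $\kappa^{1}$. So while your structure matches the paper's, the final arithmetic as written neither follows from your own bounds nor reproduces the paper's own (already weaker, $\kappa^{2}$) conclusion; to close this you would need to track the $\kappa$ factors through \Cref{clm::doc-removal-lots-edges} and \Cref{qtilde-bound-at-root-doc-to-hosp-int} explicitly and state the power of $\kappa$ that the argument genuinely delivers.
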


\hide{
\begin{claim}\label{clm::idio-constraint-summary}
With $\delta=0$,
    the constraints in Claims~\ref{clm::conditions-hosp-prop-one-doctor}, \ref{clm::conditions-doctor-prop-one-hosp}, \ref{clm::hosp-prop-interval-constraints}, and \ref{clm::doc-prop-interval-constraints} are all satisfied if
    \begin{align}
        a&\ge 5 \notag\\
        b &\ge 3 \notag\\
        3&\le \alpha\cdot\exp\Big(\frac{\alpha k}{}\Big) \label{eqn::alpha-cstrnt}\\
         \frac{2\sqrt{a}}{e}&\le k \le \Big[1 -\Big(\frac 2b\Big)^{1/2}\Big]\cdot b(c+2)\ln n \notag\\
         \frac{n}{\ln n}& \ge\frac{2b(c+2)\kappa\exp\big(\frac{\alpha\kappa k}{2(4a+1)}\big)}{9(a-\frac12)}. \notag
    \end{align}
\end{claim}
\begin{proof}
We set $\taubar = \frac{b(c+2)\kappa \ln n}{(a-\frac12)\alpha n}$,
and $\frac14\alpha \ge \taubar$,
which yields the constraint $\alpha^2\ge \frac{2b(c+2)\kappa}{(a-\frac12)}\cdot\frac{\ln n}{n}$,
which is subsumed by the first bound on $n/\ln n$ on taking $b=3$ and $\alpha=\big([2a+\frac12] \ln k/k\big)^{1/2}$, if $\frac{n}{\ln n} \ge \frac{2b(c+2)\kappa\exp\big(\frac{\alpha\kappa k}{2(4a+1)}\big)}{9(a-\frac12)}$.
Looking at the upper bound on $g-f$ in \Cref{clm::hosp-prop-interval-constraints}, we see that the first term, $\frac{(2a-1)\alpha\cdot\exp\big(-\frac{\alpha k}{}\big)}{(ek)^2\kappa}$, is the smallest,
if $k^2\ge \frac{2(2a-1)\exp\big(-\frac{\alpha k}{}\big)}{e^2\kappa}$; $k\ge 2\sqrt{a}/e$ suffices. Similarly, in \Cref{clm::doc-prop-interval-constraints}, 
the second term in the upper bound on $g-f$, $\frac{(2a-1)^2\alpha\cdot\exp\big(-\frac{\alpha k}{}\big)}{(2a+\frac12)\cdot(e\kappa k)^2}$, is the smaller,
if $k^2 \ge \frac{4(2a-1)^2\exp\big(-\frac{\alpha k}{}\big)}{e^2(4a+1)\kappa^2}$;
again $k\ge 2\sqrt{a}/e$ suffices.
Finally, the condition $n \ge \frac {2\kappa k}{(4a+1)\alpha}$ is subsumed by the condition on $n/\ln n$.
\end{proof}
}

\begin{claim}\label{clm::prob-B}
Let ${\mathcal B}$ denote the bad events excluded in \Cref{clm::final-cstrnts-idio}.
% Claims~\ref{clm::conditions-hosp-prop-one-doctor}, \ref{clm::conditions-doctor-prop-one-hosp}, \ref{clm::hosp-prop-interval-constraints}, \ref{clm::doc-prop-interval-constraints}.
The probability that ${\mathcal B}$ occurs is at most $12n^{-(c+1)}$.
\end{claim}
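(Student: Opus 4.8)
The plan is to establish the claim by a single union bound over all the bad events that make up $\mathcal{B}$. By definition, $\mathcal{B}$ is the union of the bad events excluded in the four constraint claims assembled in \Cref{clm::final-cstrnts-idio} (the two governing the per-doctor and per-hospital double-cut DAs, and the two governing the interval double-cut DAs), and each of those four claims already records an upper bound on the total probability of the bad events it introduces. So the entire argument amounts to adding up four already-stated quantities and simplifying.

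First I would recall those four bounds: the per-doctor hospital-proposing double-cut DAs contribute at most $3n^{-(c+1)}$; the per-hospital doctor-proposing double-cut DAs, together with the guarantee that each doctor has $k$ selected edges of private value at least $1-\taubar$, contribute at most $\big(2n+\tfrac{3n}{\kappa}\big)n^{-(c+2)}$; the interval hospital-proposing double-cut DAs contribute, over the entire induction, at most $\big(\tfrac{6}{g-f}+\tfrac{n}{\kappa}\big)n^{-(c+2)}$; and the interval doctor-proposing double-cut DAs contribute at most $\big(2n+\tfrac{10}{g-f}\big)n^{-(c+2)}$. I would note that every individual event in these families is a single interval-count deviation (or an edge-collision, respectively private-value-count, deviation) that a Chernoff bound bounds by $n^{-(c+2)}$, and that the per-interval families are finite because only $O(1/(g-f))$ disjoint public-rating intervals of width $g-f$ tile $[0,1)$; the cone-population events and the ``number of agents in an interval'' events are already subsumed within these four accountings.

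Then I would substitute the lower bound on the interval width forced by the interval constraint claims, namely $g-f=\Omega\big((c+2)\exp(\Omega((k\ln k)^{1/2}))\tfrac{\ln n}{n}\big)$; in particular $g-f\ge 6/n$ for every $n\ge 2$, so $\tfrac{6}{g-f}\le n$ and $\tfrac{10}{g-f}\le 2n$ (indeed $\le n$ once $n$ is large enough that $(c+2)\ln n\ge 2$). Using also $\kappa\ge 1$, each of the four bounds is at most a small constant times $n^{-(c+1)}$, and summing them yields $\Pr[\mathcal{B}]\le 12\,n^{-(c+1)}$, as required.

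I do not expect any genuine obstacle: this step is bookkeeping. The two things to be careful about are (i) confirming that every family of bad events used anywhere in the proofs of Lemmas~\ref{lem::doctors-interval-non-match}--\ref{lem::doctors-high-value-match-prob} and their hospital-side analogues is accounted for in exactly one of the four constraint claims, so that nothing is double-counted and nothing is omitted, and (ii) checking the elementary inequality $1/(g-f)=O(n)$, which is immediate from the stated lower bound on $g-f$.
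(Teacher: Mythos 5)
Your proposal is correct and follows essentially the same route as the paper: the paper's proof is exactly this bookkeeping step, summing the bad-event probabilities recorded in the four constraint claims and using the lower bound $g-f\ge 6(c+2)\exp\big(\frac{\alpha k}{2(4a+1)}\big)\frac{\ln n}{n}>\frac{6}{n}$ to absorb the $1/(g-f)$ terms, yielding $12n^{-(c+1)}$. The only differences are immaterial constants (you quote $10/(g-f)$ where the appendix states $9/(g-f)$), and your summation is no looser than the paper's own arithmetic at this step.
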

\hide{\begin{proof}
    Summing the probabilities of the bad events in these four claims gives a bound of
    $\big(7n+\frac{15}{g-f}+\frac 3n\kappa\big)n^{(-c+2)}\le 11n^{(-c+1)}$, if $g-f\le n/15$.
    \rnote{There may be double counting here. I have not checked.}
\end{proof}
}

Next, we show that for a non-bottommost student, a student with public rating at least $a\alpha$,
changing strategies will not significantly improve her outcome in expectation.

\begin{claim}\label{lem::typical-student-eps-nash}
Suppose the constraints in \Cref{clm::final-cstrnts-idio} hold.
    Let $d$ be a doctor with public rating at least $a\alpha$, and let $\alpha=\big(\frac{2(4a+1)\ln k}{k}\big)^{1/2}$.
        In expectation, ex-ante, $d$ could improve her utility by at most 
        $O(\alpha)=O((\ln k/k)^{1/2})$, if $k\ln k\ge \big(\frac83\big)^2\cdot\frac{4a+1}{\kappa^2}$.
\end{claim}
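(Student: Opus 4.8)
The plan is to sandwich $d$'s utility between what the recommended strategy delivers and what any deviation can deliver, and to show the two differ by $O(\alpha)$.

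\textbf{Lower bound for the recommended strategy.} Under the recommended strategy $d$ interviews the $k$ in-cone hospitals with the largest private values and lists them in utility order, so her eventual match $h_{\mathrm{act}}$, if any, lies in $C(d)$; hence $r(h_{\mathrm{act}})\ge r(d)-a\alpha$, and, by a Chernoff bound on the number of in-cone hospitals with private value close to $1$ (an event absorbed into ${\mathcal B}$, as in the choice $b=3$ of \Cref{clm::final-cstrnts-idio}), $v(d,h_{\mathrm{act}})\ge 1-\taubar_d$ with $\taubar_d=o(\alpha)$. By \Cref{lem::doctors-high-value-match-prob} with $\psibar_d=\alpha$, combined with \Cref{clm::double-cutDA-provides-bounds} to transfer the guarantee from the double-cut DA for $d$ to the actual run, $d$ fails to receive a match of interview value at least $1-\alpha$ with probability at most $\exp(-k\alpha^2/(4a+1))=k^{-2}$, using $\alpha^2=2(4a+1)\ln k/k$. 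So with probability at least $1-k^{-2}-\Pr[{\mathcal B}]$ doctor $d$'s utility is at least $(r(d)-a\alpha)+(1-\taubar_d)+(1-\alpha)=r(d)+2-(a+1)\alpha-\taubar_d$, and it is nonnegative always; since $k^{-2},\taubar_d$ and $\Pr[{\mathcal B}]$ (at most $12n^{-(c+1)}$ by \Cref{clm::prob-B}) are all $o(\alpha)$, this gives $\E[U_{\mathrm{rec}}]\ge r(d)+2-O(\alpha)$.

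\textbf{Upper bound for an arbitrary deviation.} A unilateral deviation by $d$, while every other non-bottommost doctor keeps the recommended strategy, amounts to choosing a different set of $k$ hospitals to interview (listing them in preference order is optimal given that choice). Let $h^\ast$ be $d$'s match in the resulting doctor-proposing DA, so $U^\ast\le r(h^\ast)+2<3$. If $r(h^\ast)<r(d)+(a+1)\alpha$ then $U^\ast<r(d)+(a+1)\alpha+2$, which already suffices, so it remains to bound $\Pr[r(h^\ast)\ge r(d)+(a+1)\alpha]$. Fix a hospital $h$ with $r(h)\ge r(d)+(a+1)\alpha$ and put $s:=r(h)-r(d)-(a+1)\alpha\ge 0$; since $r(d)\ge a\alpha$ we have $r(h)\ge(2a+1)\alpha>a\alpha$, and $d\notin C(h)$ because $r(h)>r(d)+a\alpha$. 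Consequently the doctor-proposing double-cut DA for $h$ uses only doctors of public rating at least $r(h)-a\alpha>r(d)$ (and none of $d$'s edges), and the top-down inductive assumptions feeding \Cref{lem::hosp-high-value-match-prob} concern only doctors of public rating at least $r(h)+(a+1)\alpha>r(d)$; all of these are untouched by $d$'s deviation. Applying \Cref{lem::hosp-high-value-match-prob} with $\psibar_h=\alpha+s\le 1$, with probability at least $1-\exp(-\Omega(\kappa k\alpha(\alpha+s)/(4a+1)))$ conditioned on ${\mathcal B}$ not occurring, $h$ obtains $\kappa$ matches, each of public rating at least $r(h)-a\alpha$ and interview value at least $1-\psibar_h$, hence of $h$-utility at least $r(h)-a\alpha+1-(\alpha+s)=r(d)+1>r(d)+\iota(h,d)$, the last quantity being $d$'s $h$-utility. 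By \Cref{clm::double-cutDA-provides-bounds} this lower bound on $h$'s outcome persists in the full deviated run, so $h$ strictly prefers each of its $\kappa$ matches to $d$ and cannot be matched to $d$.

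\textbf{Combining, and the main obstacle.} Since the per-hospital failure bound $\exp(-\Omega(\kappa k\alpha(\alpha+s)/(4a+1)))$ is decreasing in $s$, and $d$ interviews at most $k$ hospitals, a union bound over $d$'s out-of-cone interviews gives, for every deviation, $\Pr[r(h^\ast)\ge r(d)+(a+1)\alpha]\le\Pr[{\mathcal B}]+k\exp(-\Omega(\kappa k\alpha^2/(4a+1)))=\Pr[{\mathcal B}]+k^{1-\Omega(\kappa)}$, using $\alpha^2=2(4a+1)\ln k/k$; the hypothesis $k\ln k\ge(8/3)^2(4a+1)/\kappa^2$ makes this $O(\alpha)$ (for $\kappa\ge 2$ already $k^{1-\Omega(\kappa)}\le k^{-1/2}\le\alpha$, and for $\kappa=1$ one invokes the slightly sharper concentration already present in the proof of \Cref{lem::hosp-high-value-match-prob}). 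As $U^\ast<3$ always, this yields $\E[U^\ast]\le r(d)+(a+1)\alpha+2+3\big(\Pr[{\mathcal B}]+k^{1-\Omega(\kappa)}\big)=r(d)+2+O(\alpha)$; subtracting the lower bound for $\E[U_{\mathrm{rec}}]$ gives $\E[U^\ast]-\E[U_{\mathrm{rec}}]=O(\alpha)=O((\ln k/k)^{1/2})$. The delicate step is exactly the out-of-cone case just treated: the payoff of the argument is that such a hospital's double-cut DA (and the inductive bounds on unmatched agents above $h$) are invariant under $d$'s unilateral deviation, so \Cref{lem::hosp-high-value-match-prob} can still be applied with $\psibar_h$ scaled to $h$'s height above $d$'s cone, after which the fact that $d$ has only $k$ proposals makes a union bound suffice. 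No separate analysis of in-cone deviations is needed: $C(d)$ spans a public-rating window of width $2a\alpha=O(\alpha)$, so any strategy keeping $d$ inside her cone yields utility within $O(\alpha)$ of the recommended one and falls under the case $r(h^\ast)<r(d)+(a+1)\alpha$.
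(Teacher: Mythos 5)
Your proposal is correct in its essentials but takes a genuinely different route from the paper. The paper bounds the possible gain from a deviation component by component: a negligible private-value gain, a gain from avoiding low interview values bounded via \Cref{lem::doctors-high-value-match-prob} with $\psibar_d=\alpha$, an in-cone public-rating gain of at most $2a\alpha$, and an above-cone gain bounded by setting $\psibar_h=x$ in \Cref{lem::hosp-high-value-match-prob} and maximizing $x\cdot\exp\big(-\tfrac{3\kappa k\alpha x}{8(4a+1)}\big)$ over $x\ge 0$ --- this maximization is exactly where the hypothesis $k\ln k\ge\big(\tfrac83\big)^2\tfrac{4a+1}{\kappa^2}$ is used. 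You instead sandwich: the recommended strategy gives expected utility at least $r(d)+2-O(\alpha)$ (same ingredients: \Cref{lem::doctors-high-value-match-prob} with $\psibar_d=\alpha$, the $\taubar$ bound, $\Pr[\mathcal B]$), while any deviation gives at most $r(d)+2+O(\alpha)$, the out-of-cone case being handled by showing that each interviewed hospital $h$ at height $s$ above $r(d)+(a+1)\alpha$ is, except with probability $\exp\big(-\Omega(\kappa k\alpha(\alpha+s))\big)$, fully matched with doctors of $h$-utility at least $r(d)+1>r(d)+\iota(h,d)$. Your deviation-invariance observation (that $d\notin C(h)$ and the inductive assumptions only involve agents above $r(h)+(a+1)\alpha$, so \Cref{lem::hosp-high-value-match-prob} and \Cref{clm::double-cutDA-provides-bounds} apply unchanged in the deviated world) is sound, and your explicit union bound over the at most $k$ out-of-cone interviews is in fact more careful about multiplicity than the paper's sketch, which argues as if for a single alternative proposal.

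The one real weak point is $\kappa=1$. Using the stated bound of \Cref{lem::hosp-high-value-match-prob} with $\psibar_h=\alpha+s$ gives a per-hospital failure probability of at most $k^{-3\kappa/4}$, so your union bound yields $k^{1-3\kappa/4}$, which is vacuous ($k^{1/4}$) when $\kappa=1$; you flag this and appeal to ``sharper concentration already present in the proof.'' Such a bound is indeed available --- for $\kappa=1$ the probability of receiving no qualifying proposal is at most $\exp\big(-2k(\alpha-\taubar)\psibar_h/(4a+1)\big)\le k^{-3}$ at $\psibar_h=\alpha$, which restores the union bound --- but as written your argument leans on the lemma statement, which does not suffice, so you would need to either strengthen the lemma for $\kappa=1$ or carry out that direct estimate inside your proof. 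Relatedly, in your route the hypothesis $k\ln k\ge\big(\tfrac83\big)^2\tfrac{4a+1}{\kappa^2}$ does essentially no work for $\kappa\ge 2$ and is only loosely invoked at $\kappa=1$, whereas in the paper it is precisely the condition that makes the $x e^{-cx}$ maximization step (the paper's substitute for your union bound) go through.
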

\begin{proof} (Sketch.)~
Basically, there are two ways in which a doctor could gain expected $O(\alpha)$ utility in principle (but it is not clear such changes in strategy would lead to actual gains).
First, by changing a bid in cone, this might improve the public rating component of the match utility
by up to $2a\alpha$. Second, by changing bids so as to avoid matches with low interview values,
she might improve the match utility when this happens by $O(1)$; but as \Cref{lem::hosp-high-value-match-prob} shows, this is a somewhat uncommon event, with probability at most $\alpha$, yielding an expected gain of $O(\alpha)$. The other gains, from better results when ${\mathcal B}$ occurs and improvements in the private value, are small in comparison. The full proof can be found in the appendix.
\end{proof}

This concludes the proof of \Cref{thm::eps-nash-idio-large-cap}.

\subsection{The Student-School Setting}\label{sec::student-school}

The analysis for this result is quite similar to that for the residency setting.
We give an overview of what changes here, and provide a full explanation in the appendix.

Lemmas \ref{lem::hospitals-interval-non-match} and \ref{lem::hosp-high-value-match-prob} are unchanged, except that there is no hospital interview value to consider in \Cref{lem::hosp-high-value-match-prob}
(which more or less amounts to setting $\psibar_h=0$).
The statement in \Cref{lem::doctors-high-value-match-prob}:
\begin{quote}
    Then the probability that $d$ fails to receive a match for which it has interview value $1-\psibar_d$ or larger is at most $\exp\big(-\frac{k\alpha\psibar_d}{4a+1}\big)$
\end{quote}
changes to
\begin{quote}
    Then the probability that $d$ fails to receive a match for which it has interview value $1-\psibar_d$ or larger is at most $\exp\big(-\frac{4k\psibar_d}{4a+1}\big)$.
\end{quote}
This is because we can improve the corresponding bound in \Cref{lem::fail-prob-hospital-prop}, the reason being again
that there is no hospital interview value to consider, so instead of being limited
to interview values in the range $[1-\alpha,1)$, which happens with probability $\alpha$,
we have an event with probability 1.
This is the mismatch probability reported in \Cref{thm::eps-nash-no-idio-large-cap}.
% We could also improve the bound in \Cref{lem::doctors-interval-non-match} similarly, but this does not materially change the overall result, so we chose to leave it unchanged.

Now, in the analog of \Cref{lem::typical-student-eps-nash}, namely \Cref{clm::final-cstrnts-noid},
    we set $\alpha=\frac{4(4a+1)\ln k}{k}$ instead of $\Big(\frac{4(4a+1)\ln k}{k}\Big)^{1/2}$, which yields the $\eps$ value stated in \Cref{thm::eps-nash-no-idio-large-cap}.

We note that in this setting the stable matching is unique (see \Cref{clm::unique-sm}).

\section{Extensions/Open Problems}\label{sec::extensions}

\paragraph{Unequal Numbers of Doctors and Hospital Places}

In practice, it is likely that these two numbers will differ somewhat. Indeed, in the NRMP data
for 2024 \cite{NRMP24}, there were roughly 40,000 positions and 50,000 applicants (including those from foreign medical schools). Our results are essentially unchanged. All that happens is that if we align
the doctors and hospitals by public rating order, then the results for  the non-bottommost agents
continue to hold as before, but it is possible for this group to include all the agents on the short side.
To make this precise, suppose there are $n+m$ doctors and $n/\kappa$ hospitals, with public rating
 ranges $[0,1+\frac mn)$ and $[\frac nm,1+\frac nm)$ (we can always shift the range; it is defined this way for convenience).
 Then, for the doctors, the bottommost agents are those with public rating in the interval
 $[0,a\alpha+\frac nm)$, and for the hospitals, it is $[\frac nm, \max\{a\alpha,\frac nm\})$;
 note that if $\frac nm\ge a\alpha$, there are no bottommost hospitals.
 If there are more hospitals than doctors, the definition is reversed.
 This contrasts with the outcome in random markets \cite{AKL17} where all agents on the short side benefit substantially.

\paragraph{Non Uniform Hospital Capacities}

 We believe the current analysis can accommodate this with some modest changes. The most important is that doctors would need to select hospitals in proportion to their capacities. One way to do this would be for each doctor to order her private values as follows: suppose she has private value $1-\delta_i$ for hospital $h_i$ with capacity $\kappa_i$. Then she would choose the $k$ hospitals that maximize the values $1 - \frac{\delta_i}{\kappa_i}$.
 Now, a hospital's probability of being fully matched will be a function of its capacity.
 We believe the results stated in Theorems \ref{thm::eps-nash-no-idio-large-cap} and \ref{thm::eps-nash-idio-large-cap} continue to hold, where for hospital $i$ we replace the term $\kappa$ by its capacity $\kappa_i$.

 \paragraph{More General Utility Functions}
 In general, the utilities could be functions of the public ratings, private and interview values. We believe the results would be similar. % with the exponent of the match rate depending on the degree of non-uniformity. 
  One part of our analysis needs the expected number and the actual number of agents in a public rating interval to be similar. With bounded rates of change of the utilities w.r.t.\ the public ratings, this will continue to hold. And for the tree stochastic dominance, we believe it suffices to have rates of change for the doctors' utilities w.r.t.\ each of the private and interview values being within constant factors of each other, and similarly for the hospitals.
  We note that this was the generalization considered in \cite{AC23}, who showed that their results continued to hold up to constant factors. We believe the same holds here.

In the appendix, we consider one instance of this generalization, in which the weight of the interview values is reduced. The reason to consider this case is that 
in our main analysis, the interview values largely determine the preference orderings,
though not completely; there is still a modest bias in favor of higher public ratings.
So it seems natural to consider what happens if we reduce the weight of the interview values.

\hide{
We obtain improved bounds in Claims \ref{lem::fail-prob-hospital-prop} and \ref{lem::fail-prob-doctor-prop}, which concern the non-match probability for doctors, and the not-fully matched probability for hospitals, resp.
We also obtain better bounds in Claims \ref{qtilde-bound-at-root-hosp-to-doc-int} and \ref{qtilde-bound-at-root-doc-to-hosp-int}; they concern, respectively, the collective mismatch rate for doctors and hospitals in an interval $I$.
This allows us to reduce the value of $\alpha$
for the $\eps$-Nash analysis in \Cref{lem::typical-student-eps-nash}, leading to smaller values of $\epsilon$. 
We begin by showing the improvement to the bound in \Cref{lem::fail-prob-hospital-prop}.

\begin{claim}\label{lem::fail-prob-hospital-prop-ext}
    Let $d$ be a doctor with public rating at least $a\alpha$.
    Suppose $s\ge \frac{\alpha n}{\kappa}$ and that
    there are at most $\frac12(4a+1)\frac{\alpha n}{\kappa}$ hospitals in $C(d)$. Then the probability that $d$ receives no proposal for which $d$ has interview value at least $1-\psibar_d$ is at most $\exp\big(-k\beta_h\psibar_d/(4a+1)\big)$, where $\beta_h = \min\{1, \frac{\alpha}{\nu_h}\}$.
\end{claim}
\begin{proof}
 The change to the proof of \Cref{lem::fail-prob-hospital-prop} is that now we only require the
 hospitals to have interview value at least $1- \beta_h$. The reason we use this bound is that
 the reduction to the hospital's utility when using this bound is at most $\alpha$, which is what
 is needed to ensure that all the proposals in the double-cut DA have utility above the cut point.
\end{proof}

Similarly, in \Cref{lem::hosp-high-value-match-prob}, the not-fully matched probability improves to
$\exp\big(-\frac{3\kappa k\beta_d\psibar_h}{2(4a+1)}\big)$, where $\beta_d=\min\{1,\frac{\alpha}{\nu_d}\}$. The bounds in Claims \ref{qtilde-bound-at-root-hosp-to-doc-int} and \ref{qtilde-bound-at-root-doc-to-hosp-int} improve to at most $3|I|\exp\big(-\frac{\beta_h k}{2(4a+1)}\big)$ unmatched doctors,
and at most $3|I|\exp\big(-\frac{\beta_d k}{2(4a+1)}\big)$ unmatched hospitals.

Next, we state the resulting improvements to the $\eps$-Nash result. Basically, $\alpha$ can shrink almost linearly in $(\nu_h\nu_d)^{1/2}$ (offset by a logarithmic factor) until it reaches a minimum value $\alpha_{\min}$, and $\eps=O(\alpha)$.

\begin{theorem}\label{clm::improved-eps}
% Let $\alpha_l = \big(\frac{8b(c+2)\kappa}{2a-1}\big)^{1/2}\cdot \big(\frac{\ln n}{n}\big)^{1/2}$, the minimum allowed value for $\alpha$, and let $\alphabar=\big(\frac{4(4a+1)\ln k}{k}\big)^{1/2}$, the value of $\alpha$ when $\nu_d=\nu_h=1$.
Suppose that the hospitals list the doctors they interview in preference order and all but the bottommost $a\alpha$ doctors follow the recommended strategy.
Then, all but the bottommost $a\alpha$ doctors achieve an $\eps$-Nash equilibrium,
with the following bounds, depending on the values of $\nu_d$ and $\nu_h$.
Let $\alpha_{\min}= 3\exp\big(-\frac{k}{2(4a+1)}\big)$.

Case 1: $\alpha_{\min}\le \alpha\le \nu_h\le \nu_d\le 1$. \\
Let $\sigma=\min\big\{1,\frac{3\kappa}{8}\big\}$.
With our choice for $\alpha$ given shortly, this amounts to 
$\nu_h\ge \frac {v_d(4a+1)}{k\sigma} \ln \Big(\frac{ek\sigma}{\nu_h\nu_d(4a+1)}\Big)^{1/2}$.
Let $x=\frac{\nu_h\nu_d(4a+1)}{k\sigma}$; then we set $\alpha = x^{1/2}\cdot\big[\ln (ex)^{1/2} \big]$.
Then $\eps= O(\alpha)$,
and the mismatch/not-fully matched rates for the doctors and hospitals are
$\exp\big(-\frac{\alpha k}{(4a+1)\nu_h}\big)$
and $\exp\big(-\frac{3\kappa k\alpha}{8(4a+1)\nu_d}\big)$, respectively.

Case 2: $\max\{\alpha_{\min},\nu_h\} < \alpha\le \nu_d\le 1$.\\
Then $\eps= O(\alpha)$, with
$\alpha \ge \frac{2(4a+1)\nu d}{k}\cdot \ln\big(\frac{3ek}{4(4a+1)\nu d}\big)$ and $3k\ge 4(4a+1)\nu_d$;
the mismatch/not-fully matched rates for the doctors and hospitals are
$\exp\big(-\frac{k}{4a+1}\big)$ and $\exp\big(-\frac{3\kappa k\alpha}{8(4a+1)\nu_d}\big)$.

Case 3: $\nu_h\le \nu_d\le 3\alpha$.\\
We set $\alpha = \alpha_{\min}$.
Then $\eps= O(\alpha_{\min})$ and
the mismatch/not-fully matched rates for the doctors and hospitals are
$\exp(-k/4(4a+1))=\frac 13\alpha_{\min}$ and $\exp\big(-\frac{3\kappa k}{8(4a+1)}\big)$.
\end{theorem}
}

\hide{
\rnote{Other things to consider: some uncertainty about one's public rating.
I think all one needs to do is to expand the cone to cover the uncertainty, but I am not totally clear as to the effect on the bounds.}
}

%\section{Open Problems}\label{sec::open-prob}

\paragraph{The Request Interview Setting}

It would be interesting to extend this analysis to a setting in which the doctors request interviews but the hospitals grant only some of them.
For this setting, we provide the hospitals with private values, which they use to decide which interviews to grant.
We conjecture that when the hospitals each receive $\Omega(\ln n)$ proposals with failure probability
$n^{-(c+2)}$, then the double-cut method can be extended to this setting, yielding results similar to those for the setting we analyzed in \Cref{sec::proof-idio-thm}.
Based on the NRMP data, this seems a reasonable assumption.

We suggest the following parameters for our protocol.
Each doctor sends $k^2$ requests for interviews to the hospitals, choosing the in-cone hospitals with the $k^2$ highest private values.
From these requests, each hospital then grants $\kappa k^{3/2}$ interviews, again choosing the doctors with the highest private values. As each hospital receives an expected $\kappa k^2$ requests,
if $\kappa k^2=\Omega(\ln n)$, then with high probability each hospital receives at least
$\frac 12 \kappa k^2$ requests, and so is able to grant $\kappa k^{3/2}$ interviews, if $k\ge 4$.
Furthermore, the hospital's private value for each of these interviews will be at least
$1-2/k^{1/2}$ in expectation, and so have a range smaller than the cone size, as we will be setting
$\alpha = \Theta\big(\sqrt{\ln k/k}\big)$ as before.
Finally, when listing preferences, the hospitals will limit themselves to the doctors with the top $\kappa k$ interview values; again, in expectation, these interview values are at least $1-2/k^{1/2}$.
It is not critical that the hospitals reduce their proposals in this way: it just seems natural given the form of their utility functions.

The challenging part in the analysis is to redefine $\qtilde$ for this setting.
Let us consider the definition of $\qtilde$ in \Cref{sec::proof-idio-thm}.
The difficulty is that when a hospital $h$ determines that it has an interview 
value enabling it to propose to $d$, we need to decide whether $d$ requested an interview (which we can handle as before), and also whether $h$ selected $d$ for an interview.
This depends on the number of proposals $h$ received.
But with the assumption that this is $\Omega(\ln n)$ proposals, with high probability
we know that $h$ received between $\frac12 \kappa k^2$ and $2\kappa k^2$ proposals,
and therefore, if we underestimate the selection probability as being $1/2k^{1/2}$,
we are at worst underestimating it by a factor of 4, which changes the constants in our
analysis, but allows it to go forward with the same structure.

The changes to the proofs of the other lemmas are similar. As we obtain the same bounds up to constant factors in the exponent, the $\eps$-Nash analysis remains the same but with $\eps$ increased by constant factors (not in the exponent).

\hide{
\subsection{Multiple Public Ratings}

In many settings, there is more than one item of public information and different agents weigh them differently.
We propose a simple stylized extension of the current residency setting to address this.
Suppose each doctor has two public ratings, $r(d)$ and $s(d)$, each a uniform random draw from $[0,1)$.
Each hospital has a preference $\lambda(h)$, which it uses to obtain its utility: $\lambda(h) \cdot r(d) + (1-\lambda(h))\cdot s(d)$. Furthermore, $\lambda(h)$ is also public information.
As a first step, we propose considering the setting without interviews.
Suppose the doctor utilities are unchanged, namely $d$ has utility $r(h)+v(d,h)$ for $h$, as before.
The advantage on the doctor's side is that they know how each hospital will assess them.
The question to resolve is how the doctors should target their applications given the relative values of $r(d)$ and $s(d)$.

\rnote{If I can recall it, it may be worth indicating our idea.}

If we can solve the problem for two types are information, it seems plausible that the solution would extend to more types. Furthermore, we believe that adding interviews would lead to results similar to those in the present paper.
}

\paragraph{The Outcome for the Bottommost Agents} One important unresolved issue is what is the outcome for the bottommost doctors and what strategy they should follow.
We suspect they should bias their proposals to be directed more toward the bottom of their cones, but even then it seems we will need different methods to analyze this.

\paragraph{The Range for Utility Values}
Our analysis implicitly assumes that all matches are preferred over non-matching. This seems unlikely in practice. To address this, one could shift the range of the utilities so that the likely outcome for the bottommost agents on one or both sides is 0 or even negative. How this might affect the strategies of the bottommost agents merits further investigation.

 \section{Experiments}\label{experiments}

We simulated the residency settings described in the analysis, namely the setting in which the doctors select the interviews and the setting in which the doctors request interviews.
We carried out simulations for $n$ doctors with $n=\text{2,000}$, for $\kappa =1$ and 5,
and for $k=5$ and 12, and a cone size of 0.3.  In all the graphs, the gray whiskers are showing the 10/90\% boundaries. The $x$-axis in our graphs is indexed by the ranks based on public ratings; in addition, we have grouped the agents in groups of 10 consecutive agents to make the trends clearer.
By loss we mean the average reduction in utility compared to the benchmark: for doctor $d$ this is $r(d)+2$ and for hospital $h$ it is $r(h)+1$.

We start by showing the matching rates, averaged over 100 runs, for $n=\text{2,000}$, $k=5$ and $\kappa=5$, for both settings; for the hospitals, this is the non-full matching rate, so we expect it to be higher than the doctor rate.  When the doctors select interviews, their non-match rate increases from 0\% to 4\% as their public rating drops; when they request the interviews it varies between 2\% and 5\%, for reasons that are not clear to us.  In contrast, the hospitals have a very high match rate at first, consistent with our analysis; the unavoidable consequence is that they have a lower match rate at low public ratings, which is consistent with the reported observation that low-rank hospitals have significant difficulty finding matches.

\hide{
We see that increasing $\kappa$ from 1 to 5 brings improved matching rates particularly for the hospitals but also for the doctors
(our analysis predicted this for the hospitals). However, the losses for the topmost doctors and hospitals increase with the larger capacity. We believe this is due to the fact that the topmost hospitals receive a constant factor fewer proposals. They also increase for the bottommost hospitals; this is an unavoidable consequence of the greater improvement in the match rate for the non-bottommost hospitals, which has the consequence of there being too few doctors at hand to match with the bottommost hospitals.
}

\begin{figure}[h]
    \begin{minipage}{0.5\textwidth}
        \includegraphics[width=\textwidth]{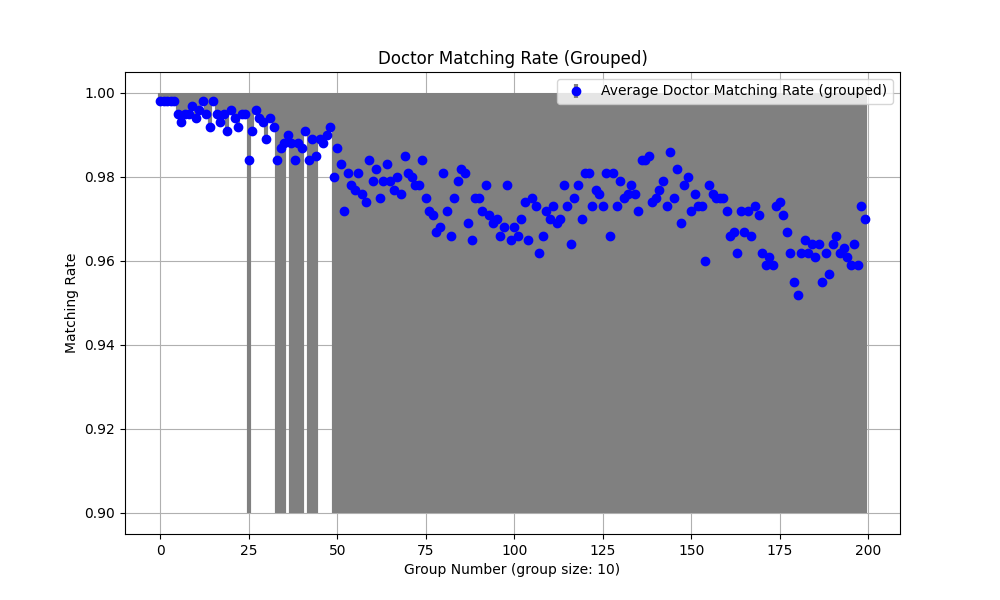}
    \end{minipage}%
    \begin{minipage}{0.5\textwidth}
        \includegraphics[width=\textwidth]{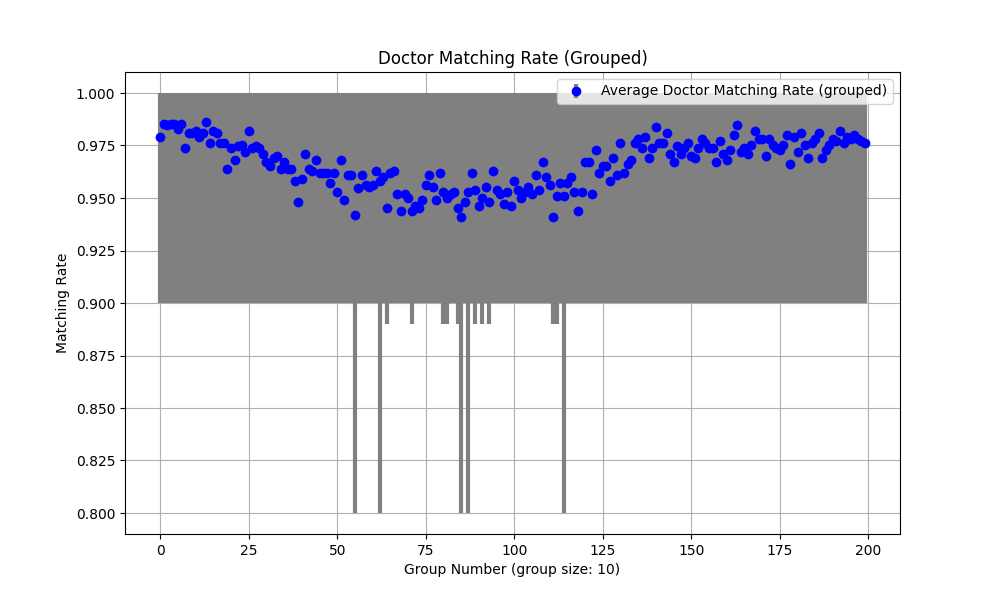}
    \end{minipage}
    \caption{Doctor matching rate, $n=\text{2,000}$, $k=5$ and $\kappa=5$, doctor selects interviews (on the left), doctor requests  (on the right). }
\end{figure}

\begin{figure}[h]
    \begin{minipage}{0.5\textwidth}
        \includegraphics[width=\textwidth]{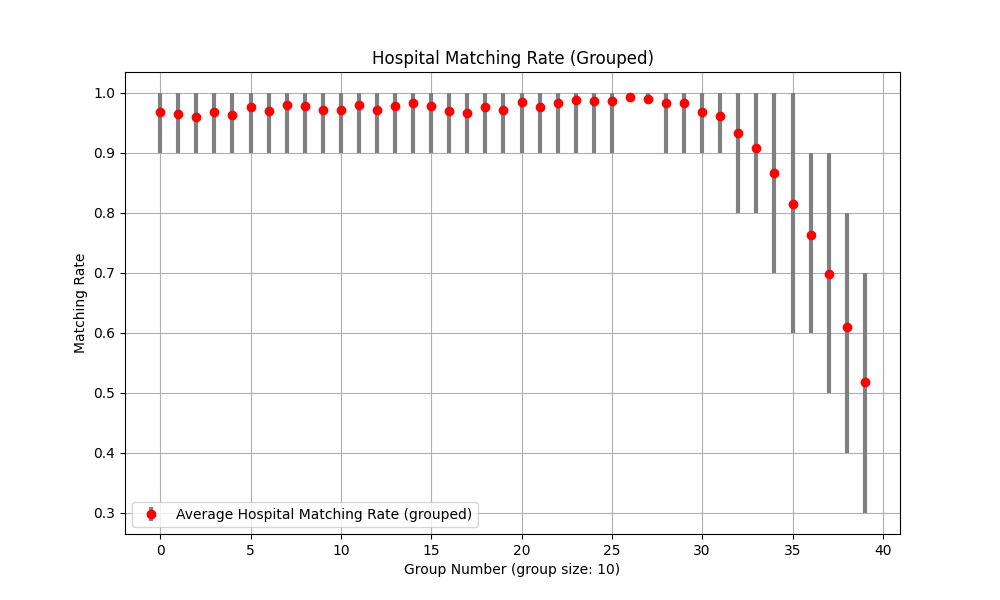}
    \end{minipage}%
    \begin{minipage}{0.5\textwidth}
        \includegraphics[width=\textwidth]{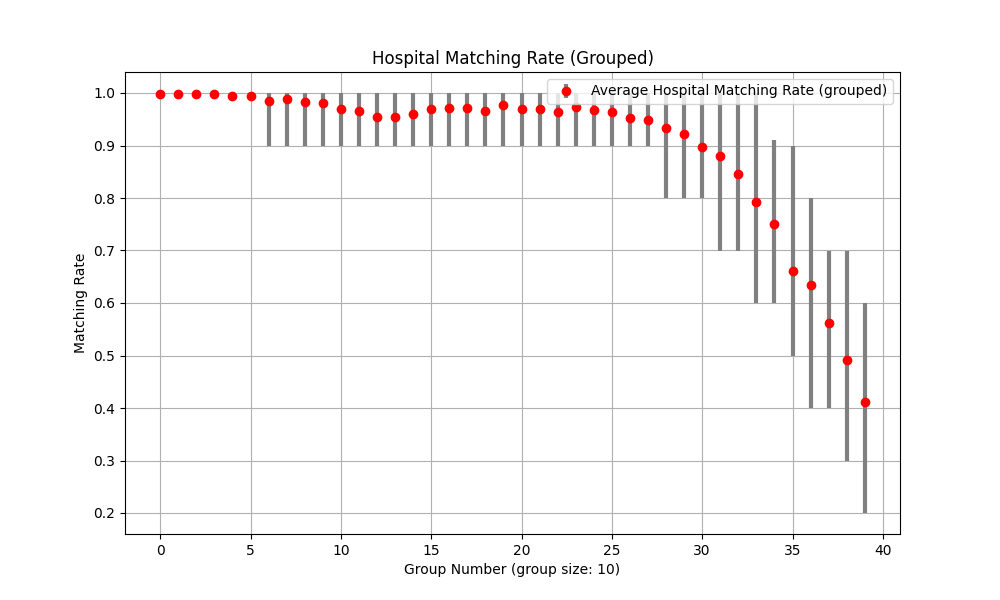}
    \end{minipage}
    \caption{Hospital matching rate, $n=\text{2,000}$, $k=5$ and $\kappa=5$, doctor selects interviews (on the left), doctor requests (on the right).}
\end{figure}

\begin{figure}[h]
    \begin{minipage}{0.5\textwidth}
        \includegraphics[width=\textwidth]{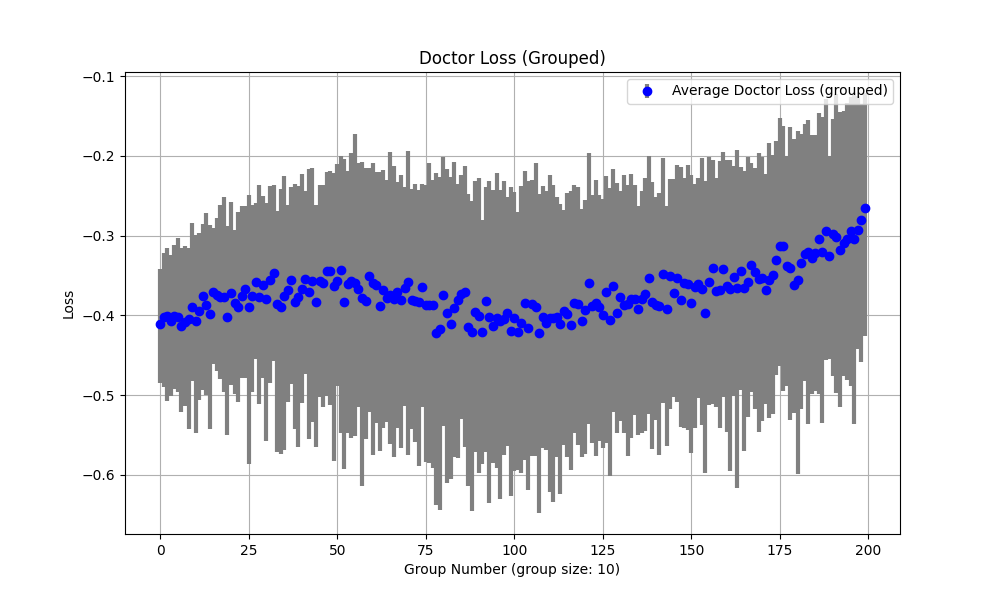}
    \end{minipage}%
    \begin{minipage}{0.5\textwidth}
        \includegraphics[width=\textwidth]{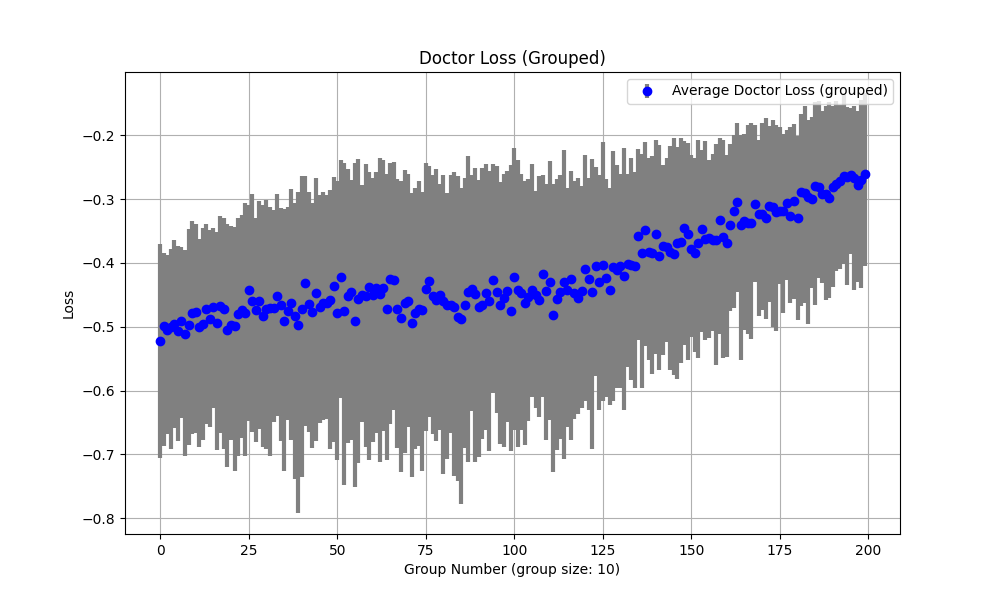}
    \end{minipage}
    \caption{Doctor loss, $n=\text{2,000}$, $k=5$ and $\kappa=5$, doctor selects interviews (on the left), doctor requests (on the right).}
\end{figure}

We then show the expected losses relative to the benchmark.
The trend lines are potentially misleading, for at the top end, agents are necessarily going to lose w.r.t.\ their public rating, and gain at the bottom end. The hospitals receive greater losses at the bottom end due to their reduced match rate. In addition, in the setting with doctors requesting interviews, the hospitals will have losses due to both private values and interview values being less than 1, while in the doctors select interviews setting only interview value losses are present.
So in fact, the improvement in the first setting is greater than suggested by the graph.

\begin{figure}[h]
    \begin{minipage}{0.5\textwidth}
        \includegraphics[width=\textwidth]{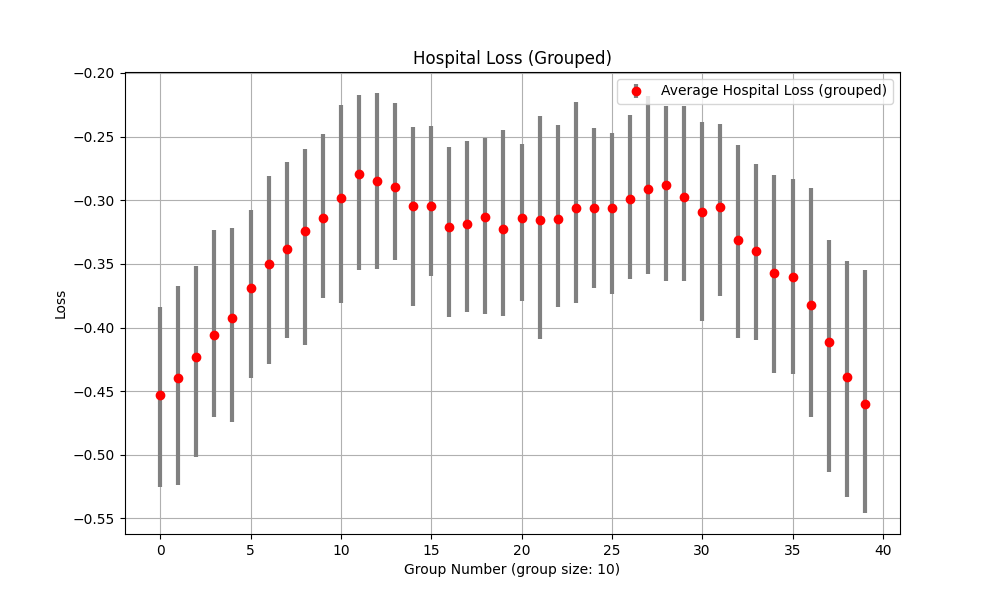}
    \end{minipage}%
    \begin{minipage}{0.5\textwidth}
        \includegraphics[width=\textwidth]{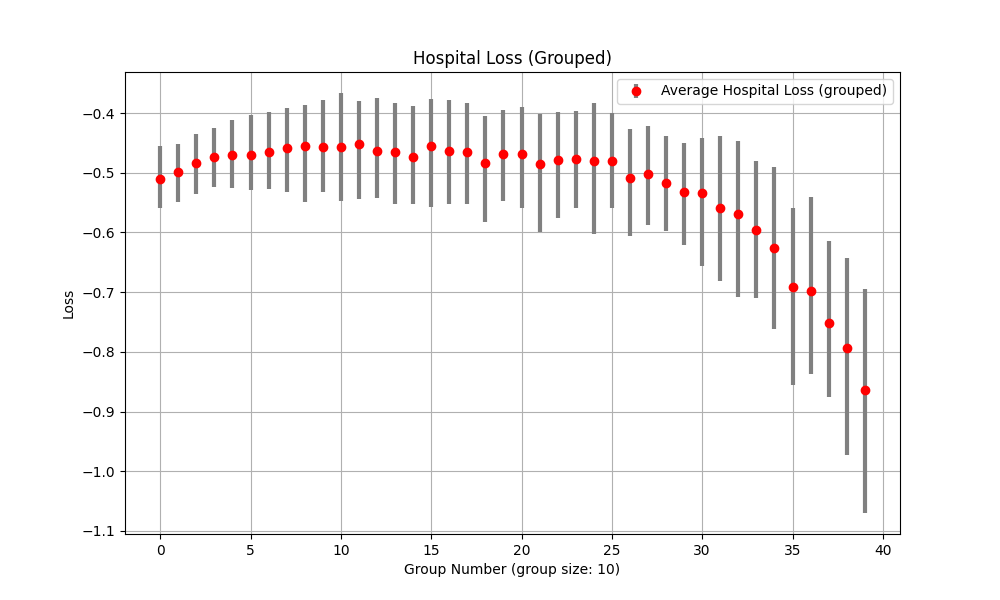}
    \end{minipage}
    \caption{Hospital loss, $n=\text{2,000}$, $k=5$ and $\kappa=5$, doctor selects interviews (on the left), doctor requests (on the right).}
\end{figure}

More results and discussion can be found in the appendix.

\section{Conclusion}\label{sec::conc}

We have considerably extended the scope of the tools introduced by Agarwal and Cole \cite{AC23}, namely the double-cut DA and the tree stochastic dominance.
We believe we have shown these tools are quite versatile and enable the analysis of stable matching scenarios that were previously out of reach.
In particular, we have analyzed a distributed interview matching scenario, and indicated how this analysis might be extended to a setting in which doctors request interviews only some of which are granted, which seems a plausible modeling of the real-world setting.
Our experimental results also indicate that the theoretical asymptotic bounds hold for realistic values of $n$.

%\printbibliography

\newpage
\appendix
\section{The Appendix}

\begin{algorithm}[h]
\SetAlgoNoLine
Initially, all the doctors and hospitals are unmatched.\\
\While{some hospital $h$ with a non-empty preference list is not fully matched}{ 
   let $d$ be the first doctor on her preference list\;
   \If{$d$ is not matched}
      {tentatively match $d$ to $h$.}
      {\eIf{$d$ is currently matched to $h'$, and she prefers $h$ to $h'$}
      {remove $d$ from $h'$'s matches, remove $d$ from $h'$'s preference list, and tentatively match $d$ to $h$.}
      {remove $d$ from $h$'s preference list.}
    }
 }
\caption{\textsf{Hospital Proposing Many-to-One Deferred Acceptance (DA) Algorithm}}
\label{alg:DA-hosp}
\end{algorithm}

For the reader's convenience, we repeat the statements of all claims we prove here.

\subsection{Proofs and other Deferred
Material from \Cref{sec::proof-idio-thm}}

\subsubsection{The Discretization}

\begin{lemma}\label{lem::hospitals-interval-non-match}
Let $\delta>0$ be an arbitrarily small constant.
Let $I=[f,g)$ be a public rating interval with $g\ge a\alpha$.
Suppose we run the (to be specified) doctor-proposing double-cut DA for interval $I$ using the discretized utilities from \Cref{clm::discrete-dist-idio}.
Suppose the fraction of unmatched doctors with public rating in $[g+(a+1)\alpha,1)$ is at most 
$3\cdot\exp\big(-\alpha k/2(4a+1)\big)$.
Also suppose that event ${\mathcal B}$ does not occur.
Then, with failure probability at most $n^{-(c+2)}$,
the fraction of the hospitals in $I$ that fail to match
is less than $3\cdot \exp\big(-\frac{\alpha k}{2(4a+1)}\big)$,
if $a\ge 5$, $\alpha\cdot\exp\big(\frac{\alpha k}{2(4a+1)}\big)\ge 3$, and the constraints from \Cref{clm::doc-prop-interval-constraints} hold.
\end{lemma}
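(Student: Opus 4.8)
The plan is to run the doctor-proposing double-cut DA for the interval $I=[f,g)$ of hospitals and then to mimic, with the roles of the two sides exchanged, the argument behind \Cref{lem::doctors-interval-non-match}, while absorbing the two features that make the doctor-proposing case genuinely different: each doctor has a fixed budget of exactly $k$ selected edges, and distinct hospitals of $I$ receive differing numbers of those edges. First I would fix the participating agents: the hospitals with public rating at least $g-\alpha$, and the doctors whose cone can reach $I$ (together with all those above, up to rating $1$), with every doctor stopping as soon as a proposal would give her utility below $g+2-\alpha$ (so she never proposes to a hospital rated below $g-\alpha$; here $2$ is the span of a doctor's utility given the target's rating). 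The doctors in $L\triangleq[g-a\alpha,f+a\alpha)$ are in-cone for every hospital of $I$, which needs $g-f<2a\alpha$, a constraint guaranteed by \Cref{clm::doc-prop-interval-constraints}. Using a high-probability lower bound $1-\taubar$ on the private values carried by each doctor's $k$ selected edges (valid when ${\mathcal B}$ does not occur), and a Chernoff/stochastic-dominance computation of the same shape as the one behind \Cref{lem::doctors-interval-non-match}, I would discard every doctor of $L$ that has two or more selected edges into $I$, together with the hospitals incident to those edges; call the surviving doctor set $L'$, so each doctor of $L'$ has at most one selected edge into $I$. Finally, to force near-uniformity over $I$, I would terminate a doctor's run right before any proposal into $I$ whose utility falls outside the narrow window $[g+2-\alpha,f+2)$; after this, every counted proposal into $I$ goes to a distinct hospital whose identity is essentially uniform over the part of $I$ for which that utility level is feasible, and — thanks precisely to the $1-\taubar$ bound — the per-edge probability of landing in the window is $\Theta(\alpha)$ rather than $\Theta(\alpha^2)$.

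Next I would treat $L'$ as a single super-proposer issuing $k|L'|$ proposals and build the tree $T$ of possible runs of this double-cut DA: a node is either a forced single-child move (a proposal by a doctor outside $L'$, or a ``no-action'' step) or a branching move for a doctor $d\in L'$, with children $\wrl(h)$ and $\wim(h)$ for each hospital $h$ still feasible at $d$'s current utility (real versus imaginary — the imaginary option being needed because $d$'s interview value is drawn before it is known which hospital, if any, the proposal hits) and a child $\wna$. As in \Cref{lem::doctors-high-value-match-prob} I would let $q(v)$ be the probability that the continuation from $v$ produces at most $r-x$ more real proposals to distinct hospitals of $I$ (with $x$ the number already received and $y$ the number of imaginary proposals into $I$ from doctors that have not yet really proposed there), and define the surrogate $\qtilde(v)$ in which the $s-x-y$ doctors with the smallest products $u(d,v)\cdot k(d,v)$ of residual utility and remaining budget attempt independent proposals, each realized as real with probability $\tau=1-\bigl(1-\tfrac{1}{(2a+\frac12)\alpha n/\kappa}\bigr)^{k(d,v)}$ and otherwise imaginary, with the residual weight placed on the $\wim(h)$ children so that $\ptilde(\wrl(h))\le p(\wrl(h))$ and $\ptilde(\wim(h))=p(\wrl(h))-\ptilde(\wrl(h))$. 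The core step is the leaves-to-root induction $\qtilde(v)\ge q(v)$: the base case and the case where $u(d,v)\cdot k(d,v)$ is not among the $s-x-y$ smallest go verbatim as in \Cref{clm::fixed-k-stoc-dom-idio}, and at a branching node one splits on whether the target hospital lies in $I$, using $\qtilde(\wrl(h))\le\qtilde(\wim(h))$ (an extra real proposal only makes the ``few proposals'' event harder), the decrease of the $u$--$k$ product at the children where it is still in use, $\ptilde(\wna)=p(\wna)$, and the instantiate-the-randomness bookkeeping from \Cref{clm::fixed-k-stoc-dom-idio}.

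Granting $\qtilde\ge q$, two estimates remain. The first is a lower bound of $\alpha n$ on the surplus $s$, i.e.\ on the number of doctors of $L$ that either propose into $I$ or stay unmatched: a two-case bookkeeping (according as the top of $I$'s cone does or does not reach $1$) in the spirit of \Cref{lem::doctors-high-value-match-prob}, subtracting from the number of available participating doctors both the participating hospitals outside $I$ and the hypothesized $3\exp(-\alpha k/2(4a+1))$ fraction of unmatched doctors above $I$'s cone — legitimate because, by \Cref{clm::double-cutDA-provides-bounds} together with a matching-extends argument like \Cref{clm::matching-bounds-extend}, doctors matched in the double-cut runs for intervals above $I$ stay matched here — and using $a\ge5$. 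The second is the bound on $\qtilde(\treeroot(T))$: each of the $s\ge\alpha n$ surplus doctors proposes into $I$ at a hospital uniform over the $\approx|I|$ hospitals with probability of order $\tfrac{\alpha\kappa k}{(4a+1)\,\alpha n}$, so in expectation each hospital of $I'$ receives at least $\tfrac{\alpha\kappa k}{2(4a+1)}$ proposals; a Chernoff bound makes ``$h$ receives fewer than $\kappa$'' have probability at most $\exp\!\bigl(-\tfrac{\alpha\kappa k}{4(4a+1)}\bigr)$, and since the proposals are uniform over $I'$ the corresponding indicators are negatively correlated, so a second Chernoff bound on their sum gives at most $2|I'|\exp\!\bigl(-\tfrac{\alpha\kappa k}{2(4a+1)}\bigr)$ not-fully-matched hospitals in $I'$, with overall failure probability $n^{-(c+2)}$; adding back the at most $|I|\exp(-\alpha k/2(4a+1))$ removed hospitals yields the claimed $3|I|\exp(\cdot)$ fraction, and \Cref{lem::cont-range-bound} (letting the discretization $\delta\to0$ in \Cref{clm::discrete-dist-idio}) transfers everything to the continuous setting.

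\textbf{Main obstacle.} The delicate point is the budgeted-proposer stochastic-dominance step: choosing $\qtilde$ so that a doctor's shrinking budget $k(d,v)$ enters both through the ordering key $u(d,v)\cdot k(d,v)$ and through $\tau$, and then verifying $\qtilde(v)\ge q(v)$ across all cases. The collision removal and surplus bookkeeping are conceptually routine but must be handled carefully because — unlike the hospital-proposing case — the edge set incident to $I$ is not symmetric and the available number of proposals varies from hospital to hospital, so the near-uniformity of the counted proposals over $I$ must be engineered by the extra terminations rather than being automatic.
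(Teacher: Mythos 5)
Your plan follows the paper's own proof of this lemma essentially step for step: restrict to the doctors in $L=[g-a\alpha,f+a\alpha)$, invoke the high-probability lower bound $1-\taubar$ on the selected private values, discard doctors with two or more selected edges into $I$ so each survivor has at most one, force uniformity of the counted proposals via a utility window, run the budget-aware tree dominance with ordering key $u(d,v)\cdot k(d,v)$ and a real-proposal probability built from the remaining budget, bound the surplus by $\alpha n$ using the assumed non-match rate above $I$'s cones, and finish with negatively correlated indicators and two Chernoff bounds. However, two of your quantitative steps would not deliver the stated bound as written. First, you additionally discard the hospitals of $I$ incident to the discarded doctors' edges and add them back at the end as ``at most $|I|\exp(-\alpha k/2(4a+1))$ removed hospitals.'' The collision bound controls the number of discarded \emph{doctors}; each of them has between $2$ and $k$ edges into $I$, so without a further argument the number of incident hospitals is only bounded by $k$ times that, and the final constant becomes $k$-dependent --- fatal here, because the mutual induction with \Cref{lem::doctors-interval-non-match} needs the output fraction to be exactly the $3\exp\big(-\alpha k/2(4a+1)\big)$ that the other lemma assumes. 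The paper simply does not remove these hospitals: they stay in $I$ and can still receive uniform proposals from the surviving doctors.

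Second, the endgame exponents are inconsistent. From a per-hospital failure probability of at most $\exp\big(-\frac{\alpha\kappa k}{4(4a+1)}\big)$ you cannot conclude that the number of failing hospitals is at most $2|I'|\exp\big(-\frac{\alpha\kappa k}{2(4a+1)}\big)$: that level is far below $|I'|$ times the per-hospital bound, and no concentration argument takes you there; moreover, with $\kappa$ in the exponent the target level falls below $\Omega(\ln n)$ under the $g-f$ constraints of \Cref{clm::doc-prop-interval-constraints}, so the second Chernoff bound would not give failure probability $n^{-(c+2)}$. The paper deliberately caps the per-hospital bound at the $\kappa$-free level $\exp\big(-\frac{\alpha k}{2(4a+1)}\big)$ (treating $\kappa=1$ and $\kappa\ge 2$ separately under $\alpha k\ge(2+\sqrt2)(4a+1)$), precisely so that the count to be exceeded is $\Omega((c+2)\ln n)$ and the factor-2 Chernoff step goes through; note the lemma's target indeed has no $\kappa$ in the exponent. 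A smaller related point: your window $[g+2-\alpha,f+2)$ does not give exact uniformity over $I$, because with private values only guaranteed to be at least $1-\taubar$, a proposal with utility in the top sliver $[f+2-\taubar,g+2)$ may be unable to reach the bottom $\taubar$-portion of $I$; the paper also excises that sliver, and exact (not ``essentially'') uniform proposals are what justify the negative-correlation Chernoff.
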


\begin{lemma}\label{lem::hosp-high-value-match-prob}
Let $\delta>0$ be an arbitrarily small constant.
    Let $h$ be a hospital with public rating at least $a\alpha$.
    Suppose we run the doctor-proposing double-cut DA for hospital $h$ using the discretized utilities as specified in \Cref{clm::discrete-dist-idio}.
    Suppose the fraction of unmatched doctors with public rating in $[r(d)+(a+1)\alpha,1)$ is at most $3\exp\big(-\alpha k/2(4a+1)\big)$.
    Also, suppose that the event ${\mathcal B}$ does not occur.
    Finally, suppose we run the doctor-proposing double-cut DA for $h$.
     Then the probability that $h$ fails to have $\kappa$ matches for which it has interview value $1-\psibar_h$ or larger is at most $\exp\big(-\frac{3\kappa k\alpha\psibar_h}{8(4a+1)}\big)$, if $a\ge 5$, $\alpha\cdot\exp\big(\alpha k/2(4a+1)\big)\ge 3$, and $O(1) \le k\le O((c+2)\ln n)$
     (see \Cref{clm::conditions-doctor-prop-one-hosp} for a more precise statement of these bounds).
\end{lemma}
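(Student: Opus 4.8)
The plan is to transpose the proof of \Cref{lem::doctors-high-value-match-prob} (see \Cref{sec::proof-doc-match-prob}) to the doctor-proposing direction, exchanging the roles of doctors and hospitals while handling three new features: each doctor carries only her $k$ pre-selected edges; $h$ needs $\kappa$ good proposals rather than one; and the high-value event now concerns $h$'s interview value $\iota(h,\cdot)$. First I would fix, with its small failure probability absorbed into ${\mathcal B}$, a uniform lower bound $1-\taubar$ on the private value of every edge selected by every in-cone doctor: since the expected number of in-cone edges of a fixed $d$ with private value at least $1-\taubar$ is $\Theta(\taubar\alpha n/\kappa)$, a Chernoff bound shows $\taubar=\Theta\bigl(\tfrac{\kappa(c+2)\ln n}{\alpha n}\bigr)$ already forces every doctor to have at least $k$ such edges, and under \Cref{clm::final-cstrnts-idio} this $\taubar\le\tfrac14\alpha$. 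This matters because the cutoff of the doctor-proposing double-cut DA for $h$ sits at $h$-utility $r(h)+2-\alpha$ (two unit terms are now added to $r(h)$), so a selected edge from $d$ to $h$ survives exactly when $\iota(d,h)\ge1-\alpha+\taubar$, an event of probability at least $\alpha-\taubar\ge\tfrac34\alpha$. Second, the exact analog of the surplus claim used for \Cref{lem::doctors-high-value-match-prob} (with doctors and hospitals interchanged) gives a surplus $s\ge\alpha n$: one takes the at least $[1-r(h)+(a-\tfrac32)\alpha]n$ doctors available to propose to $h$ (valid since ${\mathcal B}$ does not occur) and subtracts both the at most $[1-r(h)+\tfrac32\alpha]n$ doctors the hospital slots above $h$'s cone can absorb and, via \Cref{lem::doctors-interval-non-match} together with \Cref{clm::double-cutDA-provides-bounds} and the doctor-side analog of \Cref{clm::matching-bounds-extend}, the at most $3n\exp\bigl(-\tfrac{\alpha k}{2(4a+1)}\bigr)\le\alpha n$ doctors above $h$'s cone left unmatched (using $\alpha\cdot\exp\bigl(\tfrac{\alpha k}{2(4a+1)}\bigr)\ge3$); this leaves $(a-4)\alpha n\ge\alpha n$ for $a\ge5$.

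The core of the argument is the tree stochastic dominance. I would build the tree $T$ of possible runs of the double-cut DA, fixing all utility components of the out-of-cone doctors and, for each in-cone doctor, fixing only her multiset of $k$ largest private values without committing which hospital each is attached to. A node where an out-of-cone doctor acts has a single child; a node where an in-cone doctor $d$ tests her next edge (in decreasing utility order) at her current utility branches, according to the random choice of $\iota(d,\cdot)$ on that edge and then of its destination within $d$'s current cone, into a real-high proposal to $h$ (where $\iota(h,d)\ge1-\psibar_h$; the run terminates when the $\kappa$-th of these arrives), a real-low proposal to $h$, an imaginary proposal to $h$, a real or imaginary proposal to some $h'\ne h$, or a no-action step; any proposal to $h$ stops $d$, and every proposal decrements $d$'s remaining count $k(d,v)$. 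With $q(v)$ the probability that fewer than $\kappa-x$ further real-high proposals to $h$ occur from $v$ ($x$ counting those already received), I define $\qtilde(v)$ by the independent-proposals process in which the $s-x-y$ doctors with the smallest products $u(d,v)\cdot k(d,v)$ (residual utility times remaining count) each independently attempt a proposal to $h$ that is real with probability $\tau=1-\bigl(1-\tfrac{\kappa}{(2a+\tfrac12)\alpha n}\bigr)^{k(d,v)}$, survives the cutoff with probability $\alpha-\taubar$, and is high with probability $\psibar_h$; the slack $p(\wrl)-\ptilde(\wrl)\ge0$, which arises because $\tau$ is computed with all cones as large as permitted, is routed to the imaginary children. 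The proof that $\qtilde(v)\ge q(v)$ is an induction from the leaves along the same case split as \Cref{clm::fixed-k-stoc-dom-idio}; the only genuine change is that the quantities which must lie among the $s-x-y$ smallest are the products $u(d,v)\,k(d,v)$ rather than single residual utilities, and since these still decrease weakly at every child the ``raise the intermediate value'' argument of Case 2 goes through verbatim.

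Finally, at the root each of the $s\ge\alpha n$ surplus doctors selects $h$ with probability at least $\tfrac{k\kappa}{(2a+\tfrac12)\alpha n}$, the corresponding edge survives with probability at least $\alpha-\taubar\ge\tfrac34\alpha$, and $\iota(h,d)\ge1-\psibar_h$ with probability $\psibar_h$, so the expected number of real-high proposals to $h$ is at least $\tfrac{3\kappa k\alpha\psibar_h}{2(4a+1)}$; a Chernoff bound, using $k\ge4a+1$ to keep $\kappa$ well below this mean, shows $\qtilde(\treeroot(T))$, the probability that $h$ receives fewer than $\kappa$ real-high proposals in the independent setting, is at most $\exp\bigl(-\tfrac{3\kappa k\alpha\psibar_h}{8(4a+1)}\bigr)$, and by the dominance $q(\treeroot(T))$ is no larger, which is the claimed bound. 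Collecting the constraints used across these steps recovers the hypotheses $a\ge5$, $\alpha\cdot\exp\bigl(\tfrac{\alpha k}{2(4a+1)}\bigr)\ge3$, and $O(1)\le k\le O((c+2)\ln n)$, and transferring the bound to the continuous utility range via \Cref{clm::discrete-dist-idio} and letting $\delta\to0$, exactly as in \Cref{lem::cont-range-bound}, completes the proof. I expect the stochastic dominance step to be the main obstacle: because each doctor carries a bounded budget of $k$ proposals, one must verify that replacing the true (non-independent, budget-constrained, uniform-within-the-current-cone) proposal process by the product-ordered independent process can only increase the probability of the bad event, and in particular that the imaginary proposals can be defined so that the probability slack between the two processes lands on subtrees on which the proposer is already dead — bookkeeping that is genuinely more delicate than in the fixed-degree, single-hospital setting of \Cref{lem::doctors-high-value-match-prob}.
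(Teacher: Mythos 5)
Your proposal is correct and is essentially the paper's own proof: the same high-probability bound $1-\taubar$ on the selected private values, the same surplus computation giving $s\ge\alpha n$, the same tree stochastic dominance ordered by the products $u(d,v)\cdot k(d,v)$ with imaginary proposals absorbing the probability slack for doctors who stop after proposing to $h$, and the same final Chernoff bound, with only cosmetic deviations (you fold the $\psibar_h$-thinning into the tree as real-high/real-low children, whereas the paper runs the dominance on all real proposals to $h$ and thins by $\psibar_h$ only in the independent-setting calculation, and your $\tau=1-(1-\kappa/((2a+\tfrac12)\alpha n))^{k(d,v)}$ versus the paper's $k(d,v)\kappa/((2a+\tfrac12)\alpha n)$ are both valid underestimates). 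One small quantitative point: the Chernoff step requires the mean $\Theta(\kappa k(\alpha-\taubar)\psibar_h/(4a+1))$ to exceed $\kappa$ by a constant factor, i.e.\ $k(\alpha-\taubar)\psibar_h=\Omega(4a+1)$ rather than your stated $k\ge 4a+1$, which is exactly the more precise condition recorded in \Cref{clm::conditions-doctor-prop-one-hosp}.
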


{\sc \Cref{clm::discrete-dist-idio}}~
\emph{For any $\delta>0$, there is a discrete utility space in which
the probability of having any differences
in the preference lists in the original continuous setting and in the discrete setting is at most $\delta$.
In addition, the public values of the doctors will all be distinct, as will the public values of the hospitals, and for each doctor, its private values will all be distinct, and this is achieved within the same overall $\delta$ bound on the failure probability.
Furthermore, the probability of each combination of utility choices will be the same.
}
\begin{proof}
$z\ge 5n^3/2\delta\kappa$ will be an integer in our construction.
In order to obtain the equal probabilities in the statement of the lemma, for the purposes of this proof we will calculate all utilities mod 1.

We partition the utility range $[0,1)$ into $z$ intervals each of length $1/z$.
To do this, we round down each component of the utility to the nearest integer multiple of $1/z$.
Note that for each doctor $d$, the probability that it has the same utility for hospitals
$h$ and $h'$ is at most $1/z$: to see this, consider selecting the random utility components
one by one, with the interview value for $h'$ being last; this last choice will create the
utility equality with probability at most $1/z$.
Therefore the probability that a doctor has the same utility for two hospitals is at most
$\tfrac 12 \cdot\frac{n}{\kappa}\cdot\big(\frac{n}{\kappa}-1\big)/z$, and so the probability that any doctor has the same utility for two hospitals is at most
$n\cdot \frac{n}{\kappa}\big(\frac{n}{\kappa}-1\big)/2z\le n^3/2z\kappa^2$.

Similarly, the probability that any hospital has the same utility for two doctors is at most $n^3/2z\kappa$.

Also similarly, the probability that two doctors have the same public value is $1/z$, and hence the probability that any two doctors have the same public value is at most $\frac {n^2}{2z}$. Likewise, the probability that any two hospitals have the same public value is at most $\frac {n^2}{2z\kappa^2}$. Finally, the probability that a doctor has two equal private values is at most $\frac {n^2}{2z\kappa^2}$, so the probability that this happens for any doctor is at most $\frac {n^3}{2z\kappa^2}$.

Therefore, the overall failure probability is at most $5n^3/2z\kappa\le \delta$.

Finally, to see that each combination of utilities has the same probability it suffices to note that because of the circularity due to doing the above calculations mod 1, we are treating each continuous length $1/z$ interval for each component of the utility in exactly the same way.
\end{proof}

\noindent
{\sc \Cref{lem::cont-range-bound}}
    \emph{The bounds of Lemmas \ref{lem::doctors-interval-non-match}--\ref{lem::fail-prob-hospital-prop} with $\delta=0$ hold in the continuous setting too.
    }
\begin{proof}
 For any given $\delta>0$, by \Cref{clm::discrete-dist-idio}, the actions in the continuous setting and the corresponding discrete setting can differ only when there is a difference in the preferences lists, which occurs with probability at most $\delta$.
 Therefore, by \Cref{lem::doctors-high-value-match-prob}, the probability that a doctor $d$ does not receive a proposal with interview value at least $1-\psibar_d$ is at most $\delta+ \exp\big(-k\alpha\psibar_d /(4a+1)\big)$.
 As this holds for any $\delta>0$, the actual probability is at most $\exp\big(-k\alpha \psibar_d /(4a+1)\big)$.
 Similar arguments apply to the bounds in the remaining three lemmas.
\end{proof}

\subsubsection{Bad events}
Let $I$ be an interval of public ratings or a union of such intervals of total length $l$.
We will identify two different pairs of bad events to handle the cases $l\ge \alpha$ and $l<\alpha$, respectively.
For the case $l\ge \alpha$, let $\Bhbf(I)$ and $\Bhbm(I)$ be the events that the number of hospitals in $I$ is smaller than $\big(\ell-\frac12 \alpha\big)\cdot \frac{n}{\kappa}$ and more than $\big(\ell+\frac12 \alpha\big)\cdot \frac{n}{\kappa}$, respectively.
For the case $l< \alpha$, let $\Bhsf(I)$ and $\Bhsm(I)$ be the events that the number of hospitals in $I$ is smaller than $\frac 12\ell\cdot \frac{n}{\kappa}$ and more than $2\ell \cdot \frac{n}{\kappa}$, respectively.
(H is for hospital, b for big and s for small $l$, f for too few and m for too many agents).
We define $\Bdbf(I)$, $\Bdbm(I)$, $\Bdsf(I)$ and $\Bdsm(I)$ analogously for doctors (D for doctor), with $n$ replacing $n/\kappa$ in the previous bounds.

We also want to define analogous events for the number of hospitals in $I$ other than a hospital $h$ known to be in $I$. For brevity, we write $\Bhbf(I\setminus h)$ instead of $\Bhbf(I\setminus\{r(h)\})$.
So, for example, if $h\in I$, we define $\Bhbf(I\setminus h)$ to be the event that the number of hospitals in $I$ is less than $\big(\ell-\frac12 \alpha\big)\cdot \big(\frac{n}{\kappa}-1\big)$, and similarly for $\Bhbm(I\setminus h)$, $\Bhsf(I\setminus h)$, $\Bhsm(I\setminus h)$, and likewise for the analogous events concerning doctors.

\begin{claim}\label{lem::number-agents-in-intervals}
    Suppose $\frac {n}{\ln n}\ge \frac{12(c+2)\kappa \ell}{\alpha^2}$. If $l\ge \alpha$, then  $\Bhbm(I)$ and $\Bhbf(I)$ each occur with probability at most $n^{-(c+2)}$,
    while if $\ell\ge 8(c+2)\kappa\ln n/n$, then $\Bhsm(I)$ and $\Bhsm(I)$ each occur with probability at most $n^{-(c+2)}$.
    
    Similarly, if $\frac {n}{\ln n}\ge \frac{12(c+2)\ell}{\alpha^2}$, and $l\ge \alpha$ then $\Bdbm(I)$ and $\Bdbf(I)$ each occur with probability at most $n^{-(c+2)}$;
    and if $\ell\ge 8(c+2)\ln n/n$, then $\Bdsm(I)$ and $\Bdsf(I)$ each occur with probability at most $n^{-(c+2)}$.

    Analogous bounds hold with $I$ replaced by $I\setminus h$; the one change is that the $n/\kappa$ or $n$ in the relevant bound is replaced by $(n/\kappa)-1$ or $n-1$, respectively.
\end{claim}
\begin{proof}
    The expected number of hospitals in $I$ is $\ell\cdot \frac{n}{\kappa}$.
    By a Chernoff bound, there are more than $\big(\ell+\frac12 \alpha\big)\cdot \frac{n}{\kappa}$ hospitals in $I$ with probability at most $\exp\big(-\frac13\cdot \frac{\alpha^2}{4\ell}\cdot \frac{n}{\kappa}\big)\le n^{-(c+2)}$, if $\frac {n}{\ln n}\ge \frac{12(c+2)\kappa\ell}{\alpha^2}$.
     And by another Chernoff bound,
    there are fewer than $\big[\ell-\frac12\alpha\big]\cdot \frac{n}{\kappa}$ hospitals in $I$ with probability at most $\exp\big(-\frac12\big(1 - \frac {\alpha}{2\ell}\big)^2\ell\cdot \frac{n}{\kappa}\big)\le n^{-(c+2)}$, if $\frac {n}{\ln n}\ge 8(c+2)\kappa l$ (using $\ell\ge \alpha$).
    Similarly, there are more than $2 \ell \cdot\frac n{\kappa}$ hospitals in $I$ with probability at most $\exp\big(-\frac13\ell\cdot \frac n\kappa \big)\le n^{-(c+2)}$, and fewer than $\frac12\ell \cdot\frac n{\kappa}$ hospitals in $I$ with probability at most $\exp\big(-\frac18\ell\cdot\frac n{\kappa}\big)\le n^{-(c+2)}$, if $l\ge 8(c+2)\kappa \ln n/n$.
    \end{proof}

The next corollary shows that, for each doctor and hospital, its cone has a population close to its expectation with high probability.
To make this precise, we introduce the following notation.
Recall that $C(d)$ denotes the cone for doctor $d$; analogously, we let $C(h)$ be the cone for hospital $h$, the
set of doctors with public rating in the range $[r(h)-a\alpha,r(h)+a\alpha)$.
\begin{corollary}\label{cor::cone-population-correct}
    For each doctor $d$ and hospital $h$, the events
    $\Bdbm(C(h))$, $\Bdbf(C(h))$, $\Bhbm(C(d))$, $\Bhbf(C(d))$ each occur with probability at most $n^{-(c+2)}$.
    (Note that the cone for $h$ is an interval of doctors and vice-versa.)
\end{corollary}
\begin{proof}
    It suffices to note that the length of these intervals is typically $2a\alpha$, and at least $a\alpha$ (it can be shorter for the very topmost and bottommost agents), and then apply \Cref{lem::number-agents-in-intervals}.
\end{proof}

In the proofs of Lemmas~\ref{lem::doctors-interval-non-match}--\ref{lem::doctors-high-value-match-prob} we assume that none of these events occur, i.e.\ that all the cones have size fairly close to their expectation.

\subsubsection{Omitted Claim in \Cref{sec::proof-doc-match-prob}}

\begin{claim}\label{clm::conditions-hosp-prop-one-doctor}
 The conditions in Claims~\ref{lem::surplus-hosp-prop-one-doc} and~\ref{lem::fail-prob-hospital-prop} amount to the following constraints: $a\ge 5$,  and $\alpha\cdot\exp\big(\alpha k/\big)\ge 3$.
 They also exclude events with total probability at most $3n^{-(c+1)}$.
 \end{claim}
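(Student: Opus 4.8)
The plan is to separate the hypotheses of \Cref{lem::surplus-hosp-prop-one-doc} and \Cref{lem::fail-prob-hospital-prop} into two kinds: purely numerical constraints, and ``good events'' about interval populations. For the numerical part, I would observe that the only explicit inequalities invoked are $a\ge 5$ and $\alpha\cdot\exp\big(\alpha k/2(4a+1)\big)\ge 3$, both appearing in \Cref{lem::surplus-hosp-prop-one-doc} (the latter is used there in the form $\exp\big(-\alpha k/2(4a+1)\big)\le\tfrac13\alpha$). \Cref{lem::fail-prob-hospital-prop} contributes no new numerical constraint: its hypothesis $s\ge\alpha n/\kappa$ is exactly the conclusion of \Cref{lem::surplus-hosp-prop-one-doc}, which is available once the two constraints and the good events hold. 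Hence the two stated constraints suffice.

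For the probabilistic part, I would enumerate the three good events and bound the probability that each fails. From \Cref{lem::surplus-hosp-prop-one-doc} there are two: that at most $[1-r(d)+\tfrac32\alpha]n$ doctors have public rating $\ge r(d)-\alpha$, which is the complement of $\Bdbm([r(d)-\alpha,1))$ since the expectation is $[1-r(d)+\alpha]n$; and that at least $[1-r(d)+(a-\tfrac32)\alpha]\tfrac n\kappa$ hospitals lie in $[r(d)-a\alpha,r(d)+a\alpha)\cup[r(d)+(a+1)\alpha,1)$, which is the complement of $\Bhbf$ of this union, whose expected population is $[1-r(d)+(a-1)\alpha]\tfrac n\kappa$. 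From \Cref{lem::fail-prob-hospital-prop} there is one more: that at most $\tfrac12(4a+1)\tfrac{\alpha n}{\kappa}$ hospitals lie in $C(d)$; when $C(d)$ is not clipped at the endpoint $1$ its length is $2a\alpha$, so the complement of $\Bhbm(C(d))$ gives at most $(\ell+\tfrac12\alpha)\tfrac n\kappa=(2a+\tfrac12)\alpha\tfrac n\kappa=\tfrac12(4a+1)\tfrac{\alpha n}{\kappa}$ hospitals, and clipping only lowers this threshold.

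Next I would check that all three intervals have total length at least $\alpha$, so that the large-$\ell$ case of \Cref{lem::number-agents-in-intervals} (respectively \Cref{cor::cone-population-correct} for $C(d)$) applies: $[r(d)-\alpha,1)$ has length $1-r(d)+\alpha\ge\alpha$; the union has length $1-r(d)+(a-1)\alpha\ge(a-1)\alpha\ge\alpha$ when unclipped, and $\ge a\alpha$ when clipped; and $C(d)$ has length $\ge a\alpha$ because $r(d)\ge a\alpha$. Hence, under the global lower bound on $n/\ln n$ subsumed in \Cref{clm::final-cstrnts-idio}, each of the three events has probability at most $n^{-(c+2)}$, and so their union has probability at most $3n^{-(c+2)}\le 3n^{-(c+1)}$, the claimed bound.

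The only step needing care is the endpoint behavior: when $r(d)$ is close to $1$, both $C(d)$ and the first interval of the union are clipped at $1$, and I must confirm that the complement of $\Bhbm(C(d))$ still yields the exact threshold $\tfrac12(4a+1)\tfrac{\alpha n}{\kappa}$ required by \Cref{lem::fail-prob-hospital-prop} (it does, since clipping decreases $\ell$ and hence the threshold) and that the union's length stays $\ge\alpha$ so that the correct side of \Cref{lem::number-agents-in-intervals} is used. Everything else is routine: translating the ``explicit count'' phrasing of the two source claims into the $\Bdbm/\Bhbf/\Bhbm$ notation, and adding three probabilities.
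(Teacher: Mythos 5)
Your proof is correct in substance, and since the paper states \Cref{clm::conditions-hosp-prop-one-doctor} without giving any proof, your argument is exactly the kind of bookkeeping the authors intended: the only parameter constraints invoked are $a\ge 5$ and $\alpha\cdot\exp\big(\alpha k/2(4a+1)\big)\ge 3$ (the statement's ``$\alpha k/$'' is a typo), \Cref{lem::fail-prob-hospital-prop} adds no new numerical condition because its hypothesis $s\ge \alpha n/\kappa$ is the conclusion of \Cref{lem::surplus-hosp-prop-one-doc}, and the three population hypotheses are precisely the complements of $\Bdbm([r(d)-\alpha,1))$, $\Bhbf\big([r(d)-a\alpha,r(d)+a\alpha)\cup[r(d)+(a+1)\alpha,1)\big)$ and $\Bhbm(C(d))$, each of probability at most $n^{-(c+2)}$ by \Cref{lem::number-agents-in-intervals} and \Cref{cor::cone-population-correct}. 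Your handling of the clipped-interval cases (the threshold only tightens, and the lengths stay $\ge\alpha$ so the large-$\ell$ branch applies) is the right check.

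The one point I would adjust is the accounting behind the final number. You obtain $3n^{-(c+2)}$ for a single doctor $d$ and then pass to $3n^{-(c+1)}$ by slack; but the way this claim is consumed downstream (the $7n\cdot n^{-(c+2)}$ term in the proof of \Cref{clm::prob-B}, of which $3n$ comes from this claim, and the mutual induction which needs these events excluded for every doctor simultaneously) indicates that $3n^{-(c+1)}$ is meant as a union bound over all $n$ doctors, i.e.\ $3n\cdot n^{-(c+2)}$. So rather than invoking slack, state the per-doctor bound $3n^{-(c+2)}$ and then take the union over the $n$ doctors, which gives exactly the claimed $3n^{-(c+1)}$. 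This is a one-line change and does not affect the validity of anything else you wrote.
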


\subsubsection{Proof of \Cref{lem::hosp-high-value-match-prob} (Doctors Match Probability)}\label{sec::proof-hosp-match-prob}
 
 This is quite similar to the proof of \Cref{sec::proof-doc-match-prob}, except that the roles of the doctors and hospitals are switched.

We will use the following high-probability lower bound $1-\taubar$ on the private values of the edges selected by each doctor $d$.

\begin{claim}\label{clm::value-of-taubar}
    Let $b>2$ and $c>0$ be constants. Suppose that $\Bhbf(C(d))$ does not occur for any doctor $d$.
    Suppose $\taubar\ge \frac{b(c+2)\ln n} {(a-\frac 12)\frac{\alpha n}{\kappa}}$. Then, for each doctor $d$, with failure probability at most $n^{-(c+2)}$, if $k \le \big(1-\big(\frac 2b\big)^{1/2}\big) \cdot b(c+2)\ln n$,
    $d$ has at least $k$ in-cone edges with private value $1-\taubar$ or larger.
\end{claim}
\begin{proof}
As $\Bhbf(C(d))$ does not occur, the number of hospitals in $d$'s cone is at least $(a-\frac 12)\cdot\frac{\alpha n}{\kappa}$.
    The expected number of $d$'s in-cone edges with private value $1-\taubar$ or larger
    is $\taubar$ times the number of hospitals in her cone, i.e.\ at least $\taubar (a-\tfrac12)\cdot \frac{\alpha n}{\kappa}\ge b(c+2)\ln n$,
    as $\taubar \ge \frac{b\kappa(c+2)\ln n} {(a-\frac12)\alpha n}$.
    Then, by a Chernoff bound, the probability that she
    has fewer than $k$ such edges is at most $\exp\big[-\tfrac b2 (c+2)\ln n \big(1 - \frac k{b(c+2)\ln n}\big)^2\big]
    \le n^{-(c+2)}$, if $k \le \big(1-\big(\frac 2b\big)^{1/2}\big) \cdot b(c+2)\ln n$.
\end{proof}

Going forward, we suppose the event $\Btau$, that some doctor fails to have $k$ edges with private value $1-\taubar$ or larger, does not occur.

\begin{claim}\label{lem::surplus-hosp-prop-one-hosp}
    Suppose the following bad events do not occur:
    $\Bhbm([r(h)-\alpha,1))$ and $\Bdbf([r(h)-a\alpha,r(h)+a\alpha)\cup[r(h)+(a+1)\alpha,1))$, and also suppose that both $a\ge 5$ and $\alpha\cdot\exp\big(\alpha k/2(4a+1)\big)\ge 3$.
    Then the surplus $s\ge\alpha n$.
\end{claim}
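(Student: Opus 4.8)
Suppose the bad events $\Bhbm([r(h)-\alpha,1))$ and $\Bdbf([r(h)-a\alpha,r(h)+a\alpha)\cup[r(h)+(a+1)\alpha,1))$ do not occur, and suppose $a\ge 5$ and $\alpha\cdot\exp\big(\alpha k/2(4a+1)\big)\ge 3$. Then the surplus $s\ge\alpha n$.

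The plan is to mirror the argument for \Cref{lem::surplus-hosp-prop-one-doc}, but with the roles of doctors and hospitals exchanged, since here we are running the doctor-proposing double-cut DA for a single hospital $h$ and we want to show there is a large surplus of unmatched doctors in $h$'s cone who could propose to $h$. The surplus is computed as: (lower bound on the number of doctors available to propose in $h$'s relevant range) minus (the number of those doctors that get matched elsewhere) minus (an upper bound on the number of doctors that can be absorbed by hospitals other than $h$ in the range $[r(h)-\alpha,1)$). First I would use the assumption that $\Bdbf([r(h)-a\alpha,r(h)+a\alpha)\cup[r(h)+(a+1)\alpha,1))$ does not occur: this guarantees at least $[1-r(h)+(a-\tfrac32)\alpha]n$ doctors in that union of ranges (the expected count is $[1-r(h)+(a-1)\alpha]n$, and the bad event not occurring shaves off at most $\tfrac12\alpha n$). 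Next I would use the assumption from \Cref{lem::doctors-interval-non-match} (the interval non-match lemma for doctor-proposing double-cut DA) together with \Cref{clm::matching-bounds-extend}: the doctors with public rating in $[r(h)+(a+1)\alpha,1)$ that are matched in the relevant interval double-cut DAs are also matched in the double-cut DA for $h$, so the number of unmatched doctors in that upper range is at most $3n\exp\big(-\alpha k/2(4a+1)\big)\le\alpha n$, where the inequality uses exactly the hypothesis $\alpha\cdot\exp\big(\alpha k/2(4a+1)\big)\ge 3$.

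Then I would bound the number of doctors that can be matched to hospitals other than $h$ with public rating at least $r(h)-\alpha$: each such hospital has capacity $\kappa$, and since $\Bhbm([r(h)-\alpha,1))$ does not occur, there are at most $[1-r(h)+\tfrac32\alpha]\tfrac{n}{\kappa}$ such hospitals, hence at most $[1-r(h)+\tfrac32\alpha]n$ doctors can be absorbed by them. Subtracting, the surplus — the number of not-fully-matched doctors in $h$'s cone who could propose to $h$ — is at least
\begin{align*}
\big[1-r(h)+(a-\tfrac32)\alpha\big]n - \big[1-r(h)+\tfrac32\alpha\big]n - \alpha n = (a-4)\alpha n \ge \alpha n,
\end{align*}
using $a\ge 5$ in the last step. (Strictly, one must also account for the doctors in $[r(h)+a\alpha,r(h)+(a+1)\alpha)$, which are excluded from the count, and for the fact that the in-cone hospitals — those in $[r(h)-\alpha,r(h)+a\alpha)$ — must be handled with care regarding whether they are fully matched; these only help the surplus, so the bound is conservative.)

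I do not expect any serious obstacle: this is a counting argument driven entirely by the Chernoff-type population bounds packaged in the bad events and by the inductively-assumed match-rate bound. The one point requiring a little care is making sure the interval ranges line up correctly — in particular that $[r(h)-a\alpha,r(h)+a\alpha)\cup[r(h)+(a+1)\alpha,1)$ is exactly the set of hospitals (here playing the role of "available partners" seen from the doctor side's cone) minus the sliver $[r(h)+a\alpha,r(h)+(a+1)\alpha)$ that we deliberately drop, and that \Cref{clm::matching-bounds-extend} genuinely applies to doctor-proposing runs (its statement is given for the hospital-proposing case, so one invokes the symmetric version). Once the bookkeeping is set up, the arithmetic collapses to $(a-4)\alpha n\ge\alpha n$ for $a\ge 5$, exactly as in the single-doctor case.
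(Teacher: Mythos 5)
Your proposal is correct and follows essentially the same argument as the paper: lower-bound the available doctors via $\Bdbf$ not occurring, subtract the $\le 3n\exp(-\alpha k/2(4a+1))\le\alpha n$ unmatched doctors in $[r(h)+(a+1)\alpha,1)$ using \Cref{lem::doctors-interval-non-match}, subtract the at most $[1-r(h)+\tfrac32\alpha]n$ doctors absorbable by the hospitals in $[r(h)-\alpha,1)$ (capacity $\kappa$ times the hospital count bounded via $\Bhbm$), and conclude $(a-4)\alpha n\ge\alpha n$ for $a\ge 5$. Your added bookkeeping remarks (the excluded sliver and the symmetric use of \Cref{clm::matching-bounds-extend}) are consistent with the paper's treatment.
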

\begin{proof}
    As $\Bhbm([r(h)-\alpha,1))$ does not occur, the number of hospitals with public rating at least $r(h)-\alpha$ is at most $[1-r(h)+\frac 32\alpha ]\frac{n}{\kappa}$. As $\Bdbf([r(h)-a\alpha,r(h)+a\alpha)\cup[r(h)+(a+1)\alpha,1))$ does not occur,
    the number of doctors with public rating in this range is at least $[1-r(h)+(a-\frac32)\alpha] n$.
    By \Cref{lem::doctors-interval-non-match}, the number of unmatched doctors in $[r(h)+(a+1)\alpha,1)$ is at most
    $3n\cdot\exp\big(-\alpha k/2(4a+1)\big)\le  \alpha n$, if $\exp\big(- \alpha k/2(4a+1)\big)\le \frac 13\alpha$.
    Thus the surplus, the number of unmatched available doctors in $h$'s cone, i.e., doctors that could propose to $h$, is at least
    \begin{align*}
        \big[1-r(h)+(a-\tfrac32)\alpha\big] n - \big[1-r(h)+\tfrac 32\alpha\big]n -  \alpha n
        \le \big(a - 4\big)\alpha n\le \alpha n,
    \end{align*}
    if $a\ge 5$.
\end{proof}

As in \Cref{sec::proof-doc-match-prob}, the tree $T$ of possible runs of the double-cut DA has
both real and imaginary proposals. For each edge $(d',h')$, for any doctor $d'$ outside $C(h)$, we fix the utilities for both $d'$ and $h'$.
Also, for each doctor $d$ in $C(h)$, we fix its top $k$ private values, but without specifying which edges these values are attached to.
Now, we are seeking the probability that $h$ receives fewer than $\kappa$ proposals from the doctors in $C(h)$. Again, we obtain the bound by showing that $\qtilde(v) \ge q(v)$ for all nodes $v$ in $T$, and for $\treeroot(T)$ in particular, but the meanings of $q$ and $\qtilde$ are slightly different here.

A node corresponding to an action by a doctor outside $C(h)$ has just one outgoing edge corresponding to its single possible action. For a doctor $d$ in $C(h)$, the set-up is more elaborate. Let $1-\tau_1\ge 1-\tau_2\ge \ldots\ge 1-\tau_k$ be the private values for its $k$ edges; let's name the corresponding proposals $e_1,e_2,\ldots,e_k$. $T$ will have a series of nodes corresponding to deciding in turn, whether each $e_i$ provides $d$ utility equal to its current utility (this is the constraint on the tie-breaking mentioned earlier; we use the private values to order edges that produce equal utility). For each potential proposal, the decision making proceeds in two steps:
first, whether the proposal has a utility equal to $d$'s current utility value in tree $T$;
note that we have not yet chosen the hospital that receives this proposal. 
%This is a decision based on making a random choice for its interview value. 
For each hospital $h$ for which the current utility is possible for $d$, which we call $d$'s current cone, the probability that $h$ receives this proposal is the same.
In the $q$ setting, we then choose the edge uniformly from the hospitals in $d$'s current cone that she has not already proposed to; in this setting, there are no imaginary proposals.
A doctor stops making proposals either when she has gone through all $k$ of her proposals or she makes a proposal to $h$.
Let $x$ be the number of proposals $h$ has received prior to reaching node $v$.
Then $q(v)$ is the probability that $h$ receives at most $\kappa -x -1$ further proposals in the
computation starting at $v$.

In the $\qtilde$ setting, we keep track of how many proposals a doctor has made.
Again, as soon as a doctor makes a proposal, real or imaginary, to $h$, she stops.
Let $S(v)$ be the set of doctors in $h$'s cone that have not yet proposed to $h$,
let $x$ be the number of real proposals to $h$, and $y$ the number of imaginary proposals to $h$.
Let $k(d,v)$ be the number of proposals remaining for doctor $d$ at node $v$. Also, recall that $u(d,v)$ is her residual utility at node $v$.
To evaluate $\qtilde(v)$, the doctors with the $s-x-y$ smallest products $u(d,v)\cdot k(d,v)$ will attempt to propose to $h$; it will receive a real proposal from doctor $d$ with probability $u(d,v)\cdot k(d,v)/[(2a+\frac 12)\alpha n/\kappa]$, which will be no larger than the probability of a proposal in the $q$ setting; the difference in probability will be handled by making proposals imaginary with probability equal to the difference. Now, we make this precise.

We use the same value for the probability that $d$ has a proposal with the current utility $u$ as in the $q$ setting. 
If $d$ is found to have such a proposal, 
then, uniformly at random, we choose one of the hospitals in $d$'s current cone that has not already received a proposal from $d$ to receive the proposal.
If the proposal is to a hospital $h'\ne h$, then it is real.
If it is to $h$,
we choose whether it is real with probability $\tau=\frac{k(d,v)}{(2a+\frac12)\alpha n/\kappa}$ (the reason for this probability is to avoid having to account for the possibly varying cones sizes when selecting $s-x-y$ doctors that attempt to propose); if the proposal is not real, we make it imaginary.
 
We note that  $\ptilde(\wrl) \le p(\wrl)$: for the edge is real in the $\qtilde$ setting with probability
$\tau$; this value is computed assuming the cone sizes for all $d\in C(h)$ are as large as possible;
this only reduces the probability of the edge being real.
Then we set $\ptilde(\wim)=p(\wrl)-\ptilde(\wrl)$.
% As in \Cref{sec::doc-int-idio-no-match}, $\tildep(\wrl(h))\le p(\wrl(h))$; again, we choose $\tildep(\wim(h))=p(\wim(h))+[\wrl(h))-\tildep(\wrl(h)]$.

%Again, let $x$ be the number of hospitals in $I'$ that have received real proposals.
%At node $v$, we will be seeking the probability that at most an additional $r-x$ hospitals in $I'$ receive real proposals.

\begin{claim}\label{clm::qtilde-larger-for-intervals-in-doc-prop}
    $\qtilde(v)\ge q(v)$ for all $v$ in $T$.
\end{claim}
\begin{proof}
    The base case argument is unchanged from \Cref{clm::fixed-k-stoc-dom-idio}.

    We now make some observations for the case that $v$ is a non-leaf node. $v$ will have one child $\wna$ corresponding to no-action. For each hospital $h'\ne h$ it will have one child, $w(h')$, corresponding to a proposal to $h'$, and for $h$ it will have two children $\wrl(h)$ and $\wim(h)$, $\wrl$ and $\wim$ for short, corresponding to real and imaginary proposals to $h$.
    We let $p(w)$ denote the probability of taking the edge to node $w$ in the $q$ setting, and $\ptilde$ the probability in the $\qtilde$ setting. We note that by construction, $\ptilde(w)=p(w)$ for all nodes other than $\wrl,\wim$.
    For these nodes, we have $\ptilde(\wrl) \le p(\wrl)$ and $\ptilde(\wim)=p(\wrl)-\ptilde(\wrl)$. Also, $\qtilde(wrl)\le \qtilde(\wim)$
    for in both settings the doctors have the same $k$ and $u$ values, but in the subtree rooted at $\wim$ we need to obtain one more real proposal than in the subtree rooted at $\wrl$, and so the failure probability, $\qtilde(\wim)$, must be at least as large as $\qtilde(\wrl)$.

\hide{
    \smallskip\noindent
    \emph{Case A}. The proposal is to some $h'\ne h$:
    
    Then $v$ has two children $\wrl(h')$, $\wim(h')$, $\wrl$ and $\wim$ for short.
    Here $k(d,w)=k(d,v)-1$ and $u(d,w)<u(d,v)$, for $w=\wrl,\wim$; the variables associated with the other doctors are unchanged. At both children, we seek the probability that $h$ receives at most $\kappa-x-1$ proposals. 

    As already noted, $\ptilde(\wrl) \le p(\wrl)$ and $\ptilde(\wim)=p(\wrl)-\ptilde(\wrl)$.
       In addition, $\qtilde(\wrl)\le \qtilde(\wim)$, for in both settings the doctors have the same $k$ and $u$ values, but in the subtree rooted at $\wim$ we need to obtain one more real proposal than in the subtree rooted at $\wrl$, and so the failure probability, $\qtilde(\wim)$, must be at least as large as $\qtilde(\wrl)$. (One may wonder at this seemingly unnecessarily complex setup; but it is used later in the proof of \Cref{lem::fail-prob-doctor-prop}).

Now, we deduce our result with the help of a little algebra.
\begin{align*}
    \ptilde(\wim)\cdot \qtilde(\wim) + \ptilde(\wrl)\cdot \qtilde(\wrl) 
    & \ge   [\ptilde(\wim) + \ptilde(\wrl)]\cdot \qtilde(\wrl)\\
    &=p(\wrl)\cdot \qtilde(\wrl)\\
    &\ge p(\wrl)\cdot q(\wrl).
\end{align*}
}

    \smallskip\noindent
        At $\wrl$ we are seeking the probability of at most $\kappa-x-1$ real proposals to $h$, and at the remaining nodes  the probability of at most $\kappa-x$ real proposals to $h$.
        At $\wrl$ and $\wim$ the proposals come from the $s-x-y-1$ doctors with the smallest $u$--$k$ products, and at the remaining nodes from the $s-x-y$ such doctors.

     \smallskip\noindent
     \emph{Case 1}. $u(d,v)\cdot k(d,v)$, the proposal probability for $h$ at node $v$, is among the $s-x-y$ smallest proposal probabilities, $u_1k_1\le u_2k_2\le\ldots\le u_{s-x-y}\cdot k_{s-x-y}$, at node $v$. \\
     Since only $u(d,w)$ and $k(d,w)$ change (by becoming smaller) at any of $v$'s children $w$, the proposal probability for $d$ continues to be among the $s-x-y$ smallest at $w\ne \wrl,\wim$, the children where it is still in use. 
 %    $\ptilde(\wna)=p(\wna)$, and as in case A, $\ptilde(\wrl) \le p(\wrl)$,  $\ptilde(\wim)=p(\wrl)-\ptilde(\wrl)$, and $\qtilde(\wrl)\le \qtilde(\wim)$.
     Therefore instantiating the randomness needed to choose an edge to a child of $v$ ensures that 
     \begin{align*}
     \qtilde(v)&=\ptilde(\wrl)\cdot\qtilde(\wrl)+\ptilde(\wim)\cdot \qtilde(\wim) + \sum_{w\ne \wrl,\wim}\ptilde(w)\cdot \qtilde(w)\\
     &\ge [\ptilde(\wrl)+\ptilde(\wim)]\cdot \qtilde(\wrl)+ \sum_{w\ne \wrl,\wim}\ptilde(w)\cdot \qtilde(w)\\
     &=p(\wrl)\cdot \qtilde(\wrl) +  \sum_{w\ne \wrl,\wim} p(w)\cdot \qtilde(w) \\
     &\ge p(\wrl)\cdot q(\wrl) + \sum_{w\ne \wrl,\wim}p(w)\cdot q(w)=q(v).
     \end{align*}

     \smallskip\noindent
     \emph{Case 2}. $u(d,v)\cdot k(d,v)$, the proposal probability for $h$ at node $v$, is not among the $s-x-y$ smallest proposal probabilities at node $v$. \\
     The argument is very similar to the Case 2 argument in \Cref{clm::fixed-k-stoc-dom-idio}; the only change is to the breakpoints.
     First, we increase $u_{s-x-y}$ to $u(d,\wna)$ as before (unless it is already larger).
     Then, we continue to increase $u_{s-x-y}$ until 
     $u_{s-x-y}\cdot k(d',\wna)= u(d,\wna)\cdot k(d,\wna)$, where $d'$ is the doctor with utility $u_{s-x-y}$.
\end{proof}

\begin{claim}\label{lem::fail-prob-doctor-prop}
    Let $h$ be a hospital with public rating at least $a\alpha$.
    Suppose that $s\ge \alpha n$ and $\Bdbm(C(h))$ does not occur. Then the probability that in the $\qtilde$ setting $h$ receives fewer than $\kappa$ proposals for which $h$ has interview value at least $1-\psibar_h$ is at most $\exp\big(-3\kappa k\alpha\psibar_h/8(4a+1)\big)$, if $k\ge \frac{(2+\sqrt{2})\cdot(4a+1)}{2(\alpha-\taubar)\psibar_h}$ and $\taubar \le \frac {\alpha}{4}$.
\end{claim}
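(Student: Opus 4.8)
The plan is to mirror the proof of \Cref{lem::fail-prob-hospital-prop}, replacing the ``$(1-p)^s$'' estimate there (which bounded the probability of \emph{zero} good proposals) by a Chernoff lower-tail bound, since now $h$ must collect $\kappa$ good proposals rather than one. I work entirely in the $\qtilde$ setting at the root of $T$. First I record what the hypotheses give us at $\treeroot(T)$: there are $s \ge \alpha n$ surplus doctors, each lying in $h$'s cone, each with $k(d,\treeroot)=k$ remaining proposals, and — since each participating doctor's $k$ selected edges have private value at least $1-\taubar$ (as $\Btau$ does not occur) — each with residual utility $u(d,\treeroot) \ge \alpha-\taubar$. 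By the definition of the $\qtilde$ process, each of these $s$ doctors independently makes a real proposal to $h$ at some point during the run with probability at least $u(d,\treeroot)\cdot k/\big[(2a+\tfrac12)\alpha n/\kappa\big] \ge \tfrac{2(\alpha-\taubar)k\kappa}{(4a+1)\alpha n}$ (here $\Bdbm(C(h))$ not occurring is used to control the relevant cone populations), and then, with a further independent probability $\psibar_h$, $h$'s interview value for that doctor is at least $1-\psibar_h$. Hence $h$ receives a ``good'' real proposal from a fixed surplus doctor with probability at least $p := \tfrac{2(\alpha-\taubar)k\kappa\psibar_h}{(4a+1)\alpha n}$.

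Next I would invoke the stochastic-domination property built into $\qtilde$: the number $X$ of good real proposals $h$ receives over the whole run dominates a $\mathrm{Bin}(s,p)$ random variable, because the $\qtilde$ construction routes each surplus doctor's attempt through fresh independent randomness, and the per-attempt probability $p$ was defined as a uniform lower bound (using the maximal cone size $(2a+\tfrac12)\alpha n/\kappa$ and the residual-utility floor $\alpha-\taubar$), so it stays valid throughout the run, not merely at the root. Thus $\prob[X < \kappa] \le \prob[\mathrm{Bin}(s,p) < \kappa]$. Setting $\mu = sp \ge \alpha n\cdot p = \tfrac{2(\alpha-\taubar)k\kappa\psibar_h}{4a+1}$, the hypothesis $k \ge \tfrac{(2+\sqrt2)(4a+1)}{2(\alpha-\taubar)\psibar_h}$ is exactly $\mu \ge (2+\sqrt2)\kappa$, i.e.\ $\kappa \le \mu/(2+\sqrt2)$, so a multiplicative Chernoff bound gives
\[
\prob[\mathrm{Bin}(s,p) < \kappa] \le \prob\big[\mathrm{Bin}(s,p) \le \tfrac{\mu}{2+\sqrt2}\big] \le \exp\Big(-\tfrac12\big(1 - \tfrac{1}{2+\sqrt2}\big)^2\mu\Big) = \exp\big(-\tfrac{\mu}{4}\big),
\]
using $1 - \tfrac{1}{2+\sqrt2} = \tfrac{1}{\sqrt2}$. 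Finally, since $\taubar \le \tfrac14\alpha$ we have $\alpha-\taubar \ge \tfrac34\alpha$, so $\tfrac{\mu}{4} = \tfrac{(\alpha-\taubar)k\kappa\psibar_h}{2(4a+1)} \ge \tfrac{3k\kappa\alpha\psibar_h}{8(4a+1)}$, which is the claimed bound.

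The hard part will be the second paragraph — arguing cleanly that $X$ stochastically dominates $\mathrm{Bin}(s,p)$ in the $\qtilde$ setting. One must be careful that the set of doctors ``attempting to propose to $h$'' ($s-x-y$ of them at node $v$) is drawn from $C(h)$, that each of the $s$ surplus doctors gets exactly one attempt along every root-to-leaf path with independent randomness, and that the success probability of each attempt is at least $p$ \emph{regardless of history}; this is precisely what the fixed denominator $(2a+\tfrac12)\alpha n/\kappa$ and the residual-utility lower bound $\alpha-\taubar$ were engineered to guarantee in the definition of $\qtilde$ (and it is the analogue of the corresponding step in \Cref{lem::fail-prob-hospital-prop}). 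Everything else — the Chernoff computation and the arithmetic translating the two hypotheses into $\kappa \le \mu/(2+\sqrt2)$ and $\alpha-\taubar \ge \tfrac34\alpha$ — is routine.
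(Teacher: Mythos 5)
Your proof matches the paper's essentially step for step: the same per-doctor proposal probability $\frac{(\alpha-\taubar)k}{(2a+\frac12)\alpha n/\kappa}$ times the extra factor $\psibar_h$, the same expectation $\mu\ge 2\kappa k(\alpha-\taubar)\psibar_h/(4a+1)$, the same Chernoff lower-tail computation yielding $\exp(-\mu/4)$ via $1-\frac{1}{2+\sqrt2}=\frac{1}{\sqrt2}$, and the same substitution $\alpha-\taubar\ge\frac34\alpha$. The only difference is cosmetic: the ``hard part'' you flag is not really an obstacle, since the claim is stated in the $\qtilde$ setting, where $\qtilde(\treeroot(T))$ is by construction an experiment with fixed per-doctor attempt probabilities (the paper disposes of this point in one sentence, noting the relevant indicators are negatively correlated after discretization and applying the Chernoff bound directly).
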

\begin{proof}
    For each doctor $d$, her residual utility at the root of $T$ is at least $\alpha-\taubar$.
    The probability that $d$ proposes to $h$ in the $\qtilde$ setting is 
    $\frac{(\alpha-\taubar) k}{(2a+\frac12)\alpha n/\kappa}$.
     Therefore, the expected number of doctors in $S(\treeroot(T))$ that selected $h$ is at least $2k\alpha n(\alpha-\taubar)/(4a+1)\alpha n= 2\kappa k(\alpha-\taubar)/(4a+1)$.
    If in addition, we restrict the edges to those for which $h$ has interview value at least $1-\psibar_h$, the expected number becomes
    $2\kappa k(\alpha-\taubar)\psibar_h/(4a+1)$.
    Due to the discretization, these edges are negatively correlated, and so
    by a Chernoff bound, the probability that fewer than $\kappa$ proposals to $h$ are made is at most
        $\exp\big(-\frac 12\big[1 -\frac{4a+1}{2k(\alpha-\taubar)\psibar_h}\big]^2\cdot 2\kappa k(\alpha-\taubar)\psibar_h/(4a+1)\big)\le \exp\big(-\kappa k(\alpha-\taubar)\psibar_h/2(4a+1)\big)$, if $k\ge \frac{(2+\sqrt{2})\cdot(4a+1)}{2(\alpha-\taubar)\psibar_h}$. Substituting $\taubar \le \frac{\alpha}{4}$ yields the claimed bound.
\end{proof}

\begin{claim}\label{clm::conditions-doctor-prop-one-hosp}
 The conditions in Claims~\ref{clm::value-of-taubar}---\ref{lem::fail-prob-doctor-prop} amount to the following constraints: $a\ge 5$, $b>2$, $\taubar\ge \frac{b(c+2)\ln n}{(a-\frac12)\frac{\alpha n}{\kappa}}$, $\frac{(2+\sqrt{2})\cdot(4a+1)}{2(\alpha-\taubar)\psibar_h} \le k\le\big[1-\big(\frac 2b\big)^{1/2}\big]\cdot b(c+2)\ln n$, and $\alpha\cdot\exp\big(\alpha k/2(4a+1)\big)\ge 3$.
 They also exclude events with total probability at most $\big(2n+\frac{3n}{\kappa} \big)\cdot n^{-(c+2)}$.
 \end{claim}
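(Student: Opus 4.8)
The plan is to prove \Cref{clm::conditions-doctor-prop-one-hosp} by direct bookkeeping over the four claims it references, in two passes: one collecting the numerical constraints and checking that the five displayed ones subsume them, and one tallying the bad events conditioned on and summing their probabilities. In the first pass I would walk through Claims~\ref{clm::value-of-taubar}, \ref{lem::surplus-hosp-prop-one-hosp}, \ref{clm::qtilde-larger-for-intervals-in-doc-prop}, and \ref{lem::fail-prob-doctor-prop} in order. \Cref{clm::value-of-taubar} supplies $b>2$, the lower bound $\taubar\ge \tfrac{b(c+2)\ln n}{(a-\frac12)\alpha n/\kappa}$, and the upper bound $k\le\big[1-\big(\tfrac{2}{b}\big)^{1/2}\big]b(c+2)\ln n$. \Cref{lem::surplus-hosp-prop-one-hosp} supplies $a\ge 5$ together with $\alpha\cdot\exp\big(\alpha k/2(4a+1)\big)\ge 3$, the latter being exactly the form in which $\exp(-\alpha k/2(4a+1))\le\tfrac13\alpha$ is used in that proof. \Cref{clm::qtilde-larger-for-intervals-in-doc-prop} is purely combinatorial once the tree $T$ and the $\qtilde$ construction are fixed; it needs only the cone-size upper bound $(2a+\tfrac12)\alpha n/\kappa$, which holds whenever $\Bdbm(C(h))$ fails, and so contributes no new numerical constraint. \Cref{lem::fail-prob-doctor-prop} supplies $s\ge\alpha n$ (already guaranteed by \Cref{lem::surplus-hosp-prop-one-hosp}), the lower bound $k\ge \tfrac{(2+\sqrt2)(4a+1)}{2(\alpha-\taubar)\psibar_h}$, and $\taubar\le\tfrac{\alpha}{4}$. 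Combining the two $k$-bounds yields the displayed window $\tfrac{(2+\sqrt2)(4a+1)}{2(\alpha-\taubar)\psibar_h}\le k\le\big[1-\big(\tfrac{2}{b}\big)^{1/2}\big]b(c+2)\ln n$; having fixed $\taubar$ at its lower bound, the remaining requirement $\taubar\le\tfrac{\alpha}{4}$ becomes a condition relating $\alpha$, $\kappa$, and $n$, and since the present claim only records constraints (their joint satisfiability being established later in \Cref{clm::final-cstrnts-idio}), I would note that this one is subsumed there rather than reprove it here.

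For the probability bound I would tally the events each application conditions on, charging each the global bound $n^{-(c+2)}$ supplied by \Cref{cor::cone-population-correct} and \Cref{lem::number-agents-in-intervals}. Over the $n$ doctors, \Cref{clm::value-of-taubar} conditions on $\Bhbf(C(d))$ and also carries its own failure probability $n^{-(c+2)}$ per doctor (the event $\Btau$), contributing at most $2n\cdot n^{-(c+2)}$. Over the $n/\kappa$ hospitals, \Cref{lem::surplus-hosp-prop-one-hosp} conditions on $\Bhbm([r(h)-\alpha,1))$ and $\Bdbf([r(h)-a\alpha,r(h)+a\alpha)\cup[r(h)+(a+1)\alpha,1))$, both intervals (or unions of intervals) of length at least $\alpha$ and hence of probability at most $n^{-(c+2)}$, contributing at most $2(n/\kappa)\cdot n^{-(c+2)}$; and \Cref{lem::fail-prob-doctor-prop} conditions on $\Bdbm(C(h))$, another $(n/\kappa)\cdot n^{-(c+2)}$. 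Since \Cref{clm::qtilde-larger-for-intervals-in-doc-prop} is deterministic once these events fail, it adds nothing, and summing gives at most $\big(2n+\tfrac{3n}{\kappa}\big)n^{-(c+2)}$, as claimed.

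The main obstacle, such as it is, is the accounting discipline rather than any mathematical difficulty: I would make sure each cone-population or interval-population event is charged exactly once — to the first place it is invoked, via \Cref{cor::cone-population-correct} and \Cref{lem::number-agents-in-intervals} — so that the $\Bhbf(C(d))$ conditioning inside \Cref{clm::value-of-taubar} is not re-charged when the same events resurface in the surplus and fail-probability arguments, and I would take care to flag $\taubar\le\tfrac{\alpha}{4}$ as deferred to \Cref{clm::final-cstrnts-idio} rather than silently assumed. Everything else is direct inspection of the statements of Claims~\ref{clm::value-of-taubar}--\ref{lem::fail-prob-doctor-prop}.
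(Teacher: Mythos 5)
Your bookkeeping matches the paper's treatment of this claim (the paper leaves it as a direct inspection of Claims~\ref{clm::value-of-taubar}--\ref{lem::fail-prob-doctor-prop}, exactly as you do): the constraint list is collected correctly, the tally of $2n\cdot n^{-(c+2)}$ (from $\Bhbf(C(d))$ over all doctors plus $\Btau$) together with $\frac{3n}{\kappa}\cdot n^{-(c+2)}$ (the two surplus events and $\Bdbm(C(h))$, per hospital) reproduces the stated bound, and deferring $\taubar\le\frac{\alpha}{4}$ to \Cref{clm::final-cstrnts-idio} is consistent with where the paper actually enforces it. One small slip worth noting: the cone-size bound $(2a+\frac12)\alpha n/\kappa$ used in the $\qtilde$ construction is governed by $\Bhbm(C(d))$ for $d\in C(h)$ rather than by $\Bdbm(C(h))$, but since you (like the paper) charge nothing extra for these cone-population events—they are covered by the blanket assumption following \Cref{cor::cone-population-correct}—the total probability bound is unaffected.
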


\subsubsection{Omitted proofs for \Cref{lem::doctors-interval-non-match} (Match Probabilities for an Interval of Doctors)}

\begin{claim}\label{clm::hosp-removal-lots-edges}
Suppose that the following events do not occur: $\Bdsf(I)$ and $\Bhbf(C(d))$ for all $d\in I$.
Then at most $|I|\exp\big(-\alpha\kappa k/2(4a+1)\big)$ hospitals are removed because they have 2 or more edges to doctors in $I$, 
at most $|I|\exp\big(-\alpha k/2(4a+1)\big)$ doctors are incident on the removed edges, and at most $|I|k\exp\big(-\alpha k/2(4a+1)\big)$ hospitals and edges incident on $I$ are removed, with failure probability at most $n^{-(c+2)}$, if\\ $4(c+2)\cdot\exp\big(\alpha k/2(4a+1)\big)\cdot \frac{\ln n}{n} \le g-f \le 2(a-\frac12)\alpha\cdot\exp\big(-\alpha k/2(4a+1)\big)/(e^2k^2\kappa)$.
\end{claim}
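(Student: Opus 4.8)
The plan is to reduce all three bounds to a single high-probability estimate, on the number of \emph{colliding edges} out of $I$ --- by which I mean edges from a doctor in $I$ to a hospital that is also the endpoint of some other edge out of $I$. This reduction is deterministic. A hospital with two or more edges to doctors in $I$ has at least one colliding edge, so the number of such hospitals is at most the number of colliding edges; each removed doctor is the tail of a colliding edge, so the number of removed doctors is at most the number of colliding edges; and deleting a removed doctor costs its $\le k$ edges together with their $\le k$ hospital endpoints, one of which (the hospital that triggered the deletion) has already been counted, so each removed doctor adds at most $k-1$ new hospitals. Hence, if at most $N$ edges collide, then at most $N$ hospitals have $\ge 2$ edges into $I$, at most $N$ doctors are removed, and at most $Nk$ hospitals and edges incident on $I$ are removed. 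So it suffices to show that, with failure probability at most $n^{-(c+2)}$, at most $|I|\exp\big(-\tfrac{\alpha k}{2(4a+1)}\big)$ edges out of $I$ collide.

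For this, recall each $d\in I$ picks its $k$ hospitals uniformly at random from its cone $C(d)$, and since $\Bhbf(C(d))$ does not occur, $|C(d)|\ge m:=(a-\tfrac12)\tfrac{\alpha n}{\kappa}$. Order the $\le |I|k$ edges out of $I$ arbitrarily and let $Y_e$ indicate that the hospital endpoint of $e$ equals that of an earlier edge; then the number of collisions is at most $\sum_e Y_e$. Revealing the endpoints in this order, the endpoint of $e$ is uniform over at least $m$ hospitals (ignoring the lower-order effect of the $\le k-1$ hospitals already chosen by the same doctor), so $\prob[\,Y_e=1\mid \text{prefix}\,]\le \tfrac{|I|k}{m}=\tfrac{|I|k\kappa}{(a-\tfrac12)\alpha n}$, uniformly over the prefix. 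Therefore $\sum_e Y_e$ is stochastically dominated by a sum $S$ of $|I|k$ independent Bernoulli random variables each of parameter $\tfrac{|I|k\kappa}{(a-\tfrac12)\alpha n}$, with $\E[S]\le b:=\tfrac{|I|^2k^2\kappa}{(a-\tfrac12)\alpha n}$.

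The remainder is a Chernoff computation. Put $s:=|I|\exp\big(-\tfrac{\alpha k}{2(4a+1)}\big)$; the multiplicative Chernoff bound gives $\prob[S>s]\le(eb/s)^s$ whenever $s\ge b$. As a function of $|I|$ this bound is largest at the smallest admissible value $|I|=\tfrac12(g-f)n$ (valid since $\Bdsf(I)$ does not occur); at that value the stated upper bound on $g-f$ forces $eb/s\le 1/e$ (in particular $s\ge b$), so $\prob[S>s]\le e^{-s}$, while the stated lower bound on $g-f$ gives $s\ge\tfrac12(g-f)n\exp\big(-\tfrac{\alpha k}{2(4a+1)}\big)\ge(c+2)\ln n$, hence $\prob[S>s]\le n^{-(c+2)}$. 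The three bounds of the claim then follow from the reduction above. The step I expect to require the most care is the stochastic-dominance reduction: the raw event ``$e$ collides'' is not a function of the prefix, which is exactly why one passes to the prefix-measurable $Y_e$'s and uses the over-counting inequality $\#\{\text{collisions}\}\le\sum_e Y_e$; pinning the Chernoff constants to the two displayed bounds on $g-f$ is then routine.
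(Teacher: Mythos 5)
Your route is essentially the paper's: order the at most $|I|k$ edges out of $I$, use the failure of $\Bhbf(C(d))$ for $d\in I$ to bound each conditional collision probability by roughly $|I|k\kappa/\big((a-\tfrac12)\alpha n\big)$, dominate the collision count by a sum of independent Bernoullis, apply a multiplicative Chernoff bound, note that the resulting bound is worst at the smallest admissible $|I|=\tfrac12(g-f)n$ (legitimate since $\Bdsf(I)$ does not occur), and read the two hypotheses on $g-f$ as ``base at most $1/e$'' and ``exponent at least $(c+2)\ln n$''. The deterministic reduction of the three conclusions to a bound on colliding edges is also how the paper finishes, and your Chernoff numerics for the base term do line up with the stated upper bound on $g-f$.

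The one genuine gap is a factor of two between your prefix indicators and the quantity your reduction actually needs. If a hospital receives $j\ge 2$ edges from $I$, then $\sum_e Y_e$ picks up only $j-1$ from it, while the number of colliding edges (your $N$) picks up $j$; so in general the colliding-edge count can be as large as $2\sum_e Y_e$, and the inequality you invoke, $\#\{\text{colliding edges}\}\le \sum_e Y_e$, is false. Consequently, proving $\sum_e Y_e\le |I|\exp\big(-\tfrac{\alpha k}{2(4a+1)}\big)$ w.h.p.\ does give the removed-hospital bound (each removed hospital contributes at least one $Y_e=1$), but for the removed doctors and for the $|I|k\exp(\cdot)$ total it only yields the claimed bounds with an extra factor $2$. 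The paper handles this by aiming the Chernoff bound at $s=\tfrac12|I|\exp\big(-\tfrac{\alpha k}{2(4a+1)}\big)$ and then doubling — this is exactly why the lower bound on $g-f$ carries the constant $4$ rather than the $2$ your calculation would need — and it simultaneously tightens the expectation to $b=\tfrac{|I|^2k^2\kappa}{2(a-\frac12)\alpha n}$ by adding the edges one at a time (the $i$-th edge collides with probability at most $i\kappa/((a-\tfrac12)\alpha n)$). Both refinements are needed to recover the stated constants: with your cruder $b$ (twice the paper's), halving $s$ makes $eb/s$ come out at $2/e$ rather than $1/e$ under the stated upper bound on $g-f$, and the stated lower bound on $g-f$ leaves no slack in the exponent (and the ``worst case at smallest $|I|$'' monotonicity also relies on the base being below $1/e$). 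So either strengthen the hypotheses by a factor of $2$, or incorporate the paper's two factor-of-two refinements. (Like the paper's own proof, you establish the hospital bound with exponent $-\alpha k/2(4a+1)$; the $\kappa$ in the statement's first exponent is not something either argument delivers.)
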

\begin{proof}
  We begin by giving a bound the number of edges from $I$ that collide, i.e., that have the same hospital as an endpoint.
  As $\Bhbf(C(d))$ does not occur for any $d\in I$, $|C(d)| \ge (a-\frac12)\alpha n/\kappa$ for each
  $d\in I$. For brevity we let $m$ denote $(a-\frac12)\alpha n/\kappa$.
  The probability that an edge from $I$ collides with any of the other edges out of $I$ is at most
  $\frac{(|I|-1)k}{m}\le \frac{|I|k}{m}=\frac{|I|k\kappa}{(a-\frac12)\alpha n}$. To give a high probability bound on the number of colliding edges
  we use stochastic dominance: we consider $|I|k$ 0/1 random variables which are 1 with respective probabilities
  $\frac{i\kappa}{(a-\frac12)\alpha n}$, for $0\le i \le |I|k-1$. 
  (We can think of this as arising by adding the edges one by one; the discretization only reduces the collision probability.)~
  Their sum has expectation bounded by $b=\frac{|I|^2k^2\kappa}{2(a-\frac12)\alpha n}$. Hence, by a Chernoff bound, the probability that the sum is greater than $s=\frac12|I|\exp\big(-\alpha k/2(4a+1)\big)$ is at most
  \begin{align*}
      \Big(\frac {e^{(s/b) -1}}{(s/b)^{s/b}}\Big)^{b}
      \le \Big(\frac{eb}{s}\Big)^{s} 
      = \Big(\frac{e|I|k^2\kappa\cdot\exp\big(\alpha k/2(4a+1)\big)}{(a-\frac 12) \alpha n}\Big)^{\frac12|I|\exp(-\alpha k/2(4a+1))}.
  \end{align*}
  This probability is maximized when $|I|$ is as small as possible, and as $\Bdsf(I)$ does not occur,
  this is when $|I|=\frac 12(g-f)n$.
  Then the probability is bounded by $n^{-(c+2)}$
  if $g-f\ge 4(c+2)\exp(\alpha k/2(4a+1))\cdot \frac{\ln n}{n}$ (which causes the exponent in the above expression to be at least $(c+2)\ln n)$, and $g-f \le 2(a-\frac12)\alpha\exp\big(-\alpha k/2(4a+1)\big)/(e^2k^2\kappa)$ (which causes the base term to be at most $1/e$).
  
  By stochastic dominance, $\frac12|I|\exp\big(-\alpha k/2(4a+1)\big)$ is also a bound on the number of collisions, and hence $|I|\exp\big(-\alpha k/2(4a+1)\big)$ bounds the number of edges that collide, the 
  number of hospitals with two or more edges into $I$ and the number of doctors incident on colliding edges. Therefore, in total, at most $|I|k\exp\big(-\alpha k/2(4a+1)\big)$ edges and hospitals are removed.
\end{proof}

Let $\Bcol$ denote the event that more than $|I|k\exp\big(-\alpha k/2(4a+1)\big)$ hospitals or edges incident on $I$ are removed. Going foward, we suppose that $\Bcol$ does not occur.

The nodes and edges in the tree $T$ of possible computations in the run of the double cut DA need to be redefined. Now we partition proposals according to whether they from hospitals that are outside the cones for doctors in $I'$ and the hospitals that can propose to the doctors in $I'$. 
We fix the utilities for the first category of proposals and consequently
a node corresponding to such a proposal will have a single child.
Once more, proposals to doctors in $I'$ can be real or imaginary. 
% Again, the need for imaginary proposals is due to selecting $h$'s utility before determining whether there is an edge from $I'$ to $h$.

We will need to redefine $q(v)$ and $\qtilde(v)$, as follows.
We let $x$ denote the number of doctors in $I'$ who receive actual or real proposals before the computation reaches node $v$.
Each time a real proposal is made to a doctor $d$ in $I'$, we might as well remove the remaining $k-1$ edges to $d$ as they will not cause any new doctors to receive a proposal.
This results in the removal of $k$ hospitals: the one making the real proposal, and the other $k-1$ with edges to $d$.
Next, we let $y$ denote the number of imaginary proposals to doctors in $I'$ who have not received a real proposal when node $v$ is reached. Again, we will stop the computation of the proposing hospital following any imaginary proposal to a doctor in $I'$. So at node $v$, $kx+y$ hospitals will have had their runs stopped. (The reason for the slightly convoluted definition of $y$ is that the $kx$ term may include some imaginary proposals.)
        
        Then $q(v)$ is defined to be the probability that at node $v$ the continuation of the run of the double-cut DA results in at most $r-x$ additional real proposals to distinct doctors in $I'$. As before, let $h$'s \emph{residual utility} at node $v$, $u(h,v)$, equal the current utility $u$ minus the minimum allowed utility in this run of the double cut utility, namely $u-(g+1-\alpha)$. Let $s$ be the minimum surplus over all the possible paths in the computation assuming no bad events occur.
 %   Let $S(v)$ be the set of the $s-kx-y$ hospitals with the smallest residual utilities     among those which have not made a real or imaginary proposal to a doctor in $I$ before reaching $v$.
    
    To define $\qtilde(v)$ we consider the following setup. 
    Suppose that the $s-kx-y$ hospitals with the smallest residual utilities independently attempt to make a proposal, real or imaginary, with probability corresponding to their residual utility to a hospital in $I'$ that has not yet received a real proposal. Suppose that, for each proposal, we then select whether it is real with probability $\tau =\frac{k(|I'|-x)}{(2a+\frac12)\alpha n/\kappa}$, where $|I'|$ is the number of doctors in $I'$ and $(2a+\frac 12)\alpha n/\kappa$ is an upper bound on the number of hospitals in $\cap_{d\in I} C(d)$, the intersection of the cones for $d\in I$, assuming $\Bhbm(C(d))$ does not occur for any $d\in I$. 
Then $\qtilde(v)$ is defined to be the probability that this process results in proposals to at most $r-x$ more doctors in $I'$.
\begin{claim}\label{clm::qtilde-bound-for-hosp-doc-int}
    $\qtilde(v)\ge q(v)$ for all $v$ in $T$.
\end{claim}
\begin{proof}
The proof that $\qtilde(v)\ge q(v)$ for all nodes $v$ is similar to the proof in \Cref{lem::doctors-high-value-match-prob}. There are three significant changes.
We no longer distinguish low-real and high-real proposals; rather than determining the probability of zero high-real proposals, we want to determine the probability of at most $r$ real proposals to distinct doctors in $I'$.
The second change is related: rather than a node $v$ having the four children $\wrh$,
$\wrl$, $\wim$, corresponding to high-real, low-real, and imaginary proposals to $d$, 
it will have children $\wrl(d)$, $\wim(d)$ for each $d\in I'$, corresponding to real and imaginary proposals to $d\in I'$; at these children, the goal will be to bound the probability that the additional number of doctors in $I'$ receiving real proposals is bounded by, respectively, $r-x-1$ and $r-x$.
It will continue to have children $w(d')$, for $d'\ne d$, corresponding to a proposal to $d'$, and $\wna$ corresponding to no action; for these children,
the goal is to bound the probability of at most an additional $r-x$ doctors receiving real proposals.

The base case, the case of a node with a single child, and case 2 are unchanged.
We give a new proof for the remaining case, namely case 1.

\smallskip\noindent
\emph{Case 1.} $u(d,v)$ is among the smallest $s-kx-y$ residual utilities at node $v$\\
This immediately implies that $u(d,w)$ is among the smallest $s-kx-y$ residual utilities at $w\ne \wrl(d),\wim(d)$ for $d\in I'$, as $u(h,w)<u(h,v)$, while the residual utilities for the other doctors are unchanged; we also note that they are unchanged at $\wrl(d)$ and $\wim(d)$ for $d\in I'$.

Again, we consider the random choices that are made in evaluating $\qtilde(v)$,
by instantiating exactly the randomness needed to select a child $w$ of $v$.
As before, $\qtilde(v)=\sum_{\text{$w$ a child of $v$}} \ptilde(w)\cdot \qtilde(w)$,
where $\ptilde(w)$ is the probability that the edge to $w$ is taken.

Now, $\ptilde(w)=p(w)$ for $w\ne \wrl(d),\wim(d)$, for $d\in I'$.
Also, for all $d\in I'$, $\ptilde(\wrl(d)) \le p(\wrl(d))$: for the edge is real in the $\qtilde$ setting with probability
$\tau$; this value is computed assuming the cone sizes for all $d\in I'$ are as large as possible;
this only reduces the probability of the edge being real.
Therefore, we can define $\ptilde(\wim(d))=p(\wrl(d))-\ptilde(\wrl(d))$ (for on all the paths to nodes $\wim(d)$, $h$'s computation is terminated, which implies that the subtrees at each node
$\wim(d)$ are isomorphic,  meaning that all that matters is the total probability weight assigned to these trees, and not how the probability weight is distributed among them, or equivalently, among
the edges to the nodes $\wim(d)$).

In addition, $\qtilde(\wrl(d))\le \qtilde(\wim(d))$. To see this, note that the bad event captured by
$\qtilde(\wrl(d))$ is having proposals to $r-(x+1)$ or fewer doctors in $I'$ and the bad event captured by
$\qtilde(\wim(d))$ is having proposals to $r-x$ or fewer doctors in $I'$. But $\wim(d)$ has all the proposers
present in $\wrl(d)$ plus up to another $k-1$ in addition, who have edges to  doctor $d$. So whenever the random choices result in at most $r-(x+1)$ doctors receiving proposals at $\wrl(d)$, the same random choices at $\wim(d)$ result in at most $r-(x+1)$ doctors receiving proposals plus, in addition, $d$ possibly receiving one or more proposals, for a total of at most $r-x$ doctors receiving proposals.

Now, we deduce our result with the help of a little algebra.
\begin{align*}
   &\sum_{d\in I'}\big[\ptilde(\wim(d))\cdot \qtilde(\wim(d)) + \ptilde(\wrl(d))\cdot \qtilde(\wrl(d))\big] +\sum_{d\ne \wrl(d),\wim(d)}\ptilde(w)\cdot \qtilde(w)\\
    &\hspace*{0.3in} \ge   \sum_{d\in I'}\big[\ptilde(\wim(d))+\ptilde(\wrl(d)) \big]\cdot \qtilde(\wrl(d)) +\sum_{d\ne \wrl(d),\wim(d)}p(w)\cdot \qtilde(w)\\ 
  %  &\hspace*{1in}+ \big[\ptilde(\wim(d))-p(\wim(d))\big]\cdot \big[\qtilde(\wim(d))-\qtilde(\wrl(d))\big]\\
    &\hspace*{0.3in} \ge \sum_{d\in I'} p(\wrl(d))\cdot q(\wrl(d))  +\sum_{d\ne \wrl(d),\wim(d)}p(w)\cdot q(w) \\
    &\hspace*{0.3in}=q(v).
\end{align*}
\end{proof}

\begin{claim}\label{qtilde-bound-at-root-hosp-to-doc-int}
Let $I=[f,g)$, and suppose that $s\ge \alpha n/\kappa$, 
$6\exp\big(\frac{\alpha k}{2(4a+1)}\big) \cdot (c+2)\frac{\ln n}{n} \le g-f \le  \frac {\alpha}{2}$.
Also, suppose that the following events do not occur: $\Bcol$ and $\Bdsm(I)$.
Then the probability that the number of unmatched doctors is more than
    $3|I|\cdot \exp\big(-\frac{\alpha k}{2(4a+1)}\big)$ is at most $n^{-(c+2)}$.
\end{claim}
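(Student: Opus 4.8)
The plan is to use the stochastic‑dominance estimate of \Cref{clm::qtilde-bound-for-hosp-doc-int}, which says $\qtilde(v)\ge q(v)$ for every node $v$ of $T$, and in particular at the root. So it suffices to choose a threshold $r$ for which, \emph{in the independent $\qtilde$ model at $\treeroot(T)$}, the probability that more than $r$ doctors of $I'$ fail to receive a real proposal is at most $n^{-(c+2)}$; the same bound then transfers to the actual run of the double‑cut DA, and hence at most $r$ doctors of $I'$ are unmatched with failure probability $n^{-(c+2)}$. To pass from $I'$ back to $I$, I would use that the doctors of $I\setminus I'$ were exactly those removed in the collision clean‑up, so when $\Bcol$ does not occur there are at most $|I|\exp\!\big(-\tfrac{\alpha k}{2(4a+1)}\big)$ of them (by \Cref{clm::hosp-removal-lots-edges}); thus the number of unmatched doctors in $I$ is at most $r+|I|\exp\!\big(-\tfrac{\alpha k}{2(4a+1)}\big)$.

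Next I would run the per‑doctor estimate in the $\qtilde$ model at the root, where $x=y=0$ and all $s\ge\alpha n/\kappa$ surplus hospitals are active. Each such hospital attempts a proposal with probability equal to the width of its residual‑utility window $\cap_{d\in I'}[r(d)+1-\alpha,\,r(d)+1)$, which is at least $\alpha-(g-f)\ge\tfrac\alpha2$ by the hypothesis $g-f\le\tfrac\alpha2$; conditioned on an attempt, the proposal goes to a uniformly chosen doctor of $I'$ not yet holding a real proposal and is declared real with probability $\tau$. Plugging in the value of $\tau$ and using $|C(d)|\le(2a+\tfrac12)\alpha n/\kappa$ (the cone‑size event $\Bhbm(C(d))$ is among the bad events in ${\mathcal B}$), each surplus hospital sends a real proposal to a fixed $d\in I'$ with probability at least $\tfrac{k\kappa}{(4a+1)n}$, so $d$ expects at least $\tfrac{\alpha n}{\kappa}\cdot\tfrac{k\kappa}{(4a+1)n}=\tfrac{\alpha k}{4a+1}\ge\tfrac{\alpha k}{2(4a+1)}$ real proposals, whence $\Pr[d\text{ gets none}]\le\exp\!\big(-\tfrac{\alpha k}{2(4a+1)}\big)$.

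Then I would sum these failures. Letting $X_d$ indicate that $d\in I'$ receives no real proposal in the $\qtilde$ model, $\E\big[\sum_{d\in I'}X_d\big]\le\mu:=|I'|\exp\!\big(-\tfrac{\alpha k}{2(4a+1)}\big)$, and the $X_d$ are negatively associated — each hospital's attempt is independent, and a real attempt is forced onto an as‑yet‑unhit doctor of $I'$, a balls‑into‑(shrinking‑set‑of‑empty‑)bins structure — so a multiplicative Chernoff bound gives $\Pr\big[\sum_{d\in I'}X_d>2\mu\big]\le e^{-\mu/3}$, which is at most $n^{-(c+2)}$ once $\mu\ge 3(c+2)\ln n$. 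Since $\Bdsf(I)$ (a bad event in ${\mathcal B}$) does not occur, $|I|\ge\tfrac12(g-f)n$, and since $\Bcol$ does not occur the removals are a negligible fraction, so $|I'|\ge\tfrac12(g-f)n$ up to lower‑order terms; hence $\mu\ge 3(c+2)\ln n$ follows from the hypothesis $g-f\ge 6\exp\!\big(\tfrac{\alpha k}{2(4a+1)}\big)(c+2)\tfrac{\ln n}{n}$. Finally I would take $r=2|I'|\exp\!\big(-\tfrac{\alpha k}{2(4a+1)}\big)\le 2|I|\exp\!\big(-\tfrac{\alpha k}{2(4a+1)}\big)$ and add the at most $|I|\exp\!\big(-\tfrac{\alpha k}{2(4a+1)}\big)$ removed doctors, giving at most $3|I|\exp\!\big(-\tfrac{\alpha k}{2(4a+1)}\big)$ unmatched doctors in $I$ with failure probability $n^{-(c+2)}$, as claimed.

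The hard part will be justifying the negative‑association step cleanly: one must check that the coupling among hospitals induced by forcing real proposals onto not‑yet‑hit doctors (together with $\tau$ depending on how many doctors have already been hit) does not spoil the Chernoff bound on $\sum_{d\in I'}X_d$; the safest route is to couple the $\qtilde$ process to a with‑replacement balls‑into‑bins process that stochastically has at least as many unmatched doctors and for which negative association is standard. A secondary bookkeeping point is to make sure every population estimate used — the cone sizes ($\Bhbm(C(d))$) and the sizes $|I|$ ($\Bdsf(I)$, $\Bdsm(I)$) — is indeed licensed by ${\mathcal B}$, $\Bcol$ and $\Bdsm(I)$ not occurring, so that no extra failure probability beyond $n^{-(c+2)}$ is incurred.
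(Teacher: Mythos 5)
Your proposal is essentially the paper's own argument: reduce to the independent $\qtilde$ model at $\treeroot(T)$ via \Cref{clm::qtilde-bound-for-hosp-doc-int}, get the per-doctor failure bound $\exp\big(-\frac{\alpha k}{2(4a+1)}\big)$ from the surplus $s\ge\alpha n/\kappa$, the residual-utility window of width at least $\alpha-(g-f)\ge\alpha/2$, and the cone-size bound, then apply a Chernoff bound to the negatively correlated indicators $X_d$ over $I'$ (using the lower bound on $g-f$ for concentration) and add back the at most $|I|\exp\big(-\frac{\alpha k}{2(4a+1)}\big)$ doctors removed when $\Bcol$ does not occur. The only differences are immaterial constant-level bookkeeping (e.g.\ your $|I'|\gtrsim\frac12(g-f)n$ versus the paper's $|I'|\ge\frac14(g-f)n$) and your more explicit coupling to justify the negative-association step, which the paper simply asserts.
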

\begin{proof}
We seek to identify as small an $r$ as possible such that $\qtilde(\treeroot(T))\le n^{-(c+2)}$,
for we can then deduce that $q(\treeroot(T))\le n^{-(c+2)}$ for this value of $r$.
The remainder of the analysis concerns $\qtilde(\treeroot(T))$.

In the $\qtilde(\treeroot(T))$ setting, recall that the utility of the hospital for a doctor in $I'$
is restricted to be in the range $[g+1-\alpha,f+1)$, a range of size $\alpha-(g-f)$.
Therefore, at the root of $T$, $s\ge \alpha n/\kappa$ hospitals each seek to propose to some unspecified doctor in $I'$ with probability at least
\begin{align*}
[\alpha-(g-f)]\cdot \frac {2|I'| k}{(4a+1)\alpha n/\kappa}
 \ge \frac {\alpha|I'| k}{(4a+1)\alpha n/\kappa},
\end{align*}
using $g-f\le \frac {\alpha}{2}$.

Thus the expected number of proposals to a single doctor $d$ in $I'$ is at least
$\frac{\alpha k}{(4a+1)}$.
By a Chernoff bound, $d$ receives no proposals with probability at most
$\exp\big(-\frac12 \cdot\frac{\alpha k}{(4a+1)} \big)$.
Let $X_d$ be the 0/1 random variable that equals 1 if $d$ receives no proposals.
The $X_d$ are negatively correlated and therefore we can apply Chernoff bounds to their sum.
In expectation, the sum of the $X_d$ is at most
$|I'|\cdot \exp\big(-\frac{\alpha k}{2(4a+1)}\big)$, and by a Chernoff bound, it is more than
$2|I'|\cdot \exp\big(-\frac{\alpha k}{2(4a+1)}\big)$
with probability at most $\exp\big(-\frac 23 |I'|\cdot \exp\big(-\frac{\alpha k}{2(4a+1)}\big)\big)\le n^{-(c+2)}$, if $|I'| \ge \frac 32 \exp\big(\frac{\alpha k}{2(4a+1)}\big) \cdot (c+2)\ln n$.
By \Cref{clm::hosp-removal-lots-edges}, if $\Bcol$ and $\Bdsf(I)$ do not occur, and if $\exp\big(-\alpha k/2(4a+1)\big)\le \frac12$, then $|I'|\ge |I|\big[1-\exp\big(-\alpha k/2(4a+1)\big)\big]\ge \frac 12|I|\ge \frac 14 (g-f)n$;
then $g-f \ge 6\exp\big(\frac{\alpha k}{2(4a+1)}\big) \cdot (c+2)\frac{\ln n}{n}$ suffices.  

On setting $r=2|I'|\cdot \exp\big(-\frac{\alpha k}{2(4a+1)}\big)$, we deduce that at most
$2|I'|\exp\big(-\frac{\alpha k}{2(4a+1)}\big)$ doctors in $|I'|$ are unmatched, with failure probability
at most $n^{-(c+2)}$. As $\Bcol$ does not occur, we know that $|I|-|I'| \le |I|\exp\big(-\frac{\alpha k}{2(4a+1)}\big)$, from which our result follows.
\end{proof}

For the next claim, the following additional notation will be helpful.
$J=[g+(a+1)\alpha,1)\cup [g-a\alpha,f+a\alpha)$ is the set of participating hospitals excluding those in $[g+a\alpha,g+(a+1)\alpha)$,
and $K=[g-\alpha,1)\setminus I$ is the set of participating doctors excluding those in $I$.

\begin{claim}\label{clm::surplus-interval-hosp-prop}
    \emph{Suppose that none of the following events occur: 
    $\Bdsm(I)$, $\Bhbf(J)$, $\Bhbm(K)$.
    Then the surplus $s\ge \alpha n/\kappa$, if $\alpha\cdot\exp\big(\alpha k/2(4a+1)\big)\ge 3$,
    $g-f\le \frac32 (a+1)\alpha/\kappa k$, 
    and $a\ge 5$.
    }
\end{claim}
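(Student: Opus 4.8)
The plan is a counting argument directly parallel to the single-doctor case in \Cref{lem::surplus-hosp-prop-one-doc}, with the trimmed interval $I'$ from \Cref{sec::doc-int-idio-no-match} playing the role of the single doctor. Recall that the surplus $s$ is the minimum, over all runs of the hospital-proposing double-cut DA for $I$ in which no bad event occurs, of the number of hospitals with public rating in $L\triangleq[g-a\alpha,f+a\alpha)$ (these lie in-cone for every doctor of $I$) that are neither removed in the trimming nor fully matched to doctors lying outside $I'$; equivalently, the number of hospitals still able to propose to a doctor of $I'$. I would lower-bound $s$ by starting from the number $|J|$ of participating hospitals with public rating in $J=[g-a\alpha,f+a\alpha)\cup[g+(a+1)\alpha,1)$ and subtracting three quantities: (i) the hospitals in $[g+(a+1)\alpha,1)$ that are not fully matched, which the hypothesis of \Cref{lem::doctors-interval-non-match} bounds by a $3\exp(-\alpha k/2(4a+1))$ fraction; (ii) the hospitals eliminated during the trimming, bounded by $|I|k\exp(-\alpha k/2(4a+1))$ via \Cref{clm::hosp-removal-lots-edges}; and (iii) the hospitals of $L$ that get fully matched by doctors outside $I'$, namely the doctors of $K\triangleq[g-\alpha,1)\setminus I$ together with the removed doctors $I\setminus I'$, which is at most $(|K|+|I\setminus I'|)/\kappa$ since those doctors occupy at most $|K|+|I\setminus I'|$ match slots in all.

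Next I would carry out the case split signalled in the statement, according to whether $g\le 1-(a+1)\alpha$, i.e.\ whether the cones of the doctors of $I$ reach the top of the public-rating range. In Case~1 ($g\le 1-(a+1)\alpha$), $J$ has length $(1-g)+(a-1)\alpha-(g-f)\ge\alpha$, so non-occurrence of $\Bhbf(J)$ gives $|J|\ge[(1-g)+(a-\tfrac32)\alpha-(g-f)]\tfrac n\kappa$, while non-occurrence of $\Bdbm(K)$ gives $|K|\le[(1-g)+\tfrac32\alpha-(g-f)]n$. In Case~2 ($g> 1-(a+1)\alpha$) there is no above-cone interval, so $J=[g-a\alpha,f+a\alpha)$ and $K=[g-\alpha,1)\setminus I$ are their truncated forms; the constraint $g-f\le\tfrac32(a+1)\alpha/\kappa k$ forces $g-f$ to be tiny next to $\alpha$, so $J$ still has length comfortably above $\alpha$ (about $2a\alpha$) and the same big-interval events apply, giving $|J|\ge[(2a-\tfrac32)\alpha-(g-f)]\tfrac n\kappa$ and $|K|\le[(a+\tfrac32)\alpha-(g-f)]n$. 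In both cases the loss terms (i) and (ii) combine into a single bound of $\alpha\tfrac n\kappa$: using $|I|\le 2(g-f)n$ (since $\Bdsm(I)$ does not occur), term (ii) is $O\big(\alpha\tfrac n\kappa\exp(-\alpha k/2(4a+1))\big)$ by the constraint on $g-f$, and a short computation shows (i)$+$(ii)$\le\alpha\tfrac n\kappa$ exactly when $\alpha\exp(\alpha k/2(4a+1))\ge3$; term (iii) equals $|K|/\kappa$ up to the negligible $|I\setminus I'|/\kappa$ (again from \Cref{clm::hosp-removal-lots-edges}). Plugging in, in Case~1 the surplus is at least $\big([(1-g)+(a-\tfrac32)\alpha-(g-f)]-\alpha-[(1-g)+\tfrac32\alpha-(g-f)]\big)\tfrac n\kappa=(a-4)\alpha\tfrac n\kappa\ge\alpha\tfrac n\kappa$ since $a\ge5$, and Case~2 is the same arithmetic with the truncated lengths, again yielding $(a-4)\alpha\tfrac n\kappa\ge\alpha\tfrac n\kappa$.

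I expect the main obstacle to be not the arithmetic but two bookkeeping points. First, since $s$ is a minimum over \emph{all} no-bad-event computation paths, every subtracted quantity must be valid on every such path; the clean way to justify (iii) uniformly is that the total number of match slots available to the doctors outside $I'$ is a fixed number, $|K|+|I\setminus I'|$, independent of the run, so no order of hospital proposals can fully fill more than $(|K|+|I\setminus I'|)/\kappa$ hospitals of $L$ using only those doctors. Second, one must reconcile the trimming with \Cref{sec::doc-int-idio-no-match}: the removed hospitals are exactly those in $[f+a\alpha,g+a\alpha)$ (in-cone for only some doctors of $I$), those with two or more edges into $I$, and those adjacent to a removed doctor, and one has to check that after the first removal the hospitals counted in $J$ really do lie in $\bigcap_{d\in I}C(d)$, that \Cref{clm::hosp-removal-lots-edges} absorbs the remaining two groups within the $\alpha n/\kappa$ budget, and — a point that also arises in the single-doctor case — that ``not fully matched above the cone'' may be used in place of ``unmatched above the cone'', which is harmless because an above-cone hospital that has started proposing in this run ends either fully matched or unmatched. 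The hypotheses then enter transparently: $g-f\le\tfrac32(a+1)\alpha/\kappa k$ keeps the trimming loss $O(\alpha n/\kappa)$ rather than swamping the surplus, $\alpha\exp(\alpha k/2(4a+1))\ge3$ fits the not-fully-matched loss into the same budget, and $a\ge5$ leaves the positive residue $(a-4)\alpha n/\kappa$.
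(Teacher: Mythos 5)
Your argument is correct and follows essentially the same route as the paper's own proof: the same decomposition (count the hospitals in $J$ via non-occurrence of $\Bhbf(J)$, subtract the not-fully-matched hospitals above $g+(a+1)\alpha$ together with the trimmed hospitals, jointly bounded by $\alpha n/\kappa$ using $\alpha\exp\big(\alpha k/2(4a+1)\big)\ge 3$ and $g-f\le \frac32(a+1)\alpha/\kappa k$ with $|I|\le 2(g-f)n$ from $\Bdsm(I)$, then subtract the at most $|K|/\kappa$ hospitals that doctors outside $I$ can fill), the same case split on $g\le 1-(a+1)\alpha$, and the same arithmetic ending in $(a-4)\alpha n/\kappa\ge \alpha n/\kappa$ for $a\ge5$. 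One minor slip: in Case 2, non-occurrence of the doctor-count event for $K$ combined with $g>1-(a+1)\alpha$ gives $|K|\le\big[(a+\tfrac52)\alpha-(g-f)\big]n$, not $\big[(a+\tfrac32)\alpha-(g-f)\big]n$; with the corrected constant the surplus is still $(a-4)\alpha n/\kappa$, so the conclusion is unaffected.
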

\begin{proof}
We need to consider two cases, depending on whether the span of the cone for some doctor in $I$  overlaps the range $[1-\alpha,1)$, or in other words whether or not $g\le 1-(a+1)\alpha$.
\\
    \emph{Case 1}. $g\le 1-(a+1)\alpha$.\\
        As $\Bhbf(J)$ does not occur, $J$ contains at least $\big[1-g-(a+1)\alpha)+(2a\alpha- [g-f]) -\frac 12\alpha\big]\cdot \frac{n}{\kappa}= \big[(1-g) +(a-\frac32)\alpha -(g-f)\big]\cdot \frac{n}{\kappa}$ hospitals.
        By \Cref{lem::hospitals-interval-non-match}, at most a fraction $3\cdot \exp\big(-\frac{\alpha k}{2(4a+1)}\big)$ of the hospitals in $[g+(a+1)\alpha,1)$ are not fully matched.
        % (strictly, the fraction in a minimal set of intervals covering $[g+a\alpha,1)$).
        By \Cref{clm::hosp-removal-lots-edges}, the number of hospitals that are eliminated is at most $|I|k\exp\big(-\alpha k/2(4a+1)\big)\le 2(g-f)kn\cdot\exp\big(-\alpha k/2(4a+1)\big)$, as $\Bdsm(I)$ does not occur; we want this to be bounded by $3(a+1)\alpha \frac{n}{\kappa}\cdot\exp\big(-\alpha k/2(4a+1)\big)$, which holds
        if $g-f\le \frac32 (a+1)\alpha/\kappa k$.
        Then, the number of not-fully matched and eliminated hospitals in the intervals
        $[g-a\alpha,f+a\alpha)\cup[g+(a+1)\alpha,1)$ is at most $[1-(g+(a+1)\alpha)+(a+1)\alpha)]\frac{n}{\kappa}\cdot 3\exp\big(-\alpha k/2(4a+1)\big)\le \alpha \frac{n}{\kappa}$, if $3\le \alpha \exp\big(\alpha k/2(4a+1)\big)$. This leaves at least $\Big(\big[1-g)+(a-\frac 32)\alpha- (g-f)\big]- \alpha\Big)\frac n{\kappa}$ that are either fully matched or in the interval $[g-a\alpha,f+a\alpha)$.
        
        As $\Bdbm(K)$ does not occur, there are at most $[1-g+\frac32\alpha-(g-f)]n$ participating doctors outside of $I$.
        Therefore, the number of not-fully matched hospitals in $[g-a\alpha,f+a\alpha)$ is at least 
    $\Big(\big[1-g)+(a-\frac 32)\alpha- (g-f)\big]- \alpha -\big[1-g+\frac32\alpha-(g-f)\big]\Big)\cdot \frac{n}{\kappa}\ge (a-4)\frac{\alpha n}{\kappa}\ge \frac{\alpha n}{\kappa}$.
            \\
    \emph{Case 2}. $g> 1-(a+1)\alpha$.\\
    As in Case 1, $K$ contains at most $\big[1-g+\frac32\alpha-(g-f)\big]n\le \big[(a+\frac52)\alpha -(g-f)\big]n$ doctors.
     Now, $J=[g-a\alpha,f+a\alpha)$.
    Similarly to Case 1, the number of hospitals in $[g-a\alpha,f+a\alpha)$ that are remain after the elimination
    is at least $\big(\big[(2a-\frac12)\alpha-(g-f)\big] -\alpha\big)\cdot\frac {n}{\kappa}$.
    So the surplus is at least
    $\big(\big[(2a-\frac32)\alpha-(g-f)\big] - \big[(a+\frac52)\alpha -(g-f)\big] \big)\cdot\frac {n}{\kappa}
    \ge (a-4)\frac {\alpha n}{\kappa}\ge \frac {\alpha n}{\kappa}$.
    \end{proof}

\begin{claim}\label{clm::hosp-prop-interval-constraints}
    The conditions in Claims~\ref{clm::hosp-removal-lots-edges}--\Cref{clm::surplus-interval-hosp-prop} amount to the following constraints:
    \begin{align*}
        \alpha\cdot \exp\Big(\frac {\alpha k}{2(4a+1)}\Big)&\ge 3~~~~(\Cref{clm::surplus-interval-hosp-prop})\\
        a&\ge 5~~~~(\Cref{clm::surplus-interval-hosp-prop})
   %     n&\ge \frac {e(|I|k\kappa)^2}{2\tildec[(a-\frac12)\alpha]^2}~~~~(\Cref{clm::hosp-removal-lots-edges})
    \end{align*}
    and several constraints on $g-f$, namely:
    \begin{align*}
        6(c+2)\exp\Big(\frac{\alpha k}{2(4a+1)}\Big)\cdot\frac{\ln n}{n}
        \le g-f
        &\le \min\Big\{\frac{(2a-1)\alpha\cdot\exp\big(-\frac{\alpha k}{2(4a+1)}\big)}{(ek)^2\kappa},~\frac{\alpha}{2},~\frac{3(a+1)\alpha}{2\kappa k}\Big\}\\
       & (\text{resp.\ Claims \ref{clm::hosp-removal-lots-edges} and \ref{qtilde-bound-at-root-hosp-to-doc-int}, \ref{clm::hosp-removal-lots-edges}, \ref{qtilde-bound-at-root-hosp-to-doc-int}, \ref{clm::surplus-interval-hosp-prop}})\\
       &\le \frac{(2a-1)\alpha\cdot\exp\big(-\frac{\alpha k}{2(4a+1)}\big)}{(ek)^2\kappa},~~~~~~~~~~~~\text{if $(ek)^2 \ge 2(2a-1)$}.
    \end{align*}
    In addition, over the whole induction, they exclude events with total probability at most $\big( \frac{6}{g-f} + \frac{n}{\kappa}\big)\cdot n^{-(c+2)}$.
\end{claim}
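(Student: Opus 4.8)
The plan is to prove this consolidation claim by inspecting, one at a time, the three claims it references — \Cref{clm::hosp-removal-lots-edges} (hospital removal), \Cref{qtilde-bound-at-root-hosp-to-doc-int} (the bound on the number of unmatched doctors in $I$), and \Cref{clm::surplus-interval-hosp-prop} (the surplus bound) — reading off their hypotheses and sorting them into: (a) constraints not involving $g-f$, (b) lower bounds on $g-f$, (c) upper bounds on $g-f$, and (d) excluded bad events. For (a), \Cref{clm::surplus-interval-hosp-prop} contributes $a\ge 5$ and $\alpha\cdot\exp\big(\alpha k/2(4a+1)\big)\ge 3$. I would then argue that the two remaining-looking hypotheses are not genuinely extra: the surplus lower bound $s\ge \alpha n/\kappa$ required by \Cref{qtilde-bound-at-root-hosp-to-doc-int} is exactly the conclusion of \Cref{clm::surplus-interval-hosp-prop} under the same non-$(g-f)$ constraints, and the hypothesis on the fraction of not-fully-matched hospitals in $[g+(a+1)\alpha,1)$ is the inductive hypothesis supplied by \Cref{lem::hospitals-interval-non-match} applied to the higher intervals. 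So category (a) contains nothing beyond $a\ge 5$ and $\alpha\cdot\exp\big(\alpha k/2(4a+1)\big)\ge 3$.

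For (b), the lower bounds on $g-f$ are $4(c+2)\exp\big(\alpha k/2(4a+1)\big)\cdot\frac{\ln n}{n}$ from \Cref{clm::hosp-removal-lots-edges} and $6(c+2)\exp\big(\alpha k/2(4a+1)\big)\cdot\frac{\ln n}{n}$ from \Cref{qtilde-bound-at-root-hosp-to-doc-int}; the second dominates, giving the stated lower bound. For (c), the upper bounds are $\frac{(2a-1)\alpha\cdot\exp(-\alpha k/2(4a+1))}{(ek)^2\kappa}$ (\Cref{clm::hosp-removal-lots-edges}), $\frac{\alpha}{2}$ (\Cref{qtilde-bound-at-root-hosp-to-doc-int}), and $\frac{3(a+1)\alpha}{2\kappa k}$ (\Cref{clm::surplus-interval-hosp-prop}); their minimum is the stated constraint. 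I would then verify the displayed simplification: since $\exp(-\alpha k/2(4a+1))\le 1$ and $\kappa\ge 1$, the first bound is at most $\tfrac{\alpha}{2}$ whenever $(ek)^2\ge 2(2a-1)$, and it is at most $\tfrac{3(a+1)\alpha}{2\kappa k}$ already for $k\ge 1$, $a\ge 1$ (that inequality reduces to $2(2a-1)\le 3(a+1)e^2k$, which is comfortably true); hence under the single side condition $(ek)^2\ge 2(2a-1)$ the minimum is realized by the first term, as claimed.

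The final piece is the probability accounting over the whole induction, which ranges over the $O(1/(g-f))$ public-rating intervals $I$ and their associated sets $L'$, $J$, $K$. I would tally three contributions. First, each of the three claims has failure probability $O(n^{-(c+2)})$ for a given $I$, so over all intervals this is $O\big(\tfrac{1}{g-f}\big)\cdot n^{-(c+2)}$. Second, the interval-indexed bad events $\Bdsf(I)$, $\Bdsm(I)$, $\Bcol$, $\Bhbf(J)$, $\Bhbm(K)$ contribute one event per interval, each of probability at most $n^{-(c+2)}$ by \Cref{lem::number-agents-in-intervals}, hence again $O\big(\tfrac{1}{g-f}\big)\cdot n^{-(c+2)}$; after removing the events counted twice (notably $\Bdsm(I)$, shared by two claims) the constant works out to $\tfrac{6}{g-f}$. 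Third, the cone-population events $\Bhbf(C(d))$ used in \Cref{clm::hosp-removal-lots-edges} and $\Bhbm(C(d))$ used in the $\qtilde$ construction: although these are indexed by doctors $d$, the cone $C(d)$ — viewed as a \emph{set of hospitals} — changes only when an endpoint $r(d)\pm a\alpha$ crosses a hospital's public rating, so there are at most $O(n/\kappa)$ distinct such sets, each event of probability at most $n^{-(c+2)}$ by \Cref{cor::cone-population-correct}; this gives $O\big(\tfrac{n}{\kappa}\big)\cdot n^{-(c+2)}$. Summing yields the stated bound $\big(\tfrac{6}{g-f}+\tfrac{n}{\kappa}\big)\cdot n^{-(c+2)}$.

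The main obstacle is not any individual deduction — each is elementary — but the bookkeeping discipline. One has to be sure that the surplus bound and the not-fully-matched-fraction hypothesis are discharged by \Cref{clm::surplus-interval-hosp-prop} and the induction rather than re-imposed or double-counted; that the three $g-f$ upper bounds are compared in the right order so the $(ek)^2\ge 2(2a-1)$ simplification is valid; and — most delicately — that the cone events are counted over the $O(n/\kappa)$ distinct hospital-sets rather than over the $n$ doctors, since that is what makes the $\tfrac{n}{\kappa}$ (rather than a weaker $n$) term attainable.
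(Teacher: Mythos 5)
Your proposal takes essentially the same route as the paper: the constraint consolidation is done by inspection of Claims~\ref{clm::hosp-removal-lots-edges}, \ref{qtilde-bound-at-root-hosp-to-doc-int} and \ref{clm::surplus-interval-hosp-prop} (including the correct comparison of the three upper bounds on $g-f$ under $(ek)^2\ge 2(2a-1)$), and the probability bound is a union bound over the excluded events and failure probabilities accumulated across the induction, exactly as in the paper's terse "by inspection" proof. Your justification of the $\tfrac{n}{\kappa}$ term by counting distinct hospital-sets $C(d)$ is actually more explicit than anything the paper provides, and your constant-level bookkeeping (the $6$, the attribution of $\Bcol$, counting both $\Bhbf(C(d))$ and $\Bhbm(C(d))$) is no looser than the paper's own accounting.
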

\begin{proof}
By inspection, 
the excluded events in \Cref{clm::hosp-removal-lots-edges} have probability at most $\big(\frac{n}{g-f}+\frac{n}{\kappa}\big)\cdot n^{-(c+2)}$ and it has failure probability at most $\frac{1}{g-f}\cdot n^{-(c+2)}$;
the additional excluded events in \Cref{qtilde-bound-at-root-hosp-to-doc-int} have probability at most
$\frac{1}{g-f}\cdot n^{-(c+2)}$ and this claim has failure probability at most $\frac {1}{g-f}\cdot n^{-(c+2)}$;  
the additional excluded events in \Cref{clm::surplus-interval-hosp-prop} have probability at most $\frac{2}{g-f} \cdot n^{-(c+2)}$. The bound on the probability of excluded events follows readily.
\end{proof}

\subsubsection{Proof of \Cref{lem::hospitals-interval-non-match} (Match Probabilities for Intervals of Hospitals)}

$I=[f,g)$ is an interval of hospitals here. We define $L=[g-a\alpha,f+a\alpha)$ to be the interval of doctors that are in-cone for all hospitals in $I$. We want to ensure that there are a sufficient number
of potential proposals from $L$ that are uniform over $I$ w.r.t.\ the interview values. Suppose we guarantee, with failure probability at most $n^{-(c+2)}$, that the doctors' proposals all have private value at least $1-\taubar$ (bounded using \Cref{clm::value-of-taubar}, above) and utility at least $f+2-\alpha$. This yields a set $P$ of $[\alpha-\taubar-(g-f)]\cdot|L|$ such proposals in expectation, and at least
$\frac12 [\alpha-\taubar-(g-f)]\cdot|L|$ with high probability. Then we will seek to count how many hospitals in $I$ receive at least $\kappa$ proposals from the proposals in $P$ and show that this count is sufficiently large with
high probability. Similarly to the hospital removal in \Cref{sec::doc-int-idio-no-match}, we will remove all doctors that have two or more edges to hospitals in $I$.
The next lemma bounds the number of removed doctors.

\begin{claim}\label{clm::doc-removal-lots-edges}
Suppose that the following events do not occur: $\Bhsf(I)$, $\Bhsm(I)$, $\Bdbm(L)$, $\Bdbf(L)$, and $\Bhbf(C(d))$ for all $d$ in $L$.
Then at most $|L|\exp\big(-\alpha k/2(4a+1)\big)$ doctors are removed because they have two or more edges to hospitals in $I$, 
with failure probability at most $n^{-(c+2)}$, if 
$2(c+2)\exp\big(\alpha k/2(4a+1)\big)\cdot \frac{\ln n}{n}\le (g-f)\le \frac{2(2a-1)^2\alpha }{(4a+1)\cdot(e\kappa k)^2}\cdot\exp\big(-\alpha k/2(4a+1)\big)$.
% and $(2a-1)\alpha \exp\big(-\alpha k/\big)\ge (c+2)\cdot \frac{\ln n}{n}$.
\end{claim}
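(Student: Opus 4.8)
The plan is to mirror the proof of \Cref{clm::hosp-removal-lots-edges} with the roles of doctors and hospitals exchanged; here the argument is in fact slightly cleaner because the relevant indicator variables are genuinely independent rather than merely negatively correlated. First I would condition on the public ratings of all agents, chosen so that none of the listed bad events occur. This fixes $I$, $L$ and each cone $C(d)$; since $\Bhbf(C(d))$ does not occur we have $|C(d)|\ge(a-\tfrac12)\alpha n/\kappa$ for every $d\in L$, since $\Bhsm(I)$ does not occur we have $|I|\le 2(g-f)n/\kappa$, and since $\Bdbf(L)$ and $\Bdbm(L)$ do not occur we have $(2a-\tfrac32)\alpha n\le |L|\le(2a+\tfrac12)\alpha n$. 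Crucially, once the public ratings are fixed each doctor's $k$ edges are still chosen independently of the other doctors' edges.

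For a fixed $d\in L$ we have $I\subseteq C(d)$, so the probability that two prescribed edges of $d$ both land in $I$ is $\tfrac{|I|}{|C(d)|}\cdot\tfrac{|I|-1}{|C(d)|-1}\le(|I|/|C(d)|)^2$ (using $|I|\le|C(d)|$); a union bound over the $\binom{k}{2}$ pairs of edges shows that the indicator $X_d$ of the event that $d$ has two or more edges into $I$ satisfies $\Pr[X_d=1]\le\tfrac12 k^2(|I|/|C(d)|)^2\le\tfrac12 k^2\big(\tfrac{2(g-f)}{(a-\tfrac12)\alpha}\big)^2$. Since the $X_d$, $d\in L$, are independent, $\mu\triangleq\E\!\big[\sum_{d\in L}X_d\big]\le |L|\cdot\tfrac{2k^2(g-f)^2}{(a-\tfrac12)^2\alpha^2}$, and substituting the stated upper bound on $g-f$ one checks that $\mu\le s$ and in fact $e\mu/s\le 1/e$, where $s\triangleq |L|\exp\!\big(-\tfrac{\alpha k}{2(4a+1)}\big)$ is the bound to be established.

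By a Chernoff bound, $\Pr\!\big[\sum_{d\in L}X_d>s\big]\le(e\mu/s)^s\le e^{-s}$, so it suffices to check $s\ge(c+2)\ln n$. We have $s=|L|\exp\!\big(-\tfrac{\alpha k}{2(4a+1)}\big)\ge(2a-\tfrac32)\alpha n\exp\!\big(-\tfrac{\alpha k}{2(4a+1)}\big)\ge(g-f)\,n\exp\!\big(-\tfrac{\alpha k}{2(4a+1)}\big)\ge 2(c+2)\ln n$, where the second inequality uses $g-f\le(2a-\tfrac32)\alpha$ (a consequence of the stated upper bound on $g-f$) and the last uses the stated lower bound $g-f\ge 2(c+2)\exp\!\big(\tfrac{\alpha k}{2(4a+1)}\big)\tfrac{\ln n}{n}$. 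Hence the failure probability is at most $e^{-(c+2)\ln n}=n^{-(c+2)}$, completing the proof.

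The only place requiring genuine care is the constant chasing: confirming that the two displayed bounds on $g-f$ simultaneously force the Chernoff base to be at most $1/e$ and the Chernoff exponent to be at least $(c+2)\ln n$. Once $a=5$ is substituted these are routine inequalities in $\kappa$ and $k$ (using $\kappa,k\ge 1$), handled exactly as in \Cref{clm::hosp-removal-lots-edges}; I would simply carry them out. A secondary point to keep straight is that the population bounds are invoked only after conditioning on the public ratings, which is what preserves the independence of the per-doctor indicators.
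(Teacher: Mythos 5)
Your proposal is correct and follows essentially the same route as the paper: condition on public ratings, bound each doctor's probability of having two or more edges into $I$ by a union bound over the $\binom{k}{2}$ pairs using $|I|\le 2(g-f)n/\kappa$ and $|C(d)|\ge(a-\tfrac12)\alpha n/\kappa$, then apply a Chernoff bound to the (independent) per-doctor indicators and use the two stated bounds on $g-f$ to control the base and the exponent. The only difference is cosmetic: the paper sets the Chernoff threshold at $|I|\exp\big(-\alpha k/2(4a+1)\big)$ and then passes to $|L|$ via $\Bhsm(I)$ and $\Bdbf(L)$, whereas you threshold directly at $|L|\exp\big(-\alpha k/2(4a+1)\big)$, which is weaker but exactly what the claim asserts, so your version is fine.
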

\begin{proof}
Let $d$ be a doctor in $L$, the intersection of the cones for hospitals in $I$.
The probability that an edge from $d$ is to a hospital in $I$ is at most $|I|/[(a-\frac12)\cdot(\alpha n/\kappa)]$, as $\Bhbf(C(d))$ does not occur for any doctor in $I$'s cone. Therefore the probability that $d$ has two or more edges to $I$ is at most
$\frac12 k^2 \cdot \big[\frac{2|I|}{(2a-1)\cdot(\alpha n/\kappa)}\big]^2$.
% $\le 2\Big[\frac{k(g-f)\kappa}{(a-\frac12)\alpha}\Big]^2$, as $\Bdsm(I)$ does not occur.

Therefore the expected number of doctors with two or more edges to $I$ is at most 
$\frac12|L| \cdot \big[\frac{2k|I|\kappa}{(2a-1)\alpha n}\big]^2$.
Note that $|L|\le (2a+\frac12)\alpha n$, as $\Bdbm(L)$ does not occur.
By a Chernoff bound, there are more than $|I|\exp\big(-\alpha k/\big)$ such doctors with probability at most 
\begin{align*}
    \Big[\frac{(4a+1)e|I|(\kappa k)^2\exp\big(\alpha k/2(4a+1)\big)}{(2a-1)^2\alpha n}\Big]^{|I|\exp\big(-\alpha k/2(4a+1)\big)}.
    \end{align*}
    This is maximized when $|I|$ is as small as possible, i.e.\ $\frac12 (g-f)\frac{n}{\kappa}$, as $\Bhsf(I)$ does not occur; so it is at most
\begin{align*}
    \Big[\frac{(4a+1)e[(g-f)\kappa k^2\exp\big(\alpha k/2(4a+1)\big)}{2(2a-1)^2\alpha}\Big]^{\frac12 (g-f) \frac n{\kappa}\exp\big(-\alpha k/2(4a+1)\big)}.
\end{align*}
This is bounded by $n^{-(c+2)}$ if
$(g-f)\le \frac{2(2a-1)^2\alpha }{(4a+1)\cdot(e\kappa k)^2}\cdot\exp\big(-\alpha k/2(4a+1)\big)$ and $g-f\ge 2\kappa(c+2)\exp\big(\alpha k/2(4a+1)\big)\cdot \frac{\ln n}{n}$.

Finally, we need $|I|\exp\big(-\alpha k/2(4a+1)\big)\le |L|\exp\big(-\alpha k/2(4a+1)\big)$.
Assuming $\Bhsm(I)$ and $\Bdbf(L)$ do not occur, it suffices that
$2(g-f)n\le (a-\frac12)\alpha \frac n{\kappa}$, i.e.\ $g-f\le \frac14(2a-1)\frac{\alpha}{\kappa}$, and this is subsumed by the first upper bound on $g-f$.
\end{proof}
Let $L'$ be the set of doctors remaining after the removals from $L$, and left
$\Bdrem$ be the event that more than $|L|\exp\big(-\alpha k/2(4a+1)\big)/(e^2 \kappa)$ doctors are removed for any of the intervals $L$ that occur in the analysis.
Going forward, we suppose that $\Bdrem$ does not occur.

We suppose the event $\Btau$, that some doctor fails to have $k$ edges with private value $1-\taubar$ or larger does not occur, as specified in \Cref{clm::value-of-taubar}.

The proof that $\qtilde(v)\ge q(v)$ needs some modifications.
Again the possible runs of the double-cut DA will use some additional stops.
Doctors with two or more edges to $I$ are eliminated.
Of the remaining doctors with a single edge to $I$, if it provides utility in the range
$[f+2-\alpha,g+2-\alpha)\cup [f+2-\taubar,g+2)$ we do not include the proposal to $I$ in our count of proposals. This ensures that for the remaining doctors with a proposal to $I$, the receiving
hospital is uniformly distributed over $I$.

Once again, the tree $T$ of possible runs of the double-cut DA has
both real and imaginary proposals. For each doctor $d$ outside $L'$ we fix the utilities.
Also, for each doctor $d$ in $L'$, we fix its top $k$ private values, but without specifying which edges these values are attached to.
Now, we are seeking the probability that the doctors in $L'$ propose to at most $r$ distinct hospitals in $I$ for an $r$ to be specified later; in addition, we want to show that $\qtilde(v) \ge q(v)$ for all nodes $v$ in $T$, and for $\treeroot(T)$ in particular.

A node $v$ in $T$ corresponding to an action by a doctor outside $L'$ has just one outgoing edge corresponding to its single possible action. For a doctor $d$ in $L'$, we proceed as in \Cref{sec::proof-hosp-match-prob}.
We consider the edges out of $d$ one at a time.
The decision making corresponding to one edge proceeds in two steps:
first, whether it has a utility equal to $d$'s current utility value in tree $T$. This is a decision based on making a random choice for its interview value. For each possible destination of this edge, this occurs with the same probability. 
In the $q$ setting, we then choose the edge uniformly from the hospitals in $d$'s current cone that she has not already proposed to; in this setting, there are no imaginary proposals.
A doctor stops making proposals either when she has gone through all $k$ of her proposals or she makes a proposal to a hospital in $I$.

We turn to the $\qtilde$ setting. 
Again, as soon as a doctor makes a proposal, real or imaginary, to a hospital in $I$, she stops.
We keep track of how many proposals a doctor has made; 
let $k(d,v)$ be the number of remaining proposals for doctor $d$ at node $v$. 
Also, recall that $u(d,v)$ is her residual utility at node $v$.
Again, let $x$ be the number of hospitals in $I'$ that have received real proposals, and $y$ the number that have received imaginary proposals.
At node $v$, we will be seeking the probability that at most an additional $r-x$ hospitals in $I'$ receive real proposals.
The doctors with the $s-x-y$ smallest products $u(d,v)\cdot k(d,v)$ will attempt to propose; each real proposal will occur with the same probability, which will be no larger than the probability of a proposal in the $q$ setting; the difference in probability will be handled by making proposals imaginary with probability equal to the difference. Now, we make this precise.

We use the same value for the probability that $d$ has a proposal with the current utility $u$ as in the $q$ setting. 
If $d$ is determined to have such a proposal, uniformly at random, we choose one of the hospitals in $d$'s current cone that has not already received a proposal from $d$ to receive the proposal; then we choose whether it is real with probability $\tau=\frac{k(d,v)}{(2a+\frac12)\alpha n}$.
As in \Cref{sec::proof-hosp-match-prob}, $\tildep(\wrl(h))\le p(\wrl(h))$; again, we choose
$\tildep(\wim(h))=\wrl(h))-\tildep(\wrl(h))$.

\begin{claim}\label{clm::qtilde-larger-for-intervals-in-hosp-prop}
    $\qtilde(v)\ge q(v)$ for all $v$ in $T$.
\end{claim}
\begin{proof}
The argument is very similar to that in \Cref{clm::qtilde-bound-for-hosp-doc-int}.
    The base case argument and case 2 are unchanged. Case 1 is the same except that $v$ will have two children $\wrl(h)$ and $\wim(h)$ for each $h\in I$.

\hide{
    \smallskip\noindent
    The proposal is to some $h\notin I$\\
    Then $v$ has two children $\wrl$, $\wim$.
    Here $k(d,w)=k(d,v)-1$ and $u(d,w)<u(d,v)$, for $w=\wrl,\wim$; the variables associated with the other doctors are unchanged. At both children, we still seek the probability that at most $r-x$ more real proposals are made to hospitals in $I$. 
    As in \Cref{sec::proof-hosp-match-prob}, 
   $\qtilde(\wrl)\le \qtilde(\wim)$. It follows that $ p(\wrl)\cdot \qtilde(\wrl) + p(\wim)\cdot \qtilde(\wim)\le \ptilde(\wrl)\cdot \qtilde(\wrl) + \ptilde(\wim)\cdot \qtilde(\wim)$, proving the inductive hypothesis in this case.

    \smallskip\noindent
    The proposal is to some $h\in I$\\
    Then $v$ has two children $\wrl(h)$ and $\wim(h)$, for each $h\in I$, plus the child $\wna$.

     \smallskip\noindent
     \emph{Case 1}. $u(d,v)\cdot k(d,v)$, the proposal probability for $h\in I$ at node $v$, is among the $s-x-y$ smallest proposal probabilities, $u_1k_1\le u_2k_2\le\ldots\le u_{s-x-y}k_{s-x-y}$, at node $v$. \\
     Since only $u(d,w)$ and $k(d,w)$ change (by becoming smaller) at any of $v$'s children $w$, the proposal probability for $d$ continues to be among the $s-x-y$ smallest at the children where it is still in use. Therefore instantiating the randomness needed to choose the edge to a child of $v$ ensures that $\qtilde(v)= \sum_{\text{$w$ a child of $v$}} \ptilde(w)\cdot \qtilde(w)\ge \sum_{\text{$w$ a child of $v$}} p(w)\cdot \qtilde(w)$, as for each pair of nodes $\wrl(h)$ and $\wim(h)$, we have $ p(\wrl)\cdot \qtilde(\wrl) + p(\wim)\cdot \qtilde(\wim)\le \ptilde(\wrl)\cdot \qtilde(\wrl) + \ptilde(\wim)\cdot \qtilde(\wim)$, and $\ptilde(\wna)=p(\wna)$.
     }
\end{proof}

For the next claim, which bounds the surplus, the following additional notation will be helpful.
$J=[g+(a+1)\alpha,1)\cup [g-a\alpha,f+a\alpha)$ is the set of participating doctors minus those in $[g+a\alpha,g+(a+1)\alpha)$,
and $K=[g-\alpha,1)\setminus I$ is the set of participating hospitals excluding those in $I$.
   
\begin{claim}\label{clm::surplus-interval-doc-prop}
Suppose that neither of the following events occurs: $\Bdbf(J)$, $\Bhbm(K)$.
    Then the surplus $s\ge \alpha n$, if $\alpha\cdot\exp\big(\alpha k/2(4a+1)\big)\ge 3$, and $a\ge 5$.
\end{claim}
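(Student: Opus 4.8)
The plan is to mirror the proof of \Cref{clm::surplus-interval-hosp-prop}, exchanging the roles of doctors and hospitals. Here $I=[f,g)$ is an interval of hospitals, $L=[g-a\alpha,f+a\alpha)$ is the set of doctors whose cone contains all of $I$, and $L'$ is $L$ after the edge-collision cleanup; the surplus $s$ equals $|L'|$ minus the number of doctors of $L'$ that are matched into $K$ by the doctor-proposing double-cut DA for $I$. As in \Cref{clm::surplus-interval-hosp-prop}, I would split on whether $g\le 1-(a+1)\alpha$.

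First I would lower bound the pool of participating doctors: since $\Bdbf(J)$ does not occur and $J=[g+(a+1)\alpha,1)\sqcup L$, one gets $|J|\ge(\ell(J)-\tfrac12\alpha)n$ with $\ell(J)=(1-g)+(a-1)\alpha-(g-f)$. From this I would subtract the doctors that are genuinely unavailable: by \Cref{lem::doctors-interval-non-match} together with \Cref{clm::double-cutDA-provides-bounds} (exactly as in \Cref{clm::matching-bounds-extend}) at most a $3\exp(-\alpha k/2(4a+1))$ fraction of the doctors in $[g+(a+1)\alpha,1)$ is unmatched, and by \Cref{clm::doc-removal-lots-edges} at most $|L|\exp(-\alpha k/2(4a+1))$ doctors of $L$ are deleted in the collision cleanup; since $J$ is the disjoint union of those two sets, the total loss is at most $3\exp(-\alpha k/2(4a+1))\,|J|\le\alpha n$, using $|J|\le n$ and $\alpha\exp(\alpha k/2(4a+1))\ge 3$. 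Hence the number of remaining available doctors is at least $((1-g)+(a-\tfrac52)\alpha-(g-f))n$.

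Next I would bound how many of those remaining doctors can be diverted away from $I$. The matched doctors of $[g+(a+1)\alpha,1)$ are all matched into $K$, and the doctors of $L'$ into $I$ or $K$; so the number of $L'$-doctors landing in $K$ is at most $\kappa|K|$ minus the number of matched high doctors, and therefore $s$ is at least (the number of remaining available doctors) $-\,\kappa|K|$. Since $g-f<\tfrac12\alpha$ (from the constraints on $g-f$ in this setting) we have $K=[g-\alpha,f)\cup[g,1)$ with $\ell(K)=(1-g)+\alpha-(g-f)$, and as $\Bhbm(K)$ does not occur, $\kappa|K|\le(\ell(K)+\tfrac12\alpha)n$. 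Substituting yields $s\ge(a-4)\alpha n$ in the case $g\le 1-(a+1)\alpha$; in the case $g>1-(a+1)\alpha$, where $[g+(a+1)\alpha,1)=\emptyset$ (so $J=L$, $\ell(L)=2a\alpha-(g-f)$) and $1-g<(a+1)\alpha$, the same arithmetic gives $\kappa|K|\le((a+\tfrac52)\alpha-(g-f))n$ and again $s\ge(a-4)\alpha n$. With $a\ge 5$ this is at least $\alpha n$, as required.

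The only step needing genuine care is the interval bookkeeping: tracking which doctors can reach which hospitals (the high doctors consume only slots of $K$, the $L'$-doctors may spill into both $I$ and $K$, so one must be careful to subtract $\kappa|K|$ and not just $|K|$), and confirming that each error term --- the unmatched high doctors, the collision-deleted doctors, and the two $\pm\tfrac12\alpha$ Chernoff slacks from $\Bdbf(J)$ and $\Bhbm(K)$ --- is bounded by $\alpha n$, which is exactly what $\alpha\exp(\alpha k/2(4a+1))\ge 3$ and $a\ge 5$ are chosen to provide. Given \Cref{lem::doctors-interval-non-match}, \Cref{clm::double-cutDA-provides-bounds}, and \Cref{clm::doc-removal-lots-edges}, there is no new idea; the argument is pure accounting.
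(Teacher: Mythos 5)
Your proposal is correct and follows essentially the same route as the paper's own proof: the same case split on whether $[g+(a+1)\alpha,1)$ is empty, the same use of $\Bdbf(J)$ and $\Bhbm(K)$, the same absorption of the unmatched high doctors and the collision-deleted doctors into an $\alpha n$ error via $\alpha\cdot\exp\big(\alpha k/2(4a+1)\big)\ge 3$, and the same final subtraction of $K$'s capacity giving $(a-4)\alpha n\ge\alpha n$ for $a\ge 5$. If anything, your bookkeeping is slightly more explicit than the paper's (in particular the factor $\kappa$ when converting $|K|$ to capacity, and the role of the matched high doctors), but it is the same argument.
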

\begin{proof}
We need to consider two cases, depending on whether the span of the cone for some doctor in $I$ has  overlaps the range $[1-\alpha,1)$.
\\
    \emph{Case 1}. $g\le 1-(a+1)\alpha$.\\
        As $\Bdbf(J)$ does not occur, $J$ contains at least $\big[1-g-(a+1)\alpha)+(2a\alpha- [g-f]) -\frac 12\alpha\big]n= \big[(1-g) +(a-\frac32)\alpha -(g-f)\big]n$ doctors.
        By assumption, at most a fraction $3\cdot \exp\big(-\frac{\alpha k}{2(4a+1)}\big)$ of the doctors in $[g+(a+1)\alpha,1)$ are not fully matched.
        % (strictly, the fraction in a minimal set of intervals covering $[g+a\alpha,1)$).
        By \Cref{clm::doc-removal-lots-edges}, the number of doctors in $L$ that are eliminated is at most $|L|\exp\big(-\alpha k/2(4a+1)\big)$.

        Thus the number of doctors remaining is at least $\big[(1-g) +(a-\frac32)\alpha -(g-f)\big]n\cdot \big[1 -3\cdot \exp\big(-\frac{\alpha k}{2(4a+1)}\big)\big] \ge \big[(1-g) +(a-\frac52)\alpha -(g-f)\big]n$, as $\exp\big(-\alpha k/2(4a+1)\big)\le \frac 13 \alpha$.
              
        As $\Bhbm(K)$ does not occur, there are at most $[1-g+\frac32\alpha-(g-f)]\frac{n}{\kappa}$ participating hospitals outside of $I$.
        Therefore, the surplus, the number of doctors in $[g-a\alpha,f+a\alpha)$ not matched to $K$,
        is at least 
    $\big(\big[1-g)+(a-\frac52)\alpha-(g-f)\big] -\big[1-g+\frac32\alpha-(g-f)\big]\big)n\ge (a-4)\alpha n\ge \alpha n$.
            \\
    \emph{Case 2}. $g> 1-a\alpha$.\\
    As in Case 1, $K$ contains at most $\big[1-g+\frac32\alpha-(g-f)\big]n\le \big[(a+\frac32)\alpha -(g-f)\big]\cdot \frac{n}{\kappa}$ hospitals.
     Now, $J=[g-a\alpha,f+a\alpha)$.
    Similarly to Case 1, the number of doctors in $[g-a\alpha,f+a\alpha)$ that are available for matching
    is at least $\big[(2a-\frac32)\alpha-(g-f)\big]n$.
    So the surplus is at least
    $\big(\big[(2a-\frac32)\alpha-(g-f)\big] - \big[(a+\frac32)\alpha -(g-f)\big] \big)\cdot\frac {n}{\kappa}
    \ge (a-3)\frac \alpha n\ge \alpha n$.
\end{proof}

\begin{claim}\label{qtilde-bound-at-root-doc-to-hosp-int}
Let $I=[f,g)$, and suppose that $s\ge \alpha n$, 
$g-f \ge 3\exp\big(\frac{\alpha k}{2(4a+1)}\big) \cdot (c+2)\frac{\kappa\ln n}{n}$,
$g-f+\taubar\le \frac {\alpha}{2}$, and $\alpha k\ge 2(4a+1)$.
Also, suppose that $\Bhsf(I)$ not occur.
Then the probability that the number of not fully matched hospitals is more than
    $3|I|\cdot \exp\big(-\frac{\alpha k}{2(4a+1)}\big)$ is at most $2n^{-(c+2)}$.
\end{claim}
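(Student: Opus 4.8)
The plan is to follow the template of \Cref{qtilde-bound-at-root-hosp-to-doc-int}, with doctors and hospitals interchanged and with the one substantive twist that here a hospital must accumulate $\kappa$ proposals rather than a single one. By \Cref{clm::qtilde-larger-for-intervals-in-hosp-prop} we have $\qtilde(\treeroot(T))\ge q(\treeroot(T))$, so it suffices to choose $r$ with $\qtilde(\treeroot(T))\le n^{-(c+2)}$; everything below is carried out in the $\qtilde$ setting. Recall from the set-up preceding this claim that, since $\Btau$ does not occur, every participating doctor's $k$ chosen edges carry private value at least $1-\taubar$; that the boundary exclusions leave only counting proposals whose utility for the receiving hospital lies in the common window of length $\alpha-(g-f)-\taubar\ge\tfrac{\alpha}{2}$ (using the hypothesis $g-f+\taubar\le\tfrac\alpha2$); and that, conditioned on a doctor in $L'$ making a counting proposal to $I$, its destination is uniform over $I$ — this last fact, which is what makes the balls-into-bins count below go through, relies on each doctor in $L'$ having exactly one edge into $I$.

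First I would lower bound the per-hospital proposal probability. Fix $h\in I$. Because $\Bhbm(C(d))$ does not occur for $d$ in $I$'s cone, each of a surplus doctor's $k$ edges lands on $h$ with probability at least $1/[(2a+\tfrac12)\alpha n/\kappa]$, and combining this with the counting-window factor gives a probability at least $[\alpha-(g-f)-\taubar]\cdot\big[1-(1-\tfrac{\kappa}{(2a+\frac12)\alpha n})^{k}\big]\ge\tfrac{k\kappa}{2(4a+1)n}$ that a given surplus doctor makes a counting proposal to $h$ (the binomial term being handled by $\tfrac{\kappa k}{(2a+\frac12)\alpha n}\le 1$). By \Cref{clm::surplus-interval-doc-prop} the surplus is $s\ge\alpha n$, and the surplus doctors propose independently, so the number $N_h$ of counting proposals $h$ receives stochastically dominates a $\mathrm{Bin}(\alpha n,\tfrac{k\kappa}{2(4a+1)n})$ variable; hence $\E[N_h]\ge\tfrac{\alpha k\kappa}{2(4a+1)}$, which the hypothesis $\alpha k\ge 2(4a+1)$ makes at least $\kappa$ (and a constant factor above $\kappa$ in the parameter regime of interest, where $\alpha k\gg 4a+1$). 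A Chernoff bound then yields $\prob[N_h<\kappa]\le\exp\big(-\Theta(\alpha k\kappa/(4a+1))\big)\le\exp\big(-\tfrac{\alpha k}{2(4a+1)}\big)$.

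Second comes the concentration over $h$. Let $X_h$ be the indicator that $h$ receives fewer than $\kappa$ counting proposals. The vector $(N_h)_{h\in I}$ arises from the $\qtilde$ process in which each surplus doctor independently selects at most one hospital in $I$ uniformly, so the $N_h$, and hence the $X_h$, are negatively correlated, and Chernoff applies to $\sum_{h\in I}X_h$. Its expectation is at most $|I|\exp\big(-\tfrac{\alpha k}{2(4a+1)}\big)$, and a Chernoff bound shows it exceeds $2|I|\exp\big(-\tfrac{\alpha k}{2(4a+1)}\big)$ with probability at most $n^{-(c+2)}$ provided $|I|\ge\tfrac32\exp\big(\tfrac{\alpha k}{2(4a+1)}\big)(c+2)\ln n$; since $\Bhsf(I)$ does not occur, $|I|\ge\tfrac12(g-f)\tfrac n\kappa$, and the hypothesis $g-f\ge 3\exp\big(\tfrac{\alpha k}{2(4a+1)}\big)(c+2)\tfrac{\kappa\ln n}{n}$ supplies both this bound and the milder one needed to make the per-hospital Chernoff bound of the previous paragraph fail with probability at most $n^{-(c+2)}$ rather than merely small. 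Finally, a hospital in $I$ that receives $\kappa$ counting proposals from doctors in $L'$ is fully matched in the double-cut DA, and the only hospitals in $I$ we might have miscounted are those served solely by the at most $|I|\exp\big(-\tfrac{\alpha k}{2(4a+1)}\big)$ doctors removed from $L$ (bounded via \Cref{clm::doc-removal-lots-edges}, i.e.\ $\Bdrem$ not occurring) or by boundary-excluded proposals; adding this slack yields the stated bound of $3|I|\exp\big(-\tfrac{\alpha k}{2(4a+1)}\big)$ not-fully-matched hospitals, except with total probability at most $2n^{-(c+2)}$.

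The main obstacle is exactly the negative-correlation step: one must check that after the removals (doctors with two edges into $I$) and exclusions (single-edge doctors whose one counting proposal falls in the boundary utility window), the surviving counting proposals to $I$ are genuinely uniform over $I$ and contributed by independent doctors, so that $(N_h)_{h\in I}$ is negatively correlated; and one must simultaneously verify that the several lower bounds on $g-f$ demanded by the two Chernoff applications are all consequences of the single hypothesis $g-f\ge 3\exp(\alpha k/2(4a+1))(c+2)\kappa\ln n/n$ under the assumption $\alpha k\ge 2(4a+1)$.
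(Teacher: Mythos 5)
Your overall route is the same as the paper's: reduce to the $\qtilde$ setting via \Cref{clm::qtilde-larger-for-intervals-in-hosp-prop}, lower bound the probability that each of the $s\ge\alpha n$ surplus doctors makes a counting proposal to a fixed $h\in I$ using the utility window of size at least $\alpha/2$ (from $g-f+\taubar\le\frac\alpha2$), apply a per-hospital Chernoff bound to $\prob[N_h<\kappa]$, and then exploit negative correlation of the indicators $X_h$ to apply a second Chernoff bound to $\sum_{h\in I}X_h$, finishing with $|I|\ge\frac12(g-f)\frac n\kappa$ from $\Bhsf(I)$ and the hypothesis on $g-f$. This is exactly the paper's decomposition.

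There is, however, a quantitative gap in your per-hospital step. In the $\qtilde$ process the probability that a given surplus doctor makes a real proposal to $h$ is taken to be linear in the number of remaining proposals; the paper accordingly uses the per-doctor probability $[\alpha-(g-f)-\taubar]\cdot\frac{2\kappa k}{(4a+1)\alpha n}\ge\frac{\kappa k}{(4a+1)n}$, so that $\E[N_h]\ge\frac{\alpha\kappa k}{4a+1}\ge 2\kappa$ under $\alpha k\ge 2(4a+1)$. You instead write the proposal probability as $[\alpha-(g-f)-\taubar]\cdot\big[1-(1-\tfrac{\kappa}{(2a+\frac12)\alpha n})^{k}\big]$ and linearize, losing a factor of two and getting only $\E[N_h]\ge\frac{\alpha\kappa k}{2(4a+1)}$, which under the stated hypothesis is merely $\ge\kappa$. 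With mean equal to (or barely above) $\kappa$, $\prob[N_h<\kappa]$ is a constant, not exponentially small, so your asserted bound $\prob[N_h<\kappa]\le\exp\big(-\frac{\alpha k}{2(4a+1)}\big)$ does not follow; your parenthetical ``a constant factor above $\kappa$ in the parameter regime of interest'' is an implicit strengthening of the hypothesis rather than a proof. (The paper itself needs $\alpha k\ge(2+\sqrt2)(4a+1)$ for $\kappa\ge2$ to verify this exponent, slightly more than the stated $2(4a+1)$, but its factor-two-larger mean keeps the Chernoff step meaningful, whereas yours breaks down at the boundary.) A second, minor point: your remark that the lower bound on $g-f$ is also needed to make the per-hospital Chernoff bound fail with probability at most $n^{-(c+2)}$ is misplaced — that bound is independent of $g-f$ and is never required to be polynomially small; it enters only as the mean of $\sum_h X_h$. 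The remainder of your argument, including the negative-correlation step and the final slack bringing $2|I|\exp\big(-\frac{\alpha k}{2(4a+1)}\big)$ up to the claimed $3|I|\exp\big(-\frac{\alpha k}{2(4a+1)}\big)$ with total failure probability $2n^{-(c+2)}$, matches the paper's treatment.
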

\begin{proof}
We seek to identify as small an $r$ as possible such that $\qtilde(\treeroot(T))\le n^{-(c+2)}$,
for we can then deduce that $q(\treeroot(T))\le n^{-(c+2)}$ for this value of $r$.

In the $\qtilde(\treeroot(T))$ setting, $s\ge \alpha n$ doctors each seek to propose to an arbitrary  hospital $h$ in $I$ with probability at least
\begin{align*}
[\alpha-(g-f)-\taubar]\cdot \frac {2\kappa k}{(4a+1)\alpha n}
 \ge \frac{\alpha}{2}\cdot \frac {2\kappa k}{(4a+1)\alpha n},
\end{align*}
using $g-f+\taubar\le \frac {\alpha}{2}$.

In expectation, $h$ receives at least $\frac{\alpha \kappa k}{4a+1}$ proposals.
By a Chernoff bound, the probability that $h$ receives fewer than $\kappa$ proposals is at most
$\exp\big(-\alpha k/[2(4a+1)]\big)$ if $\kappa=1$, and if $\kappa\ge 2$ it is at most
$\exp\big(-\frac12\big[1 -\frac{4a+1}{\alpha k}]^2\cdot \frac{\alpha \kappa k}{4a+1}\big)$;
If $\frac{4a+1}{\alpha k}\le 1-\frac{1}{\sqrt{2}}$, then the latter probability is also bounded
by $\exp\big(-\frac{\alpha  k}{2(4a+1)}\big)$, and the condition amounts to
$\alpha k\ge (4a+1)\cdot(2+\sqrt{2})$.

\hide{
In expectation, the hospitals in $I$ receives at least $|I|\cdot\frac{\alpha\kappa k}{2(4a+1)}$ proposals. 
By a Chernoff bound, $I$ receives fewer than $\frac 12 |I|\cdot\frac{\alpha\kappa k}{2(4a+1)}$ proposals with probability at most
$\exp\big(- \frac{|I|\alpha\kappa k}{16(4a+1)} \big)\le n^{-(c+2)}\big)$,
if $|I|\alpha\kappa k \ge 16(4a+1)\cdot(c+2)\ln n$. 
As $\Bhsf(I)$ does not occur, $|I|\ge \frac 12 (g-f)\frac{n}{\kappa}$,
and so $g-f\ge \frac{32(4a+1)}{\alpha k}\cdot(c+2)\frac {\ln n}{n}$ suffices.
We set $r+1=  |I|\cdot\frac{\alpha\kappa k}{}$.
So we know there are this many proposals to $I$ in run of the double-cut DA with failure probability at most $n^{-(c+2)}$,
and furthermore each proposal is uniform over $I$.
We now analyze how many hospitals in $I$ receive fewer than $\kappa$ proposals.
}

Let $X_h$ be the 0/1 random variable that equals 1 if $h$ receives fewer than $\kappa$ proposals.
The $X_h$ are negatively correlated and therefore we can apply Chernoff bounds to their sum.
In expectation, the sum of the $X_h$ is at most
$|I|\cdot \exp\big(-\frac{\alpha k}{2(4a+1)}\big)$, and by a Chernoff bound, it is more than
$2|I|\cdot \exp\big(-\frac{\alpha k}{2(4a+1)}\big)$
with probability at most $\exp\big(-\frac 23 |I|\cdot \exp\big(-\frac{\alpha k}{2(4a+1)}\big)\le n^{-(c+2)}$, if $|I| \ge \frac 32 \exp\big(\frac{\alpha k}{2(4a+1)}\big) \cdot (c+2)\ln n$.
As $\Bhsf(I)$ does not occur, $|I|\ge \frac 12 (g-f)\frac{n}{\kappa}$,
so $g-f \ge 3\exp\big(\frac{\alpha k}{2(4a+1)}\big) \cdot (c+2)\frac{\kappa\ln n}{n}$ suffices.
%
%Finally, looking at the two lower bounds on $g-f$, the second one dominates so long as $\alpha k\ge 8(4a+1)$.
\end{proof}

\begin{claim}\label{clm::doc-prop-interval-constraints}
    The conditions in Claims~\ref{clm::doc-removal-lots-edges}--\Cref{qtilde-bound-at-root-doc-to-hosp-int} amount to the following constraints:
    \begin{align*}
        \alpha\cdot \exp\Big(\frac {\alpha k}{2(4a+1)}\Big)&\ge 3~~~~(\text{Claims 
         \ref{clm::surplus-interval-hosp-prop} and \ref{clm::surplus-interval-doc-prop}}))\\
        \taubar &\ge \frac{b\kappa}{(a-\frac12)\alpha}\cdot(c+2)\cdot\frac {\ln n}{n}~~~~(\Cref{clm::value-of-taubar})\\
        b& >2~~~~(\Cref{clm::value-of-taubar})\\
        k &\le \Big[1 -\Big(\frac 2b\Big)^{1/2}\Big]\cdot b(c+2)\ln n~~~~(\Cref{clm::value-of-taubar})\\
        \alpha k &\ge ~~~~(\Cref{qtilde-bound-at-root-doc-to-hosp-int})\\
        a & \ge 5~~~~(\ref{clm::surplus-interval-doc-prop})
    \end{align*}
    and several constraints on $g-f$, namely:
    \begin{align*}
       & 
        3\exp\Big(\frac{\alpha k}{2(4a+1)}\Big)\cdot \kappa\cdot(c+2)\cdot\frac{\ln n}{n}
        \le g-f ~~~~(\text{Claims \ref{clm::doc-removal-lots-edges} and \ref{qtilde-bound-at-root-doc-to-hosp-int}})\\
        &\hspace*{1in}\le \min\Big\{\frac{\alpha}{2}-\taubar,
        ~\frac{(2a-1)^2\alpha\cdot\exp\big(-\frac{\alpha k}{2(4a+1)}\big)}{(2a+\frac12)\cdot(e\kappa k)^2} \Big\}~~~~(\text{resp.\ Claims \ref{qtilde-bound-at-root-doc-to-hosp-int}, \ref{clm::doc-removal-lots-edges}})\\
       &\hspace*{1in}\le \frac{(2a-1)^2\alpha\cdot\exp\big(-\frac{\alpha k}{2(4a+1)}\big)}{(2a+\frac12)\cdot(e\kappa k)^2},~~~~~~~~~~~~\text{if $~\taubar\le \frac{\alpha}{4}$ and $k\ge \frac{\sqrt{2a-1}}{e}$}.
    \end{align*}
    In addition, over the whole induction, they exclude events with total probability at most $\big(2n+ \frac{9}{g-f}\big)\cdot n^{-(c+2)}$.
\end{claim}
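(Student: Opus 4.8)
This claim only collects and simplifies the hypotheses of the constituent claims (\Cref{clm::doc-removal-lots-edges}, \Cref{clm::value-of-taubar}, \Cref{clm::surplus-interval-doc-prop}, \Cref{clm::qtilde-larger-for-intervals-in-hosp-prop}, and \Cref{qtilde-bound-at-root-doc-to-hosp-int}) and sums the probabilities of the events they exclude, so the plan is pure bookkeeping in two stages: first I would consolidate the parameter constraints, and then I would run a union bound over the $\Theta\big(1/(g-f)\big)$ intervals and $\Theta(n)$ agents that arise in the mutual induction.

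For the first stage I would read off the parameter conditions directly: $a\ge 5$ and $\alpha\cdot\exp\big(\alpha k/2(4a+1)\big)\ge 3$ are exactly the hypotheses of \Cref{clm::surplus-interval-doc-prop}; $b>2$, $\taubar\ge \frac{b\kappa}{(a-\frac12)\alpha}(c+2)\frac{\ln n}{n}$, and $k\le \big[1-(2/b)^{1/2}\big]\,b(c+2)\ln n$ are the hypotheses of \Cref{clm::value-of-taubar}; and $\alpha k\ge 2(4a+1)$ is the condition needed in \Cref{qtilde-bound-at-root-doc-to-hosp-int}. For the bounds on $g-f$, I would observe that the two lower bounds (from \Cref{clm::doc-removal-lots-edges} and \Cref{qtilde-bound-at-root-doc-to-hosp-int}) differ only by a factor of $\kappa$, so the latter, $g-f\ge 3\exp\big(\alpha k/2(4a+1)\big)\kappa(c+2)\frac{\ln n}{n}$, subsumes the former; among the upper bounds, $\tfrac{\alpha}{2}-\taubar$ from \Cref{qtilde-bound-at-root-doc-to-hosp-int} and $\frac{(2a-1)^2\alpha}{(2a+\frac12)(e\kappa k)^2}\exp\big(-\alpha k/2(4a+1)\big)$ from \Cref{clm::doc-removal-lots-edges}, a short computation shows the latter is the smaller one provided $\taubar\le \tfrac{\alpha}{4}$ and $k\ge \sqrt{2a-1}/e$, which is exactly the side condition stated. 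Finally I would note that the auxiliary condition $\alpha n\ge \kappa k/(2a+\tfrac12)$ used inside \Cref{qtilde-bound-at-root-doc-to-hosp-int} follows from the consolidated lower bound on $g-f$ together with $g-f\le \tfrac{\alpha}{2}$.

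For the second stage I would account for the excluded and failure events claim by claim. \Cref{clm::doc-removal-lots-edges} excludes $\Bhsf(I),\Bhsm(I),\Bdbm(L),\Bdbf(L)$ — a constant number of events per interval, each of probability at most $n^{-(c+2)}$ by \Cref{lem::number-agents-in-intervals} — together with $\Bhbf(C(d))$ for all doctors $d$, which over the whole induction is at most $n$ events of probability at most $n^{-(c+2)}$ by \Cref{cor::cone-population-correct}, plus its own failure probability $n^{-(c+2)}$ per interval; \Cref{clm::value-of-taubar} introduces no new bad events but has failure probability $n^{-(c+2)}$ per doctor, i.e.\ at most $n\cdot n^{-(c+2)}$ in total; \Cref{clm::surplus-interval-doc-prop} excludes $\Bdbf(J)$ and $\Bhbm(K)$, a constant number per interval, and is otherwise deterministic; and \Cref{qtilde-bound-at-root-doc-to-hosp-int} reuses $\Bhsf(I)$ and has failure probability at most $2n^{-(c+2)}$ per interval. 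Since at most $1/(g-f)$ intervals occur in the induction, summing these contributions yields at most $\big(2n + \tfrac{9}{g-f}\big)n^{-(c+2)}$, as claimed.

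The only parts needing care — and where I expect the bulk of the (still routine) work to lie — are (i) ensuring the cone events $\Bhbf(C(d))$ are charged only once across all intervals, so the $2n$ term is not multiplied by $1/(g-f)$, and (ii) verifying the dominance among the several upper bounds on $g-f$; both go through cleanly in the regime $k=\Omega(1)$, $k\le O((c+2)\ln n)$ that we work in, mirroring the corresponding step in the proof of \Cref{clm::hosp-prop-interval-constraints}.
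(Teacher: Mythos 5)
Your proposal is correct and follows essentially the same route as the paper: the paper's own proof is a terse ``by inspection'' consolidation of the hypotheses of Claims~\ref{clm::doc-removal-lots-edges}--\ref{qtilde-bound-at-root-doc-to-hosp-int} together with a union bound charging $n\cdot n^{-(c+2)}$ for the cone events in \Cref{clm::doc-removal-lots-edges} and the failure probability of \Cref{clm::value-of-taubar}, and $(4+1+2+2)/(g-f)\cdot n^{-(c+2)}$ for the per-interval events, exactly matching your $\big(2n+\frac{9}{g-f}\big)n^{-(c+2)}$ tally. Your treatment of the dominance among the bounds on $g-f$ (with the side conditions $\taubar\le\frac{\alpha}{4}$, $k\ge\sqrt{2a-1}/e$) and your filling in $\alpha k\ge 2(4a+1)$ for the truncated constraint agree with the paper's intent, so no gap remains.
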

\begin{proof}
    By inspection, 
the excluded events in \Cref{clm::doc-removal-lots-edges} have probability at most $\big(n+\frac{4}{g-f}\big)\cdot n^{-(c+2)}$ and it has failure probability at most $\frac{1}{g-f}\cdot n^{-(c+2)}$;
there are no additional excluded events in \Cref{clm::value-of-taubar} and it has failure probability at most $n\cdot n^{-(c+2)}$;
the additional excluded events in \Cref{clm::surplus-interval-doc-prop} have probability at most
$\frac{2}{g-f}\cdot n^{-(c+2)}$; 
there are no additional excluded events in \Cref{qtilde-bound-at-root-doc-to-hosp-int} and this claim has failure probability at most $\frac {2}{g-f}\cdot n^{-(c+2)}$;  
 The bound on the probability of excluded events follows readily.
\end{proof}

\subsubsection{Proofs for \Cref{sec::idio-eps-Nash}, the $\eps$-Nash Equilibrium}

\begin{claim}\label{clm::final-cstrnts-idio}
 With $\delta=0$, $\alpha=\big(\frac{\ln k}{k}\big)^{1/2}$, $a=5$, and $b=3$, the constraints in the above claims are satisfied if 
\begin{align*}
     %   1&\le \Big(\frac{\ln k}{k}\Big)^{1/2}\cdot\exp\Big(\frac{(2k\ln k)^{1/2}}{18}\cdot\min\{1,\kappa/2\}\Big)\\
        O(1)&\le k \le \frac 12(c+2)\ln n \notag\\
         \frac{n}{\ln n}& \ge\frac{(c+2)k^{2.5}\kappa^2}{3(\ln k)^{1/2}}\exp\big(\frac{(k\ln k)^{1/2}}{4}\big).
    \end{align*}
\end{claim}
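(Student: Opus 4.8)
The plan is to substitute the chosen values $a=5$, $b=3$, $\delta=0$, and $\alpha=(\ln k/k)^{1/2}$ into every constraint collected in Claims~\ref{clm::conditions-hosp-prop-one-doctor}, \ref{clm::conditions-doctor-prop-one-hosp}, \ref{clm::hosp-prop-interval-constraints}, and \ref{clm::doc-prop-interval-constraints}, and to check that each is implied by the two displayed hypotheses. I would first dispatch the parameter-only constraints. The bounds $a\ge 5$ and $b>2$ are immediate. The condition $\alpha\cdot\exp(\alpha k/2(4a+1))\ge 3$ holds once $k$ exceeds an absolute constant, since $\alpha k=(k\ln k)^{1/2}$ and $\exp((k\ln k)^{1/2}/42)$ outgrows the slowly-vanishing factor $\alpha=(\ln k/k)^{1/2}$. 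Likewise the lower bound $k\ge\frac{(2+\sqrt2)(4a+1)}{2(\alpha-\taubar)\psibar_h}$ of \Cref{lem::fail-prob-doctor-prop} (used with $\psibar_h\in\{\alpha,1\}$) and the bound $k\ln k\ge(8/3)^2(4a+1)/\kappa^2$ of \Cref{lem::typical-student-eps-nash} reduce, after substituting $\alpha^2=\ln k/k$ and $\taubar\le\tfrac14\alpha$, to ``$k$ larger than an absolute constant'' --- precisely the role of the hypothesis $O(1)\le k$.

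Next I would handle the hypotheses involving $n$ that are \emph{not} the binding one. With $b=3$ we have $[1-(2/b)^{1/2}]\,b(c+2)\ln n=(3-\sqrt6)(c+2)\ln n>\tfrac12(c+2)\ln n$, so $k\le\tfrac12(c+2)\ln n$ implies the constraint $k\le[1-(2/b)^{1/2}]b(c+2)\ln n$ of \Cref{clm::value-of-taubar}. I would set $\taubar$ at its smallest admissible value $\taubar=\frac{b\kappa(c+2)\ln n}{(a-\frac12)\alpha n}$; then $\taubar\le\tfrac14\alpha$ amounts to $\frac{n}{\ln n}\ge\frac{8\kappa(c+2)k}{3\ln k}$, and the population ($\mathcal B$-event) requirements of \Cref{lem::number-agents-in-intervals} and \Cref{cor::cone-population-correct} are Chernoff conditions of the form $\frac{n}{\ln n}=\Omega((c+2)\kappa/\alpha^2)=\Omega((c+2)\kappa k/\ln k)$ --- all of these are dwarfed by the displayed $n$-bound.

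The substantive step is to verify that the two-sided constraints on the interval length $g-f$ in \Cref{clm::hosp-prop-interval-constraints} and \Cref{clm::doc-prop-interval-constraints} leave a nonempty window (and, since $\alpha$ is required to be an integer multiple of $g-f$, that we may pick a value of $g-f$ in it, at the cost of at most a factor $2$). Reading off the dominant terms, the binding lower bound is $g-f\ge\Theta\!\big(\kappa(c+2)\exp(\alpha k/2(4a+1))\tfrac{\ln n}{n}\big)$ (from the removed-edge counts of Claims~\ref{clm::hosp-removal-lots-edges} and \ref{clm::doc-removal-lots-edges} together with the Chernoff step of Claims~\ref{qtilde-bound-at-root-hosp-to-doc-int} and \ref{qtilde-bound-at-root-doc-to-hosp-int}), and the binding upper bound is $g-f\le\Theta\!\big(\alpha\exp(-\alpha k/2(4a+1))/(\kappa k^2)\big)$ (from the proofs of Claims~\ref{clm::hosp-removal-lots-edges} and \ref{clm::doc-removal-lots-edges}). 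Imposing lower $\le$ upper gives
\begin{align*}
\frac{n}{\ln n}=\Omega\!\left(\frac{(c+2)\,\kappa^{2}\,k^{2}\,\exp\!\big(\alpha k/(4a+1)\big)}{\alpha}\right),
\end{align*}
and substituting $\alpha=(\ln k/k)^{1/2}$, so that $k^{2}/\alpha=k^{2.5}/(\ln k)^{1/2}$ and $\alpha k/(4a+1)=(k\ln k)^{1/2}/21$, yields $\frac{n}{\ln n}=\Omega\!\big((c+2)\kappa^{2}k^{2.5}(\ln k)^{-1/2}\exp((k\ln k)^{1/2}/21)\big)$. The displayed hypothesis $\frac{n}{\ln n}\ge\frac{(c+2)k^{2.5}\kappa^{2}}{3(\ln k)^{1/2}}\exp((k\ln k)^{1/2}/4)$ implies this once $k$ exceeds an absolute constant: the slack in the exponent, $\exp\!\big((k\ln k)^{1/2}(\tfrac14-\tfrac1{21})\big)$, eventually absorbs the $O(1)$ ratio of numeric constants (and the $\tfrac13$). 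I expect this last accounting --- identifying which of the several lower bounds on $g-f$ dominates for large $\kappa$ versus large $k$, and matching it against the corresponding upper bounds --- to be the only real obstacle; everything else is a one-line substitution.

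Finally, the bound on $\prob[{\mathcal B}]$ asserted in \Cref{clm::prob-B} follows by summing the excluded-event probabilities tallied in Claims~\ref{clm::conditions-hosp-prop-one-doctor}, \ref{clm::conditions-doctor-prop-one-hosp}, \ref{clm::hosp-prop-interval-constraints}, and \ref{clm::doc-prop-interval-constraints}: these amount to $O(n)$ agent-population events and $O(1/(g-f))=O(n/\ln n)$ interval events, each of probability at most $n^{-(c+2)}$, for a total of at most $12\,n^{-(c+1)}$, using $g-f=\Omega(\ln n/n)$.
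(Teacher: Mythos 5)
Your proposal is correct and follows essentially the same route as the paper's proof: substitute $a=5$, $b=3$, $\alpha=(\ln k/k)^{1/2}$, dispatch the parameter-only constraints for $k$ above a constant, set $\taubar$ at its minimum admissible value and fold $\taubar\le\tfrac14\alpha$ into a mild $n/\ln n$ condition, and identify the binding requirement as the compatibility of the lower and upper bounds on $g-f$ in Claims~\ref{clm::hosp-prop-interval-constraints} and~\ref{clm::doc-prop-interval-constraints}. Your slightly more careful combined exponent $\exp(\alpha k/(4a+1))=\exp((k\ln k)^{1/2}/21)$ is, as you note, absorbed by the stated hypothesis $\exp((k\ln k)^{1/2}/4)$, so the conclusion matches the paper's.
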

\begin{proof}
We begin by showing the following constraints suffice and then substitute the given values for $\alpha$, $a$ and $b$.
    \begin{align}
         3&\le \alpha\cdot\exp\Big(\frac{\alpha k}{2(4a+1)}\Big)\notag\\
         \frac{8(4a+1)}{\alpha}&\le k \le \frac 12(c+2)\ln n \label{eqn::alpha-cstrnt}\\
         \alpha&\ge \Big[\frac{8b(c+2)\kappa}{(2a-1)}\cdot\frac{\ln n}{n}\Big]^{1/2} \notag\\
         \frac{n}{\ln n}& \ge\frac{3(c+2)\cdot(4a+1)e^2k^2\kappa^2}{2(2a-1)^2\alpha}\cdot\exp\Big(\frac{\alpha k}{2(4a+1)}\Big). \notag
    \end{align}

We set $\taubar = \frac{2b(c+2)\kappa \ln n}{(2a-1)\alpha n}$,
and $\frac14\alpha \ge \taubar$,
which yields the constraint $\alpha^2\ge \frac{8b(c+2)\kappa}{(2a-1)}\cdot\frac{\ln n}{n}$.
The lower and upper bounds on $g-f$ in \Cref{clm::hosp-prop-interval-constraints},
yields the constraint $\frac{n}{\ln n} \ge \frac{6(c+2)e^2k^2\kappa}{2(2a-1)\alpha}\cdot\exp\big(\frac{\alpha k}{2(4a+1)}\big)$.
Similarly, the bounds on $g-f$ in \Cref{clm::doc-prop-interval-constraints} yields the constraint
$\frac{n}{\ln n}\ge \frac{3(c+2)\cdot(4a+1)e^2k^2\kappa^2}{2(2a-1)^2\alpha}\cdot\exp\big(\frac{\alpha k}{2(4a+1)}\big)$.
The second bound is the larger.
Finally, the condition $n \ge \frac {2\kappa k}{(4a+1)\alpha}$ is subsumed by the condition on $n/\ln n$.
\end{proof}
{\sc Comment.} A larger bound on $k$ would be possible if we allowed $b$ to be larger. Also, we point out that for sufficiently large $n$, our constants are unduly pessimistic, as the Chernoff bounds we use are far from asymptotically tight. \footnote{We chose to use the bounds in the paper for simplicity and because for the values of $n$ that occur in the NRMP, the values of $n\le 10,000$, and in this range the bounds we use are more reasonable.}

{\sc \Cref{clm::prob-B}}~
\emph{Let ${\mathcal B}$ denote the bad events excluded in Claims~\ref{clm::conditions-hosp-prop-one-doctor}, \ref{clm::conditions-doctor-prop-one-hosp}, \ref{clm::hosp-prop-interval-constraints}, \ref{clm::doc-prop-interval-constraints}.
The probability that ${\mathcal B}$ occurs is at most $12n^{-(c+1)}$.
}
\begin{proof}
    Summing the probabilities of the bad events in these four claims gives a bound of
    $\big(7n+\frac{16}{g-f}+\frac 3n\kappa\big)n^{-(c+2)}\le 12n^{-(c+1)}$,
    as $g-f\ge 6(c+2) \exp\big(\frac{\alpha k}{2(4a+1)}\big)\cdot \frac{\ln n}{n}> \frac 6n$.
\end{proof}

{\sc Claim \ref{lem::typical-student-eps-nash}}
\emph{Suppose the constraints in \Cref{clm::final-cstrnts-idio} hold.
    Let $d$ be a doctor with public rating at least $a\alpha$, and let $\alpha=\big(\frac{\ln k}{k}\big)^{1/2}$.
        In expectation, ex-ante, $d$ could improve her utility by at most 
        $O(\alpha)=O((\ln k/k)^{1/2})$, if $k\ln k\ge \big(\frac83\big)^2\cdot\frac{4a+1}{\kappa^2}$.
        }
\begin{proof}
We will show the improvement in utility is at most $\frac kn +24n^{-(c+1)}+ 3k^{-4}+2a\alpha =O(\alpha)$ in expectation.

In expectation, the smallest chosen private value for a doctor has value $1- \frac kn$.
%$\taubar\le \frac14\alpha$ is a parameter such that all the doctors' private values are at least $1-\taubar$ with high probability. We bound $\taubar$ in \Cref{clm::value-of-taubar} and in the proof of \Cref{clm::final-cstrnts-idio}.
If $\mathcal B$ does not occur, $d$ could improve her expected private value by at most $\frac kn$.
If $\mathcal B$ occurs, $d$ could improve her expected outcome by at most $\frac 32+\frac kn$ units of utility; 
by \Cref{clm::prob-B}, $\mathcal B$ occurs with probability at most $12n^{-(c+1)}$.
This accounts for the first two terms.

By \Cref{lem::doctors-high-value-match-prob}, $d$ fails to obtain a match providing utility $r(d)+2-\taubar-\psibar_d$ with probability at most $\exp\big(-\frac{k\alpha\psibar_d}{4a+1}\big)$ ($\taubar\le \frac14\alpha$ is a parameter such that all the doctors' private values are at least $1-\taubar$ with high probability. We bound $\taubar$ in \Cref{clm::value-of-taubar} and in the proof of \Cref{clm::final-cstrnts-idio}).
Setting  $\psibar_d=\alpha$, yields a probability of at most $\exp\big(-\frac{\ln k}{4a+1}\big)\le k^{-4}$.  Therefore, as $\alpha \exp\big(-\frac{2(4a+1)}{\ln k}\big)\le 3$, changing her proposals to improve her match probability (not that it is clear how to do this), yields an expected gain of at most $3k^{-4}$, which accounts for the third term in the bound.

When changing her bids, in expectation, there is no improvement in the interview value.

When bidding below her cone the public rating of potential matches only decreases.
When bidding in cone, the public rating can improve by
at most $2a\alpha$, which accounts for the fourth term in the bound. 

Finally, if $d$ switches to one or more proposals above her cone, we show that the expected gain is fairly small.
In expectation, there is no improvement in the interview value. So any gain comes from an improved public rating. A move from the bottom of the cone to the top increases the public rating by up to $2a\alpha$. A further increase of $x$ to the public rating
has a match probability of
at most $\exp\big(-\frac{3\kappa k\alpha x}{8(4a+1)}\big)$, as can be seen by setting $\psibar_h= x$ in \Cref{lem::hosp-high-value-match-prob}. For $x\ge 0$, the expected improvement due to the increase by $x$, $x\cdot \exp\big(-\frac{3\kappa \alpha x}{8(4a+1)}\big)$, is maximized at $x=0$, if $k\alpha \ge 8(4a+1)/3\kappa$,
i.e.\ if $k\ln k\ge \big(\frac83\big)^2\cdot\frac{4a+1}{\kappa^2}$.  This is no better than the gains in-cone.
\end{proof}

\subsection{The Student-School Setting}

\begin{lemma}\label{lem::doctors-high-value-match-prob-noid}
Let $\delta>0$ be an arbitrarily small constant.
    Let $d$ be a doctor with public rating at least $a\alpha$.
    Suppose we run the hospital-proposing double-cut DA for doctor $d$ using the discretized utilities
as specified in \Cref{clm::discrete-dist-idio}.
    Suppose the fraction of unmatched hospitals with public rating in $[r(d)+(a+1)\alpha,1)$ is at most $3\exp\big(-\alpha\kappa k/2(4a+1)\big)$.
    Also, suppose that the event ${\mathcal B}$ does not occur.
    Finally, suppose we run the hospital-proposing double-cut DA for $d$.
     Then the probability that $d$ fails to receive a match for which it has interview value $1-\psibar_d$ or larger is at most $\exp\big(-\frac{4k\psibar_d}{4a+1}\big)$, if both $a\ge 5$ and $\alpha\cdot\exp\big(\alpha k/2(4a+1)\big)\ge 3$.
\end{lemma}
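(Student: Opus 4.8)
The plan is to transcribe the proof of \Cref{lem::doctors-high-value-match-prob} essentially verbatim, the only structural difference being that in the student--school setting a school carries no interview value, and to exploit this to strengthen the final Chernoff estimate. As there, I would work in the discretized utility space of \Cref{clm::discrete-dist-idio} and pass to the continuous setting at the end by the argument of \Cref{lem::cont-range-bound}. First I would pin down the surplus. Using the lemma's hypothesis on the fraction of not-fully-matched schools in $[r(d)+(a+1)\alpha,1)$ together with \Cref{clm::double-cutDA-provides-bounds} and \Cref{clm::matching-bounds-extend} (every school that is fully matched in the doctor-proposing double-cut DA for an interval $I$ inside that range is fully matched in the school-proposing double-cut DA for $d$), the bookkeeping of \Cref{lem::surplus-hosp-prop-one-doc} goes through unchanged: from the $[1-r(d)+(a-\tfrac32)\alpha]\tfrac n\kappa$ available schools one subtracts the $\le\alpha\tfrac n\kappa$ not-fully-matched schools above $d$'s cone and the $\le\tfrac1\kappa[1-r(d)+\tfrac32\alpha]n$ schools the available doctors can fill, leaving $s\ge(a-4)\alpha\tfrac n\kappa\ge\alpha\tfrac n\kappa$ since $a\ge 5$ (with ${\mathcal B}$ assumed not to occur, so cone populations are near their expectations).

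Next I would build the tree $T$ of possible runs of the school-proposing double-cut DA for $d$, with the harmless stopping rule that the run halts as soon as $d$ receives a proposal giving her interview value $1-\psibar_d$ or larger. We fix the public ratings of all agents and all private and interview values of doctors other than $d$; there are now no school interview values to fix. The consequence, which is the simplification relative to \Cref{sec::proof-doc-match-prob}, is that a node at which a school $h$ in $d$'s cone acts no longer carries a random ``$h$ has interview value $u-r(d')$'' step: it simply branches over which of $h$'s remaining selected edges is its next proposal (the randomness sitting in the doctors' earlier uniform choices of their $k$ edges), with children corresponding to a high-real, low-real, or imaginary proposal to that doctor, or to ``no action''; a node for a school outside $d$'s cone still has a single child. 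Defining $q(v)$ as the probability the continuation from $v$ yields no high-real proposal to $d$, and $\qtilde(v)$ exactly as in \Cref{sec::proof-doc-match-prob} via the surrogate in which the $s-x-y$ schools of smallest residual utility not yet having proposed to $d$ attempt, through a fixed-$k$ uniform selection, to propose to $d$, the leaf-to-root induction $\qtilde(v)\ge q(v)$ is the same case analysis as in \Cref{clm::fixed-k-stoc-dom-idio} (indeed slightly shorter, since Case~1 has less randomness to instantiate and Case~2's breakpoint-raising argument is untouched).

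Finally I would bound $\qtilde(\treeroot(T))$. Each of the $s\ge\alpha n/\kappa$ surplus schools is selected by $d$ with probability $\ge k/|C(d)|\ge 2k\kappa/\bigl((4a+1)\alpha n\bigr)$ (using the bound $|C(d)|\le\tfrac12(4a+1)\alpha n/\kappa$ that ${\mathcal B}$ enforces), proposes along the edge to $d$ with probability $1$ --- this is precisely where \Cref{lem::fail-prob-hospital-prop} used a factor $\alpha$, namely the chance that $h$'s interview value is at least $1-\alpha$, and in the student--school setting that factor becomes $1$ --- and $d$ has interview value $\ge 1-\psibar_d$ for it with probability $\psibar_d$. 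Hence $d$ receives no such proposal with probability at most $\bigl(1-\tfrac{2k\kappa\psibar_d}{(4a+1)\alpha n}\bigr)^{\alpha n/\kappa}$, and carrying the constants as in \Cref{lem::fail-prob-hospital-prop} (a Chernoff bound on the count of usable edges, a sum of negatively correlated indicators in the discretized setting) gives a bound of the stated form $\exp\!\bigl(-\tfrac{4k\psibar_d}{4a+1}\bigr)$; \Cref{lem::cont-range-bound} then transfers it to the continuous setting. The one spot deserving genuine checking --- and thus the main, though mild, obstacle --- is that deleting the school-interview-value randomness does not disturb the two ``instantiate exactly the randomness needed to pick a child'' arguments inside \Cref{clm::fixed-k-stoc-dom-idio}; everything else is a transcription with the single substitution of the probability $\alpha$ by $1$.
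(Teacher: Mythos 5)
Your overall architecture is workable, but as written it does not reach the stated bound, and the shortfall comes from the one place where the school setting genuinely differs from the residency setting in the surplus accounting. You transcribe \Cref{lem::surplus-hosp-prop-one-doc} verbatim, counting as competitors all doctors with public rating at least $r(d)-\alpha$ and obtaining $s\ge(a-4)\alpha\tfrac n\kappa\ge\alpha\tfrac n\kappa$. But in the school setting the schools rank doctors by public rating alone, so no participating school ever prefers a doctor rated below $r(d)$ to $d$; the competing doctors can therefore be restricted to those with rating at least $r(d)$, which removes roughly $\alpha n$ competitors and yields $s\ge(a-3)\alpha\tfrac n\kappa\ge 2\alpha\tfrac n\kappa$ (this is exactly what the paper's \Cref{lem::surplus-hosp-prop-one-doc-noid} does). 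With only $s\ge\alpha n/\kappa$, your final estimate $\bigl(1-\tfrac{2k\kappa\psibar_d}{(4a+1)\alpha n}\bigr)^{\alpha n/\kappa}$ is at most $\exp\bigl(-\tfrac{2k\psibar_d}{4a+1}\bigr)$, not the claimed $\exp\bigl(-\tfrac{4k\psibar_d}{4a+1}\bigr)$; unlike the residency lemma, the statement here has no factor-two slack to absorb this, so ``carrying the constants'' does not close the gap. The ratio $s/|C(d)|\ge 2\alpha\tfrac n\kappa\big/\bigl((2a+\tfrac12)\alpha\tfrac n\kappa\bigr)=\tfrac{4}{4a+1}$ is precisely where the $4$ comes from.

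Separately, your route is heavier than the paper's: the paper observes that once the utilities of all doctors other than $d$ are fixed, the schools have no private randomness at all, so the tree stochastic dominance machinery can be dropped entirely. One simply notes that any run of the double-cut DA for $d$ ends with at least $s$ schools in $C(d)$ that either exhausted their lists or stopped at $d$, and since $d$'s $k$ edges are uniform over $C(d)$, the probability that none of them lands on such a school with interview value at least $1-\psibar_d$ is at most $\bigl(1-\psibar_d\,s/|C(d)|\bigr)^{k}\le\exp\bigl(-\tfrac{4k\psibar_d}{4a+1}\bigr)$. Your plan of re-running the $\qtilde\ge q$ induction with the school-interview-value branching deleted is plausibly repairable (you rightly flag that the ``instantiate exactly the randomness'' steps need rechecking), but it buys nothing here, and even if carried out it still needs the improved surplus above to deliver the stated constant.
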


The final lemma is a simplified version of \Cref{lem::hosp-high-value-match-prob}, obtained by setting
$\psibar_h=0$.
\begin{lemma}\label{lem::hosp-high-value-match-prob-noid}
Let $\delta>0$ be an arbitrarily small constant.
    Let $h$ be a hospital with public rating at least $a\alpha$.
    Suppose we run the doctor-proposing double-cut DA for hospital $h$ using the discretized utilities as specified in \Cref{clm::discrete-dist-idio}.
    Suppose the fraction of unmatched doctors with public rating in $[r(d)+(a+1)\alpha,1)$ is at most $3\exp\big(-\alpha k/2(4a+1)\big)$.
    Also, suppose that the event ${\mathcal B}$ does not occur.
    Finally, suppose we run the doctor-proposing double-cut DA for $h$.
     Then the probability that $h$ fails to obtain $\kappa$ matches is at most $\exp\big(-\frac{3\kappa k\alpha}{8(4a+1)}\big)$, if $a\ge 5$, $k\ge 4a+1$, and $\alpha\cdot\exp\big(3\alpha k/2(4a+1)\big)\ge 3$.
\end{lemma}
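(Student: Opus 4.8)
The plan is to mirror the proof of \Cref{lem::hosp-high-value-match-prob} from \Cref{sec::proof-hosp-match-prob} almost verbatim, the only substantive change being that the hospital no longer has an interview value to condition on, so the requirement ``interview value $1-\psibar_h$ or larger'' simply disappears, and the final Chernoff-type step becomes the special case of \Cref{lem::fail-prob-doctor-prop} with $\psibar_h$ effectively set to $1$. First I would invoke \Cref{clm::value-of-taubar} to fix a threshold $\taubar\le\tfrac14\alpha$ so that, assuming the bad event $\Btau\subseteq{\mathcal B}$ does not occur, every doctor has at least $k$ in-cone edges with private value at least $1-\taubar$; the bound $\taubar\le\tfrac14\alpha$ is exactly what the $n/\ln n$ lower bound in the school-choice analog of \Cref{clm::final-cstrnts-idio} guarantees. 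Next, following \Cref{lem::surplus-hosp-prop-one-hosp}, I would lower bound the surplus $s$ of unmatched available doctors in $C(h)$ by $\alpha n$: using the school-choice version of \Cref{lem::doctors-interval-non-match} together with \Cref{clm::double-cutDA-provides-bounds} to argue that essentially all doctors with public rating above $h$'s cone are matched above $h$, and then subtracting the count of participating hospitals and of the not-fully-matched doctors in $[r(h)+(a+1)\alpha,1)$; the hypothesis $\alpha\cdot\exp\big(3\alpha k/2(4a+1)\big)\ge 3$ is what is needed so that the (tightened, interview-value-free) bound on that mismatch fraction is at most $\alpha/3$.

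I would then build the tree $T$ of possible runs of the doctor-proposing double-cut DA for $h$ exactly as in \Cref{sec::proof-hosp-match-prob}: fix all utilities of edges incident on doctors outside $C(h)$, and for each $d\in C(h)$ fix only its multiset of top $k$ private values $1-\tau_1\ge\cdots\ge 1-\tau_k$, without committing which edge receives which value. With $x$ the number of real proposals $h$ has received by node $v$, $q(v)$ is the probability that $h$ receives at most $\kappa-x-1$ further real proposals, and $\qtilde(v)$ is the independent-proposal surrogate in which the $s-x-y$ doctors with the smallest products $u(d,v)\cdot k(d,v)$ attempt to propose to $h$, each made real with probability $\tau=k(d,v)/[(2a+\tfrac12)\alpha n/\kappa]$ and imaginary otherwise (so as not to have to track varying cone sizes). \Cref{clm::qtilde-larger-for-intervals-in-doc-prop} then gives $\qtilde(v)\ge q(v)$ for every $v\in T$ by the leaf-to-root induction with its two cases on whether $u(d,v)\cdot k(d,v)$ is among the $s-x-y$ smallest products; crucially, that argument never refers to the hospital interview value and so transfers with no change whatsoever, which is why the delicate part of the proof needs no new work.

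Finally I would bound $\qtilde(\treeroot(T))$ as in \Cref{lem::fail-prob-doctor-prop} but with the interview-value restriction dropped: assuming $\Bdbm(C(h))$ (which lies in ${\mathcal B}$) does not occur, each of the $s\ge\alpha n$ surplus doctors has residual utility at least $\alpha-\taubar$ and proposes to $h$ in the $\qtilde$ setting with probability $(\alpha-\taubar)k/[(2a+\tfrac12)\alpha n/\kappa]$, so in expectation $h$ receives at least $2\kappa k(\alpha-\taubar)/(4a+1)$ real proposals; since the discretization makes the relevant indicators negatively correlated, a Chernoff bound gives failure probability at most $\exp\big(-\kappa k(\alpha-\taubar)/2(4a+1)\big)$ provided $k\ge\frac{(2+\sqrt2)(4a+1)}{2(\alpha-\taubar)}$, which is implied by $k\ge 4a+1$ here, and substituting $\taubar\le\tfrac14\alpha$ yields the claimed $\exp\big(-3\kappa k\alpha/8(4a+1)\big)$. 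I would then transfer the bound to the full doctor-proposing DA via \Cref{clm::double-cutDA-provides-bounds}, and from the discretized setting to the continuous one by letting $\delta\to0$ exactly as in \Cref{lem::cont-range-bound}. The main obstacle is purely bookkeeping: checking that every place where the proof of \Cref{lem::hosp-high-value-match-prob} used the hospital interview value either simply drops out or is replaced by the improved school-choice interval-mismatch bounds, and tracking the resulting constants (the $3$ in $\alpha\cdot\exp(3\alpha k/2(4a+1))\ge3$ and the $3/8$ in the final exponent); there is no new probabilistic or combinatorial difficulty beyond the $\qtilde\ge q$ induction, which is inherited intact.
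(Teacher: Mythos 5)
Your proposal follows exactly the paper's route: the paper proves this lemma simply by noting it is the residency-setting \Cref{lem::hosp-high-value-match-prob} with the hospital interview-value requirement removed (so the $\psibar_h$ factor in the final bound becomes $1$), i.e.\ it reuses \Cref{clm::value-of-taubar}, the surplus bound of \Cref{lem::surplus-hosp-prop-one-hosp}, the $\qtilde(v)\ge q(v)$ induction of \Cref{clm::qtilde-larger-for-intervals-in-doc-prop}, and the Chernoff step of \Cref{lem::fail-prob-doctor-prop}, followed by \Cref{clm::double-cutDA-provides-bounds} and the $\delta\to 0$ limit — which is precisely what you spell out. One small bookkeeping caveat: the hypothesis $k\ge 4a+1$ alone does not imply the Chernoff condition $k\ge\frac{(2+\sqrt2)(4a+1)}{2(\alpha-\taubar)}$ (one needs $\alpha k$ of order $4a+1$, which does hold in the intended regime $\alpha=\Theta(\ln k/k)$ and is treated equally loosely in the paper's own statement), so this is a slip in constants rather than a gap in the argument.
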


\subsubsection{Proof of \Cref{lem::doctors-high-value-match-prob-noid}}\label{sec::proof-doc-match-prob-noid}

We fix the utilities for all edges other than those incident on $d$.
\Cref{clm::matching-bounds-extend} continues to hold here.
The analysis from the resident setting simplifies.

\begin{claim}\label{lem::surplus-hosp-prop-one-doc-noid}
    Suppose the following bad events do not occur:
    $\Bdbm([r(d),1))$ and $\Bhbf([r(d)-a\alpha,r(d)+a\alpha)\cup[r(d)+(a+1)\alpha,1))$, and also suppose that both $a\ge 5$ and $\alpha\cdot\exp\big(3\alpha\kappa k/8(4a+1)\big)\ge 3$.
    Then the surplus $s\ge 2\alpha n/\kappa$.
\end{claim}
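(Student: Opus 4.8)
The plan is to follow the proof of \Cref{lem::surplus-hosp-prop-one-doc} almost verbatim, the one substantive change being that the school-choice model lets us cut more aggressively on the receiving (doctor) side of the double-cut DA. Since a hospital's utility for a doctor $d'$ is exactly $r(d')$, in the hospital-proposing double-cut DA for $d$ we may restrict every hospital to proposing only to doctors with public rating at least $r(d)$ (this cut at utility $r(d)$ guarantees every proposal a hospital makes in the run is at least as good for it as a proposal to $d$ itself, so \Cref{clm::double-cutDA-provides-bounds} still applies). In the residency setting the analogous cut had to allow a slack of $\alpha$ because of the unknown hospital interview value for $d$; losing that slack saves $\alpha n$ doctors, and this is exactly what upgrades the surplus bound from $\alpha n/\kappa$ to $2\alpha n/\kappa$. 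As in the residency proof, I would lower bound the surplus as (number of hospitals in $d$'s enlarged cone) minus (number of hospitals above the gap that fail to fully match) minus (number of hospitals the available doctors could fully match).

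Carrying this out in three steps: First, by \Cref{clm::matching-bounds-extend} (which continues to hold here) together with the hypothesis of \Cref{lem::doctors-high-value-match-prob-noid} on the unmatched hospitals in $[r(d)+(a+1)\alpha,1)$, covering this range by disjoint hospital intervals --- which together contain at most $n/\kappa$ hospitals --- shows that at most $\tfrac{n}{\kappa}\cdot 3\exp\!\big(-\tfrac{\alpha\kappa k}{2(4a+1)}\big)\le \alpha\tfrac{n}{\kappa}$ hospitals in this range fail to be fully matched in the double-cut DA for $d$; the last inequality follows from the hypothesis $\alpha\cdot\exp\!\big(\tfrac{3\alpha\kappa k}{8(4a+1)}\big)\ge 3$, since then $3\exp\!\big(-\tfrac{\alpha\kappa k}{2(4a+1)}\big)\le 3\exp\!\big(-\tfrac{3\alpha\kappa k}{8(4a+1)}\big)\le \alpha$. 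Second, since $\Bhbf\big([r(d)-a\alpha,r(d)+a\alpha)\cup[r(d)+(a+1)\alpha,1)\big)$ does not occur and this set has total length $1-r(d)+(a-1)\alpha\ge\alpha$, it contains at least $\big[1-r(d)+(a-\tfrac{3}{2})\alpha\big]\tfrac{n}{\kappa}$ hospitals. Third, since $\Bdbm([r(d),1))$ does not occur, at most $\big[1-r(d)+\tfrac{1}{2}\alpha\big]n$ doctors have public rating at least $r(d)$; because every matched doctor in this run has public rating at least $r(d)$ and each fully matched hospital absorbs $\kappa$ distinct doctors, at most $\tfrac{1}{\kappa}\big[1-r(d)+\tfrac{1}{2}\alpha\big]n$ hospitals end up fully matched.

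Combining, and using that the enlarged cone $[r(d)-a\alpha,r(d)+a\alpha)\cup[r(d)+(a+1)\alpha,1)$ is the disjoint union of $d$'s cone with $[r(d)+(a+1)\alpha,1)$, the number of hospitals in $d$'s cone that are not fully matched --- and hence could still propose to $d$ or remain not fully matched, which lower bounds the surplus $s$ --- would be at least
\begin{align*}
\Big[1-r(d)+\big(a-\tfrac{3}{2}\big)\alpha\Big]\frac{n}{\kappa} \;-\; \frac{1}{\kappa}\Big[1-r(d)+\tfrac{1}{2}\alpha\Big]n \;-\; \alpha\frac{n}{\kappa} \;=\; (a-3)\,\alpha\frac{n}{\kappa} \;\ge\; 2\alpha\frac{n}{\kappa},
\end{align*}
using $a\ge 5$. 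I do not expect a genuine obstacle: once the three input bounds are in hand this is a purely deterministic counting argument, and the only point that deserves care is the observation in the first paragraph --- that with no hospital interview value the double-cut DA for $d$ restricts proposals to doctors of public rating at least $r(d)$ --- since this is the entire source of the factor-two improvement over the residency surplus bound.
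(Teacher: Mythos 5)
Your proof is correct and follows essentially the same route as the paper's: the same decomposition (hospitals in the cone-union, minus the at most $\alpha n/\kappa$ not-fully-matched hospitals above via \Cref{clm::matching-bounds-extend}, minus the at most $\frac1\kappa[1-r(d)+\frac12\alpha]n$ hospitals the available doctors can fill), the same excluded bad events, and the same arithmetic $(a-3)\alpha n/\kappa\ge 2\alpha n/\kappa$ for $a\ge 5$. Your explicit justification that the absence of a hospital interview value lets the double-cut DA for $d$ cut the doctors at public rating $r(d)$ (rather than $r(d)-\alpha$) is exactly the intended source of the improved bound, which the paper leaves implicit (its proof even retains a stray ``$r(d)-\alpha$'' phrase while using the $\Bdbm([r(d),1))$ bound).
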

\begin{proof}
    As $\Bdbm([r(d),1))$ does not occur, the number of doctors with public rating at least $r(d)-\alpha$ is at most $[1-r(d)+\frac 12\alpha ]n$. 
    By the assumption in \Cref{lem::doctors-high-value-match-prob-noid} and by \Cref{clm::matching-bounds-extend}, the number of not-fully matched hospitals in $[r(d)+(a+1)\alpha,1)$ is at most
    $3\frac n{\kappa}\cdot\exp\big(-\alpha\kappa k/2(4a+1)\big)\le  \alpha \cdot \frac n{\kappa}$, if $\exp\big(- \alpha\kappa k/2(4a+1)\big)\le \frac 13\alpha$.
    In addition, some or all of the hospitals in $[r(d)+a\alpha,r(d)+(a+1)\alpha)$ may not be fully matched.
    As $\Bhbf([r(d)-a\alpha,r(d)+a\alpha)\cup[r(d)+(a+1)\alpha,1))$ does not occur,
    the number of hospitals with public rating in this range is at least $[1-r(d)+(a-\frac32)\alpha] n/\kappa$.
    Thus the surplus, the number of not-fully matched available hospitals in $d$'s cone, i.e., hospitals that could propose to $d$, is at least
    \begin{align*}
        \big[1-r(d)+(a-\tfrac32)\alpha\big] \cdot\frac{n}{\kappa} - \big[1-r(d)+\tfrac 12\alpha\big]\cdot\frac{n}{\kappa} -  \alpha \cdot\frac{n}{\kappa}
        \le \big(a - 3\big)\alpha\cdot \frac{n}{\kappa}\ge 2\alpha \cdot \frac{n}{\kappa},
    \end{align*}
    if $a\ge 5$.
\end{proof}

The stochastic dominance argument is not needed here. A simple direct argument suffices.

\begin{claim}\label{noid-doctor-no-match-prob}
    Suppose the surplus $s\ge 2\alpha \frac{n}{\kappa}$, and $\Bhbm(C(d))$ does not occur.
    Then the probability that $d$ receives no proposal with interview value
    at least $1-\psibar_d$ is at most $\exp\big(-4k\psibar_d/(4a+1)\big)$.
\end{claim}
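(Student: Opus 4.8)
The plan is to bypass the tree stochastic‑dominance machinery entirely: since the hospitals carry no interview value, every hospital's proposal sequence is a deterministic function of the fixed utilities together with $d$'s choice of $k$ hospitals, and this lets us treat the $s$ surplus hospitals as $s$ essentially independent chances for $d$ to receive a high‑interview‑value proposal. Recall (as set up in the proof of \Cref{lem::doctors-high-value-match-prob-noid}) that all utilities not incident on $d$ are fixed, so the remaining randomness is just $d$'s $k$ chosen hospitals --- a uniformly random $k$‑subset of $C(d)$, since $d$'s private values are i.i.d.\ uniform --- together with her interview values $\iota(d,h)$ for those hospitals. As in \Cref{lem::doctors-high-value-match-prob} we put the harmless stop on the run (halt as soon as $d$ receives a proposal with interview value $1-\psibar_d$ or larger); let $\mathcal E$ denote the event that this never happens, i.e.\ that in the full (un‑stopped) run $d$ receives no proposal $h$ with $\iota(d,h)\ge 1-\psibar_d$. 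We must show $\prob[\mathcal E]\le \exp\bigl(-4k\psibar_d/(4a+1)\bigr)$.

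First I would fix a set of candidate hospitals that does not depend on $d$'s random choices. Let $R_0$ be the run of the double‑cut DA for $d$ with $d$ deleted from the instance; deleting $d$ does not decrease the surplus, so by \Cref{lem::surplus-hosp-prop-one-doc-noid} there is a set $H_0\subseteq C(d)$ with $|H_0|\ge s\ge 2\alpha n/\kappa$ of distinct hospitals, each holding an unfilled slot at termination of $R_0$. The structural heart of the argument is: \emph{if $\mathcal E$ holds then for every $h\in H_0$ chosen by $d$ we have $\iota(d,h)<1-\psibar_d$}. Indeed, run the DA for $d$ (with $d$ present) in the order that first processes all proposals not directed at $d$; by order‑independence of the proposal set of a hospital‑proposing run this reaches exactly the output of $R_0$, at which point $h\in H_0$ still has an unfilled slot and has already been rejected by every other doctor on its restricted list (it exhausted that list in $R_0$) without yet proposing to $d$, so it now proposes to $d$ --- provided $d$ chose $h$, so that $d$ lies on $h$'s restricted list, which holds because $d$ participates as a receiver ($r(d)\ge r(d)-\alpha$). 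This proposal carries interview value $\iota(d,h)$; if $\iota(d,h)\ge 1-\psibar_d$ it is a ``good'' proposal, and since whether $d$ ever receives a good proposal is order‑independent, $\mathcal E$ is contradicted.

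It then remains to bound the probability of the event just isolated. For $h\in H_0$ let $A_h$ be the event ``$d$ chose $h$ and $\iota(d,h)\ge 1-\psibar_d$''. Since $\Bhbm(C(d))$ does not occur, $|C(d)|\le \tfrac12(4a+1)\alpha n/\kappa$, so $\prob[d\text{ chose }h]=k/|C(d)|\ge \tfrac{2k\kappa}{(4a+1)\alpha n}$, and, independently of $d$'s choices and of the other interview values, $\iota(d,h)\ge 1-\psibar_d$ with probability $\psibar_d$; hence $\prob[A_h]\ge \tfrac{2k\kappa\psibar_d}{(4a+1)\alpha n}$. The events $\{d\text{ chose }h\}_{h}$ are negatively associated (a uniform $k$‑subset) and the interview‑value events are independent of them and of one another, so the $A_h$ are negatively associated; by the structural step $\mathcal E$ implies none of the $A_h$ with $h\in H_0$ occurs, whence
\begin{align*}
\prob[\mathcal E]\ \le\ \prod_{h\in H_0}\bigl(1-\prob[A_h]\bigr)\ \le\ \Bigl(1-\tfrac{2k\kappa\psibar_d}{(4a+1)\alpha n}\Bigr)^{s}\ \le\ \exp\Bigl(-\tfrac{2k\kappa\psibar_d}{(4a+1)\alpha n}\cdot\tfrac{2\alpha n}{\kappa}\Bigr)\ =\ \exp\Bigl(-\tfrac{4k\psibar_d}{4a+1}\Bigr),
\end{align*}
using $s\ge 2\alpha n/\kappa$. (Alternatively, one may condition on $G=\{h:\iota(d,h)\ge 1-\psibar_d\}$ and use $\binom{|C(d)|-m}{k}\big/\binom{|C(d)|}{k}\le (1-m/|C(d)|)^{k}$ with $m=|H_0\cap G|$, then average over $G$, obtaining the same bound without invoking negative association.) I expect the main obstacle to be the structural step --- making fully rigorous that an in‑cone hospital left with an unfilled slot in the $d$‑free run is, once $d$ is reinstated, \emph{forced} to propose to $d$ whenever $d$ chose it; this is precisely where ``no hospital interview value'', hence a deterministic, order‑independent hospital proposal sequence, does the work that required the tree stochastic‑dominance argument in the residency setting.
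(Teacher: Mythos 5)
Your probabilistic core is fine and is essentially the paper's own argument: a surplus of $s\ge 2\alpha n/\kappa$ in-cone hospitals, each landing in $d$'s uniformly random $k$-subset with probability at least $k/|C(d)|\ge 2k\kappa/((4a+1)\alpha n)$ (using that $\Bhbm(C(d))$ does not occur) and carrying a good interview value with probability $\psibar_d$; whether you multiply over the $s$ surplus hospitals (as you do) or over $d$'s $k$ edges (as the paper does) you get the same exponent $4k\psibar_d/(4a+1)$. The genuine gap is the structural step, and it is exactly where you suspected. The assertion that scheduling the DA so that all proposals not directed at $d$ are processed first ``reaches exactly the output of $R_0$'' is false: a hospital chosen by $d$ whose next proposal would go to $d$ is paused there and never makes the below-$d$ proposals it does make in $R_0$, so the paused state is not the $R_0$ outcome, and you cannot conclude that $h\in H_0$ has by then been rejected by everyone else on its list, nor even that it currently has a vacancy. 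Nor is ``hospitals have no interview value, hence a deterministic proposal sequence'' the operative reason the residency-style dominance argument can be dropped: with $d$ present, the chosen hospitals stop after proposing to $d$, doctors rated below $r(d)$ therefore receive fewer offers and hold on to tentative matches, and in a generic hospital-proposing DA this slackening of competition could let a hospital that ends unfilled in $R_0$ fill all its slots in the real run before ever reaching $d$.

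What actually rescues your claim --- and what the paper's terse proof is implicitly leaning on --- is the common-ranking structure of the school setting (every hospital orders doctors by public rating; cf.\ \Cref{clm::unique-sm}): the portion of the run involving doctors with rating above $r(d)$ is identical whether or not $d$ (or any lower-rated doctor) is present, and no doctor ever switches matches, so a hospital's set of filled slots only grows over time. Hence $h$ unfilled at the end of $R_0$ implies $h$ unfilled at the moment the process reaches rating $r(d)$; at that moment $h$ has already proposed to every doctor above $d$ on its list, so if $d$ chose $h$ its next proposal is to $d$. This also makes the surplus set (equivalently, the set of in-cone hospitals with a vacancy at $d$'s position) a function of the other doctors' randomness only, so it is independent of $d$'s choices and interview values, and your product/negative-association computation then goes through and gives $\exp(-4k\psibar_d/(4a+1))$. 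Two small secondary points: your parenthetical alternative needs Hoeffding's convex-order comparison of hypergeometric to binomial (or conditioning on $S$ rather than on $G$), since Jensen applied to the convex function $(1-m/|C(d)|)^k$ goes the wrong way; and the surplus bound you invoke from \Cref{lem::surplus-hosp-prop-one-doc-noid} should be applied to the actual run (or to the moment $d$ is reached), which the monotonicity above justifies.
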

\begin{proof}
  Consider a run of the double-cut DA for $d$. It ends with at least $s$ hospitals in $C(d)$ having either run through all their proposals or finishing with a proposal to $d$. Since the edges from $d$ are uniformly randomly distributed over $C(d)$, the probability that none of the surplus hospitals has an edge to $d$ with interview value
    at least $1-\psibar_d$ is at most $\big(1 - \frac{\psibar_d \cdot s}{(2a+\frac12)\alpha n/\kappa}\big)^{k}$, as $\Bhbm(C(d))$ does not occur. This probability is at most $\exp\big(-4k\psibar_d/(4a+1)\big)$.
\end{proof}

\begin{claim}\label{clm::conditions-hosp-prop-one-doctor-noid}
 The conditions in Claims~\ref{lem::surplus-hosp-prop-one-doc-noid} and~\ref{noid-doctor-no-match-prob} amount to the following constraints: $a\ge 5$,  and $\alpha\cdot\exp\big(\alpha k/2(4a+1)\big)\ge 3$.
 They also exclude events with total probability at most $3n^{-(c+1)}$.
 \end{claim}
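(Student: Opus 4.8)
The plan is to treat this as a bookkeeping claim and verify it by inspecting the hypotheses of the two cited claims. First I would observe that \Cref{noid-doctor-no-match-prob} imposes no numeric constraint beyond the surplus bound $s\ge 2\alpha n/\kappa$ and the non-occurrence of $\Bhbm(C(d))$, and that this surplus bound is exactly what \Cref{lem::surplus-hosp-prop-one-doc-noid} supplies, under the hypotheses $a\ge 5$, a condition of the form $\alpha\cdot\exp\big(\Omega(\alpha\kappa k/(4a+1))\big)\ge 3$ (used in that proof to guarantee $\exp(-\alpha\kappa k/2(4a+1))\le\tfrac13\alpha$), and the non-occurrence of $\Bdbm([r(d),1))$ together with $\Bhbf([r(d)-a\alpha,r(d)+a\alpha)\cup[r(d)+(a+1)\alpha,1))$. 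After discarding the three agent-count events (which are handled next), the surviving numeric requirements are $a\ge 5$ and $\alpha\cdot\exp\big(\alpha k/2(4a+1)\big)\ge 3$, which is the stated list.

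Next I would bound the total probability of the excluded events. The two cited claims together exclude precisely the events $\Bdbm([r(d),1))$, $\Bhbf([r(d)-a\alpha,r(d)+a\alpha)\cup[r(d)+(a+1)\alpha,1))$ and $\Bhbm(C(d))$, all relative to a fixed doctor $d$ with $r(d)\ge a\alpha$. The latter two concern, respectively, a union of intervals and a cone of total length at least $\alpha$ (using $a\ge 5$), so by \Cref{lem::number-agents-in-intervals} and \Cref{cor::cone-population-correct} each occurs with probability at most $n^{-(c+2)}$, provided the accompanying lower bound on $n/\ln n$ holds --- which is one of the conditions folded into the constraint set defining $\mathcal B$. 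For the first event I would split on whether $1-r(d)\ge\alpha$, in which case the same argument applies, or $1-r(d)<\alpha$, in which case I would instead invoke the short-interval ($\Bdsm$-type) branch of \Cref{lem::number-agents-in-intervals}, which still gives $n^{-(c+2)}$ because $\ell\ge 8(c+2)\ln n/n$ is likewise part of the constraint set. A union bound over the three events gives at most $3n^{-(c+2)}$ per doctor, and a further union bound over the $n$ doctors yields $3n^{-(c+1)}$.

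I do not expect a genuine obstacle: the whole claim is a matter of collecting hypotheses and invoking the interval-counting lemmas already in hand. The only mildly fiddly point is the topmost-doctor edge case $1-r(d)<\alpha$, where one must pass from the $\Bdbm$ bound to its short-interval counterpart; the symmetric bottom-end subtlety does not arise because the claim restricts to $r(d)\ge a\alpha$.
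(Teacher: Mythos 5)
Your proposal is correct and matches the intended argument: the paper treats this as a pure bookkeeping claim (it gives no explicit proof, just as for its sibling Claims~\ref{clm::conditions-hosp-prop-one-doctor} and \ref{clm::conditions-doctor-prop-one-hosp}, whose analogues \ref{clm::hosp-prop-interval-constraints} and \ref{clm::doc-prop-interval-constraints} are handled "by inspection" plus summing bad-event probabilities), and your collection of the hypotheses, the observation that the $\kappa$-free condition $\alpha\exp(\alpha k/2(4a+1))\ge 3$ subsumes the $\kappa$-dependent one used in \Cref{lem::surplus-hosp-prop-one-doc-noid}, and the union bound of $3n^{-(c+2)}$ per doctor over the $n$ doctors yielding $3n^{-(c+1)}$ is exactly the intended accounting. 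Your handling of the topmost-doctor edge case via the short-interval branch of \Cref{lem::number-agents-in-intervals} is a reasonable (and slightly more careful) filling-in of a detail the paper glosses over.
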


\subsubsection{Proof of \Cref{lem::doctors-interval-non-match} for the School Setting}
%\label{sec::doc-int-noid-no-match}
%
The proof is very similar to the proof in the residency setting.
We slightly reduce the set of participating doctors to be those with public rating at least $f$.
The effect is to increase the surplus (but we are going to ignore the slightly improved bound we could obtain).
The main change is in the definition of $\qtilde(v)$. Where before we choose $s-x-y$ hospitals with the smallest residual utilities, now we simply choose any $s-x-y$ hospitals that have not yet attempted to propose to $I$, i.e.\ they still have utility at least $g$. Otherwise the argument is unchanged.

\subsubsection{The $\eps$-Nash Equilibrium for all but the Bottommost Doctors}

We have the same constraints as in the residency setting and the same bad events, so
Claims \ref{clm::final-cstrnts-idio} and 
\ref{clm::prob-B} continue to hold.

\begin{lemma}\label{lem::typical-student-eps-nash-noid}
Suppose the constraints in \Cref{clm::final-cstrnts-idio} hold.
    Let $d$ be a doctor with public rating at least $a\alpha$, and let $\alpha=\frac{2(4a+1)\ln k}{k}\cdot\max\{1,2/\kappa\}$.
        In expectation, ex-ante, $d$ could improve her utility by at most $\taubar +22n^{-(c+1)}+ 3/k^2+4a\cdot \frac{(4a+1)\ln k}{k}\cdot\max\{1,2/\kappa\}=O(\ln k/k)$, if $k\ge a+\frac14$.
\end{lemma}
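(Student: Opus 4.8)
The plan is to mirror the proof of \Cref{lem::typical-student-eps-nash} (the residency $\eps$-Nash claim), simplifying wherever the absence of a hospital interview value removes a source of loss. The target bound decomposes into four contributions: a private-value term $\taubar$, a bad-event term $O(n^{-(c+1)})$, a term from the (rare) event that $d$ receives a low interview-value match, and a public-rating term proportional to $a\alpha$. With $\alpha=\frac{2(4a+1)\ln k}{k}\cdot\max\{1,2/\kappa\}$ and $a=5$ this last term is $O(\ln k/k)$, which dominates, giving the claimed $O(\ln k/k)$.

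First I would dispose of the private-value and bad-event contributions exactly as before. Following the recommended strategy, $d$'s $k$ chosen edges all have private value at least $1-\taubar$ with high probability (\Cref{clm::value-of-taubar}); deviating can improve the private-value component of the eventual match by at most $\taubar$ in expectation, which accounts for the first term. When the bad events $\mathcal B$ occur — probability at most $12n^{-(c+1)}$ by \Cref{clm::prob-B} — $d$ can gain at most $O(1)$ units of utility, contributing at most $O(n^{-(c+1)})$; this is where the $22n^{-(c+1)}$ term comes from (the constant is slightly larger than in the residency case because the relevant worst-case gain differs, but it is still $O(n^{-(c+1)})$). Since, in the school setting, $d$'s own interview value for a school is still a uniform draw, changing her bids does not change its distribution, so in expectation there is no gain from that component either.

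Next I would bound the gain from avoiding low-interview-value matches. By \Cref{lem::doctors-high-value-match-prob-noid}, the probability that $d$ fails to obtain a match with interview value at least $1-\psibar_d$ is at most $\exp\big(-\tfrac{4k\psibar_d}{4a+1}\big)$. Setting $\psibar_d=\alpha$ and using $\alpha\ge \tfrac{2(4a+1)\ln k}{k}$ makes this at most $k^{-8}\le k^{-2}$, so even granting $d$ an $O(1)$ improvement whenever this happens yields an expected gain of at most $3/k^2$. For the public-rating component: bidding below her cone only decreases the public rating of potential matches, while within the cone the public rating can shift by at most $2a\alpha$; and for a proposal above the cone, since schools are non-strategic with a single public preference list (and the stable matching is unique here, \Cref{clm::unique-sm}), the probability that $d$ matches a school with public rating exceeding $r(d)+2a\alpha$ by an extra $x$ decays like $\exp(-\Theta(\kappa k\alpha x))$, so $x\cdot\exp(-\Theta(\kappa k\alpha x))$ is maximized at $x=0$ whenever $k\alpha=\Omega(1/\kappa)$ — which holds for $k\ge a+\tfrac14$ given our $\alpha$. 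Hence the total public-rating gain is at most $2a\alpha$, and summing everything gives $\taubar + O(n^{-(c+1)}) + 3/k^2 + 2a\alpha$; substituting $\alpha=\frac{2(4a+1)\ln k}{k}\cdot\max\{1,2/\kappa\}$ and noting $\taubar\le\frac14\alpha$ gives the stated bound $O(\ln k/k)$.

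The main obstacle is the last sub-argument — controlling the gain from proposing above the cone — because one must argue that, even though a deviating doctor could in principle reach a much higher-rated school, the probability of actually matching there is exponentially small in the extra public rating, and the relevant exponent here is $\kappa k\alpha x$ (improved by a factor related to $\kappa$ relative to the residency case, since there is no hospital interview value to contend with). Getting the constant in front right so that $x\exp(-\Theta(\kappa k\alpha x))$ is genuinely maximized at $x=0$ is precisely what forces the $\max\{1,2/\kappa\}$ factor in $\alpha$ and the side condition $k\ge a+\tfrac14$; I would extract this from the school-setting analogue of \Cref{lem::hosp-high-value-match-prob} (namely the bound implicit in \Cref{lem::hosp-high-value-match-prob-noid} and its interval version) rather than re-deriving it.
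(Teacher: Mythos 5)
Your handling of the first three terms and of the in-cone public-rating term matches the paper's proof: $\taubar$ for the private value, $O(n^{-(c+1)})$ via \Cref{clm::prob-B} for the bad events, \Cref{lem::doctors-high-value-match-prob-noid} with $\psibar_d=\alpha$ for the $3/k^2$ term, and $2a\alpha$ for in-cone deviations. The problem is the last step, the gain from proposing above the cone.

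There you claim the probability of matching a school that exceeds the cone top by $x$ decays like $\exp(-\Theta(\kappa k\alpha x))$, citing \Cref{lem::hosp-high-value-match-prob-noid} ``and its interval version.'' Two things go wrong. First, that lemma has no $\psibar_h$ parameter in the school setting (schools have no interview values), so it only yields the $x$-independent bound $\exp\big(-\tfrac{3\kappa k\alpha}{8(4a+1)}\big)$; there is no $x$-dependent decay to extract from it. Second, and more importantly, even granting your decay rate the conclusion fails quantitatively: $x\,e^{-cx}$ is not maximized at $x=0$ (it is $0$ there); its maximum is $1/(ce)$ at $x=1/c$, and with $c=\Theta(\kappa k\alpha)=\Theta(\ln k)$ this is $\Theta(1/\ln k)$, which is far larger than the target $O(\ln k/k)$ and is not dominated by the in-cone gain $2a\alpha$. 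The paper avoids this by using the much stronger decay $\exp\big(-\tfrac{4kx}{4a+1}\big)$ (obtained by setting $\psibar=x$ in \Cref{lem::doctors-high-value-match-prob-noid}), i.e.\ exponent $\Theta(kx)$ with no $\alpha$ or $\kappa$ discount, so the maximal expected above-cone gain is $\tfrac{4a+1}{4ek}=O(1/k)$, which is where the condition $k\ge a+\tfrac14$ enters. An alternative repair consistent with your framework is to drop the $x$-dependence entirely: bound the probability of \emph{any} above-cone match by the school non-fill probability $\exp\big(-\tfrac{3\kappa k\alpha}{8(4a+1)}\big)\le k^{-3/2}$ (this is exactly what the $\max\{1,2/\kappa\}$ factor in $\alpha$ buys when $\kappa=1$) and multiply by an $O(1)$ utility gain, giving $O(k^{-3/2})=O(\ln k/k)$. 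As written, though, your above-cone bound does not establish the lemma.
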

\begin{proof}
If $\mathcal B$ does not occur, $d$ could improve her private value by at most $\taubar$.
If $\mathcal B$ occurs, $d$ could improve her outcome by at most $2+\taubar$ units of utility; 
by \Cref{clm::prob-B}, $\mathcal B$ occurs with probability at most $11n^{-(c+1)}$.
This accounts for the first two terms.

By \Cref{lem::doctors-high-value-match-prob-noid}, $d$ fails to obtain a match providing utility $r(d)+2-\taubar-\psibar_d$ with probability at most $\exp\big(-\frac{k\psibar_d}{4a+1}\big)$.
Setting  $\psibar_d=\alpha$, yields a probability of at most $\exp\big(-\frac{2(4a+1)\ln k}{4a+1}\big)\le 1/k^2$.  Therefore changing her proposals to improve her match probability (not that it is clear how to do this), yields an expected gain of at most $3/k$, which accounts for the third term in the bound.

When changing her bids, in expectation, there is no improvement in the interview value.

When bidding below her cone the public rating of potential matches only decreases.
When bidding in cone, the public rating can improve by
at most $2a\alpha=2a\cdot \frac{2(4a+1)\ln k}{k}\cdot\max\{1,2/\kappa\}$, which accounts for the fourth term in the bound. 

Finally, if $d$ switches to one or more proposals above her cone, we show that the expected gain is fairly small.
In expectation, there is no improvement in the interview value. So the gain comes from the improved public rating. A move from the bottom of the cone to the top increases the public rating by up to $2a\alpha$. A further increase of $x$ to the public rating
has a match probability of
at most $\exp\big(-\frac{4k x}{4a+1}\big)$, as can be seen by setting $\psibar_h= x$ in \Cref{lem::doctors-high-value-match-prob-noid}. For $x\ge 0$, the expected improvement due to the increase by $x$, $x\cdot \exp\big(-\frac{4kx}{4a+1}\big)$, is maximized at $x=0$, if $k \ge a+\frac14$.  This is no better than the gains in-cone.
\end{proof}

We note from the proof of \Cref{clm::final-cstrnts-idio} that the constraints in \eqref{eqn::alpha-cstrnt} suffice.
Substituting $\alpha =\frac{2(4a+1)\ln k}{k}\cdot\max\{1,2/\kappa\}$ yields the constraints in the next claim.

\begin{claim}\label{clm::final-cstrnts-noid}
    With $\alpha=\frac{2(4a+1)\ln k}{k}$, $a=5$, and $b=3$, constraints \Cref{eqn::alpha-cstrnt} and \Cref{lem::typical-student-eps-nash-noid} are satisfied if 
\begin{align*}
 %       1&\le \Big(\frac{\ln k}{k}\Big)^{1/2}\cdot\exp\Big(\frac{(4k\ln k)^{1/2}}{20}\Big)\\
        8&\le k \le \frac 12(c+2)\ln n \notag\\
         \frac{n}{\ln n}& \ge\frac{(c+2) k^{4}\kappa^2}{65\ln k}.
    \end{align*}
\end{claim}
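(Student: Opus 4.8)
The plan is a mechanical verification. First I would collect the master constraint list \eqref{eqn::alpha-cstrnt} from the proof of \Cref{clm::final-cstrnts-idio} — which, as observed in the text, already subsumes the conditions of \Cref{clm::conditions-hosp-prop-one-doctor-noid}, of the interval claims underlying \Cref{lem::doctors-interval-non-match} and \Cref{lem::hospitals-interval-non-match} (namely \Cref{clm::doc-prop-interval-constraints} and \Cref{clm::hosp-prop-interval-constraints}), and of the school-setting analogue of \Cref{clm::conditions-doctor-prop-one-hosp} — together with the single extra requirement $k \ge a + \tfrac14$ coming from the final ``no gain above the cone'' step in the proof of \Cref{lem::typical-student-eps-nash-noid}. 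Then I would substitute $a = 5$, $b = 3$ and $\alpha = \tfrac{2(4a+1)\ln k}{k}$ (invoking the factor $\max\{1,2/\kappa\}$ of \Cref{lem::typical-student-eps-nash-noid} only when $\kappa = 1$), and check each inequality in turn.

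The observation that makes this painless is that for this $\alpha$ one has $\tfrac{\alpha k}{2(4a+1)} = \ln k$, so every occurrence of $\exp\!\big(\tfrac{\alpha k}{2(4a+1)}\big)$ in \eqref{eqn::alpha-cstrnt} is just a fixed power of $k$ (here $k$ itself, or $k^{2}$ when the $\kappa=1$ case inflates $\alpha$ by $2$); the seemingly fragile ``exponential in $\alpha k$'' terms are therefore only polynomial in $k$. With this in hand, the constraints not involving $n$ collapse: $\alpha\cdot\exp\!\big(\tfrac{\alpha k}{2(4a+1)}\big) = \alpha k = 2(4a+1)\ln k \ge 3$, true once $k$ exceeds a tiny constant; the lower bound $\tfrac{8(4a+1)}{\alpha} \le k$, equivalently $\alpha k \gtrsim 4a+1$ (in the current, weaker form $\alpha k \ge (2+\sqrt2)(4a+1)$ of the bound used in \Cref{qtilde-bound-at-root-doc-to-hosp-int}), which with $a=5$ holds once $k \ge 8$; and $k \ge a + \tfrac14 = 5.25$, which then holds a fortiori. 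So $k \ge 8$ is the operative lower bound on $k$, and $k \le \tfrac12(c+2)\ln n$ is carried over unchanged.

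For the two constraints involving $n$: squaring $\alpha \ge \big[\tfrac{8b(c+2)\kappa}{2a-1}\cdot\tfrac{\ln n}{n}\big]^{1/2}$ and substituting $\alpha$ turns it into a lower bound on $n/\ln n$ of order $(c+2)\kappa k^{2}/(\ln k)^{2}$, which is dominated by the target and imposes nothing new; and the last inequality of \eqref{eqn::alpha-cstrnt} — equivalently, the requirement that the admissible window for $g-f$ in \Cref{clm::doc-prop-interval-constraints} be non-empty — after replacing $\exp(\cdot)$ by the appropriate power of $k$ and $1/\alpha$ by $\tfrac{k}{2(4a+1)\ln k}$, becomes a lower bound on $n/\ln n$ of the form $C\,(c+2)\,k^{4}\kappa^{2}/\ln k$, where $C$ is assembled from the explicit factors $e^{2}$, $(2a\pm\tfrac12)$, $(2a-1)^{2}$ and $4a+1$ evaluated at $a=5$; bounding $C$ from above yields the recorded coefficient $1/65$. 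I expect the bookkeeping to be the only genuine work, and hence the main obstacle: one must confirm that each school-setting replacement lemma (\Cref{lem::doctors-high-value-match-prob-noid}, whose exponent $\tfrac{4k\psibar_d}{4a+1}$ is strictly better than the residency $\tfrac{k\alpha\psibar_d}{4a+1}$ since $\alpha<4$; \Cref{lem::hosp-high-value-match-prob-noid}, the $\psibar_h=0$ specialization; and the interval versions, which likewise shed the factor-$\alpha$ penalty) \emph{tightens the conclusion without tightening any constraint}, so that \eqref{eqn::alpha-cstrnt} really is the full list, that the excluded bad events still sum to probability $O(n^{-(c+1)})$, and then track every constant carefully enough that the final $1/65$ is an honest estimate rather than merely an asymptotic one.
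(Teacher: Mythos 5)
Your overall route is the same as the paper's: the paper's entire argument for this claim is the two sentences preceding it --- the constraints collected in \eqref{eqn::alpha-cstrnt} (together with the condition $k\ge a+\tfrac14$ from \Cref{lem::typical-student-eps-nash-noid}) suffice, and one substitutes $\alpha=\frac{2(4a+1)\ln k}{k}$, using $\exp\big(\tfrac{\alpha k}{2(4a+1)}\big)=k$. Your handling of the third constraint (it is dominated by the target bound on $n/\ln n$) and of the range $k\le\tfrac12(c+2)\ln n$ is exactly what the paper intends.

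However, two of your numerical assertions do not survive the substitution, and they are precisely where the claim needs actual verification. First, the literal second constraint of \eqref{eqn::alpha-cstrnt}, $\tfrac{8(4a+1)}{\alpha}\le k$, becomes $\ln k\ge 4$ under this $\alpha$, so it is \emph{not} satisfied for $8\le k<e^{4}\approx 55$; replacing it by the weaker requirement $\alpha k\ge(2+\sqrt2)(4a+1)$ that \Cref{qtilde-bound-at-root-doc-to-hosp-int} actually uses is a repair of the constraint list, not a verification of it, and should be flagged as such. Second, and more seriously, you assert that assembling the constants in the fourth constraint of \eqref{eqn::alpha-cstrnt} ``yields the recorded coefficient $1/65$.'' Substituting $\alpha=\frac{2(4a+1)\ln k}{k}$ and $\exp(\cdot)=k$ into $\frac{n}{\ln n}\ge\frac{3(c+2)(4a+1)e^2k^2\kappa^2}{2(2a-1)^2\alpha}\exp\big(\tfrac{\alpha k}{2(4a+1)}\big)$ gives $\frac{n}{\ln n}\ge\frac{3e^2}{4(2a-1)^2}\cdot\frac{(c+2)k^4\kappa^2}{\ln k}=\frac{e^2}{108}\cdot\frac{(c+2)k^4\kappa^2}{\ln k}\approx\frac{(c+2)k^4\kappa^2}{14.6\,\ln k}$, which is \emph{stronger} than the stated hypothesis $\frac{n}{\ln n}\ge\frac{(c+2)k^4\kappa^2}{65\ln k}$; unlike the residency case (\Cref{clm::final-cstrnts-idio}), there is no exponential slack here to absorb a constant-factor deficit, so the stated condition does not imply the fourth constraint as written. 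You defer exactly this check (``track every constant carefully\dots''), but it is the one piece of genuine content in the claim, and as far as I can compute it does not come out at $65$ --- one must either show that the school-setting interval lemmas relax this constraint or shrink the denominator to roughly $14$. In fairness, the paper's own proof consists solely of the substitution instruction and performs neither of these computations.
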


This concludes the proof of \Cref{thm::eps-nash-no-idio-large-cap}.

Finally, we show that there is exactly one stable matching in this setting.
This is straightfoward, but does not appear to be widely known.

\begin{claim}\label{clm::unique-sm}
    In the student-school setting the stable matching is unique.
\end{claim}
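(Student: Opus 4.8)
The plan is to exhibit the stable matching explicitly by a greedy rule and then show every stable matching coincides with it. Recall that in this setting every school has the same utility $r(d)$ for student $d$, so all schools share one common preference order over the students; assume this order is strict (ties, which occur with probability $0$ in the continuous model and never in the discretization of \Cref{clm::discrete-dist-idio}, are broken by a fixed rule). Order the students $d_1,d_2,\ldots,d_n$ so that every school prefers $d_i$ to $d_j$ whenever $i<j$. Define the matching $\mu^{*}$ greedily: process the students in the order $d_1,\ldots,d_n$, and assign $d_i$ to the school highest on $d_i$'s list that still has a free seat after the assignments of $d_1,\ldots,d_{i-1}$ (a capacity-$\kappa$ school has a free seat iff fewer than $\kappa$ of $d_1,\ldots,d_{i-1}$ were assigned to it); if no listed school has a free seat, leave $d_i$ unmatched. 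Since a stable matching exists (Observation~1), it suffices to prove that every stable matching equals $\mu^{*}$.

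The core of the argument is a strong induction on $i$ showing that an arbitrary stable matching $\mu$ satisfies $\mu(d_i)=\mu^{*}(d_i)$. Assuming $\mu(d_j)=\mu^{*}(d_j)$ for all $j<i$, for every school $h$ the set of top-$(i-1)$ students assigned to $h$ is the same in $\mu$ and $\mu^{*}$; in particular any listed school of $d_i$ that she prefers to $h^{*}\triangleq\mu^{*}(d_i)$ was already full of students from $\{d_1,\ldots,d_{i-1}\}$ when $d_i$ was processed, hence is full of the same students in $\mu$, so $d_i$ is neither matched to such a school nor can she block it. Thus, if $\mu(d_i)\neq h^{*}$, then $d_i$ is matched in $\mu$ to a school no better than $h^{*}$ or is unmatched; when $h^{*}=\phi$ this already forces $\mu(d_i)=\phi=\mu^{*}(d_i)$, and otherwise $d_i$ strictly prefers $h^{*}$ to $\mu(d_i)$. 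To finish, I would examine the occupants of $h^{*}$ in $\mu$: fewer than $\kappa$ of them come from $\{d_1,\ldots,d_{i-1}\}$ and $d_i$ is not among them, so either $h^{*}$ has an empty seat in $\mu$ or at least one occupant of $h^{*}$ lies in $\{d_{i+1},\ldots,d_n\}$ and is therefore ranked below $d_i$; in either case $(d_i,h^{*})$ is a blocking pair for $\mu$, a contradiction. Hence $\mu(d_i)=\mu^{*}(d_i)$, and by induction $\mu=\mu^{*}$.

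The only delicate part is the capacity bookkeeping in that final blocking-pair step --- keeping straight how many seats of $h^{*}$ are already consumed by students ranked above $d_i$ versus students ranked below $d_i$ --- and the corresponding care needed when $d_i$ is itself unmatched in $\mu$ or in $\mu^{*}$. Everything else follows directly from the definitions and from the fact that the school side has a single common strict preference order; I do not expect any real obstacle beyond getting this accounting right.
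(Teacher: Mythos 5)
Your argument is correct, but it takes a genuinely different route from the paper. The paper proves uniqueness by running both versions of DA: it observes that with a common school ranking the student-proposing DA never revokes a tentative match, shows by induction over students (in rating order) that the school-proposing DA produces exactly the same matches, and then concludes via the standard optimality/pessimality facts (Observations 3--4) that the student-optimal and student-pessimal stable matchings coincide, forcing uniqueness. You instead bypass DA entirely: you exhibit the serial-dictatorship matching $\mu^{*}$ (which is in fact the common DA outcome) and prove directly, by strong induction down the common ranking, that any stable matching must agree with $\mu^{*}$ student by student, using the induction hypothesis to pin down each school's occupancy among the top $i-1$ students and then deriving a blocking pair from any deviation at $d_i$. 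Your route is more self-contained (it needs no facts about DA beyond existence) and has the side benefit of an explicit description of the unique stable matching; the paper's route is shorter because it leans on the known lattice-endpoint properties of DA. One small point to make explicit: in your final step, the case where $h^{*}$ has an empty seat in $\mu$ is a blocking pair only under the standard non-wastefulness component of many-to-one stability (an unfilled seat plus an acceptable student who prefers $h^{*}$); the paper's Definition of a blocking pair, read literally, only covers displacement of a currently matched student. Since the claim itself requires the standard notion to be true (otherwise leaving acceptable seats empty could also be ``stable''), this is a quibble about the paper's definition rather than a flaw in your proof, but you should state that you are using the standard stability notion, and also note (as you do) that listed schools are preferred to being unmatched and that schools find all students acceptable.
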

\begin{proof}
    We will show that the school proposing DA yields the same outcome as the student proposing DA.
    First, consider the student-proposing DA. We start by having student 1 propose. Once students $1$ to $i$ are matched, student $i+1$ proposes. Note that every hospital prefers its current matches to student $i+1$. So all tentative matches are actually final matches.
    Now consider the school proposing DA. Note that the schools all order the students by decreasing public rating.
    Recall that we are free to choose which available school makes the next proposal.
    So we first have the schools propose to the rank 1 student. One school matches, namely the school preferred by student 1. This is the school student 1 would match with in the student-proposing DA.
    Then all non-fully matched schools propose to student 2. Once the first $i$ students have received proposals (it some cases there may be 0 proposals to a student), the not-fully matched schools propose the student $i+1$.
    The inductive claim is that the two runs of DA produce the same matches for the first $i$ doctors.
    But then whether the hospitals propose to doctor $i+1$, or doctor $i+1$ proposes to the hospitals, 
    they are going to obtain the same match, for in the first case, doctor $i+1$ selects her favored hospital among the available hospitals, and in the second case, this is her preferred choice among the available hospitals, which demonstrates the inductive claim.
\end{proof}

\subsection{Reducing the Weight of the Interview Values}\label{sec::app-full-int}

We obtain improved bounds in Claims \ref{lem::fail-prob-hospital-prop} and \ref{lem::fail-prob-doctor-prop}, which concern the non-match probability for doctors, and the not-fully matched probability for hospitals, resp.
We also obtain better bounds in Claims \ref{qtilde-bound-at-root-hosp-to-doc-int} and \ref{qtilde-bound-at-root-doc-to-hosp-int}; they concern, respectively, the collective mismatch rate for doctors and hospitals in an interval $I$.
This allows us to reduce the value of $\alpha$
for the $\eps$-Nash analysis in \Cref{lem::typical-student-eps-nash}, leading to smaller values of $\epsilon$. 
We begin by showing the improvement to the bound in \Cref{lem::fail-prob-hospital-prop}.

\begin{claim}\label{lem::fail-prob-hospital-prop-ext}
    Let $d$ be a doctor with public rating at least $a\alpha$.
    Suppose $s\ge \frac{\alpha n}{\kappa}$ and that
    there are at most $\frac12(4a+1)\frac{\alpha n}{\kappa}$ hospitals in $C(d)$. Then the probability that $d$ receives no proposal for which $d$ has interview value at least $1-\psibar_d$ is at most $\exp\big(-k\beta_h\psibar_d/(4a+1)\big)$, where $\beta_h = \min\{1, \frac{\alpha}{\nu_h}\}$.
\end{claim}
\begin{proof}
 The change to the proof of \Cref{lem::fail-prob-hospital-prop} is that now we only require the
 hospitals to have interview value at least $1- \beta_h$. The reason we use this bound is that
 the reduction to the hospital's utility when using this bound is at most $\alpha$, which is what
 is needed to ensure that all the proposals in the double-cut DA have utility above the cut point.
\end{proof}

Similarly, in \Cref{lem::hosp-high-value-match-prob}, the not-fully matched probability improves to
$\exp\big(-\frac{3\kappa k\beta_d\psibar_h}{2(4a+1)}\big)$, where $\beta_d=\min\{1,\frac{\alpha}{\nu_d}\}$. The bounds in Claims \ref{qtilde-bound-at-root-hosp-to-doc-int} and \ref{qtilde-bound-at-root-doc-to-hosp-int} improve to at most $3|I|\exp\big(-\frac{\beta_h k}{2(4a+1)}\big)$ unmatched doctors,
and at most $3|I|\exp\big(-\frac{\beta_d k}{2(4a+1)}\big)$ unmatched hospitals.

Next, we state the resulting improvements to the $\eps$-Nash result. Basically, $\alpha$ can shrink almost linearly in $(\nu_h\nu_d)^{1/2}$ (offset by a logarithmic factor) until it reaches a minimum value $\alpha_{\min}$, and $\eps=O(\alpha)$.

\begin{theorem}\label{clm::improved-eps}
% Let $\alpha_l = \big(\frac{8b(c+2)\kappa}{2a-1}\big)^{1/2}\cdot \big(\frac{\ln n}{n}\big)^{1/2}$, the minimum allowed value for $\alpha$, and let $\alphabar=\big(\frac{4(4a+1)\ln k}{k}\big)^{1/2}$, the value of $\alpha$ when $\nu_d=\nu_h=1$.
Suppose that the hospitals list the doctors they interview in preference order and all but the bottommost $a\alpha$ doctors follow the recommended strategy.
Then, all but the bottommost $a\alpha$ doctors achieve an $\eps$-Nash equilibrium,
with the following bounds, depending on the values of $\nu_d$ and $\nu_h$.
Let $\alpha_{\min}= 3\exp\big(-\frac{k}{4(4a+1)}\big)$.

Case 1: $\alpha_{\min}\le \alpha\le \nu_h\le \nu_d\le 1$. \\
Let $\sigma=\min\big\{1,\frac{3\kappa}{8}\big\}$.
With our choice for $\alpha$ given shortly, this amounts to 
$\nu_h\ge \frac {v_d(4a+1)}{k\sigma} \ln \frac{ek\sigma}{\nu_h\nu_d(4a+1)}$.
Let $x=\frac{\nu_h\nu_d(4a+1)}{k\sigma}$; then we set $\alpha = x^{1/2}\cdot\big[\ln (ex)^{1/2} \big]$.
Then $\eps= O(\alpha)$,
and the mismatch/not-fully matched rates for the doctors and hospitals are
$\exp\big(-\frac{\alpha k}{(4a+1)\nu_h}\big)$
and $\exp\big(-\frac{3\kappa k\alpha}{8(4a+1)\nu_d}\big)$, respectively.

Case 2: $\max\{\alpha_{\min},\nu_h\} < \alpha\le \nu_d\le 1$.\\
Then $\eps= O(\alpha)$, with
$\alpha \ge \frac{4(4a+1)\nu d}{k}\cdot \ln\big(\frac{3ek}{4(4a+1)\nu d}\big)$ and $3k\ge 4(4a+1)\nu_d$;
the mismatch/not-fully matched rates for the doctors and hospitals are
$\exp\big(-\frac{k}{4a+1}\big)$ and $\exp\big(-\frac{3\kappa k\alpha}{8(4a+1)\nu_d}\big)$.

Case 3: $\nu_h\le \nu_d\le 3\alpha$.\\
We set $\alpha = \alpha_{\min}$.
Then $\eps= O(\alpha_{\min})$ and
the mismatch/not-fully matched rates for the doctors and hospitals are
$\exp(-k/4(4a+1))=\frac 13\alpha_{\min}$ and $\exp\big(-\frac{\kappa k}{8(4a+1)}\big)$.
\end{theorem}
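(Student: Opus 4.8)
The plan is to re-run the entire analysis of \Cref{sec::proof-idio-thm} essentially verbatim, changing only the single place where a proposal in a double-cut DA was required to have interview value at least $1-\alpha$: on the hospital side this is relaxed to interview value at least $1-\beta_h$, and on the doctor side to at least $1-\beta_d$, where $\beta_h=\min\{1,\alpha/\nu_h\}$ and $\beta_d=\min\{1,\alpha/\nu_d\}$. The justification is exactly that of \Cref{lem::fail-prob-hospital-prop-ext}: when the interview value carries weight $\nu_h<1$, a hospital whose interview value lies in $[1-\beta_h,1)$ loses at most $\nu_h\beta_h\le\alpha$ in utility compared to interview value $1$, which is within the slack built into the cut, so every such proposal survives. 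Hence the tree $T$, the definitions of $q$ and $\qtilde$, the stochastic-dominance inductions (\Cref{clm::fixed-k-stoc-dom-idio} and its interval analogues), and the surplus estimates (\Cref{clm::surplus-interval-hosp-prop} and the like) are unchanged except that the probability-$\alpha$ event ``interview value $\ge 1-\alpha$'' is replaced by the probability-$\beta_h$ (resp.\ $\beta_d$) event ``interview value $\ge 1-\beta_h$'' (resp.\ $1-\beta_d$). This produces the improved bounds already recorded after \Cref{lem::fail-prob-hospital-prop-ext}: a non-bottommost doctor's non-match probability is at most $\exp\big(-k\beta_h\psibar_d/(4a+1)\big)$, a hospital's not-fully-matched probability is at most $\exp\big(-3\kappa k\beta_d\psibar_h/8(4a+1)\big)$, and within an interval $I$ at most $3|I|\exp\big(-\beta_h k/2(4a+1)\big)$ doctors and $3|I|\exp\big(-\beta_d k/2(4a+1)\big)$ hospitals fail to match.

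Next I would collect the feasibility constraints on $\alpha$. Two families appear. First, the surplus and inductive claims require the mismatch fraction to be at most $\tfrac13\alpha$, which with the improved bounds becomes $\alpha\cdot\exp\big(\beta_h k/2(4a+1)\big)\ge 3$ on the doctor side and $\alpha\cdot\exp\big(\beta_d k/2(4a+1)\big)\ge 3$ on the hospital side; since $\beta_h,\beta_d\le 1$, the weakest of these (attained when $\beta_h=\beta_d=1$) defines the absolute floor $\alpha_{\min}=3\exp\big(-k/4(4a+1)\big)$. Second, I would repeat the deviation accounting of \Cref{lem::typical-student-eps-nash} with the improved bounds: re-bidding in cone can improve the public-rating component of the match by at most $2a\alpha$; bidding above the cone for a public-rating gain $x$ succeeds with probability at most $\exp\big(-3\kappa k\beta_d x/8(4a+1)\big)$ (set $\psibar_h=x$ in the improved bound), so the expected gain from that move is at most $\max_{x\ge0}x\exp\big(-3\kappa k\beta_d x/8(4a+1)\big)=\Theta\big((4a+1)/(\kappa k\beta_d)\big)$; avoiding a low-interview-value match contributes $O(1)$ times $\exp\big(-k\beta_h\alpha/(4a+1)\big)$ (set $\psibar_d=\alpha$); and the private-value term $O(\tfrac kn)$ together with the $O(n^{-(c+1)})$ contribution of $\mathcal B$ (via \Cref{clm::prob-B}) are lower order. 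Requiring each of these to be $O(\alpha)$ gives the operative lower bound on $\alpha$ in each regime, and then $\eps=O(\alpha)$.

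Finally I would split into the three regimes and solve for the smallest admissible $\alpha$, exactly matching the statement. In Case 1 ($\alpha\le\nu_h\le\nu_d$) both $\beta_h=\alpha/\nu_h$ and $\beta_d=\alpha/\nu_d$ are strictly below $1$, so the above-cone gain is $\Theta\big(\nu_d(4a+1)/(\sigma k\alpha)\big)$ with $\sigma=\min\{1,3\kappa/8\}$ tracking the hospital Chernoff constant; optimizing $\alpha$ against this together with the mismatch/surplus conditions $\exp\big(-\Omega(\alpha^2k/\nu_h)\big)=O(\alpha)$ and $\alpha\exp\big(\Omega(\sigma\alpha^2k/\nu_d)\big)\ge3$ yields, after putting $x=\nu_h\nu_d(4a+1)/(k\sigma)$, the value $\alpha=x^{1/2}\cdot[\ln(ex)^{1/2}]$, whereupon the mismatch rates are $\exp\big(-\alpha k/(4a+1)\nu_h\big)$ and $\exp\big(-3\kappa k\alpha/8(4a+1)\nu_d\big)$. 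In Case 2 ($\nu_h<\alpha\le\nu_d$) we now have $\beta_h=1$, so the doctor non-match rate attains its unconstrained value $\exp\big(-k/(4a+1)\big)$, but $\beta_d=\alpha/\nu_d<1$ still governs the hospital side; the binding constraint becomes the hospital surplus condition $\alpha\exp\big(\alpha k/2(4a+1)\nu_d\big)\ge3$, which solves to $\alpha\ge\frac{4(4a+1)\nu_d}{k}\ln\big(\frac{3ek}{4(4a+1)\nu_d}\big)$ (with $3k\ge4(4a+1)\nu_d$), and the hospital rate is $\exp\big(-3\kappa k\alpha/8(4a+1)\nu_d\big)$. In Case 3 ($\nu_d\le3\alpha$) all of $\beta_h,\beta_d$ are $\Theta(1)$, no regime-specific constraint beats $\alpha\ge\alpha_{\min}$, and taking $\alpha=\alpha_{\min}$ makes every deviation term $O(\alpha_{\min})$, giving the stated rates $\exp(-k/4(4a+1))=\tfrac13\alpha_{\min}$ and $\exp\big(-\kappa k/8(4a+1)\big)$. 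I expect the main obstacle to be the bookkeeping in the tree stochastic-dominance step: one must re-verify, case by case, that substituting the threshold $1-\beta_h$ (resp.\ $1-\beta_d$) for $1-\alpha$ in the $\qtilde$ process leaves all the inequalities $\qtilde(v)\ge q(v)$ intact — in particular that $\ptilde(\wrl)\le p(\wrl)$ (which rested on the worst-case cone-size estimate) is unaffected, and that the residual utilities used to pick the surplus set remain at least $\beta_h$ (resp.\ $\beta_d$) at the root. A secondary, quantitative, difficulty is checking that the $\ln n$-type conditions of \Cref{clm::final-cstrnts-idio} governing $\mathcal B$ and the interval populations stay satisfiable as $\alpha$ shrinks to these $\nu$-dependent values, which forces $n$ to grow like $\exp\big(\Theta(\alpha k)\big)/\alpha$ times a polynomial in $k$ and $\kappa$; I would state the resulting condition on $n$ explicitly in each case.
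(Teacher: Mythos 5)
Your first step---rerunning the double-cut machinery with the interview-value cut $1-\alpha$ replaced by $1-\beta_h$ on the hospital side and $1-\beta_d$ on the doctor side, justified by the at-most-$\alpha$ utility slack---is exactly the paper's route (\Cref{lem::fail-prob-hospital-prop-ext} and the remarks following it), and your Cases 2 and 3 then come out essentially as in the paper (modulo the same $2(4a+1)$ versus $4(4a+1)$ constant slop the paper itself exhibits). The genuine gap is in the deviation accounting behind Case 1. You keep the thresholds of the unweighted analysis: $\psibar_d=\alpha$ for the ``avoid a low-interview-value match'' term, and an above-cone public-rating gain $x$ measured against $\psibar_h=x$. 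In the weighted setting these choices lose a factor $1/\nu_d$ (resp.\ $1/\nu_h$) in the exponent, and with them the claimed Case~1 value $\alpha=\Theta\big(x^{1/2}\ln^{1/2}(e/x)\big)$, $x=\nu_h\nu_d(4a+1)/(k\sigma)$, does not satisfy your own requirements: with $\psibar_d=\alpha$ the failure probability is $\exp\big(-k\beta_h\alpha/(4a+1)\big)=\exp\big(-\tfrac{\nu_d}{2\sigma}\ln(e/x)\big)=\Theta\big(x^{\nu_d/(2\sigma)}\big)$ at the stated $\alpha$, so even after capping the gain by $\nu_d$ the expected deviation gain is of order $\nu_d\,x^{\nu_d/(2\sigma)}$, which for any fixed $\nu_d<\sigma$ dwarfs $\alpha=\Theta(x^{1/2}\ln^{1/2}(e/x))$ once $k$ is large. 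Likewise your above-cone bound $\max_x x\exp\big(-3\kappa k\beta_d x/8(4a+1)\big)=\Theta\big((4a+1)/(\kappa k\beta_d)\big)$ is $O(\alpha)$ only under an extra condition of the form $\nu_h\gtrsim \sigma/(\kappa\ln(e/x))$ that is not in the theorem. So your conditions, written with $\nu_h$ and $\nu_d$ appearing in separate exponents, do not ``yield'' the stated $\alpha$; they force a larger one.

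The missing idea is the one the paper's proof turns on: since the interview value now carries weight $\nu_d$ (resp.\ $\nu_h$), a match whose interview value is within $\beta_d=\min\{1,\alpha/\nu_d\}$ of $1$ already costs at most $\nu_d\beta_d\le\alpha$ in utility, so the correct ``high-quality'' thresholds in the $\eps$-Nash accounting are $\psibar_d=\beta_d$ and $\psibar_h=\beta_h$ (and, for an above-cone rating gain $x$, the displacement threshold is $x/\nu_h$ rather than $x$). This puts the product $\beta_h\beta_d=\alpha^2/(\nu_h\nu_d)$ into the exponent, giving failure probabilities $\exp\big(-k\beta_h\beta_d/(4a+1)\big)$ and $\exp\big(-3\kappa k\beta_h\beta_d/8(4a+1)\big)$, and it is precisely this product, with $\sigma=\min\{1,3\kappa/8\}$ selecting the weaker of the two conditions, that makes $\exp\big(-k\sigma\alpha^2/((4a+1)\nu_h\nu_d)\big)=O(\alpha)$ at $\alpha=\Theta\big(x^{1/2}\ln^{1/2}(e/x)\big)$ and produces the Case~1 constraint on $\nu_h$. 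Your surplus-floor discussion and the remark that the $\ln n$-type conditions must be rechecked as $\alpha$ shrinks are fine (indeed more explicit than the paper), but without the weight-scaled thresholds the Case~1 conclusion does not follow.
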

\begin{proof}
    Let $\beta_d=\min\big\{1,\frac{\alpha}{\nu_d}\big\}$ 
    and $\beta_h=\min\big\{1,\frac{\alpha}{\nu_h}\big\}$.
    To maintain the surplus bounds from the setting in which doctors select the interviews, the simple interview setting for short, we need to ensure that the fraction of unmatched doctors or hospitals is at most $\alpha$---see Claims \ref{lem::surplus-hosp-prop-one-doc}, \ref{lem::surplus-hosp-prop-one-hosp}, \ref{clm::surplus-interval-hosp-prop}, and \ref{clm::surplus-interval-doc-prop}; i.e.\ we need 
    $3\exp\big(-\frac{\beta_h\cdot k}{4(4a+1)}\big)\le \alpha$
    and $3\exp\big(-\frac{\beta_d\cdot k}{4(4a+1)}\big)\le \alpha$. As $\beta_h\ge \beta_d$,
    $\alpha \cdot \exp\big(\frac{\beta_d\cdot k}{4(4a+1)}\big) \ge 3$ suffices.
    Accordingly, when $\beta_d =\frac{\alpha}{\nu_d}$,  $\alpha \ge \frac{\nu_d\cdot 4(4a+1)}{k}\ln \frac{3k}{\nu_d\cdot4(4a+1)}$ suffices, if $3k\ge e\nu_d\cdot4(4a+1)$,
    and when $\beta_d=1$, $\alpha \ge 3\exp\big(-k/4(4a+1)\big)$ suffices.

    The value of $\eps$ in the $\eps$-Nash analysis (see \Cref{lem::typical-student-eps-nash}) is dominated by two terms: the size $2a\alpha$ of the cone, and the probability that the doctor's match has an interview value less than $1-\psibar_d$; to obtain $\eps=O(\alpha)$, we want the second term to be bounded by $O(\alpha)$. 

    To obtain the doctor non-high-quality match failure probability we set $\psibar_d=\beta_d$, giving a non-high-quality-match failure probability of at most $\exp\big(-\frac{k\beta_h\beta_d}{(4a+1)}\big)$, and we want this to be at most $2\alpha$ (high-quality means that the match has interview value at least $\psibar_d$).
    Likewise, for the hospital not-fully matched high-quality failure probability, we set $\psibar_h=\beta_h$, giving a not-fully matched high-quality failure probability of at most $\exp\big(-\frac{3\kappa k\beta_h\beta_d}{8(4a+1)}\big)$, which we also want to bound by $2\alpha$.
    Then $\alpha\exp\big(\frac{k\beta_h\beta_d}{(4a+1)}\big)\ge 1$  suffices.
    
    To complete the analysis we need to consider several cases.\\
    Case 1. $\alpha\le \beta_h\le \beta_d\le 1$.\\
    To satisfy the second condition
    $\alpha \ge \big(\frac{\nu_h\nu_d(4a+1)}{k\sigma})^{1/2}\cdot \big[\ln\big(\frac{ek\sigma}{\nu_h\nu_d(4a+1)})^{1/2}\big]^{1/2}$ suffices, so long as $k\ge \nu_u\nu_d(4a+1)\sigma$.
    Combining conditions yields the first bound.
\hide{    
    Then for the doctor high-quality mismatch condition, we want
    $\alpha \ge \big(\frac{\nu_d\nu_h(4a+1)\ln 1/\alpha}{k}\big)^{1/2}$;
    $\alpha \ge \big(\frac{\nu_d\nu_h(4a+1)}{k}\cdot \ln \frac{k}{\nu_d\nu_h(4a+1)}\big)^{1/2}$ suffices.
   Similarly, for the hospital mismatch condition, $\alpha \ge \big(\frac{3\kappa\nu_d\nu_h(4a+1)}{8k}\cdot \ln \frac{8k}{3\kappa\nu_d\nu_h(4a+1)}\big)^{1/2}$ suffices.
   }
\\
    Case 2. $\beta_h<\alpha \le \beta_d\le 1$.\\
    The first constraint, $\exp\big(\frac{k\alpha}{\nu_d\cdot 4(4a+1)}\big)\ge 3$ is now the tighter one, so $\alpha \ge \frac{4(4a+1)\nu d}{k}\cdot \ln\big(\frac{3ek}{4(4a+1)\nu d}$ suffices,
    if $3k\ge 4(4a+1)\nu_d$.
    \\
    Case 3: $\nu_h\le \nu_d\le 3\alpha$.\\
    Here it suffices to set $\alpha =3\exp\big(-\frac{k}{4(4a+1)}\big)$.
\end{proof}
\subsection{Further Experiments}

Next, we examine the effect of increasing $k$ from 5 to 12 for the setting in which doctors choose the interviews. For the top 75\% of the hospitals the matching rate becomes close to 100\% and the doctor matching rate also improves.

\begin{figure}[h]
    \begin{minipage}{0.5\textwidth}
        \includegraphics[width=\textwidth]{images/Doctor_Matching_Rate__Grouped___2000__5_5_0.3_DocFilter-False_Seed-42.png}
    \end{minipage}%
    \begin{minipage}{0.5\textwidth}
        \includegraphics[width=\textwidth]{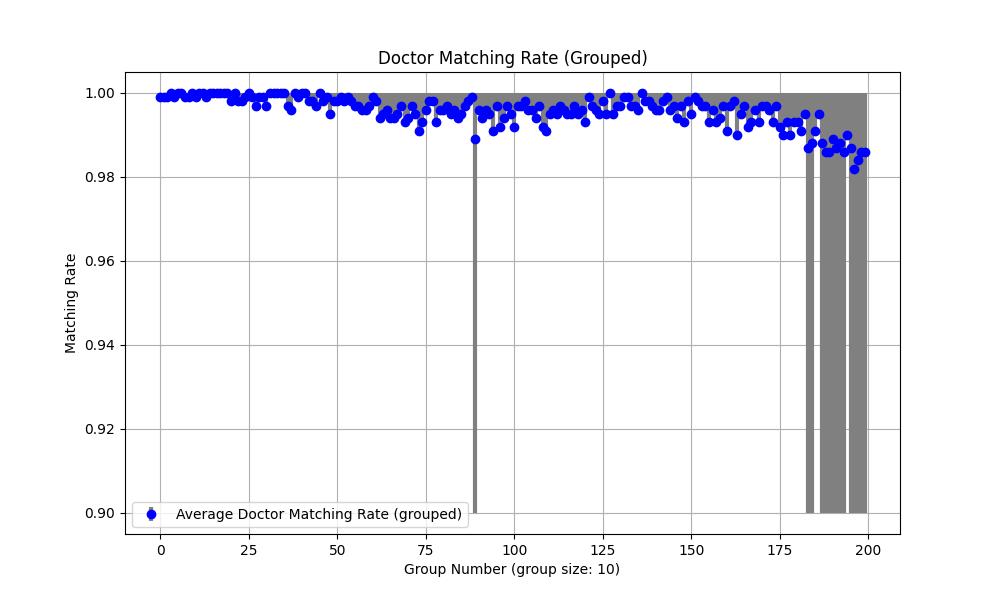}
    \end{minipage}
    \caption{Doctor matching rate comparison with $n=\text{2,000}$, $\kappa=5$, and $k=5$ (on the left) and $12$ (on the right).}
\end{figure}

\begin{figure}[h]
    \begin{minipage}{0.5\textwidth}
        \includegraphics[width=\textwidth]{images/Hospital_Matching_Rate__Grouped___2000__5_5_0.3_DocFilter-False_Seed-42.png}
    \end{minipage}%
    \begin{minipage}{0.5\textwidth}
        \includegraphics[width=\textwidth]{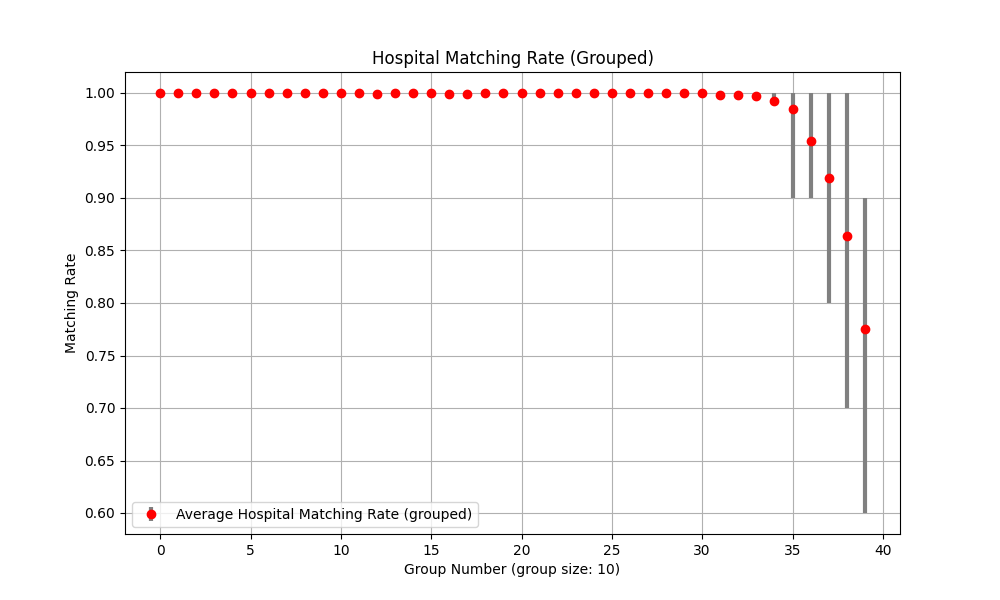}
    \end{minipage}
    \caption{Hospital matching rate comparison with $n=\text{2,000}$, $\kappa=5$, and $k=5$ (on the left) and $12$ (on the right).}
\end{figure}

\begin{figure}[h]
    \begin{minipage}{0.5\textwidth}
        \includegraphics[width=\textwidth]{images/Doctor_Loss__Grouped___2000__5_5_0.3_DocFilter-False_Seed-42.png}
    \end{minipage}%
    \begin{minipage}{0.5\textwidth}
        \includegraphics[width=\textwidth]{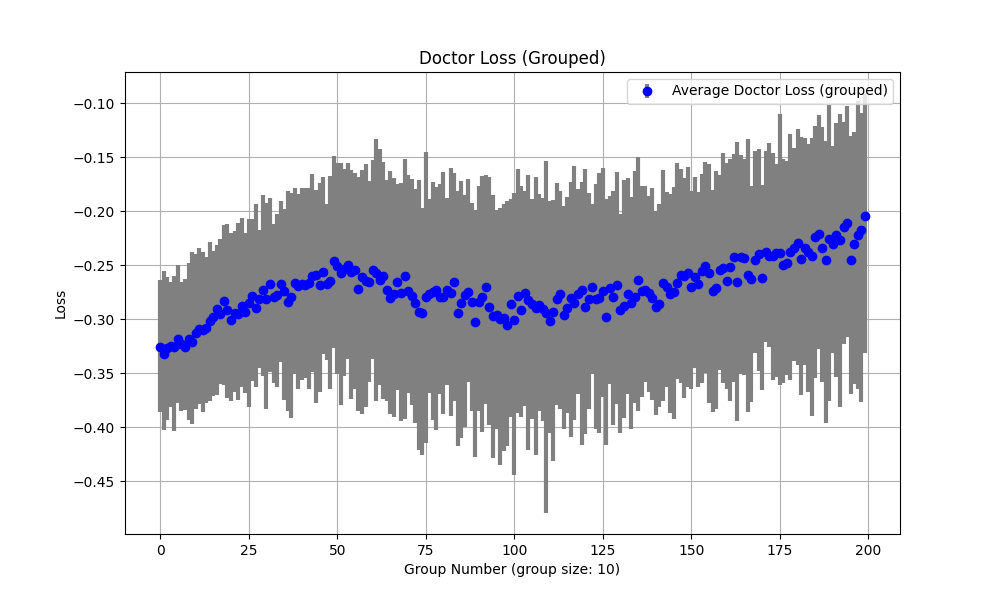}
    \end{minipage}
    \caption{Doctor loss comparison with $n=\text{2,000}$, $\kappa=5$, and $k=5$ (on the left) and $12$ (on the right).}
\end{figure}

\begin{figure}[h]
    \begin{minipage}{0.5\textwidth}
        \includegraphics[width=\textwidth]{images/Hospital_Loss__Grouped___2000__5_5_0.3_DocFilter-False_Seed-42.png}
    \end{minipage}%
    \begin{minipage}{0.5\textwidth}
        \includegraphics[width=\textwidth]{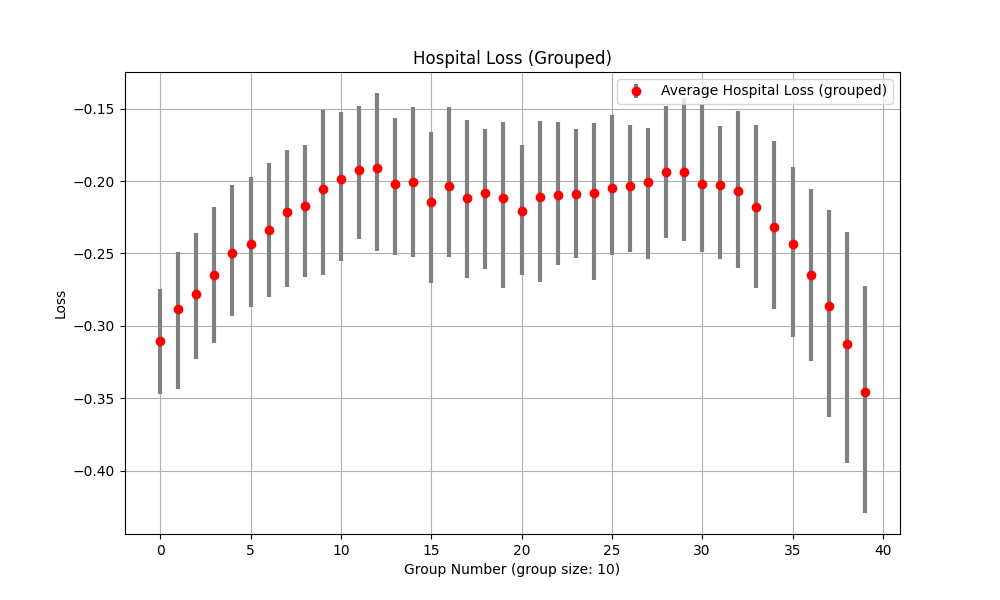}
    \end{minipage}
    \caption{Hospital loss comparison with $n=\text{2,000}$, $\kappa=5$ and $k=5$ (on the left) and $12$ (on the right).}
\end{figure}

\newpage

Next, we look at the effect of reducing $\kappa$ to 1, again in the setting with doctors choosing the interviews. The match rates become less good and the loss increases (note that the scales are not identical on the left hand side and right hand side graphs).

\begin{figure}[h]
    \begin{minipage}{0.5\textwidth}
        \includegraphics[width=\textwidth]{images/Doctor_Matching_Rate__Grouped___2000__5_5_0.3_DocFilter-False_Seed-42.png}
    \end{minipage}%
    \begin{minipage}{0.5\textwidth}
        \includegraphics[width=\textwidth]{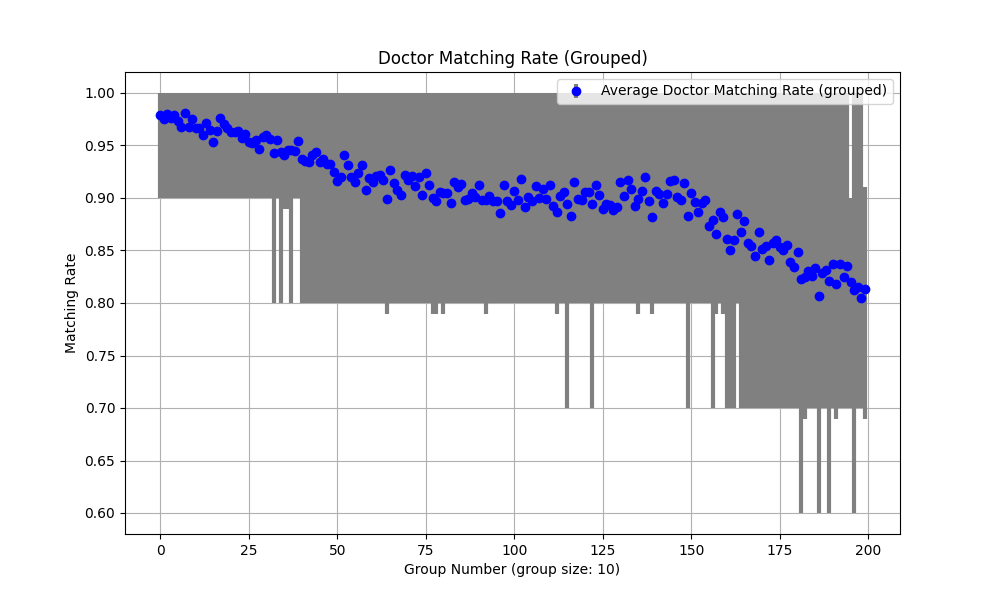}
    \end{minipage}
    \caption{Doctor matching rate comparison with $n=\text{2,000}$, $k=5$ and $\kappa=5$ (on the left) and $1$ (on the right).}
\end{figure}

\begin{figure}[h]
    \begin{minipage}{0.5\textwidth}
        \includegraphics[width=\textwidth]{images/Hospital_Matching_Rate__Grouped___2000__5_5_0.3_DocFilter-False_Seed-42.png}
    \end{minipage}%
    \begin{minipage}{0.5\textwidth}
        \includegraphics[width=\textwidth]{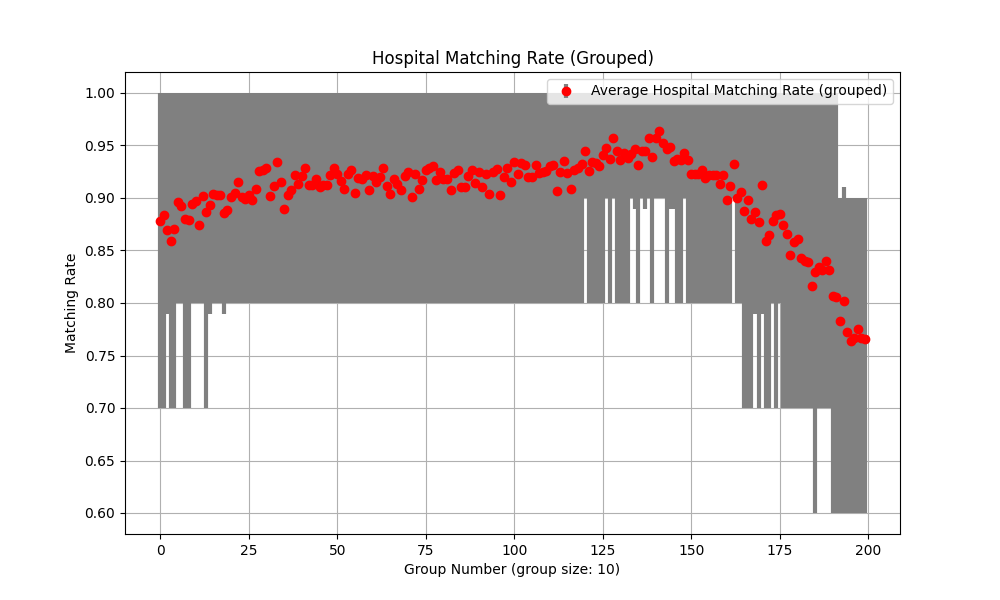}
    \end{minipage}
    \caption{Hospital matching rate comparison with $n=\text{2,000}$, $k=5$ and $\kappa=5$ (on the left) and $1$ (on the right).}
\end{figure}

\begin{figure}[h]
    \begin{minipage}{0.5\textwidth}
        \includegraphics[width=\textwidth]{images/Doctor_Loss__Grouped___2000__5_5_0.3_DocFilter-False_Seed-42.png}
    \end{minipage}%
    \begin{minipage}{0.5\textwidth}
        \includegraphics[width=\textwidth]{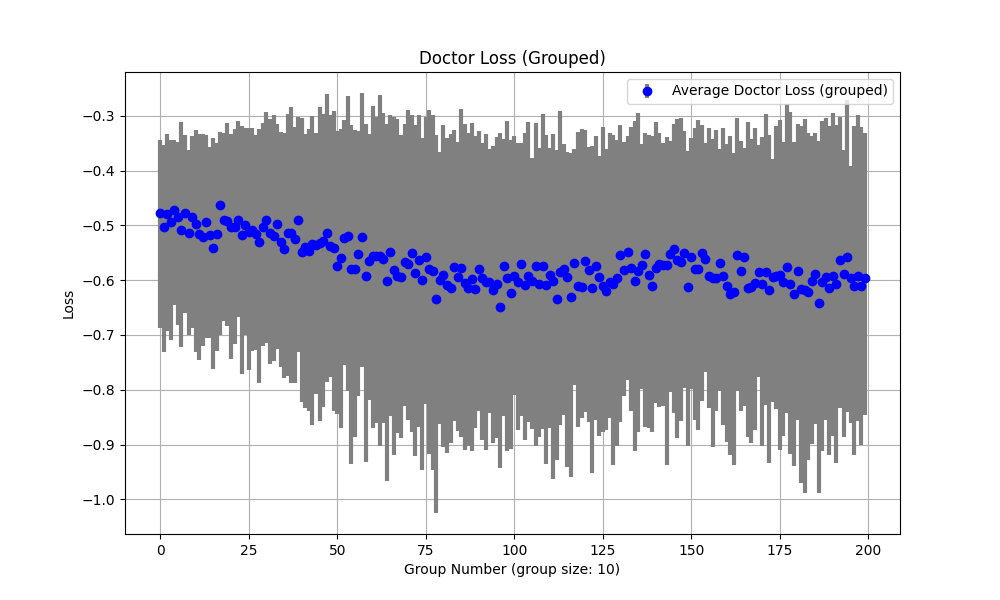}
    \end{minipage}
    \caption{Doctor loss comparison with $n=\text{2,000}$, $k=5$ and $\kappa=5$ (on the left) and $1$ (on the right).}
\end{figure}

\begin{figure}[h]
    \begin{minipage}{0.5\textwidth}
        \includegraphics[width=\textwidth]{images/Hospital_Loss__Grouped___2000__5_5_0.3_DocFilter-False_Seed-42.png}
    \end{minipage}%
    \begin{minipage}{0.5\textwidth}
        \includegraphics[width=\textwidth]{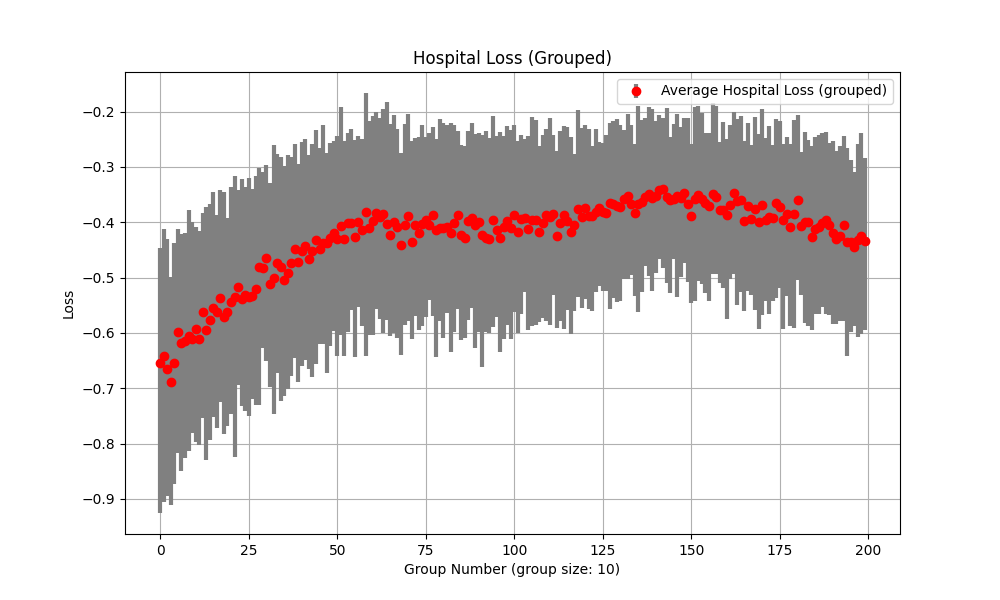}
    \end{minipage}
    \caption{Hospital loss comparison with $n=\text{2,000}$, $k=5$ and $\kappa=5$ (on the left) and $1$ (on the right).}
\end{figure}

\newpage\newpage

Finally, we show a selection of results for the case $n=500$, $\kappa=5$, $k=5$ and 12 and the doctor choosing the interviews.
The match rates overall are somewhat less good than for the case $n=\text{2,000}$ and the losses somewhat higher.

\begin{figure}[h]
    \begin{minipage}{0.5\textwidth}
        \includegraphics[width=\textwidth]{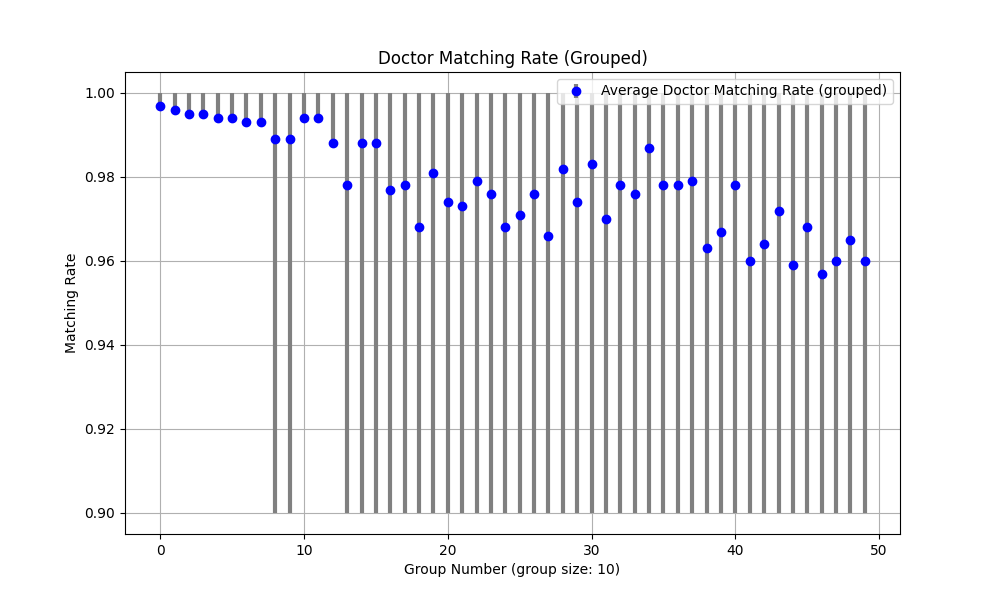}
    \end{minipage}%
    \begin{minipage}{0.5\textwidth}
        \includegraphics[width=\textwidth]{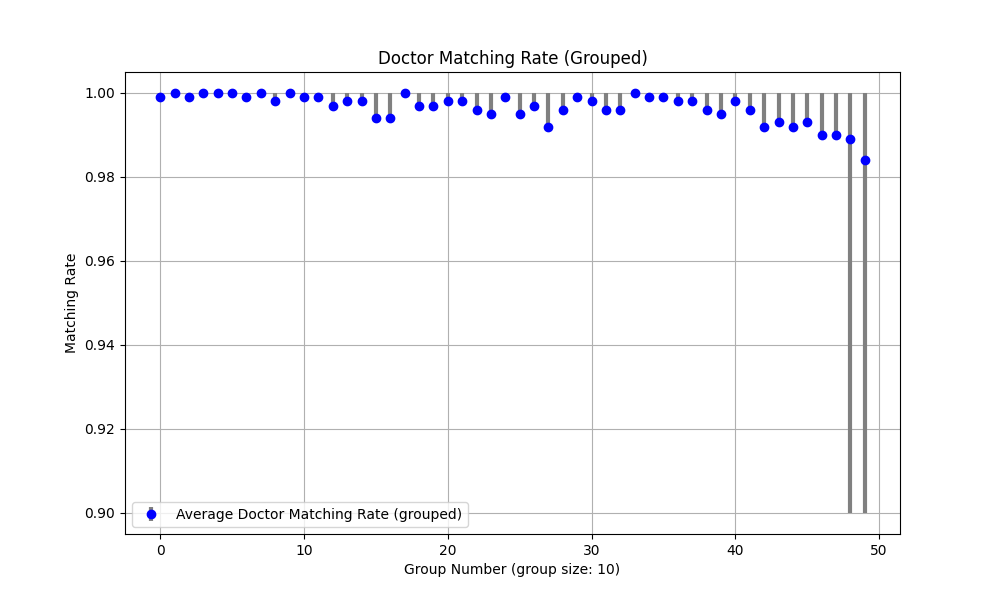}
    \end{minipage}
    \caption{Doctor matching rate, $n=500$, $\kappa=5$, $k=5$ (on the left), 12 (on the right).}
\end{figure}

\begin{figure}[h]
    \begin{minipage}{0.5\textwidth}
        \includegraphics[width=\textwidth]{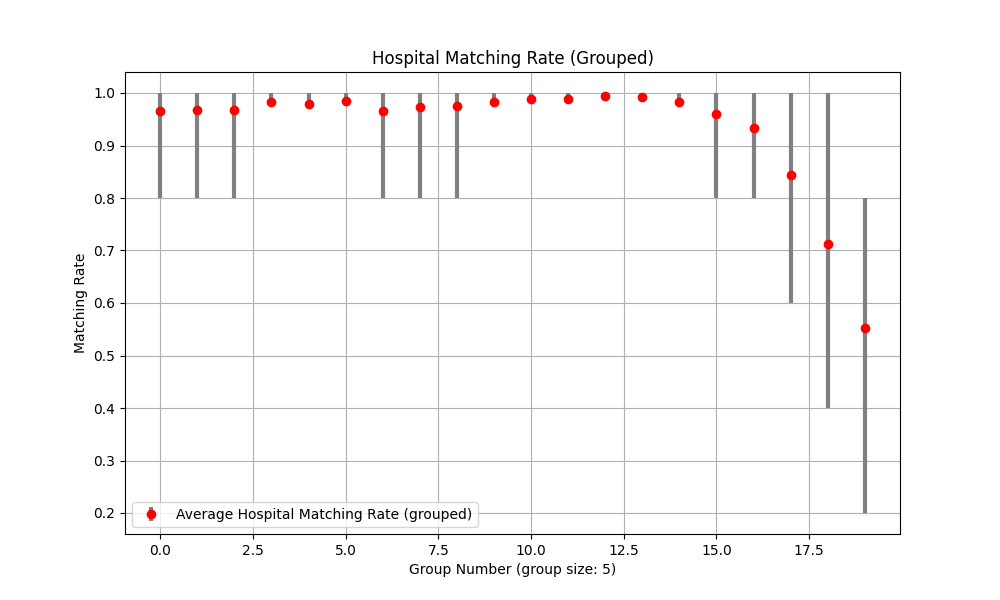}
    \end{minipage}%
    \begin{minipage}{0.5\textwidth}
        \includegraphics[width=\textwidth]{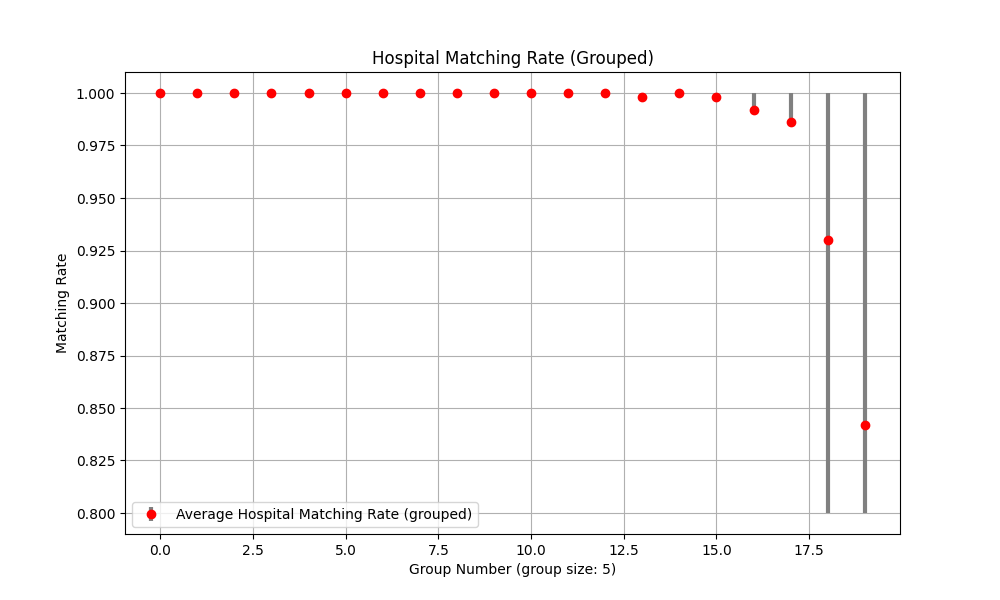}
    \end{minipage}
    \caption{Hospital matching rate $n=500$, $\kappa=5$, $k=5$ (on the left), 12 (on the right).}
\end{figure}

\begin{figure}[h]
    \begin{minipage}{0.5\textwidth}
        \includegraphics[width=\textwidth]{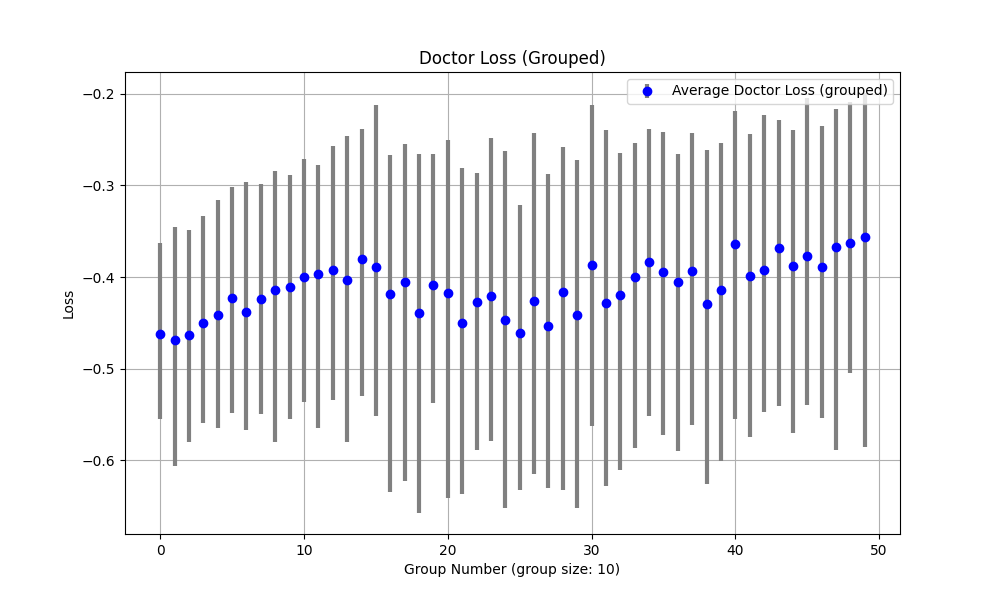}
    \end{minipage}%
    \begin{minipage}{0.5\textwidth}
        \includegraphics[width=\textwidth]{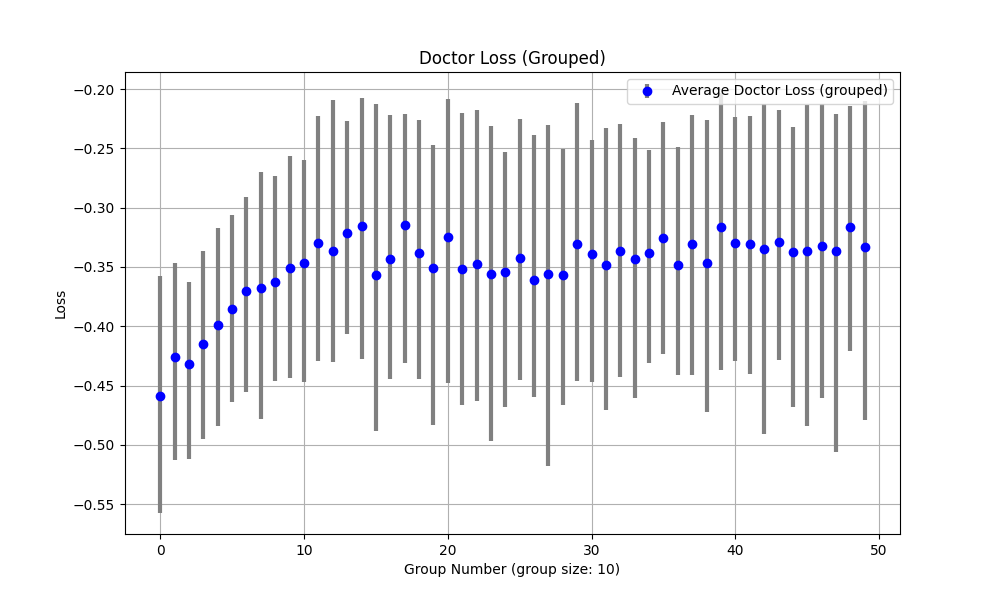}
    \end{minipage}
    \caption{Doctor loss $n=500$, $\kappa=5$, $k=5$ (on the left), 12 (on the right).}
\end{figure}

\begin{figure}[h]
    \begin{minipage}{0.5\textwidth}
        \includegraphics[width=\textwidth]{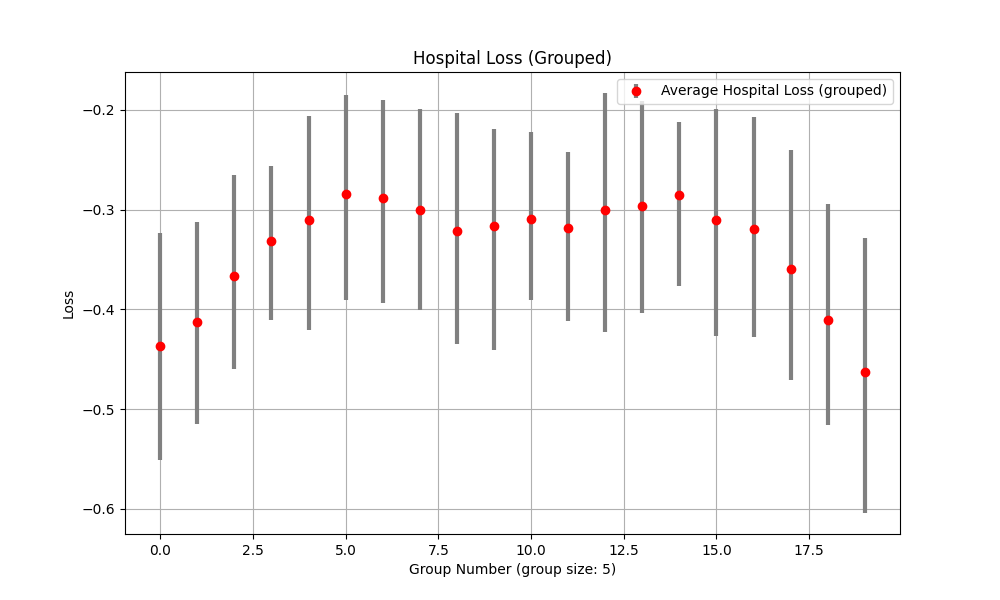}
    \end{minipage}%
    \begin{minipage}{0.5\textwidth}
        \includegraphics[width=\textwidth]{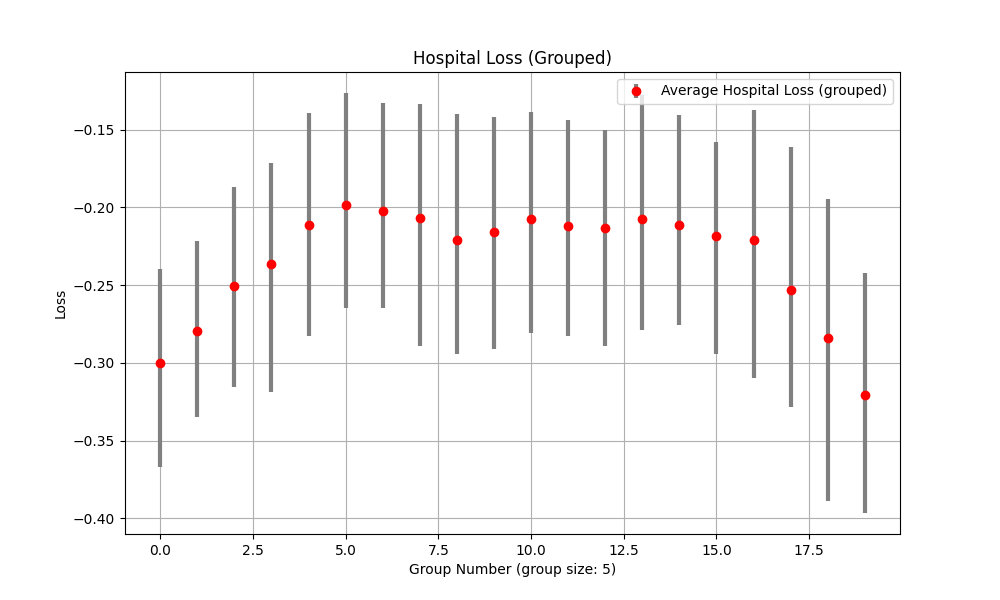}
    \end{minipage}
    \caption{Hospital loss $n=500$, $\kappa=5$, $k=5$ (on the left), 12 (on the right).}
\end{figure}

A final comment is that one should not expect a very close fit with the NRMP data, because there are many programs, small and large, in the NRMP data, and a significant fraction of doctors apply to more than one program. Nonetheless, at a high level, these results seem reasonably broadly consistent with the NRMP outcomes.

%\newpage\newpage
\clearpage

\bibliographystyle{plain}
\bibliography{sample-bibliography}

\end{document}